\documentclass[11pt]{article}

\counterwithout{figure}{section}
\usepackage{amsmath}
\usepackage{bm}
\usepackage[authoryear]{natbib}
\usepackage[colorlinks,citecolor=blue,linkcolor=blue,urlcolor=blue]{hyperref}
\usepackage{latexsym, epsfig, amssymb, amsmath, amsthm, graphicx, mathrsfs, booktabs, amsbsy}
\usepackage{rotating, tabularx, setspace,paralist}
\usepackage{color}
\usepackage{graphicx}
\usepackage[OT1]{fontenc}
\usepackage{lscape}
\usepackage{comment}
\usepackage{multirow,multicol}
\usepackage{float}
\usepackage{anysize}
\renewcommand{\baselinestretch}{1.2}
\marginsize{1.1in}{1.1in}{0.55in}{0.8in} %

\makeatletter
\def\singlespace{\def\baselinestretch{1}\@normalsize}

\newtheorem{condition}{Condition}
\newtheorem{lemma}{Lemma}
\newtheorem{proposition}{Proposition}
\newtheorem{theorem}{Theorem}
\newtheorem{remark}{Remark}
\renewcommand{\theequation}{
\arabic{equation}%
}

\makeatother

\newcommand{\bZ}{\mbox{\bf Z}}

\newcommand{\bzero}{\mbox{\bf 0}}

\newcommand{\bbeta}{\mbox{\boldmath $\beta$}}

\newcommand{\bgamma}{\mbox{\boldmath $\gamma$}}

\newcommand{\bmu}{\mbox{\boldmath $\mu$}}

\newcommand{\bSig}{\mbox{\boldmath $\Sigma$}}

\newcommand{\Var}{\mathrm{Var}}

\newcommand{\argmin}{\mbox{argmin}}

\newcommand{\BS}{\boldsymbol}

\def\independenT#1#2{\mathrel{\setbox0\hbox{$#1#2$}%
\copy0\kern-\wd0\mkern4mu\box0}}

\renewcommand{\hat}{\widehat}
\renewcommand{\tilde}{\widetilde}
\renewcommand{\dots}{\cdots}
\renewcommand{\ldots}{\cdots}

\renewcommand{\qquad}{\quad}
\renewcommand{\subseteq}{\subset}

\begin{document}

\title{HSCI: Neyman-Orthogonal Causal Inference under High-Dimensional Proportional Hazards%
\date{June 9, 2026}
\author{Yingying Fan$^1$, Lan Gao$^2$, Daoji Li$^3$ and Jinchi Lv$^1$
\medskip\\
University of Southern California$^1$, The University of Tennessee, Knoxville$^2$\\ and California State University, Fullerton$^3$
\\
} %
}

\maketitle

\begin{abstract}
Valid treatment effect inference in survival studies is fundamental yet challenging when the treatment assignments and outcomes are confounded by many baseline covariates. To this end, in this paper we propose a high-dimensional survival causal inference (HSCI) framework that delivers valid inference under a sparse high-dimensional Cox proportional hazards outcome model and a high-dimensional logistic propensity score working model. To mitigate the nuisance estimation bias, we develop a Neyman near-orthogonal score for the treatment effect and implement it with cross-fitting. Under doubly robust nuisance-rate conditions, we establish the root-$n$ asymptotic normality and consistent variance estimation. We also extend the framework to inference on high-dimensional survival covariate effects. Simulation examples confirm that HSCI reduces sharply the bias relative to the regularized Cox estimators and maintains valid confidence interval coverage across different dimensionality, censoring, and misspecified propensity-model settings. An application to diffuse large-B-cell lymphoma data further showcases its value for high-dimensional biomedical survival studies.
\end{abstract}
 
\textit{Running title}: HSCI

\textit{Key words}: Causal inference; High dimensionality; Survival analysis; Asymptotic distributions; Confidence intervals; Neyman near-orthogonality; Rate double robustness

\section{Introduction} \label{new.sec.intro}

Treatment effect inference for survival outcomes is central in many modern applications. In biomedical studies, investigators often compare alternative treatments for right-censored outcomes, such as the overall survival after cancer therapies, time to hospital readmissions, and post-transplant mortalities \citep{hernan2000marginal, petito2020estimates, hadley2010comparative, strohmaier2022survival}. Related duration-outcome questions arise in social sciences, including evaluations of active labor-market programs, unemployment insurance policies, and other interventions on the unemployment duration and job-finding hazards \citep{lalive2008impact, abbring2003nonparametric}. These analyses target causal questions instead of individual-level prediction: \textit{How} would survival have changed under one treatment rather than another after adjusting for patient characteristics? Such questions are challenging since survival outcomes are often right censored, the treatment assignment may be confounded by high-dimensional covariates, and regularization can induce non-negligible bias. These features motivate survival causal inference methods that address the censoring, confounding, and regularization bias \textit{simultaneously}.

A growing literature has studied treatment effect estimation and inference for survival outcomes. Under the additive hazards models, \citet{hou2023treatment} developed orthogonal-score methods for treatment effect estimation with high-dimensional confounding and established the asymptotic normality. \citet{cui2023estimating} proposed nonparametric causal survival forests for heterogeneous treatment effects with right-censored outcomes through adaptive local weighting. \citet{westling2024inference} introduced cross-fitted doubly robust estimators for treatment-specific survival curves using machine learning, whereas \citet{luo2025doubly} studied doubly robust estimation under a possibly misspecified marginal structural Cox model. These works advance survival causal inference, but they do \textit{not} directly address simultaneous treatment-effect and covariate-effect inference under sparse high-dimensional proportional hazards models. Such gap is important since the Cox model remains central for censored time-to-event analysis, while modern biomedical and social sciences studies often involve many confounders and prognostic variables. We address this gap by developing inference for both causal treatment effect and high-dimensional covariate effects under the Cox model.

To this end, in this paper we propose high-dimensional survival causal inference with proportional hazards (HSCI), a framework for joint treatment effect and covariate effect inference under a sparse high-dimensional Cox model. We exploit the Cox model with high-dimensional covariates for the potential failure time and a high-dimensional propensity score working model for treatment assignment. A key methodological contribution is the construction of a Neyman near-orthogonal score for the treatment-effect parameter, designed to remove first-order bias from high-dimensional nuisance parameter estimation. The construction combines double machine learning \citep{chernozhukov2018double} with martingale and partial-likelihood techniques for censored survival data. This combination is nontrivial since the Cox partial likelihood involves time-varying risk sets and is not a simple independent sum of smooth losses. We further employ cross-fitting to reduce the overfitting bias and establish the asymptotic normality for the treatment effect estimator under doubly robust rate conditions on the nuisance estimation errors.

We also \textit{distinguish} the rate-double-robust property studied here from classical double robustness. The classical double robustness refers typically to consistency under model misspecification: an estimator remains consistent if at least one of two nuisance models, such as the outcome model or the treatment/censoring model, is correctly specified. In contrast, the rate double robustness in high-dimensional orthogonal-score analysis requires the product, or suitable combinations, of nuisance estimation errors to be sufficiently small for valid root-$n$ inference. Consequently, the double robustness considered here is an \textit{inference-rate} condition instead of a model-misspecification guarantee.

\subsection{Our contributions}

The major contributions of our paper are threefold. First, we develop the HSCI framework by constructing a Neyman near-orthogonal score equation for survival treatment-effect parameter estimation. The score is tailored to the proportional hazards structure and removes the first-order bias from high-dimensional nuisance-parameter estimation. Its orthogonalization parameter is defined through the population counterparts of the Cox information quantities and estimated by incorporating the propensity score working model. This \textit{distinguishes} our approach from classical debiased Cox inference, which does \textit{not} model the treatment assignment or causal identification, and from standard double machine learning, since the Cox partial likelihood depends on time-varying risk sets as opposed to independent smooth losses. We show that the proposed score satisfies the Neyman near-orthogonality, with a negligible approximation error under explicit sparsity and rate conditions. 

Second, we establish the asymptotic normality for the proposed cross-fitted treatment-effect estimator and provide a consistent sandwich-type variance estimator. These results enable valid inference for the causal log-hazard treatment effect in the high-dimensional Cox proportional hazards model. The theory establishes root-$n$ inference under doubly robust rate conditions requiring the product of the relevant nuisance-parameter estimation errors to be sufficiently small. Our analysis controls the censoring, martingale variation, time-dependent risk sets, sparse high-dimensional nuisance estimation, and cross-fitting errors.

Third, we extend the inference procedure from the causal treatment effect to high-dimensional survival covariate effects. Building on the debiased survival causal effect estimators, 
we further construct a debiased estimator for the high-dimensional covariate coefficient vector and establish its asymptotic normality. Such an extension makes the proposed HSCI framework a joint inference method for both causal treatment effect and covariate effects in high-dimensional censored survival models.

\subsection{Related works}

Our method builds on the Neyman-orthogonal score framework of \citet{chernozhukov2018double}, which reduces the first-order impact of nuisance-parameter estimation errors on inference for low-dimensional target parameters. In non-survival settings, \citet{bradic2019sparsity} studied sparsity-based doubly robust inference for average treatment effects, while \citet{bradic2024high} developed high-dimensional inference for dynamic treatment effects.

Recent works have extended the orthogonal-score and doubly robust ideas to survival causal inference. For example, under the additive hazards models, \citet{hou2023treatment} introduced an orthogonal-score method for treatment-effect estimation with high-dimensional confounders, using the double machine learning framework of \citet{chernozhukov2018double}. They showed that the resulting score coincides with that of \citet{dukes2019doubly} from classical semiparametric efficiency theory and established the asymptotic normality under doubly robust nuisance-estimation rates. \citet{rava2023doubly} developed doubly robust orthogonal scores for causal hazard-difference estimation with right-censored competing-risks data and established root-$n$ asymptotic normality under rate double robustness. These works are closest in spirit to ours, but the additive hazards models differ fundamentally from the Cox models: treatment enters the hazard linearly, whereas the Cox models incorporate an exponential link and time-varying risk sets. More recently, \citet{westling2024inference} studied machine-learning-based doubly robust inference for treatment-specific survival curves under fixed-dimensional covariates, and \citet{luo2025doubly} developed doubly robust estimation for marginal structural Cox models under possible misspecification. These papers target \textit{marginal} survival-curve or \textit{marginal} structural Cox estimands, whereas we study debiased inference for both the treatment effect and high-dimensional covariate effects under a sparse high-dimensional Cox proportional hazards model.

Another related line of work develops flexible nonparametric methods for heterogeneous treatment effects. \citet{wager2018estimation} introduced causal forests for non-survival outcomes, and \citet{cui2023estimating} extended it to right-censored survival data through causal survival forests. Their approach uses adaptive random forest weights and orthogonal estimating equations to adjust for selection and censoring under unconfoundedness. These methods are flexible for heterogeneous treatment-effect estimation, but they do \textit{not} target high-dimensional proportional-hazards inference for low-dimensional causal effects together with high-dimensional covariate effects.

High-dimensional Cox regression inference has also been studied in non-causal or partially causal settings. \citet{fang2017testing} proposed the decorrelated score, Wald, and partial-likelihood-ratio tests for low-dimensional components of high-dimensional Cox models. \citet{kong2021high} studied desparsified Lasso inference for potentially misspecified Cox models. \citet{zhong2022deep} proposed a deep partially linear Cox model that estimates nonlinear covariate effects by a deep neural network while retaining interpretable, root-$n$ inference for treatment effects. \citet{yu2021confidence} developed confidence intervals for high-dimensional Cox coefficients via one-step debiasing. These works handle Cox partial likelihood and high-dimensional nuisance parameters, and we build on several of their technical tools. However, they do \textit{not} jointly incorporate a treatment-assignment mechanism, causal identification assumptions, and inference for both treatment and covariate effects. In our setting, the treatment indicator is linked to a logistic propensity score working model and a sparse high-dimensional Cox proportional hazards model for the potential outcomes.

Finally, synthetic-control and matrix-completion ideas have begun to enter causal inference with survival outcomes. \citet{curth2024cautionary} studied adaptations of synthetic-control methods to time-to-event settings, emphasizing challenges from censoring, nonlinear survival dynamics, and hazard-based estimands. \citet{han2025synthetic} introduced synthetic survival control methods for estimating counterfactual hazard trajectories in panel data under low-rank structure. These methods target panel or aggregate intervention settings, and are \textit{complementary} to our individual-level treatment-effect inference problem with high-dimensional confounding.

The rest of the paper is organized as follows. Section \ref{Sec2} introduces the model setting. Section \ref{Sec3} presents the proposed HSCI procedure for valid inference on both survival treatment effect and covariate effects. Section \ref{new.Sec.theory} establishes the asymptotic normality of the proposed estimators. Section \ref{Sec5} reports finite-sample performance in simulations. Section \ref{Sec6} applies the method to diffuse large-B-cell lymphoma data. Section \ref{Sec7} concludes with some discussions and extensions. All the proofs with technical details and some additional simulation results are provided in the Supplementary Material.

\medskip
\noindent\textit{Notation}. Throughout the paper, let $\mathcal{P}$ be the collection of probability distributions for the population $\{\mathcal{O}\}=\{X, \delta, D, \BS{Z}\}$. For $P \in \mathcal{P}$, denote by $\mathbb{E}$ the expectation under $P$, $\mathbb{E}_n$ the empirical expectation, and $\mathbb{G}_n$ the empirical process. Hence, for any measurable function $g$, $\mathbb{E} g(\mathcal{O}):= P g(\mathcal{O})$, $\mathbb{E}_n g(\mathcal{O}):= \sum_{i=1}^n g(\mathcal{O}_i)/n$, and $\mathbb{G}_n (g(\mathcal{O})) =n^{-1/2}\sum_{i=1}^n [g(\mathcal{O}_i) -\mathbb{E} g(\mathcal{O}_i)]$.

For any set $S$, let $|S|$ be its cardinality, and $S^c$ its complement. For a vector $\BS{v}=(v_1, \dots, v_m)^{\top} \in \mathbb{R}^m$, denote by $\| \BS{v} \|_q$ its $l_q$-norm, $\BS{v}^{\otimes 2} = \BS{v}\BS{v}^{\top}$, and $\BS{v}_S :=(v_j)_{j \in S} \in \mathbb{R}^{|S|}$ for $S \subseteq \{1, \dots, m\}$. For a matrix $\BS{B}$, let $\| \BS{B}\|_{\max}=\max_{i,j} |\BS{B}_{i,j}|$ be the entrywise maximum norm, and $\| \BS{B}\|_{\infty} =\sup_{\BS{v} \neq \BS{0}} \frac{\| \BS{B} \BS{v}\|_{\infty}}{\| \BS{v}\|_{\infty}}$ and $\| \BS{B}\|_{1} =\sup_{\BS{v} \neq \BS{0}} \frac{\| \BS{B} \BS{v}\|_{1}}{\| \BS{v}\|_{1}}$ the operator $l_{\infty}$- and $l_1$-norms, respectively. For real sequences $\{a_n\}$ and $\{b_n\}$, write $a_n \asymp b_n$ if there exist constants $c,C>0$ and $n_0 > 0$ such that $c < |{a_n}/{b_n}| < C$ for all $n \geq n_0$.

For any measurable function $g$, let $\|g\|_{P,q}$ be the $L^q(P)$ norm $\|g \|_{P,q} =(\int |g|^q dP)^{1/q}$. For a differentiable function $f(x)$ with $x$ being a scalar or vector, write $\dot f_{x} (x)$ and $\ddot f_{x} (x)$ for $\frac{\partial f(x)}{\partial x}$ and $\frac{\partial^2 f(x)}{\partial x \partial x}$, respectively; write $\dot f_{x} (x_0)$ and $\ddot f_{x} (x_0)$ for $\dot f_{x} (x) \big|_{x=x_0}$ and $\ddot f_{x} (x) \big|_{x=x_0}$, respectively. For all quantities defined explicitly or implicitly below that depend on $T$, superscripts $^{(1)}$ and $^{(0)}$ represent the counterparts with $T$ replaced by $T^{(1)}$ and $T^{(0)}$, respectively. A subscript $i$ stands for the $i$th observation from the original population for $i = 1, \dots, n$.

\section{Model setting} \label{Sec2}

Let $D\in \{0, 1\}$ be the random treatment indicator, where $D=1$ and $0$ represent the assignments to the treatment and control groups, respectively.
We consider an observational survival study, and let $T$ be an individual's potential failure time. Throughout the paper, we adopt the stable unit treatment value assumption (SUTVA; \cite{rubin1980randomization}), under which each individual may be assigned to either the treatment or control group and there is no interference across individuals. Denote by $T^{(1)}$ and $T^{(0)}$ the potential failure times under the treatment and control, respectively.
Under the SUTVA assumption, only one of these potential outcomes is observed, and the observed failure time satisfies that $T=DT^{(1)}+(1-D)T^{(0)}$.

In survival analysis, failure time $T$ is often right-censored by a censoring time $C$. We thus observe only $X =\min(T,C)$ and the censoring indicator $\delta=I_{\{T \le C\}}$. Let $\BS{Z}$ be a $p$-dimensional covariate vector. Assume that $T$ and $C$ are independent conditional on $(D, \BS{Z})$. The observed data $\{\mathcal{O}_i\}_{i=1}^n=\{X_i, \delta_i, D_i, \BS{Z}_i\}_{i=1}^n$ consists of $n$ independent and identically distributed (i.i.d.) observations from the population $\{X, \delta, D, \BS{Z}\}$.

We focus on the joint causal and survival inference in the high-dimensional setting. To this end, we consider the following survival treatment effect model
\begin{eqnarray}
	\lambda_T(t \mid D, \BS{Z}) &=&\lambda_0(t) \exp\{D \theta_0+\BS{Z}^{\top} \BS{\beta}_0\}, \label{eqmodelcox}\\
	P(D=1 \mid \BS{Z}) &=& \frac{\exp\{\BS{Z}^{\top} \BS{\gamma}_0\}}{1+\exp\{ \BS{Z}^{\top} \BS{\gamma}_0\}} := m_0(\BS{Z}) \label{eqmodellog},
\end{eqnarray}
where $\lambda_0(t)$ is an unknown baseline hazard function, and $\theta_0 \in \Theta \subset \mathbb{R}$, $\BS{\beta}_0 \in \mathcal{B} \subset {\mathbb{R}}^{p}$, and $\BS{\gamma}_0 \in \BS{\Gamma} \subset {\mathbb{R}}^{p}$ are unknown true parameters. In addition, we assume sparsity in both $\BS{\beta}_0$ and $\BS{\gamma}_0$, and denote the corresponding active sets as $S_{\BS{\beta}_0}=\{j: \BS{\beta}_{0,j} \neq 0, 1 \le j \le p\}$ and $S_{\BS{\gamma}_0}=\{j: \BS{\gamma}_{0,j} \neq 0, 1 \le j \le p\}$, respectively, with cardinalities $s_{\BS{\beta}_0}=| S_{\BS{\beta}_0}|$ and $s_{\BS{\gamma}_0}=|S_{\BS{\gamma}_0} |$.
The covariate dimensionality $p$ is allowed to grow exponentially with sample size $n$ in the sense that $\log p=o(n^{\alpha})$ for some constant $\alpha>0$. 
The propensity score model in (\ref{eqmodellog}) serves as a \textit{working model}, and our method and theory accommodate the misspecified propensity score model later. 

We next introduce the counting-process notation used in survival analysis. Let $Y(t)=I_{\{X \ge t\}}$ be the at-risk indicator, and $N(t)= I_{\{X \le t, \,\delta=1\}}$ the counting process for the failure time. Denote by $N^{(1)}(t)$ and $N^{(0)}(t)$ the potential counting processes under the treatment and control groups, respectively. We define a right continuous filtration $\{\mathcal{F}_t: t \in [0, \tau]\}$ with $$\mathcal{F}_t=\sigma \{Y_i(s+), N_i(s), D_i, \BS{Z}_i, \, 0 \le s \le t, \, i=1, \dots , n \},$$
where $\tau$ is the study ending time. Under the Cox proportional hazards outcome model (\ref{eqmodelcox}), $N(t)$ satisfies the conditions for the Doob--Meyer decomposition \citep{fleming2011counting}, and has predictable compensator
  \begin{align*}
      A(t)=\int_0^t Y(u) \exp\{D \theta_0+\BS{Z}^{\top} \BS{\beta}_0 \} \lambda_0(u) du
  \end{align*}
such that $M(t)= N(t)- A(t)$ is a zero-mean, locally square-integrable martingale with respect to $\{\mathcal{F}_t:  t \in [0, \tau]\}$. For technical simplicity, assume that for $i \neq j$, the processes $N_i(t)$ and $N_j(t)$ never jump at the same time. The martingale $M_i(t)$ has predictable variation process $\langle M_i, M_i \rangle (t)=A_i(t)$ for $i= 1, \cdots, n$,  and the covariation process satisfies that $\langle M_i, M_j \rangle (t) = 0$ for any $j=1, \cdots,n, \, j \neq i$.

The log-partial likelihood function (scaled by $n$) for the Cox model (\ref{eqmodelcox}) is defined as
\begin{eqnarray}\label{eqlogparlik}
	l(\{\mathcal{O}_i\}_{i=1}^n;\theta, \BS{\beta})=\frac{1}{n} \sum_{i=1}^n \int_0^{\tau}\left[ D_i \theta+\BS{Z}_i^{\top} \BS{\beta} - \log \Big\{\sum_{j=1}^n  \tilde w_j(t, \theta, \BS{\beta})  \Big\} \right] dN_i(t),
\end{eqnarray}
where
\begin{equation} \label{eq-def-tilde-w}
\tilde w_i(t, \theta, \BS{\beta})=Y_i(t) \exp\{D_i \theta+\BS{Z}_i^{\top} \BS{\beta} \}.
\end{equation}
Standard differentiation gives the score functions with respect to $\theta$ and $\BS{\beta}$ as
\begin{eqnarray*}
	\dot l_{ \theta}(\{\mathcal{O}_i\}_{i=1}^n; \theta , \BS{\beta})& :=&\frac{\partial l(\{\mathcal{O}_i\}_{i=1}^n; \theta, \BS{\beta})}{\partial {\theta}} = \frac{1}{n} \sum_{i=1}^n \int_0^{\tau} (D_i - \bar D(t,\theta, \BS{\beta}) ) dN_i(t),\\
	\dot l_{ \BS{\beta}}(\{\mathcal{O}_i\}_{i=1}^n; \theta , \BS{\beta}) &:=& \frac{\partial l(\{\mathcal{O}_i\}_{i=1}^n;\theta, \BS{\beta})}{\partial {\BS{\beta}}} = \frac{1}{n} \sum_{i=1}^n \int_0^{\tau}( \BS{Z}_i -\bar {\BS{Z}} (t,\theta, \BS{\beta}) ) dN_i(t),
\end{eqnarray*}
respectively, where
\begin{equation} \label{eq-barD-barZ}
\bar D(t,\theta, \BS{\beta})= \sum_{i=1}^n D_i w_i(t,  \theta, \BS{\beta}), ~~ \bar {\BS{Z}} (t,\theta, \BS{\beta})= \sum_{i=1}^n \BS{Z}_i w_i(t,  \theta, \BS{\beta})
\end{equation}
are the weighted averages of the covariate processes with weights $w_i(t,  \theta, \BS{\beta})$ given by 
\[
 w_i(t,\theta, \BS{\beta})=\frac{\tilde w_i(t, \theta, \BS{\beta})}{ \sum_{j=1}^n \tilde w_j(t, \theta, \BS{\beta})}.
\]

The true parameter vector $(\theta_0, \BS{\beta}_0^{\top})^{\top}$ satisfies the following population moment conditions
\begin{eqnarray}\label{eqpartiallogparlik}
	\mathbb{E} [\dot l_{ \theta}(\{\mathcal{O}_i\}_{i=1}^n;\theta_0, \BS{\beta}_0)]=0 , ~~
	\mathbb{E} [\dot l_{ \BS{\beta}}(\{\mathcal{O}_i\}_{i=1}^n;\theta_0, \BS{\beta}_0)]=0.
\end{eqnarray}
In practice, estimation and inference rely on the empirical counterparts of these moment conditions and 
estimated nuisance quantities. Substituting estimated nuisance parameters can introduce non-negligible bias, which renders valid inference difficult. Right censoring and the intrinsic dependence structure of survival data, viewed through continuous-time counting processes, further complicate the joint inference for causal and survival effects in high-dimensional settings.

Our work develops debiased estimation procedures and valid inference for both the causal effect parameter $\theta_0$ and the covariate coefficient vector $\bbeta_0$, using the Cox model in \eqref{eqmodelcox} and the propensity score working model in \eqref{eqmodellog}. We first treat the high-dimensional covariate coefficient vector $\bbeta_0$ as a nuisance parameter, and construct a debiased estimator for $\theta_0$ through a Neyman near-orthogonal estimating equation \citep{chernozhukov2018double}, which removes the first-order effect of estimating the association between $D$ and $\BS{Z}$ under working model (\ref{eqmodellog}). We then extend the framework to obtain valid inference for the high-dimensional survival covariate effects $\bbeta_0$.

\section{High-dimensional survival treatment effect inference with HSCI}\label{Sec3}

\subsection{Neyman orthogonality for HSCI} \label{Sec3.2}

We are interested in the scalar causal parameter $\theta_0$, and thus regard $\BS{\beta}_0$ as a high-dimensional nuisance parameter vector. Based on (\ref{eqpartiallogparlik}), a natural estimator for $\theta_0$ is obtained by solving the following plug-in estimating equation
\begin{eqnarray*}
	\dot l_{ \theta}(\{\mathcal{O}_i\}_{i=1}^n;\theta, \hat{\BS{\beta}})=0,
\end{eqnarray*}
where $\hat {\BS{\beta}}$ is an initial estimator of $\BS{\beta}_0$. In high-dimensional settings, regularized estimators are needed to control variance inflation, but the penalty induces regularization bias. Consequently, $\hat{\BS{\beta}}$ converges typically at a slower rate, and the resulting plug-in estimator of $\theta_0$ generally fails to attain the $n^{-1/2}$ rate. Moreover, model (\ref{eqmodellog}) shows that the treatment indicator $D$ is associated with high-dimensional covariate vector $\BS{Z}$ through the nuisance parameter vector $\BS{\gamma}_0$, which provides additional information but also complicates inference for $\theta_0$. To address these challenges, we exploit the framework in \cite{chernozhukov2018double} and construct a Neyman-orthogonal estimating equation for $\theta_0$ whose first-order sensitivity to the nuisance estimation error is reduced. We also employ data splitting to mitigate the overfitting bias.

Following Section 2.2.1 of \cite{chernozhukov2018double} and using the moment conditions in (\ref{eqpartiallogparlik}), we construct a debiased estimator of $\theta_0$ by solving the following Neyman-orthogonal estimating equation for $\theta$
\begin{eqnarray}\label{eqortho}
	\Phi (\{\mathcal{O}_i\}_{i=1}^n;\theta, \BS{\eta}) :=\dot l_{ \theta}(\{\mathcal{O}_i\}_{i=1}^n;\theta, \BS{\beta})- {\BS{\mu}}^{\top} \dot l_{ \BS{\beta}}(\{\mathcal{O}_i\}_{i=1}^n;\theta, \BS{\beta}) = 0,
\end{eqnarray}
where $\BS{\eta}=(\BS{\beta}^{\top},{\BS{\mu}}^{\top} )^{\top} \in \mathbb{R}^{2p}$ is the nuisance parameter vector, and $\BS{\mu} = \BS{\mu}(\theta, \BS{\beta}, \BS{\gamma})  \in \mathbb{R}^{p}$ is a deterministic orthogonalization parameter determined by
\begin{eqnarray}\label{eqnusipara}
	\BS{J}_{\BS{\beta} \theta} - \BS{J}_{\BS{\beta}\BS{\beta}} \BS{\mu} =0
\end{eqnarray}
with
\[
\BS{J}= \left(\begin{array}{cc}
{J}_{\theta \theta} & \BS{J}_{ \theta \BS{\beta}} \\
\BS{J}_{\BS{\beta} \theta} & \BS{J}_{\BS{\beta}\BS{\beta}}
\end{array} \right)
=-  \frac{\partial}{\partial {(\theta,{\BS{\beta}}^{\top})}} \Big\{ \mathbb{E} \Big[\frac{\partial  l(\{\mathcal{O}_i\}_{i=1}^n;\theta, \BS{\beta})}{\partial_{(\theta,{\BS{\beta}}^{\top})^{\top}}} \Big] \Big\}\Big|_{\theta=\theta_0, \,\BS{\beta}=\BS{\beta}_0}.
\]
Provided that $\BS{J}$ exists and $\BS{J}_{\BS{\beta}\BS{\beta}}$ is invertible, an application of Lemma 2.1 of \cite{chernozhukov2018double} shows that $\Phi (\{\mathcal{O}_i\}_{i=1}^n;\theta, \BS{\eta})$ is Neyman orthogonal at $(\theta_0,\BS{\eta}_0)$, where $\BS{\eta}_0=(\BS{\beta}_0^{\top},{\BS{\mu}}_0^{\top} )^{\top}$ is the true nuisance value and ${\BS{\mu}}_0={\BS{J}}^{-1}_{\BS{\beta}\BS{\beta}} \BS{J}_{\BS{\beta} \theta}$.

Constructing an estimator of $\theta_0$ from \eqref{eqortho} thus requires estimates of ${\bbeta}$ and 
the expected information matrix $\BS{J}$. Calculating the second derivatives, interchanging the differentiation and expectation, and applying the Doob--Meyer decomposition yield that 
\begin{eqnarray*}
	{J}_{\theta \theta} &=&  \mathbb{E} \left\{  \frac{1}{n} \sum_{i=1}^n \int_0^{\tau}  ( D_i- \bar D (t,\theta_0, \BS{\beta}_0) )^{2} \tilde w_i(t,  \theta_0, \BS{\beta}_0)  \lambda_0(t) dt  \right\}, \\
	\BS{J}_{\BS{\beta} \theta} &=&  \mathbb{E} \left\{ \frac{1}{n} \sum_{i=1}^n \int_0^{\tau}  ( \BS{Z}_i - \bar {\BS{Z}} (t,\theta_0, \BS{\beta}_0)) (D_i - \bar D(t,\theta_0, \BS{\beta}_0) )  \tilde w_i(t,  \theta_0, \BS{\beta}_0)  \lambda_0(t) dt \right\}   = \BS{J}_{\theta \BS{\beta} }^{\top},  \\
    \BS{J}_{\BS{\beta}\BS{\beta}} &=&  \mathbb{E} \left\{  \frac{1}{n} \sum_{i=1}^n \int_0^{\tau}  ( \BS{Z}_i- \bar {\BS{Z}} (t,\theta_0, \BS{\beta}_0) )^{\otimes 2} \tilde w_i(t,  \theta_0, \BS{\beta}_0)  \lambda_0(t) dt  \right\} .
\end{eqnarray*}
Since terms $\bar D(t,\theta_0, \BS{\beta}_0)$ and $\bar {\BS{Z}} (t,\theta_0, \BS{\beta}_0)$ depend on the full risk set, the summands in ${J}_{\theta\theta}$, $\BS{J}_{\BS{\beta}\theta}$, and $\BS{J}_{\BS{\beta}\BS{\beta}}$ are not independent. Such dependence complicates direct estimation of $\BS{J}$.

To alleviate this dependence, we exploit the population counterparts of the risk-set averages in \eqref{eq-barD-barZ} and approximate ${J}_{\theta \theta}$, $\BS{J}_{\BS{\beta} \theta}$, and $\BS{J}_{\BS{\beta}\BS{\beta}}$ with the simpler population quantities ${\Sigma}_{\theta \theta}$, $\BS{\Sigma}_{\BS{\beta} \theta}$, and $\BS{\Sigma}_{\BS{\beta}\BS{\beta}}$ defined as 
\begin{eqnarray*}
	{\Sigma}_{\theta \theta}
	&=&  \mathbb{E} \left\{ \int_0^{\tau}  ( D- U_D (t,\theta_0, \BS{\beta}_0))^{2} \tilde w(t, \theta_0, \BS{\beta}_0) \lambda_0(t) dt  \right\},\\
	\BS{\Sigma}_{\BS{\beta} \theta}
	&=&  \mathbb{E} \left\{ \int_0^{\tau}  ( \BS{Z} - \BS{U}_{\BS{Z}} (t,\theta_0, \BS{\beta}_0)) (D - U_D(t,\theta_0, \BS{\beta}_0))  \tilde w(t, \theta_0, \BS{\beta}_0) \lambda_0(t) dt \right\} = \BS{\Sigma}_{ \theta \BS{\beta}}^{\top}, \\
	\BS{\Sigma}_{\BS{\beta}\BS{\beta}}
	&=&  \mathbb{E} \left\{ \int_0^{\tau}  ( \BS{Z} - \BS{U}_{\BS{Z}} (t,\theta_0, \BS{\beta}_0))^{\otimes 2} \tilde w(t, \theta_0, \BS{\beta}_0) \lambda_0(t) dt  \right\},
\end{eqnarray*}
where $U_D(t,\theta, \BS{\beta})$ and $\BS{U}_{\BS{Z}} (t,\theta, \BS{\beta})$ are the population counterparts of $\bar D(t,\theta_0, \BS{\beta}_0)$ and $\bar {\BS{Z}} (t,\theta_0, \BS{\beta}_0)$ given by
\[
U_D(t,\theta, \BS{\beta})=\frac{\mathbb{E}[D  \tilde w(t, \theta, \BS{\beta}) ]}{\mathbb{E}[\tilde w(t, \theta, \BS{\beta}) ]},~~\BS{U}_{\BS{Z}} (t,\theta, \BS{\beta})=\frac{\mathbb{E}[\BS{Z} \tilde w(t, \theta, \BS{\beta}) ]}{\mathbb{E}[\tilde w(t, \theta, \BS{\beta}) ]}.
\]

Under the SUTVA assumption, it holds that $Y(t)= D Y^{(1)}(t) + (1-D)Y^{(0)}(t)$ and $N(t)= D N^{(1)}(t) + (1-D)N^{(0)}(t)$.
Using the definition of $\tilde w(t, \theta, \BS{\beta})$ in \eqref{eq-def-tilde-w} along with the propensity score model (\ref{eqmodellog}), we can obtain the following equivalent representation of ${\Sigma}_{\theta \theta}$, $\BS{\Sigma}_{\BS{\beta}\BS{\beta}}$, and $\BS{\Sigma}_{\BS{\beta} \theta}$.
\begin{lemma} \label{lemma-equi-Sigma}
Under the strong ignorability assumption \citep{rubin1978bayesian} that $(Y^{(1)}(t), Y^{(0)}(t))$ and $D$ are independent conditional on $\BS{Z}$, we have that 
\begin{eqnarray}
    {\Sigma}_{\theta \theta}&=&  \mathbb{E} \left\{ \int_0^{\tau}  ( 1 - U_D (t,\theta_0, \BS{\beta}_0))^{2} m_0(\BS{Z}) dN^{(1)}(t)  \right\} \nonumber\\
	& & + \mathbb{E} \left\{ \int_0^{\tau} U_D (t,\theta_0, \BS{\beta}_0)^{2}  (1-m_0(\BS{Z})) dN^{(0)}(t)  \right\}, \label{eq-Sigma-2}\\
	\BS{\Sigma}_{\BS{\beta}\BS{\beta}} &=&  \mathbb{E} \left\{ \int_0^{\tau}  ( \BS{Z} -  \BS{U}_{\BS{Z}} (t,\theta_0, \BS{\beta}_0))^{\otimes 2} m_0(\BS{Z}) dN^{(1)}(t)  \right\} \nonumber\\
	& & + \mathbb{E} \left\{ \int_0^{\tau}(\BS{Z} -  \BS{U}_{\BS{Z}} (t,\theta_0, \BS{\beta}_0))^{\otimes 2}  (1-m_0(\BS{Z})) dN^{(0)}(t)  \right\},  \label{eq-Sigma-3} \\
     \BS{\Sigma}_{\BS{\beta} \theta}
	&=& \mathbb{E} \left\{   \int_0^{\tau}  ( \BS{Z} -  \BS{U}_{\BS{Z}} (t,\theta_0, \BS{\beta}_0)) (1 - U_D(t,\theta_0, \BS{\beta}_0)) m_0(\BS{Z}) dN^{(1)}(t)  \right\} \nonumber \\
	&  & - \mathbb{E} \left\{   \int_0^{\tau}  ( \BS{Z} -  \BS{U}_{\BS{Z}} (t,\theta_0, \BS{\beta}_0)) U_D(t,\theta_0, \BS{\beta}_0) (1-m_0(\BS{Z})) dN^{(0)}(t) \right\}. \label{eq-Sigma-1}
\end{eqnarray}
\end{lemma}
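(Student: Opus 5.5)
Each of the three identities comes from one pointwise-in-$t$ computation. We use the Doob--Meyer compensator to trade the intensity weight $\tilde w(t,\theta_0,\BS{\beta}_0)\lambda_0(t)\,dt$ for the counting-process increment $dN(t)$. We then split on $D$ using SUTVA. Finally we use strong ignorability together with the working model (\ref{eqmodellog}) to integrate $D$ out conditionally on $\BS{Z}$, which produces the factors $m_0(\BS{Z})$ and $1-m_0(\BS{Z})$.

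Write a generic integrand $H(t,D,\BS{Z})$: for $\Sigma_{\theta\theta}$ it is $(D-U_D)^2$, for $\BS{\Sigma}_{\BS{\beta}\BS{\beta}}$ it is $(\BS{Z}-\BS{U}_{\BS{Z}})^{\otimes 2}$, and for $\BS{\Sigma}_{\BS{\beta}\theta}$ it is the cross product. The crucial point is that $U_D(t,\theta_0,\BS{\beta}_0)$ and $\BS{U}_{\BS{Z}}(t,\theta_0,\BS{\beta}_0)$ are deterministic functions of $t$. Hence $H$ is $\mathcal{F}_t$-predictable, being left-continuous in $t$ with $D$ and $\BS{Z}$ fixed at time $0$.

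\textbf{Step 1 (compensator).} Since $M(t)=N(t)-A(t)$ is a zero-mean martingale and $dA(t)=\tilde w(t,\theta_0,\BS{\beta}_0)\lambda_0(t)\,dt$, we have $\mathbb{E}\int_0^\tau H\,dM=0$. This requires $H$ bounded or square integrable on $[0,\tau]$, for instance bounded covariates, which the paper presumably assumes later; I would state it as a mild integrability condition. Therefore
\[
\mathbb{E}\int_0^\tau H\,\tilde w\,\lambda_0\,dt=\mathbb{E}\int_0^\tau H(t,D,\BS{Z})\,dN(t).
\]

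\textbf{Step 2 (SUTVA).} Substitute $N=DN^{(1)}+(1-D)N^{(0)}$, and note that $D\in\{0,1\}$ gives $H(t,D,\BS{Z})D=H(t,1,\BS{Z})D$ and similarly for $1-D$. The right side becomes
\[
\mathbb{E}\int_0^\tau H(t,1,\BS{Z})\,D\,dN^{(1)}(t)+\mathbb{E}\int_0^\tau H(t,0,\BS{Z})\,(1-D)\,dN^{(0)}(t).
\]

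\textbf{Step 3 (ignorability).} Condition on $\BS{Z}$. Given $\BS{Z}$, the potential processes are independent of $D$, so
\[
\mathbb{E}\Big[D\int_0^\tau H(t,1,\BS{Z})\,dN^{(1)}(t)\,\Big|\,\BS{Z}\Big]=m_0(\BS{Z})\,\mathbb{E}\Big[\int_0^\tau H(t,1,\BS{Z})\,dN^{(1)}(t)\,\Big|\,\BS{Z}\Big],
\]
and likewise with $1-m_0(\BS{Z})$ for the control term. Taking outer expectations and substituting the three choices of $H$ gives (\ref{eq-Sigma-2})--(\ref{eq-Sigma-1}); for example, $H(t,0,\BS{Z})=U_D^2$ in the $\Sigma_{\theta\theta}$ case, and the minus sign in (\ref{eq-Sigma-1}) comes from $D-U_D=-U_D$ at $D=0$.

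\textbf{Main obstacle.} The delicate point is Step 3. The lemma only assumes independence of $(Y^{(1)},Y^{(0)})$ and $D$ given $\BS{Z}$, but $N^{(d)}$ involves the potential failure times and the censoring time. I would interpret $N^{(d)}(t)=I\{X^{(d)}\le t,\delta^{(d)}=1\}$ and read the ignorability assumption as covering the potential observed processes $(Y^{(d)},N^{(d)})$, with censoring not affected by $D$ beyond $\BS{Z}$. I would state this as the intended reading; the paper already writes $N=DN^{(1)}+(1-D)N^{(0)}$ in the same spirit. The remaining steps, Fubini and the martingale identity, are routine under boundedness.
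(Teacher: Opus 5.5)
Your argument is correct. It uses the same three ingredients as the paper's proof: the compensator swap, the SUTVA split, and conditioning on $\BS{Z}$ to produce $m_0(\BS{Z})$ and $1-m_0(\BS{Z})$. You apply them in the reverse order, and that changes where each assumption enters.

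The paper conditions on $\BS{Z}$ and splits on $D$ first. It applies the stated ignorability literally to the at-risk process, $\mathbb{E}[Y(t)\mid\BS{Z},D=1]=\mathbb{E}[Y^{(1)}(t)\mid\BS{Z}]$. Only at the end does it replace $Y^{(d)}(t)e^{d\theta_0+\BS{Z}^{\top}\BS{\beta}_0}\lambda_0(t)\,dt$ by $dN^{(d)}(t)$. That last step tacitly uses a compensator for the \emph{potential} processes $N^{(d)}$, which the paper never states explicitly.

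You instead apply Doob--Meyer to the observed $N$, which is exactly the martingale set up in Section~2; Condition~\ref{coninicox}(i) supplies the boundedness you ask for. After that you only need $\mathbb{E}[N^{(d)}(t)\mid\BS{Z},D=d]=\mathbb{E}[N^{(d)}(t)\mid\BS{Z}]$, because your integrands $H(t,d,\BS{Z})$ are $\BS{Z}$-measurable. Ignorability of $Y^{(d)}$ is not used at all.

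Your ``main obstacle'' remark is well taken, and it applies to the paper too. Given the observed Cox model and ignorability of $Y^{(d)}$, the potential-process identity $\mathbb{E}[dN^{(d)}(t)\mid\BS{Z}]=\mathbb{E}[Y^{(d)}(t)\mid\BS{Z}]e^{d\theta_0+\BS{Z}^{\top}\BS{\beta}_0}\lambda_0(t)\,dt$ is equivalent to that same mean ignorability of $N^{(d)}$. So neither proof goes through from ignorability of $(Y^{(1)},Y^{(0)})$ alone, since $\delta^{(d)}$ is not a function of $Y^{(d)}$.

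Both arguments are valid under the standard reading that the potential failure times and the censoring time are jointly independent of $D$ given $\BS{Z}$. Your ordering has the advantage of making the extra condition explicit. It also ties the martingale step to the observed-data decomposition the paper actually states.
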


Lemma \ref{lemma-equi-Sigma} above links the information quantities employed in the orthogonal score to observable treatment and control components through the propensity score, rewriting the relevant population covariance matrices in forms that separate the treated and untreated potential counting processes. Such representation clarifies how the propensity model provides the weights needed to approximate the population information matrix and gives each population quantity a direct sample counterpart, thereby connecting the causal identification assumptions to the practical estimation of the orthogonalization parameter vector $\BS{\mu}_{a0}$ and making the subsequent estimators and error bounds transparent.

If $\BS{\Sigma}_{\BS{\beta}\BS{\beta}}$ is invertible, a natural approximation to $\BS{\mu}_0$ is $\BS{\mu}_{a0}=\BS{\mu}_{a}(\theta_0, \BS{\beta}_0, \BS{\gamma}_0):=\BS{\Sigma}_{\BS{\beta}\BS{\beta}}^{-1} \BS{\Sigma}_{\BS{\beta} \theta}$. Denote by  $\BS{\eta}_{a0}=(\BS{\beta}_0^{\top}, \BS{\mu}_{a0}^{\top})^{\top}$.
Proposition \ref{lmnearortho} in Section \ref{new.Sec.theory.2} later reveals that estimating equation (\ref{eqortho}) satisfies the Neyman near-orthogonality at $(\theta_0, \BS{\eta}_{a0})$ over a neighborhood $\mathcal{T}_N$ of the nuisance parameter vector $\BS{\eta}= (\bbeta^{\top}, \BS{\mu}^{\top})^{\top}$. Consequently, a debiased estimator of $\theta_0$ can be obtained by solving (\ref{eqortho}) after plugging in estimators of nuisance parameter vector $\BS{\eta}_{a0}$.

\subsection{Initial parameter estimators} \label{Sec: initial-estimators}

We next specify the initial estimators for $\BS{\beta}_0$ and $\BS{\mu}_{a0}$. In view of Lemma \ref{lemma-equi-Sigma}, $\BS{\mu}_{a0}$ is a deterministic function of parameters $\theta_0$, $\BS{\beta}_0$, and $\BS{\gamma}_0$. Hence, it suffices to construct initial estimators for these parameters. Under the high-dimensional survival treatment effect model in \eqref{eqmodelcox} and \eqref{eqmodellog}, regularization-based methods such as the Lasso \citep{tibshirani1996regression} 
can be used to obtain initial coefficient estimates. For concreteness, we apply the Lasso to estimate $\theta_0$, $\boldsymbol{\beta}_0$, and $\boldsymbol{\gamma}_0$.

\textit{Initial estimators of \texorpdfstring{$\theta_0$}{theta0} and \texorpdfstring{${\bbeta}_0$}{beta0}}. Under the Cox model (\ref{eqmodelcox}), the $L_1$-penalized maximum partial likelihood estimators \citep{huang2013oracle} for $\theta_0$ and $\BS{\beta}_0$ are given by
\begin{eqnarray}\label{eqBetaL1}
	(\hat \theta, {\hat{\BS{\beta}}}^{\top})^{\top}= \argmin_{\theta \in \mathbb{R}, \,\BS{\beta} \in \mathbb{R}^p} \big\{ -l(\{\mathcal{O}_i\}_{i=1}^n ; \theta, \BS{\beta}) + \lambda_{\theta\BS{\beta} } (|\theta| + \| \BS{\beta} \|_{1}) \big\},
\end{eqnarray}
where $ \lambda_{\theta\BS{\beta}} $ is a nonnegative regularization parameter.

\textit{Initial estimator of \texorpdfstring{${\bgamma}_0$}{gamma0} and $m_0(\BS{z})$}. Under the logistic model (\ref{eqmodellog}), the log-likelihood function divided by $n$ is

\[
l_{\BS{\gamma}}(\{D_i, \BS{Z}_i\}_{i=1}^n; \BS{\gamma}) = -\mathbb{E}_n \rho_{\BS{\gamma}}(\BS{Z}_i, D_i),
\]
where $\rho_{\BS{\gamma}}(\BS{z}, d) =\rho({\BS{z}}^{\top}\BS{\gamma}, d)$, $\rho(u, d)=-du +\log(1+e^u)$ is the logistic loss function, and $\mathbb{E}_n[X_i] := n^{-1} \sum_{i=1} X_i$ represents the empirical mean of $\{X_i\}_{i=1}^n$.
Thus, the Lasso estimator for $\BS{\gamma}$ is defined as
\begin{eqnarray}\label{eqGamL1}
	\hat {\BS{\gamma}}= \argmin_{\BS{\gamma} \in \mathbb{R}^p} ( \mathbb{E}_n \rho_{\BS{\gamma}}(\BS{Z}_i, D_i)+\lambda_{\BS{\gamma}} \| \BS{\gamma} \|_{1}),
\end{eqnarray}
where $\lambda_{\BS{\gamma}}$ is a nonnegative regularization parameter. The resulting estimated propensity score is
\begin{equation} \label{eq-propen-score-logistic}
\hat{m}(\BS{Z}_i) = {\exp\{{\BS{Z}_i}^{\top} \hat{\BS{\gamma}}\}}/(1+\exp\{{\BS{Z}_i}^{\top} \hat{\BS{\gamma}}\}), \quad i=1, \dots, n.
\end{equation}

\textit{Initial estimator of \texorpdfstring{${\bmu}_{a0}$}{mua0}}. We proceed to estimate the nuisance parameter vector ${\bmu}_{a0} =\BS{\Sigma}_{\BS{\beta}\BS{\beta}}^{-1} \BS{\Sigma}_{\BS{\beta} \theta} $. In light of Lemma \ref{lemma-equi-Sigma}, we first estimate $ \BS{\Sigma}_{\BS{\beta} \theta}$ and $ \BS{\Sigma}_{\BS{\beta}\BS{\beta}}$ as 
\begin{eqnarray}
	\hat {\BS{\Sigma}}_{\BS{\beta} \theta} &=&  \frac{1}{n_1} \sum_{i=1}^n  D_i \delta_i  ( \BS{Z}_i - \bar {\BS{Z}} (X_i,\hat \theta, \hat {\BS{\beta}}) ) (1 - \bar D(X_i,\hat \theta, \hat {\BS{\beta}})) \hat{m}(\BS{Z}_i)  \nonumber  \\
	& & - \frac{1}{n_0} \sum_{i=1}^n  (1-D_i) \delta_i ( \BS{Z}_i - \bar {\BS{Z}} (X_i,\hat \theta, \hat {\BS{\beta}})) \bar D(X_i,\hat \theta, \hat {\BS{\beta}}) (1-\hat{m}(\BS{Z}_i) ),  \label{eq-hatSigma_2} \\
	\hat {\BS{\Sigma}}_{\BS{\beta}\BS{\beta}} &=&  \frac{1}{n_1} \sum_{i=1}^n  D_i \delta_i  ( \BS{Z}_i - \bar {\BS{Z}}(X_i,\hat \theta, \hat {\BS{\beta}}))^{\otimes 2}  \hat{m}(\BS{Z}_i)  \nonumber  \\
	& & + \frac{1}{n_0} \sum_{i=1}^n  (1-D_i) \delta_i  ( \BS{Z}_i- \bar {\BS{Z}} (X_i,\hat \theta, \hat {\BS{\beta}}))^{\otimes 2}  (1-\hat{m}(\BS{Z}_i) ), \label{eq-hatSigma_3}
\end{eqnarray}
where $n_0+n_1=n$, and $n_1$ and $n_0$ stand for the sample sizes of the treatment and control groups, respectively.

Consequently, the estimator $\hat {\BS{\mu}}_a$ of $\BS{\mu}_{a0}$ can be obtained by the constrained $l_1$ minimization method of \cite{cai2011constrained}
\begin{eqnarray}\label{eqClime1}
	\hat {\BS{\mu}}_a:= \hat {\BS{\mu}}_a( \hat \theta, \hat {\BS{\beta}}, \hat{\BS{\gamma}})=\argmin_{ \BS{b} \in \mathbb{R}^p}  \|\BS{b}\|_1 \ \text{ such that } \ \| \hat {\BS{\Sigma}}_{\BS{\beta}\BS{\beta}} \BS{b} -\hat {\BS{\Sigma}}_{\BS{\beta} \theta} \|_{\infty} \le \lambda_n,
\end{eqnarray}
where $\lambda_n$ is a nonnegative tuning parameter.

\subsection{HSCI estimator for survival treatment effect inference} \label{Sec3.3}

Motivated by the Neyman near-orthogonality result in Proposition \ref{lmnearortho}, we construct an asymptotic Neyman near-orthogonal estimating equation for $\theta_0$ by applying \eqref{eqortho} with the plug-in nuisance estimator $\boldsymbol{\eta}_{a} = (\widehat{\bbeta}^{\top}, \widehat{\bmu}_{a}^{\top})^{\top}$.
We now introduce the HSCI inference procedure for the survival treatment effect $\theta_0$.

If the same data is used both to estimate the nuisance parameter vector $\boldsymbol{\eta}_{a}$ and to construct the orthogonal estimating equation \eqref{eqortho}, the overfitting bias may arise, as discussed in \cite{chernozhukov2018double}. We thus adopt the cross-fitting, which reduces the overfitting bias and improves the stability of the estimating equations. Specifically, let $K$ be a fixed positive integer and assume, for simplicity, that $n$ is divisible by $K$. Randomly partition the $n$ observations into $K$ disjoint blocks $\{I_k\}_{k=1}^K$, each of size $m=n/K$. For each $k=1, \dots,K$, we use the observations indexed by $I_k^c :=\{1,2,\dots, n\} \setminus I_k$ to construct initial estimators of $\theta_0$, $\BS{\beta}_0$, $\BS{\gamma}_0$, and $\BS{\mu}_{a0}$ following the procedure described in Section \ref{Sec: initial-estimators}.
Denote the resulting estimators as $\hat {\theta}_k=\hat {\theta}( (\mathcal{O}_i)_{i \in I_k^c} )$, $\hat {\BS{\beta}}_k=\hat {\BS{\beta}}( (\mathcal{O}_i)_{i \in I_k^c} )$, $\hat {\BS{\gamma}}_k=\hat {\BS{\gamma}}( (\mathcal{O}_i)_{i \in I_k^c} )$, $\hat {\BS{\mu}}_{ak}=\hat {\BS{\mu}}_a(\hat {\theta}_k, \hat {\BS{\beta}}_k,\hat {\BS{\gamma}}_k)$, and set $\hat {\BS{\eta}}_k=( {\hat {\BS{\beta}}_k}^{\top}, {\hat {\BS{\mu}}_{ak}}^{\top})^{\top}$. By construction, $\hat{\BS{\eta}}_k$ depends only on the observations in $I_k^c$. The estimator of $\theta_0$ is then obtained by one of the following two ways of combining the $K$ blocks.

{\it Approach 1: averaging over solutions}. For any $k \in \{1, \cdots, K\}, \theta \in \mathbb{R}, \BS{\beta} \in \mathbb{R}^{p}$, and $\BS{\eta} \in \mathbb{R}^{2p}$, let us define 
\begin{align*}
	w_{k,j} (t, \theta, \BS{\beta})& = \frac{Y_j(t) \exp\{D_j \theta +\BS{Z}_j^{\top} \BS{\beta}\}}{\sum_{j \in I_k} Y_j(t) \exp\{D_j \theta +\BS{Z}_j^{\top} \BS{\beta}\}}, \ j \in I_k, \\
	\bar D_k (t, \theta, \BS{\beta})& = \sum_{j \in I_k} D_j w_{k,j} (t, \theta, \BS{\beta}), \quad \bar {\BS{Z}}_k (t, \theta, \BS{\beta}) =\sum_{j \in I_k} \BS{Z}_j w_{k,j} (t, \theta, \BS{\beta}),
\end{align*}
and
\begin{align*}
	\Phi (\{\mathcal{O}_i\}_{i \in I_k}; \theta, \BS{\eta})
	= \dot l_{ \theta}(\{\mathcal{O}_i\}_{i \in I_k}; \theta, \BS{\beta})-  \BS{\mu}_a^{\top} \dot l_{ \BS{\beta}}(\{\mathcal{O}_i\}_{i \in I_k}; \theta, \BS{\beta}).
\end{align*}
Averaging the solutions to the orthogonal estimating equations across the $K$ blocks yields the first HSCI survival treatment effect estimator $\check \theta_1$
  \begin{align}\label{eq: check-theta1}
     \check \theta_1 = K^{-1}\sum_{k=1}^K \check{\theta}_{1,k},
  \end{align}
where $\check{\theta}_{1,k}$ solves the equation
\begin{eqnarray}\label{eqesttheta1}
	\Phi (\{\mathcal{O}_i\}_{i \in I_k};\check{\theta}_{1,k}, \hat {\BS{\eta}}_k) = 0
\end{eqnarray}
for $k=1, \dots, K$.

{\it Approach 2: solving the averaged orthogonal equation}. The second HSCI survival treatment effect estimator $\check \theta_2$ is obtained by solving the averaged orthogonal equation across the $K$ blocks
\begin{eqnarray}\label{eqesttheta2}
	K^{-1}\sum_{k=1}^K \Phi (\{\mathcal{O}_i\}_{i \in I_k};\check \theta_2, \hat {\BS{\eta}}_k) = 0.
\end{eqnarray}
In general, the exact solutions to equations (\ref{eqesttheta1}) and (\ref{eqesttheta2}) may not be analytically tractable. Instead, we consider approximate solutions satisfying that 
\begin{eqnarray}\label{eqapptheta1}
    \big| \Phi (\{\mathcal{O}_i\}_{i \in I_k};\check{\theta}_{1,k}, \hat {\BS{\eta}}_k)\big| \le  \inf_{\theta \in  \Theta} \big|\Phi (\{\mathcal{O}_i\}_{i \in I_k}; \theta, \hat {\BS{\eta}}_k)\big| +o(n^{-1/2}) \ \mbox{ for } k = 1, \ldots, K,
\end{eqnarray}
and
\begin{eqnarray}\label{eqapptheta2}
 \Big| K^{-1}\sum_{k=1}^K \Phi (\{\mathcal{O}_i\}_{i \in I_k};\check \theta_2, \hat {\BS{\eta}}_k) \Big|
     \le    \inf_{\theta \in  \Theta} \Big| K^{-1}\sum_{k=1}^K \Phi (\{\mathcal{O}_i\}_{i \in I_k}; \theta, \hat {\BS{\eta}}_k) \Big|+o(n^{-1/2}),
\end{eqnarray}
respectively.

\subsection{HSCI estimator for survival covariate effect inference} \label{Sec3.4}

Section \ref{Sec3.3} constructs two debiased HSCI estimators $\check \theta = \check \theta_i$, $i \in \{1,2\}$, for the survival causal parameter $\theta_0$; Section \ref{new.Sec.theory.2} later unveils that they are both $\sqrt{n}$ consistent. We now turn to inference for high-dimensional survival covariate effect vector $\BS{\beta}_0$. Focusing on the Cox model (\ref{eqmodelcox}), we exploit a debiasing technique related to \cite{yu2021confidence}; this idea is also connected to the debiased estimation approaches of \cite{van2014asymptotically} and \cite{zhang2014confidence} for the linear and generalized linear models. Based on the initial estimator $\hat{\BS{\beta}}$ for $\BS{\beta}_0$ and a $\sqrt{n}$-consistent estimator $\check \theta$ for $\theta_0$, we construct the debiased estimator $\check {\BS{\beta}}$ as
\begin{align}\label{eqestbeta}
	\check {\BS{\beta}} = \hat{\BS{\beta}} + \hat {\BS{\Xi}} \dot l_{ \BS{\beta}}(\{\mathcal{O}_i\}_{i=1}^n; \check \theta ,\hat{\BS{\beta}}),
\end{align}
where $\hat {\BS{\Xi}} =( {\hat {\BS{\Xi}}}_1, \dots, {\hat {\BS{\Xi}}}_p)^{\top}$ is the CLIME sparse precision matrix estimator defined as
\begin{eqnarray}\label{eqClime2}
	{\hat {\BS{\Xi}}}_j =\argmin_{ \BS{d}_j \in \mathbb{R}^p} \| \BS{d}_j\|_1 \ \text{ such that } \ \|\check {\BS{\Sigma}}_{\BS{\beta}\BS{\beta}} \BS{d}_j - \BS{e}_j \|_{\infty} \le \lambda_n^*.
\end{eqnarray}
Here, $\BS{e}_j$ is the unit vector whose $j$th component is one and whose remaining components are zero, $\lambda_n^*$ is a nonnegative tuning parameter, and $\check {\BS{\Sigma}}_{\BS{\beta}\BS{\beta}}$ is an improved estimator of $\BS{\Sigma}_{\BS{\beta}\BS{\beta}}$ 
 \begin{align*}
   \check {\BS{\Sigma}}_{\BS{\beta}\BS{\beta}} := \hat{\BS{\Sigma}}_{\BS{\beta}\BS{\beta}}(\check{\theta}, \hat{\bbeta}) =&  \frac{1}{n_1} \sum_{i=1}^n  D_i \delta_i  ( \BS{Z}_i - \bar {\BS{Z}}(X_i,\check \theta, \hat {\BS{\beta}}))^{\otimes 2}  \hat{m}(\BS{Z}_i)  \\
	 & + \frac{1}{n_0} \sum_{i=1}^n  (1-D_i) \delta_i  ( \BS{Z}_i- \bar {\BS{Z}} (X_i,\check \theta, \hat {\BS{\beta}}))^{\otimes 2}  (1-\hat{m}(\BS{Z}_i) )
\end{align*}
obtained by plugging in $\check{\theta}$ and $\hat{\bbeta}$.

\section{Asymptotic properties of HSCI} \label{new.Sec.theory}

\subsection{Technical conditions} \label{Sec3.1}

We impose some regularity conditions to facilitate the technical analysis of the suggested HSCI method.

\begin{condition}\label{coninicox}
	(i) The covariate vector $\BS{Z} \in \mathbb{R}^p $ satisfies that $\| \BS{Z} \|_{\infty} \le K_{\BS{Z}}$, where $K_{\BS{Z}} > 0$ is a finite constant.
    
	\noindent(ii) Let $-\ddot l(\{\mathcal{O}_i\}_{i=1}^n; \theta_0, \BS{\beta}_0)$ be the negative Hessian matrix of the log-partial likelihood function (\ref{eqlogparlik}) evaluated at $(\theta_0, \BS{\beta}_0)$, i.e., $-\ddot l(\{\mathcal{O}_i\}_{i=1}^n; \theta_0, \BS{\beta}_0):=-\ddot l(\{\mathcal{O}_i\}_{i=1}^n; \theta, \BS{\beta})  |_{\theta=\theta_0, \BS{\beta}=\BS{\beta}_0}$.
	Assume that $1/\kappa_0=O_p(1)$, where
      \begin{align*}
	    \kappa_0 := \inf_{ \{\BS{v} \in \mathbb{R}^{p+1} \setminus \{0\}:  \|\BS{v}_{S_{cox}^C}\|_1 \le 2 \|\BS{v}_{S_{cox}}\|_1   \} } \frac{(s_{\BS{\beta}_0}+1)^{1/2} \{\BS{v}^{\top} (-\ddot l(\{\mathcal{O}_i\}_{i=1}^n; \theta_0, \BS{\beta}_0)) \BS{v}\}^{1/2}}{\|\BS{v}_{S_{cox}}\|_1}
	 \end{align*}
	is the compatibility factor of $-\ddot l(\{\mathcal{O}_i\}_{i=1}^n; \theta_0, \BS{\beta}_0)$ with $S_{cox} :=\{1\}  \cup \{j+1 : j \in S_{\BS{\beta}_0}, 1 \le j \le p\}$.

	\noindent(iii) $s_{\BS{\beta}_0} = o( n^{1/4}/\sqrt{\log (np)})$.

    \noindent(iv) $\| \BS{\beta}_0 \|_1 \leq C$.
\end{condition}

\begin{condition}\label{coninilog}
 (i) (Strict overlap) There exists some constant $c_0 > 0$ such that $c_0 < P(D=1 | \BS{Z}=\BS{z}) =m_0(\BS{z}) < 1-c_0 $ for all $\BS{z}$ satisfying that $\|\BS{z}\|_{\infty} \leq K_{\BS{Z}}$.

 \noindent(ii) There exists a sequence $\tau_n > 0 $ such that the propensity score estimator $\hat{m}(\BS{z})$ satisfies that 
 $$ \mathbb{P} \Big(\sup_{\BS{z}: \|\BS{z}\|_{\infty} \leq K_{\BS{Z}}} | \hat{m}(\BS{z}) - m_0(\BS{z}) | > \tau_n \Big) \to 0.
 $$
\end{condition}

\begin{condition}\label{coninimu}
	(i) $n_0 \asymp n_1$.

	\noindent(ii) The failure time $T$ has distribution function $F_T(\cdot)$ and bounded density function $f_T(\cdot)$.

	\noindent(iii) $\int_0^{\tau} \lambda_0(t)dt < \infty$ and $P(Y(\tau)=1) >0$.

	\noindent(iv) The matrix $\BS{\Sigma}$ is invertible, and the eigenvalues $0 < \Lambda_1 \le \cdots \le \Lambda_p \leq \Lambda_{p+1 } < \infty$  of $\BS{\Sigma}$ are bounded from below and above.

   \noindent(v)   $  {\|\BS{\Sigma}}^{-1}_{\BS{\beta}\BS{\beta}} \|_{\infty}  s_{{\BS{\mu}}_{a0}} s_{\BS{\beta}_0}^2    \log (np)  = o(\sqrt{n}) $ and $    \tau_n  {\|\BS{\Sigma}}^{-1}_{\BS{\beta}\BS{\beta}} \|_{\infty}  s_{{\BS{\mu}}_{a0}} s_{\BS{\beta}_0} \sqrt{\log (np)}   = o(1)$.

	\noindent(vi) Denote the active set of ${\BS{\mu}}_{a0}$ as $S_{{\BS{\mu}}_{a0}}=\{j: {\BS{\mu}}_{a0,j} \neq 0, 1 \le j \le p\}$, with 
    cardinality $s_{{\BS{\mu}}_{a0}}=| S_{{\BS{\mu}}_{a0}}|=o(\sqrt{n}/\log (np))$.   Moreover, assume that $\|{\BS{\mu}}_{a0}\|_1 \le K_{\BS{\mu}}$ for a finite constant $K_{\BS{\mu}} >0 $, and $\min_{j \in S_{{\BS{\mu}}_{a0}}} |\bmu_{a0, j}| \geq C {\|\BS{\Sigma}}^{-1}_{\BS{\beta}\BS{\beta}} \|_{\infty}  s_{{\BS{\mu}}_{a0}} (s_{\BS{\beta}_0} \sqrt{\log (np) /n} + \tau_n)$ for  a sufficiently large constant $C > 0$ if $S_{{\BS{\mu}}_{a0}} \neq \emptyset$.
\end{condition}

\begin{condition}\label{contheta}
  Assume that $\mathcal{N}_{\theta_0}^*:= \mathcal{N}_{\theta_0} (2C_0 s_{\BS{\beta}_0} \sqrt{\log (np)/{n}} ) \subset \Theta$, where $C_0 >0$ is a sufficiently large constant.

\end{condition}

\begin{condition}\label{condebeta}
	 Let $h_j := \sum_{i=1}^p I_{\{ ({\BS{\Sigma}}_{\BS{\beta}\BS{\beta}}^{-1})_{ij} \neq 0\}}$, $j=1,\dots, p$, and assume that 
	\[
   \|{\Sigma}_{\BS{\beta}\BS{\beta}}^{-1} \|_{\infty}^2     \big(s_{\BS{\beta}_0}  \log (np) /\sqrt{n} + \tau_n \sqrt{\log(np) } \big) \max_{j=1,\dots, p} h_j = o(1).
	\]
\end{condition}

Condition \ref{coninicox} above collects the assumptions needed for the initial high-dimensional Cox estimator. Part (i) bounds the covariates, which simplifies the empirical process control; it can be relaxed to suitable tail assumptions at the cost of more technical arguments. Part (ii) is a compatibility condition for the negative Hessian of the partial likelihood, ensuring that the sparse directions of the parameter space are identifiable, and the $l_1$ estimation error can be controlled \citep{van2009conditions,huang2013oracle}. Part (iii) limits the sparsity of $\BS{\beta}_0$ so that the regularization bias remains asymptotically negligible after debiasing. Part (iv) controls the overall signal strength of the Cox regression coefficients and rules out unstable risk scores.

Condition \ref{coninilog} above governs the treatment assignment model. Strict overlap in part (i) prevents the propensity score from approaching zero or one, ensuring that both treatment groups remain informative over the covariate region under consideration. Part (ii) is stated as a high-level uniform convergence requirement for the estimated propensity score. In particular, the proposition below verifies such requirement in Condition \ref{coninilog} for the sparse logistic Lasso and provides the concrete rate $\tau_n = O\big(s_{\BS{\gamma}_0} \sqrt{\log (np)/n}\big)$ under the compatibility and sparsity assumptions.

\begin{proposition}\label{lminilog}
	Assume that Condition \ref{coninicox}(i) is satisfied, and there exists some constant $\kappa_1 >0$ such that
	\[
	\kappa_1^2 \| \BS{v}_{S_{\BS{\gamma}_0}}\|_1^2 \le s_{\BS{\gamma}_0} \BS{v}^{\top} \mathbb{E}[\BS{Z} {\BS{Z}}^{\top}] \BS{v}
	\]
	for any $\BS{v} \in \mathcal{C}(S_{\BS{\gamma}_0})$, where $\mathcal{C}(S_{\BS{\gamma}_0})=\{ \BS{v} \in \mathbb{R}^{p}: \|\BS{v}_{S_{\BS{\gamma}_0}^C}\|_1 \le 3 \|\BS{v}_{S_{\BS{\gamma}_0}}\|_1\}$. Assume further that $s_{\BS{\gamma}_0} =o(\sqrt{n}/\log (np))$ and  $\lambda_{ \BS{\gamma}}  \asymp  \sqrt{\log (np)/n}$ in (\ref{eqGamL1}). Then for the estimator defined in \eqref{eq-propen-score-logistic}, we have that 
    $$
     \sup_{\BS{z}: \|\BS{z}\|_{\infty} \leq K_{\BS{Z}}} | \hat{m}(\BS{z}) - m_0(\BS{z}) | = O_p(s_{\BS{\gamma}_0} \sqrt{\log (np)/n} ) .
    $$
\end{proposition}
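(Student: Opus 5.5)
The argument has two stages. First we show that the logistic Lasso $\hat{\BS{\gamma}}$ in (\ref{eqGamL1}) satisfies $\|\hat{\BS{\gamma}}-\BS{\gamma}_0\|_1 = O_p(s_{\BS{\gamma}_0}\sqrt{\log(np)/n})$. Then we pass this $l_1$ bound to the propensity score, uniformly over the bounded covariate region.

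For the transfer step, the logistic function $g(u)=e^u/(1+e^u)$ is $1/4$-Lipschitz. By H\"older's inequality,
\[
\sup_{\|\BS{z}\|_\infty\le K_{\BS{Z}}}|\hat m(\BS{z})-m_0(\BS{z})| \le \tfrac14 \sup_{\|\BS{z}\|_\infty\le K_{\BS{Z}}} |\BS{z}^\top(\hat{\BS{\gamma}}-\BS{\gamma}_0)| \le \tfrac{K_{\BS{Z}}}{4}\|\hat{\BS{\gamma}}-\BS{\gamma}_0\|_1 .
\]
So the whole proposition reduces to the $l_1$ rate. Note that Condition \ref{coninicox}(i) is what makes this sup bound deterministic and uniform in $\BS{z}$.

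For the $l_1$ rate we follow the standard argument (van de Geer 2008; B\"uhlmann and van de Geer 2011, Ch.\ 6; or Belloni et al.'s logistic treatment).

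\textbf{Score step.} The gradient at the truth is $\mathbb{E}_n[(m_0(\BS{Z}_i)-D_i)\BS{Z}_i]$. Its summands are i.i.d., mean zero by (\ref{eqmodellog}), and bounded by $K_{\BS{Z}}$ coordinatewise. Hoeffding's inequality plus a union bound over $p$ coordinates give $\|\nabla\|_\infty \le C\sqrt{\log(np)/n}$ with probability tending to one. Hence with $\lambda_{\BS{\gamma}}\asymp\sqrt{\log(np)/n}$ and a large enough constant we have $\lambda_{\BS{\gamma}}\ge 2\|\nabla\|_\infty$ on this event.

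\textbf{Basic inequality.} Convexity then yields the cone condition $\hat{\BS{v}}:=\hat{\BS{\gamma}}-\BS{\gamma}_0\in\mathcal{C}(S_{\BS{\gamma}_0})$, together with
\[
\mathcal{E}(\hat{\BS{\gamma}}) + \tfrac{\lambda_{\BS{\gamma}}}{2}\|\hat{\BS{v}}\|_1 \le 2\lambda_{\BS{\gamma}}\|\hat{\BS{v}}_{S}\|_1 ,
\]
where $\mathcal{E}$ is the symmetrized Bregman divergence (excess empirical risk) of the logistic loss.

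\textbf{Main obstacle: curvature.} The difficulty is turning $\mathcal{E}$ into a quadratic form in $\hat{\BS{v}}$, and this step has two parts.

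The first part handles the nonlinearity of the logistic loss. The second derivative $g'(u)$ is bounded below by $c>0$ only when $|u|$ is bounded. The truth is fine here: the linear predictor $\BS{Z}^\top\BS{\gamma}_0$ is bounded by the overlap Condition \ref{coninilog}(i). Along the segment to $\hat{\BS{\gamma}}$, the perturbation satisfies $|\BS{Z}^\top\hat{\BS{v}}|\le K_{\BS{Z}}\|\hat{\BS{v}}\|_1$. I would use the self-concordance-type bound $g'(u+h)\ge g'(u)e^{-|h|}$, or localize to an $l_1$ ball of radius $r_n\to0$ via the usual convexity rescaling trick. This gives $\mathcal{E}(\hat{\BS{\gamma}})\ge c\,\hat{\BS{v}}^\top\hat{\BS{\Sigma}}_Z\hat{\BS{v}}$ once $\|\hat{\BS{v}}\|_1$ is small. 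The final rate $s_{\BS{\gamma}_0}\sqrt{\log(np)/n}=o(1)$, guaranteed by the sparsity assumption, then closes the argument consistently.

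The second part converts the sample Gram matrix $\hat{\BS{\Sigma}}_Z=\mathbb{E}_n\BS{Z}\BS{Z}^\top$ to the population compatibility constant $\kappa_1$. Bounded entries give $\|\hat{\BS{\Sigma}}_Z-\mathbb{E}\BS{Z}\BS{Z}^\top\|_{\max}=O_p(\sqrt{\log p/n})$. On the cone, $|\BS{v}^\top(\hat{\BS{\Sigma}}_Z-\mathbb{E}\BS{Z}\BS{Z}^\top)\BS{v}|\le 16\|\BS{v}_S\|_1^2\,O_p(\sqrt{\log p/n})$. Because $s_{\BS{\gamma}_0}\sqrt{\log p/n}=o(1)$, the sample compatibility constant is at least $\kappa_1^2/2$ with probability tending to one.

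\textbf{Conclusion.} Combining the two parts,
\[
c\kappa_1^2\|\hat{\BS{v}}_S\|_1^2/(2s_{\BS{\gamma}_0}) \le 2\lambda_{\BS{\gamma}}\|\hat{\BS{v}}_S\|_1 ,
\]
so $\|\hat{\BS{v}}\|_1\le 4\|\hat{\BS{v}}_S\|_1=O_p(s_{\BS{\gamma}_0}\lambda_{\BS{\gamma}})=O_p(s_{\BS{\gamma}_0}\sqrt{\log(np)/n})$. Plugging this into the transfer step gives the stated uniform rate for $\hat m$.
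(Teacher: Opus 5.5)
Your proof is correct and follows the paper's two-step outline: first an $\ell_1$ rate $\|\hat{\BS{\gamma}}-\BS{\gamma}_0\|_1=O_p(s_{\BS{\gamma}_0}\lambda_{\BS{\gamma}})$ for the logistic Lasso, then a deterministic transfer to $\sup_{\|\BS{z}\|_\infty\le K_{\BS{Z}}}|\hat m(\BS{z})-m_0(\BS{z})|$. The difference is in how the $\ell_1$ rate is obtained.

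\begin{itemize}
\item \textbf{The paper's route.} It checks a pointwise quadratic margin condition for the conditional excess risk $R(\BS{Z}^\top\BS{\gamma})$ near $\BS{Z}^\top\BS{\gamma}_0$. It then invokes the convex-loss oracle inequality of B\"uhlmann and van de Geer (their Lemma 6.8). That lemma is phrased through the theoretical risk and the population compatibility constant, and the empirical-process control over an $\ell_1$-ball is buried inside it.
\item \textbf{Your route.} You prove the oracle inequality directly on the empirical side:
\begin{itemize}
\item Hoeffding plus a union bound for the score at $\BS{\gamma}_0$;
\item the basic inequality with an empirical Bregman divergence;
\item curvature from $g'(u+h)\ge g'(u)e^{-|h|}$, with $g$ the logistic function, combined with convexity-rescaling localization;
\item transfer of $\kappa_1$ from $\mathbb{E}[\BS{Z}\BS{Z}^\top]$ to $\mathbb{E}_n[\BS{Z}\BS{Z}^\top]$ via a max-norm bound on the cone. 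This needs only $s_{\BS{\gamma}_0}\sqrt{\log p/n}=o(1)$, which the stated sparsity gives.
\end{itemize}
\end{itemize}

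One technical note on the localization step. The rescaling goes through most directly if $\mathcal{E}$ is the plain Bregman divergence from comparing objective values. The symmetrized divergence coming from the KKT conditions is less convenient here, because the rescaled point is not a KKT point.

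Your route is elementary, self-contained, and shows where each hypothesis is used. The paper's is shorter but leans on an external lemma. Your Lipschitz transfer with constant $K_{\BS{Z}}/4$ is also cleaner than the paper's bound $\exp\{K_{\BS{Z}}\|\hat{\BS{\gamma}}-\BS{\gamma}_0\|_1\}-1$, though both give the same rate.

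Finally, you use the overlap Condition~\ref{coninilog}(i) to keep $\BS{Z}^\top\BS{\gamma}_0$ bounded, so that $g'$ is bounded below at the truth. The paper does the same, since its margin constant $e^{-|u_0|}$ needs it. That condition is not among the proposition's listed hypotheses, so both arguments silently add it.
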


Proposition \ref{lminilog} above shows that the Lasso logistic regression estimator provides the uniform propensity-score approximation required by Condition \ref{coninilog}, thereby translating standard sparsity and compatibility assumptions for the logistic model into the general error sequence $\tau_n$ exploited throughout the later theory. Such rate controls the discrepancy between the estimated propensity score $\hat m(\BS{Z})$ and its population counterpart $m_0(\BS{Z})$, making the propensity-score component of the HSCI procedure operational.

Condition \ref{coninimu} above controls the estimation of the orthogonalization vector $\BS{\mu}_{a0}$ and the population information quantities. Comparable group sizes in part (i) avoid degeneracy in the treatment and control contributions. Parts (ii) and (iii) impose standard survival-analysis regularity conditions on the event-time distribution, the cumulative baseline hazard, and the amount of follow-up through $P(Y(\tau)=1)>0$. Part (iv) requires the population information matrix $\BS{\Sigma}$ to be nondegenerate. Parts (v) and (vi) are high-dimensional rate and sparsity requirements for $\BS{\mu}_{a0}$; they ensure that the \textit{products} of the Cox-model and propensity-score nuisance errors are sufficiently small. In particular, the second requirement in part (v) has a \textit{rate-double-robust} interpretation because it permits one nuisance component to be estimated less accurately when the other component is sufficiently accurate.

Condition \ref{contheta} above is a local-parameter-space condition. It ensures that the neighborhood in which the debiased estimator is shown to lie remains inside the parameter space $\Theta$. This avoids the boundary effects and allows the local expansion of the orthogonal estimating equation around $\theta_0$.

Condition \ref{condebeta} above is employed for inference on the high-dimensional covariate effect vector $\BS{\beta}_0$. It restricts the sparsity of the inverse information matrix through the column sparsity levels $h_j$, and controls the interaction between this sparsity, Cox-model estimation error, and propensity-score estimation error. This condition is the analog of sparse-precision-matrix assumptions adopted in the debiased high-dimensional regression.

\subsection{Asymptotic distributions of HSCI estimators} \label{new.Sec.theory.2}

We first establish the convergence rate of the initial estimators $\hat \theta$ and $\hat {\BS{\beta}}$ introduced in Section \ref{Sec3.2}.

\begin{proposition}\label{lminicox}
	Assume that Condition \ref{coninicox} holds, and set $\lambda_{\theta \BS{\beta}} \asymp \sqrt{\log (np)/{n}}$ in (\ref{eqBetaL1}). Then we have that 
	\begin{eqnarray*}
		| \hat \theta - \theta_0 | +\| \hat {\BS{\beta}} - \BS{\beta}_0 \|_1=O_p(  s_{\BS{\beta}_0} \sqrt{\log (np)/{n}}).
	\end{eqnarray*}
\end{proposition}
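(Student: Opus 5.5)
This is the standard oracle inequality for the Lasso-penalized Cox partial likelihood, as in \citet{huang2013oracle}, applied to the augmented design $\tilde{\BS{Z}}_i=(D_i,\BS{Z}_i^{\top})^{\top}\in\mathbb{R}^{p+1}$ with parameter $\BS{b}=(\theta,\BS{\beta}^{\top})^{\top}$. Since $D_i\in\{0,1\}$ and $\|\BS{Z}_i\|_\infty\le K_{\BS{Z}}$ by Condition \ref{coninicox}(i), the augmented covariates are uniformly bounded by $\max(1,K_{\BS{Z}})$. The true support is $S_{cox}$, of size $s_{\BS{\beta}_0}+1$, and the penalty $\lambda_{\theta\BS{\beta}}(|\theta|+\|\BS{\beta}\|_1)$ is the ordinary $l_1$ norm of $\BS{b}$. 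The proof therefore has three steps: a score bound, a cone and basic inequality argument, and a local curvature (Hessian comparison) argument.

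\textbf{Step 1 (score bound).} I will show that $\|\dot l_{\BS{b}}(\{\mathcal{O}_i\}_{i=1}^n;\theta_0,\BS{\beta}_0)\|_\infty=O_p(\sqrt{\log(np)/n})$. Using the Doob--Meyer decomposition, the identity $\sum_i(\tilde{\BS{Z}}_i-\bar{\tilde{\BS{Z}}}(t))\tilde w_i(t)=0$ gives
\[
\dot l_{\BS{b}}=n^{-1}\sum_i\int_0^\tau(\tilde{\BS{Z}}_i-\bar{\tilde{\BS{Z}}}(t,\theta_0,\BS{\beta}_0))\,dM_i(t).
\]
Each coordinate is a martingale with integrand bounded by $2\max(1,K_{\BS{Z}})$ and with jumps of size $O(1/n)$. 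A martingale exponential inequality, as in Lemma 3.3 of \citet{huang2013oracle}, followed by a union bound over the $p+1$ coordinates gives the rate. With $\lambda_{\theta\BS{\beta}}=C\sqrt{\log(np)/n}$ for $C$ large, the event $\mathcal{E}=\{\|\dot l_{\BS{b}}\|_\infty\le\lambda_{\theta\BS{\beta}}/2\}$ then has probability tending to one.

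\textbf{Step 2 (cone and basic inequality).} Write $\BS{h}=\hat{\BS{b}}-\BS{b}_0$. On $\mathcal{E}$, convexity of $-l$ and optimality of $\hat{\BS{b}}$ give the symmetric Bregman inequality
\[
0\le \BS{h}^{\top}\{\dot l(\BS{b}_0)-\dot l(\hat{\BS{b}})\}\le \tfrac{3}{2}\lambda\|\BS{h}_{S_{cox}}\|_1-\tfrac12\lambda\|\BS{h}_{S_{cox}^c}\|_1 .
\]
This places $\BS{h}$ in the cone $\|\BS{h}_{S^c}\|_1\le 3\|\BS{h}_S\|_1$. The paper's compatibility factor $\kappa_0$ is defined on the cone with constant $2$. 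I will either use the adjusted threshold $\lambda/3$ in $\mathcal{E}$, which yields exactly the cone constant $2$, or note that constants of this kind are interchangeable up to changing $C$.

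\textbf{Step 3 (curvature).} This is the main obstacle. The Cox loss is not quadratic, and its Hessian at $\BS{b}_0+s\BS{h}$ must be compared with $-\ddot l(\BS{b}_0)$. I plan to use Huang et al.'s Lemma 3.2, which states that for bounded covariates
\[
-\ddot l(\BS{b}_0+s\BS{h})\succeq e^{-2s\max_{i,j}|\BS{h}^{\top}(\tilde{\BS{Z}}_i-\tilde{\BS{Z}}_j)|}\,(-\ddot l(\BS{b}_0)),
\]
with $\max_{i,j}|\BS{h}^{\top}(\tilde{\BS{Z}}_i-\tilde{\BS{Z}}_j)|\le 2\max(1,K_{\BS{Z}})\|\BS{h}\|_1=:\eta$. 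Integrating the Hessian along the segment then gives
\[
\BS{h}^{\top}\{\dot l(\BS{b}_0)-\dot l(\hat{\BS{b}})\}\ge \frac{1-e^{-\eta}}{\eta}\,\BS{h}^{\top}(-\ddot l(\BS{b}_0))\BS{h}\ge \frac{1-e^{-\eta}}{\eta}\,\frac{\kappa_0^2\|\BS{h}_S\|_1^2}{s_{\BS{\beta}_0}+1}.
\]
Combined with the upper bound $\tfrac32\lambda\|\BS{h}_S\|_1$ and $\|\BS{h}\|_1\le 4\|\BS{h}_S\|_1$, this yields
\[
\eta e^{-\eta}\lesssim \frac{(s_{\BS{\beta}_0}+1)\lambda}{\kappa_0^2}.
\]
The right-hand side tends to zero because $1/\kappa_0=O_p(1)$ and $s_{\BS{\beta}_0}\sqrt{\log(np)/n}=o(1)$ by Condition \ref{coninicox}(iii).

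To rule out the large-$\eta$ branch, I will use a convexity argument. Replace $\hat{\BS{b}}$ by $\BS{b}_0+t\BS{h}$ with $t$ chosen so that $\|t\BS{h}\|_1$ is at most a small fixed constant. This point still lies in the cone and satisfies the same basic inequality, since the objective is convex. The small branch then forces $\|t\BS{h}\|_1$ strictly below that constant, so $t=1$. The conclusion is
\[
\|\BS{h}\|_1=|\hat\theta-\theta_0|+\|\hat{\BS{\beta}}-\BS{\beta}_0\|_1\lesssim \frac{(s_{\BS{\beta}_0}+1)\lambda}{\kappa_0^2}=O_p\big(s_{\BS{\beta}_0}\sqrt{\log(np)/n}\big).
\]
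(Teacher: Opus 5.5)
Your proposal is correct and is essentially the paper's argument: the paper just invokes Lemma 1(ii) of \citet{yu2021confidence}, which is the \citet{huang2013oracle} Cox--Lasso oracle inequality applied to the augmented covariate $(D,\BS{Z}^{\top})^{\top}$, and your three steps (martingale score bound, KKT cone and basic inequality, Hessian comparison with the convexity rescaling to exclude the large branch) reconstruct that argument. Two small points: use the $\lambda_{\theta\BS{\beta}}/3$ threshold rather than claiming cone constants are interchangeable, since a compatibility bound on the $2$-cone does not give one on the larger $3$-cone. Also take the constant in $\lambda_{\theta\BS{\beta}}$ sufficiently large, as the cited result implicitly does.
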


Proposition \ref{lminicox} above establishes the baseline high-dimensional convergence rate for the penalized Cox estimator $(\hat\theta,\hat{\BS{\beta}})$, providing the preliminary control needed to construct the final orthogonal estimators in a local neighborhood of the initial estimator. Although this Cox-Lasso rate is slower than $n^{-1/2}$, it is sufficiently sharp for the orthogonal score construction and underscores the need for debiasing: the initial estimator controls the nuisance complexity but does not yield valid root-$n$ inference for $\theta_0$ by itself.

We next derive the convergence rates for $\hat {\BS{\mu}}_a$ in both the $l_\infty$-norm and  $l_1$-norm.

\begin{proposition}\label{lminimut}
    Assume that Conditions \ref{coninicox}--\ref{coninimu} are satisfied. Let $\lambda_{\theta \BS{\beta}} \asymp \sqrt{\log (np) /n}$ in (\ref{eqBetaL1}),
    and $\lambda_{n} \asymp (s_{\BS{\beta}_0} \sqrt{\log (np) /n} + \tau_n)$ in (\ref{eqClime1}). Then we have that 
	\begin{eqnarray}
		\| \hat {\BS{\mu}}_{a} - {\BS{\mu}}_{a0} \|_{\infty} & = &  O_p({\|\BS{\Sigma}}^{-1}_{\BS{\beta}\BS{\beta}} \|_{\infty}  (s_{\BS{\beta}_0} \sqrt{\log (np) /n} + \tau_n) ), \label{eq-lminimut-1}\\
		\| \hat {\BS{\mu}}_{a} - {\BS{\mu}}_{a0} \|_{1} &= & O_p({\|\BS{\Sigma}}^{-1}_{\BS{\beta}\BS{\beta}} \|_{\infty}  s_{{\BS{\mu}}_{a0}} (s_{\BS{\beta}_0} \sqrt{\log (np) /n} + \tau_n) ). \label{eq-lminimut-2}
	\end{eqnarray}
\end{proposition}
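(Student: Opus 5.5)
The argument follows the standard CLIME analysis of \cite{cai2011constrained}. Everything reduces to a single deterministic event on which the entrywise errors of $\hat{\BS{\Sigma}}_{\BS{\beta}\BS{\beta}}$ and $\hat{\BS{\Sigma}}_{\BS{\beta}\theta}$ are of order $r_n := s_{\BS{\beta}_0}\sqrt{\log(np)/n}+\tau_n$. The key intermediate claim is
\[
\|\hat{\BS{\Sigma}}_{\BS{\beta}\BS{\beta}}-\BS{\Sigma}_{\BS{\beta}\BS{\beta}}\|_{\max}+\|\hat{\BS{\Sigma}}_{\BS{\beta}\theta}-\BS{\Sigma}_{\BS{\beta}\theta}\|_{\infty}=O_p(r_n).
\]
Given this, I first check feasibility of $\BS{\mu}_{a0}$ in (\ref{eqClime1}). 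Writing $\hat{\BS{\Sigma}}_{\BS{\beta}\BS{\beta}}\BS{\mu}_{a0}-\hat{\BS{\Sigma}}_{\BS{\beta}\theta}=(\hat{\BS{\Sigma}}_{\BS{\beta}\BS{\beta}}-\BS{\Sigma}_{\BS{\beta}\BS{\beta}})\BS{\mu}_{a0}-(\hat{\BS{\Sigma}}_{\BS{\beta}\theta}-\BS{\Sigma}_{\BS{\beta}\theta})$ and using $\|\BS{\mu}_{a0}\|_1\le K_{\BS{\mu}}$ from Condition \ref{coninimu}(vi), the sup-norm of this vector is $O_p(r_n)\le\lambda_n$ with probability tending to one once the constant in $\lambda_n$ is large. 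Optimality then gives $\|\hat{\BS{\mu}}_a\|_1\le\|\BS{\mu}_{a0}\|_1$. Next,
\[
\|\BS{\Sigma}_{\BS{\beta}\BS{\beta}}(\hat{\BS{\mu}}_a-\BS{\mu}_{a0})\|_\infty\le\|\hat{\BS{\Sigma}}_{\BS{\beta}\BS{\beta}}\hat{\BS{\mu}}_a-\hat{\BS{\Sigma}}_{\BS{\beta}\theta}\|_\infty+\|(\hat{\BS{\Sigma}}_{\BS{\beta}\BS{\beta}}-\BS{\Sigma}_{\BS{\beta}\BS{\beta}})\hat{\BS{\mu}}_a\|_\infty+\|\hat{\BS{\Sigma}}_{\BS{\beta}\theta}-\BS{\Sigma}_{\BS{\beta}\theta}\|_\infty .
\]
Each term is $O_p(r_n)$: the first by the constraint, the second by $\|\hat{\BS{\mu}}_a\|_1\le K_{\BS{\mu}}$. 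Multiplying by $\BS{\Sigma}_{\BS{\beta}\BS{\beta}}^{-1}$ yields (\ref{eq-lminimut-1}).

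For the $l_1$ bound (\ref{eq-lminimut-2}), I use the usual cone argument. Let $\BS{h}=\hat{\BS{\mu}}_a-\BS{\mu}_{a0}$ and $S=S_{\BS{\mu}_{a0}}$. From $\|\hat{\BS{\mu}}_a\|_1\le\|\BS{\mu}_{a0}\|_1$ we get $\|\BS{h}_{S^c}\|_1\le\|\BS{h}_S\|_1$, so $\|\BS{h}\|_1\le 2\|\BS{h}_S\|_1\le 2s_{\BS{\mu}_{a0}}\|\BS{h}\|_\infty$. Combined with (\ref{eq-lminimut-1}), this gives (\ref{eq-lminimut-2}). The beta-min assumption in Condition \ref{coninimu}(vi) is not needed here; it is presumably used later for support recovery.

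The main work is the intermediate claim, which I split into three errors for each of the two blocks. Error (a) replaces $\hat m$ by $m_0$. Since $\|\BS{Z}\|_\infty\le K_{\BS{Z}}$ and $|\bar D|\le1$, $\|\bar{\BS{Z}}\|_\infty\le K_{\BS{Z}}$, this is bounded entrywise by $4K_{\BS{Z}}^2\tau_n\cdot n/n_a=O_p(\tau_n)$ using Condition \ref{coninilog}(ii) and \ref{coninimu}(i). Error (b) replaces $(\hat\theta,\hat{\BS{\beta}})$ by $(\theta_0,\BS{\beta}_0)$ in $\bar D$ and $\bar{\BS{Z}}$. The derivatives of $w_i$ in $(\theta,\BS{\beta})$ are bounded by $2K_{\BS{Z}}$ times $w_i$, so
\[
\sup_{t\le\tau}\big(|\bar D(t,\hat\theta,\hat{\BS{\beta}})-\bar D(t,\theta_0,\BS{\beta}_0)|+\|\bar{\BS{Z}}(t,\hat\theta,\hat{\BS{\beta}})-\bar{\BS{Z}}(t,\theta_0,\BS{\beta}_0)\|_\infty\big)\lesssim|\hat\theta-\theta_0|+\|\hat{\BS{\beta}}-\BS{\beta}_0\|_1 .
\]
By Proposition \ref{lminicox} this is $O_p(s_{\BS{\beta}_0}\sqrt{\log(np)/n})$. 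Error (c) replaces $\bar D,\bar{\BS{Z}}$ at the truth by $U_D,\BS{U}_{\BS{Z}}$ and the empirical averages by expectations, matching the representation in Lemma \ref{lemma-equi-Sigma}. I handle the $1/n_a$ normalization by writing $1/n_1=(1/n)/(n_1/n)$. I then need the identification step that $\mathbb{E}[D\delta g(X,\BS{Z})]/P(D=1)$ matches the $m_0$-weighted $dN^{(1)}$ integral in (\ref{eq-Sigma-3}). This should follow from ignorability, and the propensity weight $\hat m$ presumably reproduces exactly this structure, so I will verify the constant bookkeeping in Lemma \ref{lemma-equi-Sigma} carefully.

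The main obstacle is error (c). I must show $\sup_{t\le\tau}|\bar D(t,\theta_0,\BS{\beta}_0)-U_D(t)|$ and $\sup_{t\le\tau}\|\bar{\BS{Z}}(t)-\BS{U}_{\BS{Z}}(t)\|_\infty$ are $O_p(\sqrt{\log(np)/n})$. I plan to bound the numerators $n^{-1}\sum_i Z_{ij}\tilde w_i(t)$ and the denominator uniformly over $t$ and $j$. The process is monotone in $t$ up to sign splitting, so I use a grid of $n$ points plus bracketing, Hoeffding's inequality, and a union bound over $p$ coordinates. The denominator is bounded below by $P(Y(\tau)=1)e^{-C}>0$ via Condition \ref{coninicox}(iv) and \ref{coninimu}(iii). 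The remaining i.i.d. averages of bounded summands $D_i\delta_i(\cdot)$ are $O_p(\sqrt{\log(np)/n})$ entrywise, again by Hoeffding and a union bound. Summing (a)--(c) gives the rate $r_n$.
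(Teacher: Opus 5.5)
Structurally your proof is the paper's. Both use the deterministic CLIME inequality, then feasibility of $\BS{\mu}_{a0}$ to get $\|\hat{\BS{\mu}}_a\|_1\le K_{\BS{\mu}}$, then a split of the entrywise covariance errors. Your (a), (b), (c) are the paper's $\Delta_2$, $\Delta_1$, $\Delta_3$ in \eqref{pf-lemma1-eq4}; the paper imports the risk-set concentration from Lemma 7 of \cite{yu2021confidence} rather than reproving it by a grid argument. Your $\ell_1$ step is a genuine simplification. From $\|\hat{\BS{\mu}}_a\|_1\le\|\BS{\mu}_{a0}\|_1$ you get $\|\BS{h}_{S^c}\|_1\le\|\BS{h}_S\|_1$, hence $\|\BS{h}\|_1\le 2s_{\BS{\mu}_{a0}}\|\BS{h}\|_\infty$ directly. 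The paper instead hard-thresholds $\hat{\BS{\mu}}_a$ at $2t_n$ and invokes the beta-min part of Condition~\ref{coninimu}(vi); you are right that beta-min is not needed.

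The genuine gap is the identification step you deferred in (c): it is false. The weight $\hat m$ does not reproduce the structure of \eqref{eq-Sigma-3}; it adds an extra factor. For deterministic $g$, ignorability gives $\mathbb{E}[D\delta\,g(X,\BS{Z})\,m_0(\BS{Z})]=\mathbb{E}[m_0(\BS{Z})^2\int_0^\tau g(t,\BS{Z})\,dN^{(1)}(t)]$, because conditioning on $\BS{Z}$ produces a second factor $\mathbb{E}[D\mid\BS{Z}]=m_0(\BS{Z})$. With $n_1/n\to\pi:=P(D=1)$, the treated block of $\hat{\BS{\Sigma}}_{\BS{\beta}\BS{\beta}}$ is therefore centered at $\pi^{-1}\mathbb{E}[m_0(\BS{Z})^2\int_0^\tau(\BS{Z}-\BS{U}_{\BS{Z}})^{\otimes 2}dN^{(1)}(t)]$, not at the first term of \eqref{eq-Sigma-3}. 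Write $F(\BS{Z})$ for the conditional mean of the integral given $\BS{Z}$. The two centers coincide only when $\mathrm{Cov}(m_0(\BS{Z}),m_0(\BS{Z})F(\BS{Z}))=0$, e.g.\ for constant $m_0$. The same happens in the control block, with factor $(1-m_0)^2/(1-\pi)$, and in $\hat{\BS{\Sigma}}_{\BS{\beta}\theta}$. Hoeffding only concentrates the averages around their true centers, so a non-vanishing bias survives. Consequently $\hat{\BS{\mu}}_a$ targets $\tilde{\BS{\Sigma}}_{\BS{\beta}\BS{\beta}}^{-1}\tilde{\BS{\Sigma}}_{\BS{\beta}\theta}$, built from those centers, rather than $\BS{\mu}_{a0}$. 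For non-constant $m_0$ the stated bound fails in general for the estimator \eqref{eq-hatSigma_2}--\eqref{eq-hatSigma_3}.

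The paper's proof has exactly the same hole: its bound on $\Delta_3$ applies Hoeffding as if the treated summands had mean $\BS{\Sigma}^{(1)}_{\BS{\beta}\BS{\beta}}$. Tellingly, the proof of Theorem~\ref{ththeta} asserts that the unweighted $\hat{\BS{\Sigma}}_{\BS{\beta}\theta,k,l}$, whose center really is $\BS{\Sigma}_{\BS{\beta}\theta}$, has the same limit as the weighted one. Both cannot be true when $m_0$ is non-constant.

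The repair is to change the estimator rather than the proof. By Doob--Meyer alone, $\BS{\Sigma}_{\BS{\beta}\BS{\beta}}=\mathbb{E}\int_0^\tau(\BS{Z}-\BS{U}_{\BS{Z}})^{\otimes 2}dN(t)$ and $\BS{\Sigma}_{\BS{\beta}\theta}=\mathbb{E}\int_0^\tau(\BS{Z}-\BS{U}_{\BS{Z}})(D-U_D)\,dN(t)$. So use $n^{-1}\sum_i\delta_i(\BS{Z}_i-\bar{\BS{Z}}(X_i,\hat\theta,\hat{\BS{\beta}}))^{\otimes 2}$ and $n^{-1}\sum_i\delta_i(\BS{Z}_i-\bar{\BS{Z}}(X_i,\hat\theta,\hat{\BS{\beta}}))(D_i-\bar D(X_i,\hat\theta,\hat{\BS{\beta}}))$. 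Equivalently, replace $\hat m(\BS{Z}_i)$ by $n_1/n$ and $1-\hat m(\BS{Z}_i)$ by $n_0/n$. Then your error (a) disappears, and (b) and (c) go through as you wrote them with the correct center. You get the rate without $\tau_n$, hence a fortiori the stated one.
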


Proposition \ref{lminimut} above controls the estimation error of the Riesz-representer-type orthogonalization vector $\BS{\mu}_{a0}$ through its estimator $\hat{\BS{\mu}}_a$, the component that orthogonalizes the estimating equation for $\theta_0$ against the high-dimensional Cox nuisance parameter vector. The resulting $l_\infty$ bound ensures the entrywise stability of the CLIME-type solution, while the $l_1$ bound controls the expansions of the orthogonal score against high-dimensional nuisance errors, unveiling how the Cox and propensity-score nuisance rates affect jointly the final treatment effect inference.

We aim to establish the asymptotic theory for the HSCI estimators $\check{\theta} = \check{\theta}_i$, $i \in \{1,2\}$, introduced in Section \ref{Sec3.3}. These estimators are asymptotically equivalent. Since the Neyman orthogonality is central to inference on survival treatment effect $\theta_0$, we first verify the near-orthogonality of estimating equation \eqref{eqortho} at the population parameter vector $(\theta_0, \BS{\eta}_{a0})$, where $\BS{\eta}_{a0} = (\BS{\beta}_0^{\top}, \BS{\mu}_{a0}^{\top})^{\top}$.

\begin{proposition}\label{lmnearortho}
	Assume that Conditions \ref{coninicox}(i), \ref{coninicox}(iii), and \ref{coninimu}(ii)--(iv) and (vi) are satisfied. Denote by $a_n  = C s_{\BS{\beta}_0} \log (np)/n$ with a sufficiently large constant $C > 0$.
	Then the estimating equation $\Phi (\{\mathcal{O}_i\}_{i=1}^n;\theta, \BS{\eta}_a)$ obeys the $a_n$-Neyman near-orthogonality at $(\theta_0, \BS{\eta}_{a0})$ with respect to the nuisance realization set $\mathcal{T}_N$, where $\mathcal{T}_N =\{\BS{\eta}_a= (\BS{\beta}^{\top}, \BS{\mu}_a^{\top})^{\top} \in \mathcal{B} \times \mathcal{U} \subset  \mathbb{R}^{2p}: \, \|\BS{\beta}-\BS{\beta}_0\|_1 \le \tilde C s_{\BS{\beta}_0} \sqrt{\log (np)/n}, \, 
     \|\BS{\mu}_a-\BS{\mu}_{a0} \|_1 \le \tilde C {\|\BS{\Sigma}}^{-1}_{\BS{\beta}\BS{\beta}} \|_{\infty}  s_{{\BS{\mu}}_{a0}} (s_{\BS{\beta}_0} \sqrt{\log (np) /n} + \tau_n) \}
    $ for some sufficiently large constant $\tilde C >0$.
\end{proposition}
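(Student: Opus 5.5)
We use the definition of $a_n$-Neyman near-orthogonality from \cite{chernozhukov2018double} (Definition 2.2). Writing $\Psi(\theta,\BS{\eta}) := \mathbb{E}[\Phi(\{\mathcal{O}_i\}_{i=1}^n;\theta,\BS{\eta})]$, it asks that $\Psi(\theta_0,\BS{\eta}_{a0})$ be negligible and that the pathwise (Gateaux) derivative
\[
\partial_r \Psi\big(\theta_0, \BS{\eta}_{a0} + r(\BS{\eta}_a-\BS{\eta}_{a0})\big)\big|_{r=0}
\]
be bounded in absolute value by $a_n$ uniformly over $\BS{\eta}_a\in\mathcal{T}_N$. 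So we compute this derivative, split it into a $\BS{\beta}$-direction part and a $\BS{\mu}$-direction part, and bound each part using the $\ell_1$ radii that define $\mathcal{T}_N$.

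First, the score is linear in $\BS{\mu}$, so the $\BS{\mu}$-direction derivative equals $-(\BS{\mu}_a-\BS{\mu}_{a0})^{\top}\mathbb{E}[\dot l_{\BS{\beta}}(\{\mathcal{O}_i\}_{i=1}^n;\theta_0,\BS{\beta}_0)]$. This vanishes by the moment condition (\ref{eqpartiallogparlik}), and the same condition gives $\Psi(\theta_0,\BS{\eta}_{a0})=0$.

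Next, in the $\BS{\beta}$ direction, differentiating and interchanging derivative and expectation (justified by Condition \ref{coninicox}(i) and Condition \ref{coninimu}(ii)--(iii)) gives $-(\BS{J}_{\theta\BS{\beta}}-\BS{\mu}_{a0}^{\top}\BS{J}_{\BS{\beta}\BS{\beta}})(\BS{\beta}-\BS{\beta}_0)$. We rewrite this using $\BS{\Sigma}_{\theta\BS{\beta}} = \BS{\mu}_{a0}^{\top}\BS{\Sigma}_{\BS{\beta}\BS{\beta}}$, which holds because $\BS{\mu}_{a0}=\BS{\Sigma}_{\BS{\beta}\BS{\beta}}^{-1}\BS{\Sigma}_{\BS{\beta}\theta}$ and $\BS{\Sigma}_{\BS{\beta}\BS{\beta}}$ is invertible by Condition \ref{coninimu}(iv). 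The derivative becomes
\[
-\big[(\BS{J}_{\theta\BS{\beta}}-\BS{\Sigma}_{\theta\BS{\beta}}) - \BS{\mu}_{a0}^{\top}(\BS{J}_{\BS{\beta}\BS{\beta}}-\BS{\Sigma}_{\BS{\beta}\BS{\beta}})\big](\BS{\beta}-\BS{\beta}_0).
\]
By Hölder's inequality and $\|\BS{\mu}_{a0}\|_1\le K_{\BS{\mu}}$ from Condition \ref{coninimu}(vi), its absolute value is at most
\[
(1+K_{\BS{\mu}})\,\|\BS{J}-\BS{\Sigma}\|_{\max}\,\|\BS{\beta}-\BS{\beta}_0\|_1 .
\]
Since $\|\BS{\beta}-\BS{\beta}_0\|_1\le \tilde C s_{\BS{\beta}_0}\sqrt{\log(np)/n}$ on $\mathcal{T}_N$, it suffices to show $\|\BS{J}-\BS{\Sigma}\|_{\max}=O(\sqrt{\log(np)/n})$. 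Then the product is at most $C s_{\BS{\beta}_0}\log(np)/n = a_n$.

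The main obstacle is this entrywise bound on $\BS{J}-\BS{\Sigma}$, which replaces the risk-set averages $\bar D,\bar{\BS{Z}}$ inside an expectation by their population counterparts $U_D,\BS{U}_{\BS{Z}}$.
\begin{itemize}
\item[(a)] For each entry, expand $(\BS{Z}_i-\bar{\BS{Z}})(D_i-\bar D)=(\BS{Z}_i-\BS{U}_{\BS{Z}})(D_i-U_D)$ plus cross terms involving $(\bar{\BS{Z}}-\BS{U}_{\BS{Z}})$ and $(\bar D-U_D)$.
\item[(b)] The leading term, averaged over $i$, has expectation exactly equal to $\BS{\Sigma}$.
\item[(c)] The first-order cross terms have the form $\mathbb{E}\big[\int (\bar D-U_D)\,\frac1n\sum_i(\BS{Z}_i-\BS{U}_{\BS{Z}})\tilde w_i\lambda_0\,dt\big]$.
\item[(d)] Write $\frac1n\sum_i(\BS{Z}_i-\BS{U}_{\BS{Z}})\tilde w_i=(\bar{\BS{Z}}-\BS{U}_{\BS{Z}})\cdot\frac1n\sum_j\tilde w_j$. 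Then every remainder is a product of two centred ratio-type deviations, plus a denominator deviation.
\item[(e)] Bound these uniformly in $t\in[0,\tau]$ and over the coordinates of $\BS{Z}$. Use Hoeffding/Bernstein inequalities on a grid of $t$ combined with monotonicity of $Y(t)$. Boundedness comes from Condition \ref{coninicox}(i),(iv), and the denominator is bounded away from zero by $P(Y(\tau)=1)>0$. This gives $\sup_t\|\bar{\BS{Z}}-\BS{U}_{\BS{Z}}\|_\infty=O_p(\sqrt{\log(np)/n})$.
\end{itemize}
Because we need a bound on an expectation rather than in probability, we truncate on the high-probability event. The complement contributes $O(1)\cdot (np)^{-c}$, since all quantities are bounded. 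The product terms then give $O(\log(np)/n)$, and a direct second-moment computation gives at least $O(\sqrt{\log(np)/n})$. Either suffices. Condition \ref{coninicox}(iii) ensures $a_n\to0$, as the definition requires.
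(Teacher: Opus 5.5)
Your proposal is correct and follows essentially the same route as the paper: the $\BS{\mu}$-direction of the Gateaux derivative vanishes because $\mathbb{E}[\dot l_{\BS{\beta}}(\{\mathcal{O}_i\}_{i=1}^n;\theta_0,\BS{\beta}_0)]=0$. The $\BS{\beta}$-direction is bounded by $(\|\BS{J}_{\BS{\beta}\theta}-\BS{\Sigma}_{\BS{\beta}\theta}\|_\infty+\|\BS{\mu}_{a0}\|_1\|\BS{J}_{\BS{\beta}\BS{\beta}}-\BS{\Sigma}_{\BS{\beta}\BS{\beta}}\|_{\max})\|\BS{\beta}-\BS{\beta}_0\|_1$ via $\BS{\mu}_{a0}=\BS{\Sigma}_{\BS{\beta}\BS{\beta}}^{-1}\BS{\Sigma}_{\BS{\beta}\theta}$, and $\BS{J}-\BS{\Sigma}$ is controlled by truncating on the event where $\bar D,\bar{\BS{Z}}$ lie uniformly within $O(\sqrt{\log(np)/n})$ of $U_D,\BS{U}_{\BS{Z}}$; the paper imports that event bound from Lemma 7 of \cite{yu2021confidence} and settles for the cruder $O(\sqrt{\log(np)/n})$ rate rather than your product-of-deviations refinement.

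One small correction: Definition 2.2 of \cite{chernozhukov2018double} requires $a_n=o(n^{-1/2})$, not merely $a_n\to 0$. Condition \ref{coninicox}(iii) still delivers this, since $s_{\BS{\beta}_0}\sqrt{\log(np)}=o(n^{1/4})$ also forces $\sqrt{\log(np)}=o(n^{1/4})$, and hence $s_{\BS{\beta}_0}\log(np)/n=o(n^{-1/2})$.
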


Proposition \ref{lmnearortho} above verifies the Neyman near-orthogonality property of the proposed score, providing the key mechanism behind the HSCI estimator. While exact orthogonality would eliminate first-order nuisance effects entirely, such result quantifies the remaining approximation error by $a_n$, which is sufficiently small under the stated sparsity conditions. Consequently, the first-stage nuisance estimation does \textit{not} contribute to the first-order distribution of the final estimator, justifying the plug-in score equation even when the nuisance parameters are estimated at slower high-dimensional rates.

Building on Propositions \ref{lminicox}--\ref{lmnearortho} above, we are now ready to establish the asymptotic normality of the HSCI estimators $\check \theta = \check \theta_i$ with $i \in \{1,2\}$ for survival treatment effect inference in the theorem below.

\begin{theorem}\label{ththeta}
	For any $P \in \mathcal{P}$, assume that Conditions \ref{coninicox}--\ref{contheta} are satisfied. Then with asymptotic probability one, there exist estimators $\check \theta_1$ and $\check \theta_2$ lying in $\mathcal{N}_{\theta_0}^*$ that solve \eqref{eqesttheta1} and \eqref{eqesttheta2}, respectively. Further, the estimators $\check \theta_1$ and $\check \theta_2$ in neighborhood $\mathcal{N}_{\theta_0}^*$ satisfy that 
	\begin{align}\label{eq: limiting-dist}
		{\sigma}^{-1} \sqrt{n} (\check \theta_l-\theta_0)  \stackrel{\mathcal D} {\rightarrow} N(0, 1), \ l=1,2
	\end{align}
	uniformly over $P \in \mathcal{P}$ as $n \rightarrow \infty$, where $\stackrel{\mathcal D}{\rightarrow} $ represents convergence in distribution, and
	\begin{align*}
       {\sigma}^2=J_0^{-1} \mathbb{E}[ n \Phi (\{\mathcal{O}_i\}_{i=1}^n;\theta_0, \BS{\eta}_{a0})^2] J_0^{-1}
	\end{align*}
    with $J_0 :=  -\Sigma_{\theta \theta} + \BS{\mu}_{a0}^{\top} \BS{\Sigma}_{\BS{\beta} \theta} $.
	The result in \eqref{eq: limiting-dist} also holds when ${\sigma}^2$ is replaced with ${\hat \sigma}_l^2$, where
	\begin{eqnarray*}
		{\hat \sigma}_l^2=\hat {J}_{0,l}^{-1} \left\{\frac{1}{n} \sum_{k = 1}^K \sum_{i \in I_k} \int_0^{\tau} [(D_i - \bar D_k(t, \check \theta_l,\hat { \BS{\beta}}_k))  - \hat{\BS{\mu}}_{ak}^{\top} ( \BS{Z}_i -\bar {\BS{Z}}_k (t,\check \theta_l, \hat { \BS{\beta}}_k) ) ]^2 dN_i(t) \right\}\hat{J}_{0,l}^{-1}
	\end{eqnarray*}
	with
	$\hat{J}_{0,l} = -\frac{1}{K} \sum_{k=1}^K( \hat {\Sigma}_{\theta \theta, k,l} - {\hat {\BS{\mu}}_{ak}}^{\top} \hat {\BS{\Sigma}}_{\BS{\beta} \theta, k,l})$, and 
	\begin{eqnarray*}
		\hat {\Sigma}_{\theta \theta, k,l} &=& \frac{1}{m} \sum_{j \in I_k} \delta_j (D_j - \bar D_k (X_j,\check \theta_l, \hat {\BS{\beta}}_k) )^{ 2} , \\
		\hat {\BS{\Sigma}}_{\BS{\beta} \theta, k,l} &= &  \frac{1}{m} \sum_{j \in I_k} \delta_j ( \BS{Z}_j - \bar {\BS{Z}}_k (X_j,\check \theta_l, \hat {\BS{\beta}}_k)) (D_j - \bar D_k (X_j,\check \theta_l, \hat {\BS{\beta}}_k) )
	\end{eqnarray*}
    for $l=1, 2$ and $k=1, \cdots, K$. 
	Moreover, the uniform asymptotic confidence interval for $\theta_0$ based on $\check \theta_l$ with $l=1, 2$ is given by
	\[
	\left( \check \theta_l - \Phi^{-1}(1-\alpha/2) \hat \sigma_l /\sqrt{n}, \, \check \theta_l + \Phi^{-1}(1-\alpha/2) \hat \sigma_l /\sqrt{n} \right),
	\]
	where $\alpha$ is the significance level and $ \Phi^{-1}(1-\alpha/2)$ is the upper $\alpha/2$ quantile of $N(0,1)$.
\end{theorem}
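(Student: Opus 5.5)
We follow the DML template of \cite{chernozhukov2018double} (Theorem 3.3 there), adapted to the risk-set structure of the Cox score. Fix $k$ and work conditionally on $I_k^c$, so that $\hat{\BS{\eta}}_k$ is independent of $\{\mathcal{O}_i\}_{i\in I_k}$. By Propositions \ref{lminicox} and \ref{lminimut}, $\hat{\BS{\eta}}_k\in\mathcal{T}_N$ with probability tending to one.

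\textbf{Step 1 (expansion in $\theta$).} For $\theta\in\mathcal{N}_{\theta_0}^*$ we write
\[
\Phi(\{\mathcal{O}_i\}_{i\in I_k};\theta,\hat{\BS{\eta}}_k)=\Phi(\{\mathcal{O}_i\}_{i\in I_k};\theta_0,\hat{\BS{\eta}}_k)+\partial_\theta\Phi(\{\mathcal{O}_i\}_{i\in I_k};\tilde\theta,\hat{\BS{\eta}}_k)(\theta-\theta_0).
\]
The derivative is $-\frac1m\sum_{i\in I_k}\int[\mathrm{Var}_w(D)-\hat{\BS{\mu}}_{ak}^{\top}\mathrm{Cov}_w(\BS{Z},D)]\,dN_i$. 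We show it converges to $J_0$ uniformly over $\mathcal{N}_{\theta_0}^*$. The ingredients are:
\begin{itemize}
\item bounded covariates (Condition \ref{coninicox}(i));
\item Lipschitz continuity of $\bar D_k,\bar{\BS{Z}}_k$ in $(\theta,\BS{\beta})$ on bounded $\|\BS{Z}\|_\infty$, which gives errors of order $|\theta-\theta_0|+\|\hat{\BS{\beta}}_k-\BS{\beta}_0\|_1$;
\item uniform convergence of risk-set averages to $U_D,\BS{U}_{\BS{Z}}$ at rate $\sqrt{\log(np)/n}$, via maximal inequalities on $[0,\tau]$ (Condition \ref{coninimu}(iii));
\item $\|\hat{\BS{\mu}}_{ak}\|_1\le K_{\BS{\mu}}+o_p(1)$.
\end{itemize}
Since $J_0=-(\Sigma_{\theta\theta}-\BS{\mu}_{a0}^\top\BS{\Sigma}_{\BS{\beta}\theta})$ is bounded away from zero by Condition \ref{coninimu}(iv) (it is minus a Schur complement of $\BS{\Sigma}$), the map $\theta\mapsto\Phi$ is strictly monotone near $\theta_0$. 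Combined with Step 2, which shows $\Phi(\theta_0,\hat{\BS{\eta}}_k)=O_p(n^{-1/2})$, a sign-change argument gives a root, or the near-root (\ref{eqapptheta1}), inside $\mathcal{N}_{\theta_0}^*$. Condition \ref{contheta} ensures this neighborhood lies in $\Theta$. The same argument applies to the averaged equation (\ref{eqesttheta2}).

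\textbf{Step 2 (linearization at $\theta_0$; main obstacle).} We need
\[
\Phi(\{\mathcal{O}_i\}_{i\in I_k};\theta_0,\hat{\BS{\eta}}_k)=\frac1m\sum_{i\in I_k}\psi_i+o_p(n^{-1/2}),\qquad
\psi_i=\int_0^\tau[(D_i-U_D)-\BS{\mu}_{a0}^\top(\BS{Z}_i-\BS{U}_{\BS{Z}})]\,dM_i(t).
\]
We decompose the left side into three pieces.

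\emph{(a) Oracle piece.} $\Phi(\theta_0,\BS{\eta}_{a0})$ on $I_k$ equals $\frac1m\sum\int(\cdot)\,dM_i$, since the compensator terms cancel by the definition of the weights. We replace the empirical risk-set averages by $U_D,\BS{U}_{\BS{Z}}$. The remainder is a stochastic integral of a term of order $O_p(\sqrt{\log p/n})$ against $\frac1m\sum dM_i=O_p(n^{-1/2})$. Using the standard Cox martingale argument together with $\|\BS{\mu}_{a0}\|_1\le K_{\BS{\mu}}$, this remainder is $o_p(n^{-1/2})$.

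\emph{(b) Error from $\hat{\BS{\mu}}_{ak}-\BS{\mu}_{a0}$.} This term is $(\hat{\BS{\mu}}_{ak}-\BS{\mu}_{a0})^\top\dot l_{\BS{\beta}}(\theta_0,\hat{\BS{\beta}}_k)$, which we bound by $\|\hat{\BS{\mu}}_{ak}-\BS{\mu}_{a0}\|_1\,\|\dot l_{\BS{\beta}}(\theta_0,\hat{\BS{\beta}}_k)\|_\infty$. The gradient norm is at most $\sqrt{\log(np)/n}+\|\hat{\BS{\beta}}_k-\BS{\beta}_0\|_1$. The resulting product is $o(n^{-1/2})$ precisely by the two parts of Condition \ref{coninimu}(v); this is the rate double robustness.

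\emph{(c) Error from $\hat{\BS{\beta}}_k-\BS{\beta}_0$ with $\BS{\mu}_{a0}$ fixed.} Here we use Proposition \ref{lmnearortho}. The Gateaux derivative in the $\BS{\beta}$ direction vanishes up to the gap between $\BS{J}$ and $\BS{\Sigma}$. The second-order term is $O(\|\hat{\BS{\beta}}_k-\BS{\beta}_0\|_1^2)=O(a_n)=o(n^{-1/2})$ by Condition \ref{coninicox}(iii). The empirical-process fluctuation, $\mathbb{G}_n$ applied to the score difference, is $o_p(1)$ conditionally on $I_k^c$ thanks to cross-fitting and the $L^2$ shrinkage of the difference.

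The delicate part is that the $I_k$-score is a risk-set statistic rather than an i.i.d.\ sum, so we cannot apply Chebyshev directly. We will first swap $\bar{\BS{Z}}_k$ for $\BS{U}_{\BS{Z}}$ inside the integrals, uniformly in $t$ and over $\mathcal{T}_N$, via a Lipschitz argument plus a maximal inequality over $p$ coordinates. This step costs the $\log(np)$ factors. After the swap, the conditional argument applies to an i.i.d.\ sum.

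\textbf{Step 3 (CLT, variance, and intervals).} Combining Steps 1 and 2 gives $\sqrt n(\check\theta_l-\theta_0)=-J_0^{-1}n^{-1/2}\sum_{i=1}^n\psi_i+o_p(1)$ for both $l=1,2$. The $\psi_i$ are i.i.d.\ with mean zero and bounded, so the Lindeberg CLT yields the normal limit. The variance $\mathbb{E}\psi^2$ matches $\mathbb{E}[n\Phi(\theta_0,\BS{\eta}_{a0})^2]$ asymptotically, by predictable variation and the absence of common jumps. Uniformity over $P\in\mathcal{P}$ follows because all bounds depend only on the constants in the conditions, via Berry--Esseen.

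For $\hat\sigma_l^2$: Step 1 already gives $\hat J_{0,l}\to J_0$, since $\hat\Sigma_{\theta\theta,k,l}$ and $\hat{\BS{\Sigma}}_{\BS{\beta}\theta,k,l}$ are the $dN$-versions of the derivative terms. The middle factor converges to $\mathbb{E}\int[\cdot]^2\,dN=\mathbb{E}\psi^2$ by the same swapping and Lipschitz bounds, together with $\|\hat{\BS{\mu}}_{ak}-\BS{\mu}_{a0}\|_1=o_p(1)$. Slutsky's lemma then gives the result with $\hat\sigma_l$ and the coverage of the stated interval.
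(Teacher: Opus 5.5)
Your proposal is correct, but it takes a different route from the paper's.

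\textbf{How the paper argues.} It proves existence by checking the signs of $\Phi(\{\mathcal{O}_i\}_{i\in I_k};\theta_0\pm\Delta_n,\hat{\BS{\eta}}_k)$ through the split $R_1+R_2+R_3$. That split needs the uniform-in-$\theta$ empirical-process bound of Lemma~\ref{lmentropy0-4} (built from Lemmas~\ref{lmentropy0-1}--\ref{lmentropy0-3}) and the population identification Lemma~\ref{lmidentification}. It then obtains asymptotic linearity from a separate joint second-order Taylor expansion of the random score in $(\theta,\BS{\eta})$ around $(\theta_0,\BS{\eta}_{a0})$. The four resulting terms $I_1$--$I_4$ are handled by Lemma~\ref{lmentropy1}, Proposition~\ref{lmnearortho}, Lemma~\ref{newlemma1} and Lemma~\ref{lmderivative}.

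\textbf{How you argue.} You expand only in $\theta$ at fixed $\hat{\BS{\eta}}_k$ and show that the random derivative converges to $J_0$ uniformly on $\mathcal{N}_{\theta_0}^*$. This gives existence (by monotonicity) and the inversion in one step. You also use the linearity of $\Phi$ in $\BS{\mu}$ to isolate $-(\hat{\BS{\mu}}_{ak}-\BS{\mu}_{a0})^{\top}\dot l_{\BS{\beta}}(\theta_0,\hat{\BS{\beta}}_k)$ exactly. The resulting product bound reproduces precisely the two parts of Condition~\ref{coninimu}(v). In the paper these contributions are spread over $I_3$ (sup-norm fluctuation of $\partial_{\BS{\eta}}\Phi$ at the truth times $\|\hat{\BS{\eta}}_k-\BS{\eta}_{a0}\|_1$) and the cross term of $I_4$.

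Your handling of the remaining $\BS{\beta}$-fluctuation is also different. You swap $\bar D_k,\bar{\BS{Z}}_k$ for $U_D,\BS{U}_{\BS{Z}}$ and then use the conditional-variance argument that cross-fitting permits. The paper never reduces that nuisance term to an i.i.d. sum and does not actually need sample splitting for it.

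\textbf{Trade-off.} Your route is shorter and makes the rate double robustness transparent. The paper's route is more modular, since its Hessian-concentration and empirical-process lemmas are reused for $\hat\sigma_l^2$ and for Theorem~\ref{thbeta}.

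\textbf{Write out the swap.} This is where your real work sits. It needs a uniform-in-$t$ bound on the fluctuation of $\nabla_{\BS{\beta}}(\bar D_k-U_D)$ and $\nabla_{\BS{\beta}}(\bar{\BS{Z}}_k-\BS{U}_{\BS{Z}})$ along the segment from $\BS{\beta}_0$ to $\hat{\BS{\beta}}_k$. This yields a remainder of order $s_{\BS{\beta}_0}^2\log(np)/\sqrt{n}$, which is $o(1)$ by Condition~\ref{coninicox}(iii).

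\textbf{Two small slips.}
\begin{itemize}
\item In (c), the second-order term is $O(\|\hat{\BS{\beta}}_k-\BS{\beta}_0\|_1^2)=O(s_{\BS{\beta}_0}^2\log(np)/n)$, not $O(a_n)$. The quantity $a_n=Cs_{\BS{\beta}_0}\log(np)/n$ is the first-order near-orthogonality gap. Both are $o(n^{-1/2})$ under Condition~\ref{coninicox}(iii), so nothing breaks.
\item $\hat\Sigma_{\theta\theta,k,l}$ and $\hat{\BS{\Sigma}}_{\BS{\beta}\theta,k,l}$ are not literally the derivative terms of your Step~1, which integrate the risk-set-weighted variance and covariance against $dN_i$. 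After moving to $(\theta_0,\BS{\beta}_0)$ with your Lipschitz bounds, the difference has the form $\frac{1}{m}\sum_{i\in I_k}\int_0^{\tau}a_i(t)\,dN_i(t)$. Here $a_i$ is bounded and predictable, and $\sum_{i\in I_k}a_i(t)w_{k,i}(t,\theta_0,\BS{\beta}_0)=0$. So it is a martingale integral and is $o_p(1)$ (cf.\ Lemma~\ref{concentrationformg}). You need this line before claiming $\hat J_{0,l}\to J_0$.
\end{itemize}
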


Theorem \ref{ththeta} above is the main inference result for the survival treatment effect, establishing that both cross-fitted debiased estimators, $\check\theta_1$ and $\check\theta_2$, are asymptotically normal after standardization. By showing that the orthogonalized estimating equation removes the first-order impact of high-dimensional nuisance estimation while retaining the Cox-model structure, the theorem justifies the Wald-type confidence intervals for $\theta_0$ and provides a consistent plug-in variance estimator for use in practical implementation.

\begin{remark}
In this paper, we restrict our attention to the case when the survival causal parameter $\theta_0$ is a scalar for ease of presentation. Nevertheless, the proposed method can be extended readily to accommodate a finite-dimensional survival causal parameter vector $\boldsymbol{\theta}_0$.
\end{remark}

We also derive the asymptotic normality of the HSCI estimator $\check {\BS{\beta}}$ introduced in Section \ref{Sec3.4} for survival covariate effect inference in the following theorem.

\begin{theorem}\label{thbeta}
	Assume that Conditions \ref{coninicox}--\ref{condebeta} are satisfied. Let $\BS{c} \in \mathbb{R}^p$ with $\|\BS{c}\|_1 =1$, and assume further that $${\BS{c}}^{\top} \big\{\BS{\Sigma}_{\BS{\beta}\BS{\beta}}^{-1} + \BS{\Sigma}_{\BS{\beta}\BS{\beta}}^{-1}  \BS{\Sigma}_{\BS{\beta} \theta} (  \BS{\Sigma}_{\theta \theta} -  \BS{\Sigma}_{\theta\BS{\beta}} \BS{\Sigma}_{\BS{\beta}\BS{\beta}}^{-1}   \BS{\Sigma}_{\BS{\beta}\theta } )^{-1} \BS{\Sigma}_{\theta \BS{\beta}} \BS{\Sigma}_{\BS{\beta}\BS{\beta}}^{-1}   \big\} {\BS{c}} \rightarrow \varrho^2 \in (0, \infty).$$
    Set regularization parameters
	\[
	\lambda_n^* = C  \|{\BS{\Sigma}}_{\BS{\beta}\BS{\beta}}^{-1} \|_{\infty} (s_{\BS{\beta}_0} \sqrt{\log (np) /n} + \tau_n)
	\]
	in (\ref{eqClime2}) for a large constant $C> 0$ and $\lambda_{\theta \BS{\beta}} \asymp \sqrt{\log (np) /n}$ in (\ref{eqBetaL1}). Then for the debiased estimator $\check {\BS{\beta}}$ given in (\ref{eqestbeta}), we have that 
	\[
	\sqrt{n} {\BS{c}}^{\top} (\check {\BS{\beta}} -\BS{\beta}_{0}) \stackrel{\mathcal D} {\rightarrow} N(0, \varrho^2)
	\]
	as $n \rightarrow \infty$. The result also holds after replacing $\varrho^2$ with a consistent estimator $\hat \varrho^2$; that is,
	\[
	\sqrt{n} {\BS{c}}^{\top} (\check {\BS{\beta}} -\BS{\beta}_{0})/{\hat \varrho} \stackrel{\mathcal D} {\rightarrow} N(0, 1)
	\]
as $n \rightarrow \infty$.
\end{theorem}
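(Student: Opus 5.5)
The plan is the standard one-step debiasing expansion of \cite{van2014asymptotically} and \cite{yu2021confidence}, adapted to the joint $(\theta,\BS{\beta})$ structure. A Taylor expansion of the score $\dot l_{\BS{\beta}}(\{\mathcal{O}_i\}_{i=1}^n;\check\theta,\hat{\BS{\beta}})$ around $(\theta_0,\BS{\beta}_0)$ gives
\begin{align*}
\dot l_{\BS{\beta}}(\check\theta,\hat{\BS{\beta}}) = \dot l_{\BS{\beta}}(\theta_0,\BS{\beta}_0) - \hat{\BS{H}}_{\BS{\beta}\theta}(\check\theta-\theta_0) - \hat{\BS{H}}_{\BS{\beta}\BS{\beta}}(\hat{\BS{\beta}}-\BS{\beta}_0),
\end{align*}
where $\hat{\BS{H}}$ is the negative Hessian at an intermediate point. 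Substituting into \eqref{eqestbeta} yields
\begin{align*}
\sqrt{n}\BS{c}^{\top}(\check{\BS{\beta}}-\BS{\beta}_0) ={}& \sqrt{n}\BS{c}^{\top}\hat{\BS{\Xi}}\dot l_{\BS{\beta}}(\theta_0,\BS{\beta}_0) - \sqrt{n}\BS{c}^{\top}\hat{\BS{\Xi}}\hat{\BS{H}}_{\BS{\beta}\theta}(\check\theta-\theta_0)\\
&+ \sqrt{n}\BS{c}^{\top}(\BS{I}-\hat{\BS{\Xi}}\hat{\BS{H}}_{\BS{\beta}\BS{\beta}})(\hat{\BS{\beta}}-\BS{\beta}_0).
\end{align*}
The third term is the remainder. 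The second term is \emph{not} negligible, because $\check\theta-\theta_0$ is of order $n^{-1/2}$. It must therefore be kept and combined with the first via the asymptotic linear representation of $\check\theta$ from Theorem \ref{ththeta}.

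\textbf{Step 1: the CLIME estimator $\hat{\BS{\Xi}}$.} First show $\|\check{\BS{\Sigma}}_{\BS{\beta}\BS{\beta}}-\BS{\Sigma}_{\BS{\beta}\BS{\beta}}\|_{\max}=O_p(s_{\BS{\beta}_0}\sqrt{\log(np)/n}+\tau_n)$. This reuses the argument behind Proposition \ref{lminimut}, plugging in $\check\theta$ (which is $\sqrt n$-consistent, hence at least as good as $\hat\theta$), Proposition \ref{lminicox}, Condition \ref{coninilog}(ii), and the uniform closeness of $\bar{\BS{Z}}$ to $\BS{U}_{\BS{Z}}$. Feasibility of $\BS{\Sigma}_{\BS{\beta}\BS{\beta}}^{-1}$ in \eqref{eqClime2} with the stated $\lambda_n^*$ then gives, by the usual CLIME arguments \citep{cai2011constrained},
\begin{align*}
\|\hat{\BS{\Xi}}-\BS{\Sigma}_{\BS{\beta}\BS{\beta}}^{-1}\|_{\max}&\lesssim \lambda_n^*\|\BS{\Sigma}^{-1}_{\BS{\beta}\BS{\beta}}\|_\infty,\\
\|\hat{\BS{\Xi}}-\BS{\Sigma}_{\BS{\beta}\BS{\beta}}^{-1}\|_{\infty}&\lesssim \max_j h_j\,\lambda_n^*\|\BS{\Sigma}^{-1}_{\BS{\beta}\BS{\beta}}\|_\infty.
\end{align*}
We also have $\|\hat{\BS{\Xi}}\|_\infty\le\|\BS{\Sigma}^{-1}_{\BS{\beta}\BS{\beta}}\|_\infty$.

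\textbf{Step 2: the remainder.} Bound $\|\BS{I}-\hat{\BS{\Xi}}\hat{\BS{H}}_{\BS{\beta}\BS{\beta}}\|_{\max}$ using
\begin{align*}
\|\BS{I}-\hat{\BS{\Xi}}\check{\BS{\Sigma}}_{\BS{\beta}\BS{\beta}}\|_{\max}\le\lambda_n^*,
\end{align*}
together with $\|\hat{\BS{\Xi}}\|_\infty\|\hat{\BS{H}}_{\BS{\beta}\BS{\beta}}-\check{\BS{\Sigma}}_{\BS{\beta}\BS{\beta}}\|_{\max}$. The Hessian-to-$\BS{\Sigma}$ gap is controlled by martingale concentration, the Lipschitz property of the weights in $(\theta,\BS{\beta})$ under bounded $\BS{Z}$, and the identity of Lemma \ref{lemma-equi-Sigma}. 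Multiplying by $\|\hat{\BS{\beta}}-\BS{\beta}_0\|_1$ and $\sqrt n$, the remainder is $o_p(1)$ by Condition \ref{condebeta}. Similarly, replace $\hat{\BS{\Xi}}\hat{\BS{H}}_{\BS{\beta}\theta}$ by $\BS{\Sigma}^{-1}_{\BS{\beta}\BS{\beta}}\BS{\Sigma}_{\BS{\beta}\theta}=\BS{\mu}_{a0}$, and $\hat{\BS{\Xi}}$ by $\BS{\Sigma}^{-1}_{\BS{\beta}\BS{\beta}}$ in the first term. The cost of the latter is
\begin{align*}
\|\hat{\BS{\Xi}}-\BS{\Sigma}^{-1}_{\BS{\beta}\BS{\beta}}\|_\infty\,\|\sqrt n\,\dot l_{\BS{\beta}}(\theta_0,\BS{\beta}_0)\|_\infty = \|\hat{\BS{\Xi}}-\BS{\Sigma}^{-1}_{\BS{\beta}\BS{\beta}}\|_\infty\, O_p(\sqrt{\log p}),
\end{align*}
which is $o_p(1)$, again by Condition \ref{condebeta}.

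\textbf{Step 3: the linear representation.} The proof of Theorem \ref{ththeta} gives
\begin{align*}
\sqrt n(\check\theta-\theta_0)=-J_0^{-1}\sqrt n\,\Phi(\theta_0,\BS{\eta}_{a0})+o_p(1),\qquad \Phi=\dot l_\theta-\BS{\mu}_{a0}^{\top}\dot l_{\BS{\beta}}.
\end{align*}
Hence $\sqrt{n}\BS{c}^{\top}(\check{\BS{\beta}}-\BS{\beta}_0)$ is asymptotically the linear functional
\begin{align*}
\BS{c}^{\top}\big[\BS{\Sigma}^{-1}_{\BS{\beta}\BS{\beta}}\sqrt n\,\dot l_{\BS{\beta}}+\BS{\mu}_{a0}J_0^{-1}\sqrt n(\dot l_\theta-\BS{\mu}_{a0}^{\top}\dot l_{\BS{\beta}})\big]
\end{align*}
of the full score at the truth. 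Write each score as a martingale sum $n^{-1}\sum_i\int(\cdot-\bar{\cdot})\,dM_i$ and replace $\bar D,\bar{\BS{Z}}$ by $U_D,\BS{U}_{\BS{Z}}$ at cost $o_p(1)$. The martingale CLT, with a Lindeberg condition that follows from boundedness, then gives asymptotic normality. The predictable variation is the block matrix $\BS{\Sigma}$. A direct computation with $J_0=-(\Sigma_{\theta\theta}-\BS{\Sigma}_{\theta\BS{\beta}}\BS{\Sigma}^{-1}_{\BS{\beta}\BS{\beta}}\BS{\Sigma}_{\BS{\beta}\theta})$ shows the variance equals $\BS{c}^{\top}(\BS{\Sigma}^{-1})_{\BS{\beta}\BS{\beta}}\BS{c}$, that is, the Schur complement formula stated as $\varrho^2$. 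For the studentized version, take $\hat\varrho^2$ to be the plug-in formula with $\check{\BS{\Sigma}}$ blocks and $\hat{\BS{\Xi}}$. Consistency follows from the Step 1 rates, and Slutsky's lemma finishes the argument.

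\textbf{Main obstacle.} The hardest step is Step 2's control of $\hat{\BS{H}}-\check{\BS{\Sigma}}$ uniformly in max norm. The Hessian uses risk-set averages at intermediate points and plain $dN_i$ weights, whereas $\check{\BS{\Sigma}}$ uses propensity weights split by group. These must be connected via Lemma \ref{lemma-equi-Sigma} plus an $O_p(\sqrt{\log(np)/n})$ concentration bound for the risk-set processes uniformly in $t\in[0,\tau]$. I would first prove that uniform bound using Condition \ref{coninimu}(iii), which gives a risk set bounded below, and a bracketing/Bernstein argument over a grid in $t$.
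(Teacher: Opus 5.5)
Your expansion is the paper's. Its $\Pi_1,\dots,\Pi_6$ are your three terms after you split off the $\hat{\BS{\Xi}}-\BS{\Sigma}_{\BS{\beta}\BS{\beta}}^{-1}$ and Hessian-to-$\BS{\Sigma}$ gaps. The CLIME bounds (with the $\max_j h_j$ conversion), the use of the linear representation from Theorem~\ref{ththeta}, and the variance computation also coincide; your $\BS{c}^{\top}(\BS{\Sigma}^{-1})_{\BS{\beta}\BS{\beta}}\BS{c}$ is the block-inverse form of $\varrho^2$.

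\textbf{The gap.} The trouble sits at the step you single out as the main obstacle, and the paper shares it. Your Step~1 claim $\|\check{\BS{\Sigma}}_{\BS{\beta}\BS{\beta}}-\BS{\Sigma}_{\BS{\beta}\BS{\beta}}\|_{\max}\to 0$, imported from Proposition~\ref{lminimut}, fails for the estimator as defined unless the propensity score is constant. The treated summands of $\check{\BS{\Sigma}}_{\BS{\beta}\BS{\beta}}$ are draws from the $D=1$ subpopulation. So by the law of large numbers and ignorability,
\[
\frac{1}{n_1}\sum_{i=1}^n D_i\,\delta_i\,g(\BS{Z}_i,X_i)\,\hat{m}(\BS{Z}_i)\longrightarrow \frac{\mathbb{E}\big[m_0(\BS{Z})^2\int_0^\tau g(\BS{Z},t)\,dN^{(1)}(t)\big]}{P(D=1)},
\]
whereas Lemma~\ref{lemma-equi-Sigma} needs the weight $m_0(\BS{Z})$. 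The two agree for all integrands only if $m_0(\BS{Z})\equiv P(D=1)$; likewise the controls get $(1-m_0)^2/P(D=0)$ instead of $1-m_0$. The paper's proof of Proposition~\ref{lminimut} hides this in the Hoeffding bound for $\Delta_3$, which centers a treated-group average at the unconditional mean.

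\textbf{Consequences.} For large $n$, $\BS{\Sigma}^{-1}_{\BS{\beta}\BS{\beta}}$ is not feasible in (\ref{eqClime2}), and $\hat{\BS{\Xi}}$ targets the inverse of the limit of $\check{\BS{\Sigma}}_{\BS{\beta}\BS{\beta}}$. Then $\hat{\BS{\Xi}}\hat{\BS{H}}_{\BS{\beta}\BS{\beta}}\not\to\BS{I}$, so your remainder $\sqrt{n}\BS{c}^{\top}(\BS{I}-\hat{\BS{\Xi}}\hat{\BS{H}}_{\BS{\beta}\BS{\beta}})(\hat{\BS{\beta}}-\BS{\beta}_0)$ need not vanish. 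The leading term also in general no longer has variance $\varrho^2$. Because the two limits differ, no concentration argument, via Lemma~\ref{lemma-equi-Sigma} or otherwise, can bridge $\hat{\BS{H}}_{\BS{\beta}\BS{\beta}}$ and $\check{\BS{\Sigma}}_{\BS{\beta}\BS{\beta}}$.

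\textbf{Repair.} Run (\ref{eqClime2}) on a consistent estimator of $\BS{\Sigma}_{\BS{\beta}\BS{\beta}}$. One choice is $n^{-1}\sum_{i}\delta_i(\BS{Z}_i-\bar{\BS{Z}}(X_i,\check\theta,\hat{\BS{\beta}}))^{\otimes 2}$: drop $\hat{m}$ and normalize both group sums by $n$, since $\mathbb{E}[m_0(\BS{Z})\int g\,dN^{(1)}]=\mathbb{E}[D\int g\,dN]$. Another is $-\ddot l_{\BS{\beta}\BS{\beta}}(\check\theta,\hat{\BS{\beta}})$. Either has max-norm error $O_p(s_{\BS{\beta}_0}\sqrt{\log(np)/n})$ by the risk-set concentration you describe plus Hoeffding, with no role for Lemma~\ref{lemma-equi-Sigma} or $\tau_n$. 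Your Steps~1--3 then go through.

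The same defect sits in $\hat{\BS{\Sigma}}_{\BS{\beta}\BS{\beta}}$ and $\hat{\BS{\Sigma}}_{\BS{\beta}\theta}$, hence in $\hat{\BS{\mu}}_a$. So the linear representation of $\check\theta$ you import in Step~3 needs the same fix.

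\textbf{Minor point.} Your Hessian remainder is of order $\|\BS{\Sigma}^{-1}_{\BS{\beta}\BS{\beta}}\|_\infty s_{\BS{\beta}_0}(s_{\BS{\beta}_0}\log(np)/\sqrt{n}+\tau_n\sqrt{\log(np)})$. Its negligibility comes from Condition~\ref{coninimu}(v) (when $s_{\BS{\mu}_{a0}}\ge 1$), not from Condition~\ref{condebeta}.
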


Theorem \ref{thbeta} above extends the HSCI framework from the scalar treatment effect to valid inference for low-dimensional linear functionals of the high-dimensional regression coefficient vector $\BS{\beta}_0$. Complementing Theorem \ref{ththeta}, it accounts for uncertainty from estimating $\theta_0$ and 
approximating the inverse high-dimensional information matrix, revealing that the bias-corrected estimator based on the estimated precision matrix has an asymptotic Gaussian distribution even when the ambient dimensionality is large. The result therefore provides a basis for confidence intervals and hypothesis tests for covariate effects or low-dimensional contrasts of $\BS{\beta}_0$.

Together, Theorems \ref{ththeta} and \ref{thbeta} provide the distributional approximations needed for valid inference on both the survival treatment effect $\theta_0$ and the survival covariate effect vector $\BS{\beta}_0$.

\section{Simulation studies} \label{Sec5}

This section describes the implementation of the proposed method and then uses simulation studies to assess the finite-sample performance of $\check{\theta}_1$ and $\check{\theta}_2$ relative to the theoretical results.

\subsection{Implementation of HSCI}\label{subsec: HCIS-Implementation}

For data splitting, all observations are partitioned randomly into $K$ blocks $I_1,  \ldots, I_K$. In the simulation studies, we set $K=5$. For each $k=1, \ldots, K$, we use observations in $I_k^c$ and the R package \texttt{glmnet}, with regularization parameters selected by the $10$-fold cross-validation, to obtain the initial estimators $\hat{\theta}_k$, $\hat{\bbeta}_k$, and $\hat{\bgamma}_k$.
The estimator $\hat{\bmu}_{ak}$ is computed from \eqref{eqClime1}, with $\hat{\theta}$, $\hat{\BS{\beta}}$, and $\hat{\BS{\gamma}}$ replaced by $\hat{\theta}_k$, $\hat{\bbeta}_k$, and $\hat{\bgamma}_k$, respectively. The tuning parameter $\lambda_n$ is selected by 
the generalized information criterion (GIC) in \cite{fan2013tuning}; specifically, $\lambda_n$ in \eqref{eqClime1} is chosen to minimize
   \begin{align}\label{eq: tune_muak}
       \| \hat {\BS{\Sigma}}_{\BS{\beta}\BS{\beta}}\hat{\bmu}_{ak}(\lambda_n) -\hat {\BS{\Sigma}}_{\BS{\beta} \theta} \|_{\infty} + c_{\tilde{n}, p}\|\hat{\bmu}_a(\lambda_n)\|_0,
   \end{align}
where $c_{\tilde{n}, p}=\log\{\log(\tilde{n})\}\log(p)/\tilde{n}$, $\tilde{n}$ is the number of observations in $I_k^c$,
and $\|\hat{\bmu}_a(\lambda_n)\|_0$ is the number of nonzero components of $\hat{\bmu}_{ak}(\lambda_n)$.
Once the initial estimators $\hat{\theta}_k$, $\hat{\bbeta}_k$, $\hat{\bgamma}_k$, and $\hat{\bmu}_{ak}$ are available, the two debiased estimators $\check{\theta}_1$ and $\check{\theta}_2$ of $\theta$ are obtained from \eqref{eq: check-theta1} and \eqref{eqesttheta2}, respectively.

Given a debiased estimator of $\theta$, the corresponding debiased estimator $\check{\bbeta}$ of $\bbeta$ can be obtained from \eqref{eqestbeta}. In particular, estimator $\hat{\bbeta}$ used in \eqref{eqestbeta} is from the penalized maximum partial likelihood estimator $(\hat{\theta},\, \hat{\bbeta})$ defined in \eqref{eqBetaL1}, which can be obtained using all observations and the R package \texttt{glmnet} with regularization parameter $\lambda_{\theta\BS{\beta}}$ selected by the $10$-fold cross-validation. The CLIME-type estimator $\hat {\BS{\Xi}}$ used in \eqref{eqestbeta} can be obtained using the R package \texttt{flare}~\citep{li2015flare} with regularization parameter tuned by 
the GIC \citep{fan2013tuning}, which minimizes
   \begin{align}\label{eq: tune_Xi}
       \|\check {\BS{\Sigma}}_{\BS{\beta}\BS{\beta}}\hat {\BS{\Xi}}- I_p\|_{\infty} + c_{n, p}df,
   \end{align}
where $I_p$ is the identity matrix of size $p$, $\|\cdot\|_{\infty}$ represents the supremum norm, $c_{n, p}=\log\{\log(n)\}\log(p)/n$, and $df$ is the number of nonzero entries in the upper triangular part of $\hat {\BS{\Xi}}$.

\subsection{Simulation examples}\label{sec:examples}

\textbf{\textit{Study 1}}. Our first simulation study verifies the theoretical results in Theorem \ref{ththeta} by investigating the empirical distributions of our estimators $\check{\theta}_1$ and $\check{\theta}_2$. We generate data from the joint models \eqref{eqmodelcox} and \eqref{eqmodellog}, 
where the baseline hazard function $\lambda_0(t)=1$, the true coefficients $\theta_0=0.5$,
$\bbeta_0=(1, -1, 1, -1, 1, \bzero_{p-5})^T$, $\bgamma_0=(0, 0, 0, 0, 1, -1, \bzero_{p-6})^T$, and the covariate vector $\bZ=(Z_1, \cdots, Z_p)^T$ is from a multivariate normal distribution $N_p(\bzero, \bSig^Z)$ with
 \begin{equation*}
    \Sigma^Z_{ij}
    = \begin{cases}
      1, & \mbox{if } i = j,\\
      0.5, & \mbox{if } i \neq j, \, \beta_{0i} \neq 0, \beta_{0j} \neq 0, \\
      0, & \mbox{if } i \neq j, \beta_{0i}\beta_{0j} = 0, |\beta_{0i}| + |\beta_{0j}| > 0,\\
      0.5^{|i-j|}, & \mbox{if } i \neq j, \beta_{0i} = 0, \beta_{0j} = 0.
      \end{cases}
 \end{equation*}
Here, $\beta_{0i}$ and $\beta_{0j}$ are the $i$th and $j$th components of $\bbeta_0$, respectively. The censoring time $C$ has a uniform distribution $U(0, 2)$.
We consider the settings of $(n, p)=(500, 10)$ and $(n, p)=(500, 100)$.

\begin{figure}[h!]
 \begin{center}
      \includegraphics[scale=0.5]{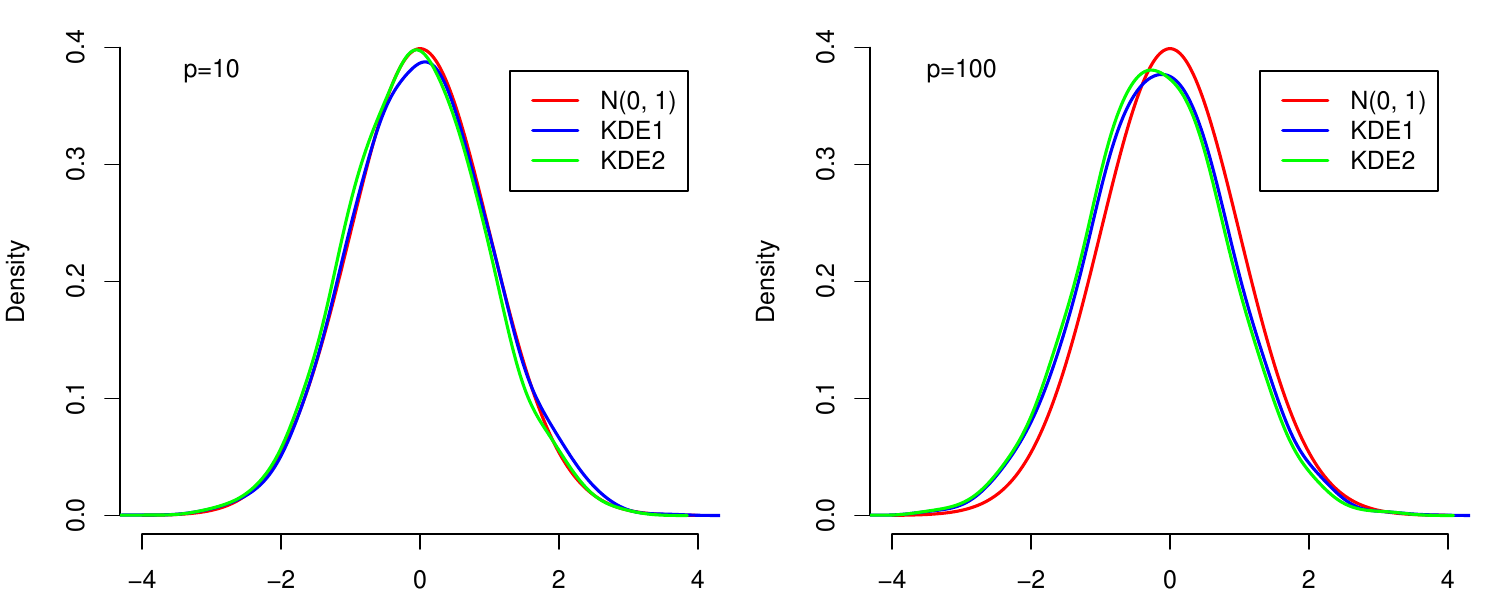}
      \caption{The empirical distributions of $T_{n1}=\hat{\sigma}_1^{-1}\sqrt{n}(\check{\theta}_1-\theta_0)$ and $T_{n2}=\hat{\sigma}_2^{-1}\sqrt{n}(\check{\theta}_2-\theta_0)$ in Study 1. KDE1 and KDE2 are the kernel density estimates of $T_{n1}$ and $T_{n2}$, respectively.}\label{fig1new}
 \end{center}
\end{figure}

We run $5000$ Monte Carlo simulations and examine the empirical distributions of $T_{n1}=\hat{\sigma}_1^{-1}\sqrt{n}(\check{\theta}_1-\theta_0)$ and $T_{n2}=\hat{\sigma}_2^{-1}\sqrt{n}(\check{\theta}_2-\theta_0)$. Figure \ref{fig1new} shows the kernel density estimates (KDEs) of $T_{n1}$ and $T_{n2}$ along with the standard normal density. The distributions of both $T_{n1}$ and $T_{n2}$ track closely the target standard normal distribution across different settings, supporting the asymptotic normality stated in Theorem \ref{ththeta}.

\smallskip
\textbf{\textit{Study 2}}. Our second simulation study examines the performance of the proposed estimators when $p$ is relatively small. We use the same model setup as in Study 1, except for the censoring time $C$ and covariance structure for $\bSig^Z$. In this study, censoring time $C$ follows a uniform distribution $U(0, c)$, where $c$ is a parameter controlling the censoring rate. We adopt two different values $c=10$ and $c=2$.
A larger value of $c$ yields a smaller censoring rate (CR).
We also consider two different covariance structures for $\bSig^Z$ with $\bSig^Z=I_p$ and $\bSig^Z=(\Sigma^Z_{ij})_{p\times p}$, respectively, where $I_p$ is the identity matrix of size $p$ and $\bSig^Z=(\Sigma^Z_{ij})_{p\times p}$ is the same as in Study 1.  Hence, we end up with the following four different settings:
  \begin{itemize}
    \itemsep0em
  	\item Setting 1: $\bSig^Z=I_p$ and $c=10$,
  	\item Setting 2: $\bSig^Z=I_p$ and $c=2$,
  	\item Setting 3: $\bSig^Z=(\Sigma^Z_{ij})$ and $c=10$,
  	\item Setting 4: $\bSig^Z=(\Sigma^Z_{ij})$ and $c=2$.
  \end{itemize}
For each setting, we set $(n, p)=(500, 10)$ and run $1000$ repetitions.

Recall that there are two different estimators, $\check{\theta}_1$ and $\check{\theta}_2$, of $\theta$ in our method. We compare these two proposed estimators to the Lasso estimator $\hat{\theta}$ and oracle estimator $\hat{\theta}_{\mathrm{or}}$.
In particular, as mentioned in Section \ref{subsec: HCIS-Implementation}, the Lasso estimator $\hat{\theta}$ is from the penalized maximum partial likelihood estimator $(\hat{\theta}, \hat{\bbeta})$, which is obtained from \eqref{eqBetaL1} using all observations and the R package \texttt{glmnet} with regularization parameter $\lambda_{\theta\BS{\beta}}$ selected by the $10$-fold cross-validation. The oracle estimator $\hat{\theta}_{\mathrm{or}}$ is from $(\hat{\theta}_{\mathrm{or}}, \hat{\bbeta}_{\mathrm{or}})$, which assumes that the true underlying sparse model is known in advance, and is obtained using the \texttt{coxph} function in the R package \texttt{survival} with all the observations.

\begin{table}[t]
	\caption{Simulation results for treatment effect $\theta$ in Study 2. }
	\vspace{2mm}
	\centering
	\scalebox{0.85}{
	\begin{tabular}{l rrcc rrcc}
		\toprule
		\multirow{3}{*}{Approach}  & \multicolumn{4}{c}{Setting 1 (CR: 0.1325)}     & \multicolumn{4}{c}{Setting 2 (CR: 0.3937)} \\[-6pt]
		&\multicolumn{4}{c@{}}{\hrulefill} &\multicolumn{4}{c@{}}{\hrulefill} \\
		& \multicolumn{1}{c}{$\mbox{BIAS}$} & \multicolumn{1}{c}{$\mbox{SE.T}$} & \multicolumn{1}{c}{$\mbox{SE}$} & \multicolumn{1}{c}{$\mbox{CP}$} & \multicolumn{1}{c}{$\mbox{BIAS}$} & \multicolumn{1}{c}{$\mbox{SE.T}$} & \multicolumn{1}{c}{$\mbox{SE}$} & \multicolumn{1}{c}{$\mbox{CP}$} \\
		\midrule
        $\hat{\theta}$       & -0.0451 & 0.1119 & -   & -    & -0.0529 & 0.1333 & - & -\\
        $\check{\theta}_{1}$ & -0.0116 & 0.1205 & 0.1197 & 0.951 & -0.0102 & 0.1421 & 0.1418 & 0.953\\
        $\check{\theta}_{2}$ & -0.0174 & 0.1205 & 0.1186 & 0.954 & -0.0167 & 0.1421 & 0.1398 & 0.955\\
        $\hat{\theta}_{\mathrm{or}}$ & 0.0025 & 0.1059 & 0.1068 & 0.941 & 0.0032 & 0.1264 & 0.1276 & 0.950\\
		\bottomrule
		\multirow{3}{*}{Approach}  & \multicolumn{4}{c}{Setting 3 (CR: 0.1149)}     & \multicolumn{4}{c}{Setting 4 (CR: 0.3845)} \\[-6pt]
		&\multicolumn{4}{c@{}}{\hrulefill} &\multicolumn{4}{c@{}}{\hrulefill} \\
		& \multicolumn{1}{c}{$\mbox{BIAS}$} & \multicolumn{1}{c}{$\mbox{SE.T}$} & \multicolumn{1}{c}{$\mbox{SE}$} & \multicolumn{1}{c}{$\mbox{CP}$} & \multicolumn{1}{c}{$\mbox{BIAS}$} & \multicolumn{1}{c}{$\mbox{SE.T}$} & \multicolumn{1}{c}{$\mbox{SE}$} & \multicolumn{1}{c}{$\mbox{CP}$} \\
		\midrule
        $\hat{\theta}$               & -0.0269 & 0.1112   & - & - & -0.0293 & 0.1319  & - & - \\
        $\check{\theta}_{1}$         & 0.0018 & 0.1194 & 0.1187 & 0.959 &  0.0035 & 0.1412 & 0.1431 & 0.954\\
        $\check{\theta}_{2}$         & -0.0062 & 0.1194 & 0.1168 & 0.966 &  -0.0056 & 0.1412 & 0.1402 & 0.957\\
        $\hat{\theta}_{\mathrm{or}}$ &  0.0084 & 0.1049 & 0.1064 & 0.946 &  0.0091 & 0.1253 & 0.1260 & 0.951\\
		\bottomrule
	\end{tabular}}\label{Table-theta-n500-p10-N1000}
\end{table}

Table \ref{Table-theta-n500-p10-N1000} summarizes the simulation results for $\theta$ over $1000$ repetitions under different settings in Study 2. In this table, CR represents the censoring rate, BIAS is the bias of the estimator, SE.T denotes the estimate of the asymptotic standard deviation of the estimator, and SE stands for the empirical standard error of the estimator. Specifically, $\mbox{SE.T}(\check{\theta}_1)=\hat{\sigma}_1/\sqrt{n}$ and $\mbox{SE.T}(\check{\theta}_2)=\hat{\sigma}_2/\sqrt{n}$, where $\hat{\sigma}_1$ and
$\hat{\sigma}_2$ are given in Theorem \ref{ththeta}.
The simulation results for $\bbeta$ in Study 2 are reported in Section \ref{SecA} of the Supplementary Material. Table \ref{Table-theta-n500-p10-N1000} shows that the Lasso estimator has larger bias, while the proposed debiasing method reduces substantially this bias. In addition, the estimated asymptotic standard deviations (SE.T) for the proposed estimators $\check{\theta}_1$ and $\check{\theta}_2$ 
agree closely with the corresponding
empirical standard errors (SE), indicating that the proposed asymptotic variance estimation procedure performs well in finite samples. The difference between the two estimators, $\check{\theta}_1$ and $\check{\theta}_2$, is negligible. Both $\check{\theta}_1$ and $\check{\theta}_2$ achieve reasonably good coverage rates of the nominal $95\%$ confidence intervals.

\smallskip
\textbf{\textit{Study 3}}. Our third simulation study further assesses the performance of the proposed estimators when $p$ is large. We consider the same settings as in Study 2, except for $(n, p)=(500, 100)$. Recall that in our data-splitting method, all the observations in our data are partitioned randomly into $K=5$ parts. 
For each $k \in \{1, \dots, K\}$, we use all observations not in the $k$th block to estimate nuisance parameter $\eta$ in~\eqref{eqortho}, and use only the data in the $k$th block to evaluate the orthogonal estimating equation. 
In this sense, the choice of $p=100$ is already in a high-dimensional regime for such setting. It is worth pointing out that both the Lasso estimator $\hat{\theta}$ and oracle estimator $\hat{\theta}_{\mathrm{or}}$ use all observations to estimate $\theta$.

\begin{table}[h!]
	\caption{Simulation results for treatment effect $\theta$ in Study 3 over 1000 repetitions.}
	\vspace{2mm}
	\centering
	\scalebox{0.85}{
	\begin{tabular}{l rrcc rrcc}
		\toprule
		\multirow{3}{*}{Approach}  & \multicolumn{4}{c}{Setting 1 (CR: 0.1333)}     & \multicolumn{4}{c}{Setting 2 (CR: 0.3952)} \\[-6pt]
		&\multicolumn{4}{c@{}}{\hrulefill} &\multicolumn{4}{c@{}}{\hrulefill} \\
		& \multicolumn{1}{c}{$\mbox{BIAS}$} & \multicolumn{1}{c}{$\mbox{SE.T}$} & \multicolumn{1}{c}{$\mbox{SE}$} & \multicolumn{1}{c}{$\mbox{CP}$} & \multicolumn{1}{c}{$\mbox{BIAS}$} & \multicolumn{1}{c}{$\mbox{SE.T}$} & \multicolumn{1}{c}{$\mbox{SE}$} & \multicolumn{1}{c}{$\mbox{CP}$} \\
		\midrule
		$\hat{\theta}$               & -0.1279 & 0.0998 & - & -  & -0.1426 & 0.1224 & - & -\\
        $\check{\theta}_{1}$         & -0.0281 & 0.1132 & 0.1062 & 0.953 & -0.0194 & 0.1333 & 0.1299 & 0.957\\
        $\check{\theta}_{2}$         & -0.0334 & 0.1132 & 0.1055 & 0.950 & -0.0247 & 0.1333 & 0.1288 & 0.954\\
        $\hat{\theta}_{\mathrm{or}}$ &  0.0064 & 0.1060 & 0.1029 & 0.964 &  0.0082 & 0.1267 & 0.1260 & 0.949\\
		\bottomrule
		\multirow{3}{*}{Approach}  & \multicolumn{4}{c}{Setting 3 (CR: 0.1149)}     & \multicolumn{4}{c}{Setting 4 (CR: 0.3871)} \\[-6pt]
		&\multicolumn{4}{c@{}}{\hrulefill} &\multicolumn{4}{c@{}}{\hrulefill} \\
		& \multicolumn{1}{c}{$\mbox{BIAS}$} & \multicolumn{1}{c}{$\mbox{SE.T}$} & \multicolumn{1}{c}{$\mbox{SE}$} & \multicolumn{1}{c}{$\mbox{CP}$} & \multicolumn{1}{c}{$\mbox{BIAS}$} & \multicolumn{1}{c}{$\mbox{SE.T}$} & \multicolumn{1}{c}{$\mbox{SE}$} & \multicolumn{1}{c}{$\mbox{CP}$} \\
		\midrule
        $\hat{\theta}$               & -0.1123 & 0.1108  & - & - & -0.1257 & 0.1298 & - & - \\
        $\check{\theta}_{1}$         & -0.0307 & 0.1117 & 0.1202 & 0.920 & -0.0235 & 0.1316 & 0.1402 & 0.939 \\
        $\check{\theta}_{2}$         & -0.0375 & 0.1117 & 0.1188 & 0.919 & -0.0305 & 0.1316 & 0.1385 & 0.939\\
        $\hat{\theta}_{\mathrm{or}}$ &  0.0041 & 0.1049 & 0.1126 & 0.926 &  0.0041 & 0.1255 & 0.1318 & 0.944\\
		\bottomrule
	\end{tabular}}\label{Table-theta-n500-p100-N1000}
\end{table}

The simulation results for $\theta$ in Study 3 are reported in Table \ref{Table-theta-n500-p100-N1000}. We continue to see that in all settings, the Lasso estimator has larger bias, whereas our method can reduce significantly the estimation bias for $\theta$. The estimated asymptotic standard deviations (SE.T) are close to the empirical standard errors (SE) for both $\check{\theta}_1$ and $\check{\theta}_2$. The performance of $\check{\theta}_1$ is similar to that of $\check{\theta}_2$.
The coverage rates for both $\check{\theta}_1$ and $\check{\theta}_2$ are close to the nominal coverage of $95\%$. As expected, the oracle estimator $\hat{\theta}_{or}$, which is unavailable in practice, has the smallest overall bias, typically close to zero, as it relies on 
the true underlying sparsity structure.

\smallskip
\textbf{\textit{Study 4}}. The fourth simulation study assesses \textit{robustness} when the propensity model~\eqref{eqmodellog} is \textit{misspecified}. The baseline hazard function $\lambda_0(t)$, along with the true coefficients $\theta_0$, $\bbeta_0$, and $\bgamma_0$, are the same as in Study 3. The covariate vector $\bZ=(Z_1,\ldots,Z_p)^T$ and the censoring time $C$ are also generated in the same manner as in Study 3. We consider the following two scenarios of \textit{model misspecification}:
%
1) Scenario S1:
the propensity model~\eqref{eqmodellog} is misspecified and given by $P(D=1 \mid \BS{Z}) =\exp\{\BS{Z}^{\top} \BS{\gamma}_0 + 0.25Z_1^2\}/\left[1+\exp\{ \BS{Z}^{\top} \BS{\gamma}_0 + 0.25Z_1^2\}\right]$;
2) Scenario S2:
the propensity model~\eqref{eqmodellog} is misspecified and given by
$P(D=1 \mid \BS{Z}) = \Phi(\BS{Z}^{\top} \BS{\gamma}_0)$,
where $\Phi(\cdot)$ is the cumulative distribution function (CDF) of the standard normal distribution. In each scenario, the hazards model~\eqref{eqmodelcox} for the event time is correctly specified.

\begin{table}[h!]
	\caption{Simulation results for treatment effect $\theta$ in Study 4 over 1000 repetitions.}
	\vspace{4mm}
	\centering
	\scalebox{0.85}{
	\begin{tabular}{c|l rrcc rrcc}
		\toprule
		\multirow{14}{*}{S1} & \multirow{3}{*}{Approach}  & \multicolumn{4}{c}{Setting 1 (CR: 0.1311)}     & \multicolumn{4}{c}{Setting 2 (CR: 0.3906)} \\[-6pt]
		& &\multicolumn{4}{c@{}}{\hrulefill} &\multicolumn{4}{c@{}}{\hrulefill} \\
		& & \multicolumn{1}{c}{$\mbox{BIAS}$} & \multicolumn{1}{c}{$\mbox{SE.T}$} & \multicolumn{1}{c}{$\mbox{SE}$} & \multicolumn{1}{c}{$\mbox{CP}$} & \multicolumn{1}{c}{$\mbox{BIAS}$} & \multicolumn{1}{c}{$\mbox{SE.T}$} & \multicolumn{1}{c}{$\mbox{SE}$} & \multicolumn{1}{c}{$\mbox{CP}$} \\ [-6pt]
        & \multicolumn{9}{c@{}}{\hrulefill}  \\
		& $\hat{\theta}$        & -0.1283 & 0.1049 & -      & -     & -0.1463 & 0.1273 & -      & - \\
        & $\check{\theta}_{1}$  & -0.0283 & 0.1136 & 0.1102 & 0.954 & -0.0205 & 0.1344 & 0.1352 & 0.951 \\
        & $\check{\theta}_{2}$  & -0.0328 & 0.1136 & 0.1097 & 0.951 & -0.0260 & 0.1344 & 0.1337 & 0.949 \\
        & $\hat{\theta}_{\mathrm{or}}$ & 0.0062 & 0.1065 & 0.1087 & 0.947 & 0.0032 & 0.1279 & 0.1307 & 0.941  \\ [-6pt]
		  & \multicolumn{9}{c@{}}{\hrulefill}  \\
		& \multirow{3}{*}{Approach}  & \multicolumn{4}{c}{Setting 3 (CR: 0.1126)}     & \multicolumn{4}{c}{Setting 4 (CR: 0.3826)} \\[-6pt]
		& &\multicolumn{4}{c@{}}{\hrulefill} &\multicolumn{4}{c@{}}{\hrulefill} \\
		& & \multicolumn{1}{c}{$\mbox{BIAS}$} & \multicolumn{1}{c}{$\mbox{SE.T}$} & \multicolumn{1}{c}{$\mbox{SE}$} & \multicolumn{1}{c}{$\mbox{CP}$} & \multicolumn{1}{c}{$\mbox{BIAS}$} & \multicolumn{1}{c}{$\mbox{SE.T}$} & \multicolumn{1}{c}{$\mbox{SE}$} & \multicolumn{1}{c}{$\mbox{CP}$} \\ [-6pt]
		 & \multicolumn{9}{c@{}}{\hrulefill}  \\
		& $\hat{\theta}$        & -0.1138 & 0.1065 & -  & -  & -0.1257 & 0.1306  & -  & - \\
        & $\check{\theta}_{1}$  & -0.0302 & 0.1118 & 0.1151 & 0.940  & -0.0198 & 0.1326 & 0.1409 & 0.936\\
        & $\check{\theta}_{2}$  & -0.0364 & 0.1118 & 0.1144 & 0.937 & -0.0274 & 0.1326 & 0.1399 & 0.929 \\
        & $\hat{\theta}_{\mathrm{or}}$ & 0.0038 & 0.1052 & 0.1065 & 0.953 & 0.0047 & 0.1269 & 0.1308 & 0.947\\
		 \midrule
		\multirow{14}{*}{S2} & \multirow{3}{*}{Approach}  & \multicolumn{4}{c}{Setting 1 (CR: 0.1345)}     & \multicolumn{4}{c}{Setting 2 (CR: 0.3955)} \\[-6pt]
		& &\multicolumn{4}{c@{}}{\hrulefill} &\multicolumn{4}{c@{}}{\hrulefill} \\
		& & \multicolumn{1}{c}{$\mbox{BIAS}$} & \multicolumn{1}{c}{$\mbox{SE.T}$} & \multicolumn{1}{c}{$\mbox{SE}$} & \multicolumn{1}{c}{$\mbox{CP}$} & \multicolumn{1}{c}{$\mbox{BIAS}$} & \multicolumn{1}{c}{$\mbox{SE.T}$} & \multicolumn{1}{c}{$\mbox{SE}$} & \multicolumn{1}{c}{$\mbox{CP}$} \\ [-6pt]
        & \multicolumn{9}{c@{}}{\hrulefill}  \\
		& $\hat{\theta}$      &  -0.1306   & 0.1083  & -  & -  & -0.1503 & 0.1340 & -  & - \\
        & $\check{\theta}_{1}$      & -0.0354   & 0.1221  & 0.1215 & 0.944 & -0.0323 & 0.1433 & 0.1488 & 0.938\\
        & $\check{\theta}_{2}$       & -0.0422 &0.1221  & 0.1201 & 0.939 & -0.0401 & 0.1433 & 0.1467 & 0.936\\
        & $\hat{\theta}_{\mathrm{or}}$ & -0.0010 & 0.1111 & 0.1124  & 0.953 & -0.0059 & 0.1328 & 0.1360 & 0.949\\ [-6pt]
		  & \multicolumn{9}{c@{}}{\hrulefill}  \\
		& \multirow{3}{*}{Approach}  & \multicolumn{4}{c}{Setting 3 (CR: 0.1160)}     & \multicolumn{4}{c}{Setting 4 (CR: 0.3875)} \\[-6pt]
		& &\multicolumn{4}{c@{}}{\hrulefill} &\multicolumn{4}{c@{}}{\hrulefill} \\
		& & \multicolumn{1}{c}{$\mbox{BIAS}$} & \multicolumn{1}{c}{$\mbox{SE.T}$} & \multicolumn{1}{c}{$\mbox{SE}$} & \multicolumn{1}{c}{$\mbox{CP}$} & \multicolumn{1}{c}{$\mbox{BIAS}$} & \multicolumn{1}{c}{$\mbox{SE.T}$} & \multicolumn{1}{c}{$\mbox{SE}$} & \multicolumn{1}{c}{$\mbox{CP}$} \\ [-6pt]
		 & \multicolumn{9}{c@{}}{\hrulefill}  \\
        & $\hat{\theta}$            & -0.1062 & 0.1085 & - & - & -0.1199 & 0.1302  & - & -\\
        & $\check{\theta}_{1}$      & -0.0240 & 0.1202 & 0.1206 & 0.942 & -0.0184 & 0.1412 & 0.1440 & 0.938\\
        & $\check{\theta}_{2}$      & -0.0332 & 0.1202 & 0.1181 & 0.943 & -0.0290 & 0.1412 & 0.1404 & 0.951\\
        & $\hat{\theta}_{\mathrm{or}}$ & 0.0066 & 0.1100 & 0.1091 & 0.948 &  0.0051 & 0.1316 & 0.1308 & 0.956\\
		\bottomrule
	\end{tabular}}\label{Example3-Table-theta-n500-p100-N100}
\end{table}


In scenario S1, the true propensity model includes a quadratic term $0.25Z_1^2$ in the exponent, but the estimator assumes a linear form in the exponent. In scenario S2, the true propensity model uses a probit link function (the normal CDF), whereas the estimator assumes a logistic link function. 
The simulation results for $\theta$ under these two misspecification scenarios and  four simulation settings with different censoring rates are detailed in Table \ref{Example3-Table-theta-n500-p100-N100}. It can be seen that the Lasso estimator $\hat{\theta}$ performs the worst across all settings, exhibiting substantial bias in estimating the treatment effect. In contrast, both proposed estimators, $\check{\theta}_1$ and $\check{\theta}_2$, improve substantially over the Lasso estimator, with markedly reduced bias in each setting.  
In addition, the estimated asymptotic standard deviations (SE.T) for proposed estimators $\check{\theta}_1$ and $\check{\theta}_2$ match closely the empirical standard errors (SE), indicating that the asymptotic variance estimation procedure can still perform well in finite samples even when the propensity score model is misspecified.
Moreover, the coverage rates are generally close to the nominal $95\%$ level. 
Comparing the results for scenarios S1 and S2, the standard errors under scenario S2 become slightly larger in magnitude than S1 across corresponding settings, reflecting the greater difficulty induced by the probit link misspecification in scenario S2.

\section{Real data application} \label{Sec6}

In this section, we 
further illustrate the performance of our method by an application to the diffuse large-B-cell lymphoma (DLBCL) data, which was analyzed previously in \cite{rosenwald2002use}, \cite{zhu2011model}, and \cite{lin2013high}. The data set consists of gene expression measurements for 7399 genes, gene-expression subgroups, and survival outcomes after chemotherapy for 240 patients. There are three gene-expression subgroups in this data set: germinal-center B-cell-like (GCB), activated B-cell-like (ABC), and type 3 diffuse large-B-cell lymphoma (Type III). Our main focus is the treatment effect of the gene-expression subgroup on the overall survival time of the patients. To apply our method, we exclude all the observations from group Type III, resulting in a sample of size $n=188$ with 115 observations from group GCB and 73 from group ABC.
We set $D=1$ for observations from group GCB (i.e., treatment), and $D=0$ for observations from group ABC (i.e., control). In our sample, the median follow-up time is 2.95 years, and 105 patients died during the study period. Thus, the censoring rate is about $44\%$. Figure \ref{fig2} depicts the Kaplan--Meier curves for the two groups GCB and ABC. The curves suggest that group GCB is associated with superior survival, implying a lower hazard rate for patients in group GCB. This finding has been widely confirmed  in the existing literature; see, e.g.,
\cite{rosenwald2002use}, \cite{wright2003gene}, and \cite{hans2004confirmation}.

\begin{figure}[h]
 \begin{center}
      \includegraphics[scale=0.5]{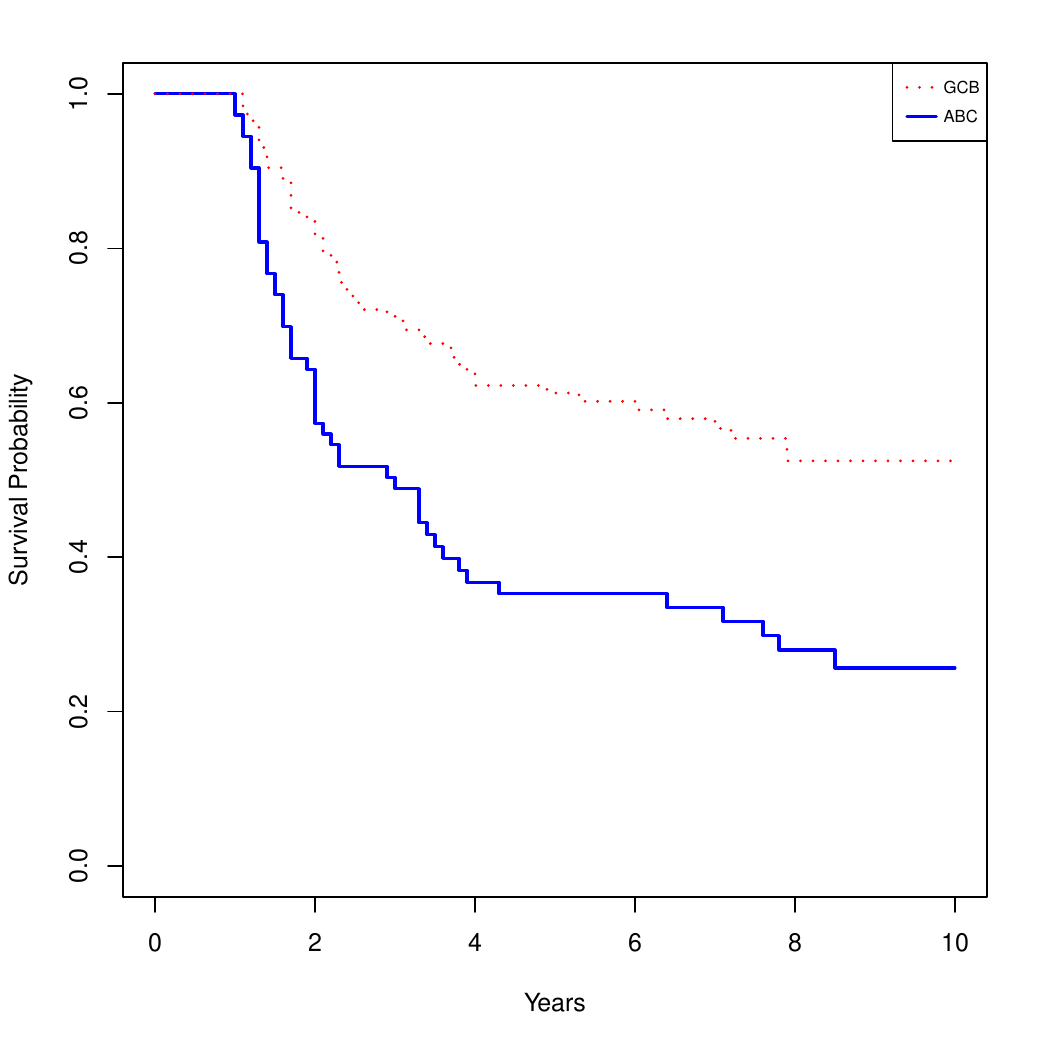}
      \caption{The Kaplan--Meier curves for GCB (dotted red) and ABC (solid blue). }\label{fig2}
 \end{center}
\end{figure}

Since gene expression measurements are highly correlated, we first preprocess the data using principal component analysis (PCA). Specifically, we apply PCA to the data matrix for the 7399 genes and 188 observations in groups GCB and ABC, and exploit the top 50 sample principal components as the covariate design matrix $\BS{Z}$ in the joint models \eqref{eqmodelcox} and \eqref{eqmodellog}. Similar dimension-reduction strategies have been used in related works \citep{zheng2021nonsparse}.

We compare the proposed estimators $\check{\theta}_1$ and $\check{\theta}_2$ to the Lasso estimator $\hat{\theta}$ only because the oracle estimator $\hat{\theta}_{\mathrm{or}}$ is not available in this case. Since the sample size is $n = 188$, we adopt $K=4$ for our data-splitting procedure for simplicity.  All other tuning parameters are selected using the approaches detailed in Section \ref{subsec: HCIS-Implementation}.

\begin{table}[h]
	\caption{Estimated treatment effect for the DLBCL data.}
	\vspace{2mm}
	\centering
	\begin{tabular}{lrcc}
		\toprule
        Approach & Estimate  & SE & $95\%$ Confidence Interval \\
        \midrule
        $\hat{\theta}$ & -0.3120 & - & - \\
        $\check{\theta}_1$ & -0.5360 & 0.2045 &  [-0.9368, -0.1352]\\
        $\check{\theta}_2$ & -0.5270 & 0.2045 & [-0.9278, -0.1261]\\
		\bottomrule
	\end{tabular}\label{Tab: RealData}
\end{table}

Table \ref{Tab: RealData} reports the empirical results. We present two estimates of treatment effect $\theta$ from the proposed method and their corresponding confidence intervals. Inference for the Lasso estimate $\hat{\theta}$ is not directly available, so only its point estimate is reported. The two proposed estimates, $\check{\theta}_1$ and $\check{\theta}_2$, are very similar. The Lasso approach estimates a log-hazard effect of -0.3120, whereas the proposed method estimates log-hazard effects of -0.5360 and -0.5270, respectively. In addition, our proposed method provides uncertainty quantification for the estimated treatment effect.
The $95\%$ confidence intervals from the proposed method are entirely negative, indicating that group GCB has a significantly lower hazard rate than group ABC. Equivalently, the overall survival rate for patients in group GCB is significantly higher than that for patients in group ABC, consistent with the Kaplan--Meier curves in Figure \ref{fig2}.

\section{Discussion} \label{Sec7}

We have developed in this paper a high-dimensional survival causal inference framework, HSCI, for valid inference on the treatment and covariate effects under a sparse Cox proportional hazards model. The proposed method combines the propensity-score adjustment, Neyman near-orthogonal score construction, and cross-fitting to remove the first-order impact of high-dimensional nuisance estimation. We have established root-$n$ asymptotic normality under doubly robust rate conditions, and further extended the analysis to inference on high-dimensional covariate effects. The simulation and real data examples confirm that HSCI can reduce substantially the bias of regularized Cox estimators and achieve confidence-interval coverage close to the nominal level across different dimensionality, censoring, and misspecified propensity-model settings.

It would be interesting to extend the HSCI framework to more general survival models, nonlinear covariate effects, and machine-learning-based nuisance estimators. Also, it would be important to adapt the suggested framework to more complex survival-causal settings, such as time-varying treatments and time-dependent covariates, for longitudinal applications. These problems are beyond the scope of the current paper and will be interesting topics for future research.

\section{Data availability statement}
The data that support the findings of this study are openly available at
\url{https://hastie.su.domains/StatLearnSparsity_files/DATA/} and
\url{https://llmpp.ccr.cancer.gov/DLBCL/DLBCL_patient_data_NEW.txt}.

\bibliographystyle{chicago}
\bibliography{references}

\newpage
\appendix
\setcounter{page}{1}
\setcounter{section}{0}
\renewcommand{\theequation}{A.\arabic{equation}}
\setcounter{equation}{0}

\begin{center}{\bf \Large Supplementary Material to  ``HSCI: Neyman-Orthogonal Causal Inference under High-Dimensional Proportional Hazards"}

\bigskip

Yingying Fan, Lan Gao, Daoji Li and Jinchi Lv
\end{center}

This Supplementary Material contains the proofs of the main results, some technical lemmas and their proofs, and some additional simulation results. For notational simplicity, we will use $C_0$ and $\tilde{C}$ to denote generic positive constants whose values may vary from line to line. All other notation is the same as in the main text.

\section{Proofs of Theorems \ref{ththeta}--\ref{thbeta} and Propositions \ref{lminilog}--\ref{lmnearortho}} \label{SecB}

\subsection{Proof of Theorem \ref{ththeta}} \label{SecB.1}

The proof of this theorem will be divided into three parts. First, we show that the estimating equation has a solution in the local neighborhood $\mathcal{N}_{\theta_0}^*$ by checking the signs of the score at the two endpoints. Second, we linearize the orthogonal score around $(\theta_0,\BS{\eta}_{a0})$ and exploit the Neyman near-orthogonality to make the nuisance-estimation terms negligible. Third, we verify that the sandwich variance estimator is consistent, so that an application of Slutsky's lemma yields the desired limiting distribution. A major technical challenge is to control the empirical process and Taylor remainder terms uniformly over a shrinking high-dimensional nuisance neighborhood.

\textbf{Step 1}. We begin with showing that with probability tending to one, there exists an estimator $\check \theta_{1,k}$ within  neighborhood $\mathcal{N}_{\theta_0}^*$ that solves the estimating equation \eqref{eqesttheta1}. Let $\Delta_n = C_0 s_{\BS{\beta}_0} \sqrt{\log (np) /n } $ be half of the radius of neighborhood $\mathcal{N}_{\theta_0}^*$. We claim that when $n$ is sufficiently large, the following sign conditions 
\begin{align}
    \Phi(\{\mathcal{O}_{i}\}_{i \in I_k} ; \theta_0 + \Delta_n, \hat{\BS{\eta}}_k) & < 0 \label{eq-pf-thm1-neww-goal1}, \\
    \Phi(\{\mathcal{O}_{i}\}_{i \in I_k} ; \theta_0 - \Delta_n, \hat{\BS{\eta}}_k) & > 0 \label{eq-pf-thm1-neww-goal2}
\end{align}
hold with probability tending to one. Then it follows from the continuity of function $\Phi(\{\mathcal{O}_{i}\}_{i \in I_k} ; \theta, \hat{\BS{\eta}}_k) $ with respect to $\theta$ that within the range $ (\theta_0 - \Delta_n, \theta_0 + \Delta_n) \subset \mathcal{N}_{\theta_0}^*$, there exists a solution $\check{\theta}_{1,k} \in \mathcal{N}_{\theta_0}^* $ to \eqref{eqesttheta1}.

Since the proofs of \eqref{eq-pf-thm1-neww-goal1} and \eqref{eq-pf-thm1-neww-goal2} are similar, it suffices to show \eqref{eq-pf-thm1-neww-goal1}. To this end, observe that $\mathbb{E} [ \Phi(\{\mathcal{O}_{i}\}_{i \in I_k}; \theta_0, {\BS{\eta}}_{a0})] = 0$ and $\Phi(\{\mathcal{O}_{i}\}_{i \in I_k} ; \theta_0 + \Delta_n, \hat{\BS{\eta}}_k) $ can be decomposed into the following three terms
\begin{align} \label{eq-pf-thm1-neww-1}
     \Phi(\{\mathcal{O}_{i}\}_{i \in I_k} ; \theta_0 + \Delta_n, \hat{\BS{\eta}}_k) = R_1 + R_2 + R_3,
\end{align}
where the three components are defined as
\begin{align}
    R_1 &= \Phi(\{\mathcal{O}_{i}\}_{i \in I_k} ; \theta_0 + \Delta_n, \hat{\BS{\eta}}_k) - \mathbb{E} [\Phi(\{\mathcal{O}_{i}\}_{i \in I_k} ; \theta_0 + \Delta_n, \hat{\BS{\eta}}_k)] \nonumber\\
    & \quad  - \big( \Phi(\{\mathcal{O}_{i}\}_{i \in I_k}; \theta_0, {\BS{\eta}}_{a0}) -  \mathbb{E} [ \Phi(\{\mathcal{O}_{i}\}_{i \in I_k}; \theta_0, {\BS{\eta}}_{a0})] \big), \nonumber \\
    R_2 &= \Phi(\{\mathcal{O}_{i}\}_{i \in I_k}; \theta_0, {\BS{\eta}}_{a0}), \nonumber\\
    R_3 & = \mathbb{E} [\Phi(\{\mathcal{O}_{i}\}_{i \in I_k} ; \theta_0 + \Delta_n, \hat{\BS{\eta}}_k)] \nonumber.
\end{align}
We will analyze the three terms $R_1$, $R_2$, and $R_3$ above separately. Recall that $m = |I_k| \asymp n$. For term $R_1$, we establish in Lemma \ref{lmentropy0-4} in Section \ref{SecB.9-new} that with probability tending to one,
\begin{equation} \label{eq-pf-thm1-neww-2}
\begin{split}
    |R_1| & \leq m^{-1/2} \sup_{ \theta \in \mathcal{N}_{\theta_0}^* } \big| \mathbb{G}_m [\Phi (\{\mathcal{O}_i\}_{i \in I_k};\theta, \hat{\BS{\eta}}_k) -\Phi (\{\mathcal{O}_i\}_{i \in I_k};\theta_0, \BS{\eta}_{a0})] \big| \\
    & \leq  O_p\big(  (s_{\BS{\beta}_0}^2 + {\|\BS{\Sigma}}^{-1}_{\BS{\beta}\BS{\beta}} \|_{\infty} s_{{\BS{\mu}}_{a0}}  s_{\BS{\beta}_0}  ) \log(np) /n  +  \tau_n  {\|\BS{\Sigma}}^{-1}_{\BS{\beta}\BS{\beta}} \|_{\infty} s_{{\BS{\mu}}_{a0}}   \sqrt{\log (np)} /\sqrt{n} \big) \\
    & = o_p( s_{\BS{\beta}_0} \sqrt{\log (np) /n }),
\end{split}
\end{equation}
where the last step above follows since $ s_{\BS{\beta}_0} \sqrt{\log (np)} = o(\sqrt {n})$, and ${\|\BS{\Sigma}}^{-1}_{\BS{\beta}\BS{\beta}} \|_{\infty} s_{{\BS{\mu}}_{a0}} \sqrt{\log (np)} = o(\sqrt n)$ and $ \tau_n {\|\BS{\Sigma}}^{-1}_{\BS{\beta}\BS{\beta}} \|_{\infty} s_{{\BS{\mu}}_{a0}} = o(1)$ by Condition \ref{coninimu}(v).

Next, for term $R_2$, let us recall the Doob--Meyer decomposition for the counting process $ \{N_i(t)\}_{t \in [0, \tau]} $
\begin{equation*}
    N_i(t) = M_i(t) + \int_{0}^ {t} \widetilde{w}_i(s, \theta_0, \BS{\beta}_0) \lambda_0(s) ds,
\end{equation*}
where $ \{M_i(t)\}_{t \in [0, \tau]} $ is a mean-zero martingale. Substituting such decomposition into the oracle score leads to 
	\begin{align} \label{eq-neww-expand-Phi-0}
		& \Phi (\{\mathcal{O}_i\}_{i \in I_k}; \theta_0, \BS{\eta}_{a0}) \nonumber \\
		&= \dot l_{ \theta}(\{\mathcal{O}_i\}_{i \in I_k}; \theta_0, \BS{\beta}_0)-  \BS{\mu}_{a0}^{\top} \dot l_{ \BS{\beta}}(\{\mathcal{O}_i\}_{i \in I_k}; \theta_0, \BS{\beta}_0) \nonumber\\
		&=   \frac{1}{m} \sum_{i \in I_k} \int_0^{\tau} (D_i - \bar D_k(t,  \theta_0, \BS{\beta}_0)) dN_i(t)  - \BS{\mu}_{a0}^{\top} \frac{1}{m} \sum_{i \in I_k} \int_0^{\tau}( \BS{Z}_i -\bar {\BS{Z}}_k (t, \theta_0, \BS{\beta}_0) ) dN_i(t) \nonumber\\
		&= \frac{1}{m} \sum_{i \in I_k}  \int_0^{\tau} [(D_i - \bar D_k(t, \theta_0, \BS{\beta}_0))  - {\BS{\mu}}_{a0}^{\top} ( \BS{Z}_i -\bar {\BS{Z}}_k (t,\theta_0, \BS{\beta}_0) ) ]dM_i(t).
	\end{align}
	The martingale isometry then gives the following variance representation
	\begin{align} \label{eq-neww-sigma-oracle}
		&\sigma_{\Phi}^2 := m \mathbb{E}[ \Phi (\{\mathcal{O}_i\}_{i \in I_k}; \theta_0, \BS{\eta}_{a0})^{2}] \nonumber\\
		&= \mathbb{E} \bigg\{\frac{1}{\sqrt{m}} \sum_{i \in I_k} \int_0^{\tau} [(D_i - \bar D_k(t, \theta_0, \BS{\beta}_0))  - {\BS{\mu}}_{a0}^{\top} ( \BS{Z}_i -\bar {\BS{Z}}_k (t,\theta_0, \BS{\beta}_0) )]dM_i(t) \bigg \}^2 \nonumber \\
		&=  \mathbb{E} \bigg \{ \frac{1}{m} \sum_{i \in I_k} \int_0^{\tau} [(D_i - \bar D_k(t, \theta_0, \BS{\beta}_0))  - {\BS{\mu}}_{a0}^{\top} ( \BS{Z}_i -\bar {\BS{Z}}_k (t,\theta_0, \BS{\beta}_0) ) ]^2 d \langle M_i, M_i \rangle (t) \bigg \} \nonumber \\
		&=  \mathbb{E} \bigg \{ \frac{1}{m} \sum_{i \in I_k} \int_0^{\tau} [(D_i - \bar D_k(t, \theta_0, \BS{\beta}_0))  - {\BS{\mu}}_{a0}^{\top} ( \BS{Z}_i -\bar {\BS{Z}}_k (t,\theta_0, \BS{\beta}_0) ) ]^2 \tilde w_i(t, \theta_0, \BS{\beta}_0) \lambda_0(t) dt \bigg \} \nonumber \\
		&=  \mathbb{E} \bigg \{ \frac{1}{m} \sum_{i \in I_k} \int_0^{\tau} [(D_i - \bar D_k(t, \theta_0, \BS{\beta}_0))  - {\BS{\mu}}_{a0}^{\top} ( \BS{Z}_i -\bar {\BS{Z}}_k (t,\theta_0, \BS{\beta}_0) ) ]^2 dN_i(t) \bigg \}.
	\end{align}
	In view of Conditions \ref{coninicox}(i) and \ref{coninimu}(vi), it holds that $ \sigma_{\Phi}^2 \le 4(1+K_{\BS{Z}} K_{\BS{\mu}})^2$.
Consequently, an application of the Markov inequality results in $\Phi (\{\mathcal{O}_i\}_{i \in I_k}; \theta_0, \BS{\eta}_{a0}) = O_p(\frac{1}{\sqrt m})$, and thus the stochastic order of $R_2$ is
\begin{equation} \label{eq-pf-thm1-neww-R1}
    |R_2| = O_p( {1}/{\sqrt m}) = O_p( {1}/{\sqrt n} ) = o_p(s_{\BS{\beta}_0} \sqrt{\log (np) /n }).
\end{equation}

Further, for term $R_3$, a Taylor expansion around $(\theta_0, \BS{\eta}_{a0})$ shows that there exists some $r_0 \in (0, 1)$ such that
\begin{equation*}
    \begin{split}
        R_3 & = \Delta_n \cdot \frac{\partial \mathbb{E}[\Phi(\{\mathcal{O}_{i}\}_{i \in I_k} ; \theta, \BS{\eta}_{a0})] }{\partial \theta} \Big|_{\theta = \theta_0} + \partial_{\BS{\eta}} \mathbb{E}\Phi (\{\mathcal{O}_i\}_{i=1}^n;\theta_0, \BS{\eta}_{a0}) [\hat{\BS{\eta}}_k -\BS{\eta}_{a0}] \\
        & \quad +   \frac{\partial^2 \mathbb{E} [\Phi (\{\mathcal{O}_i\}_{i \in I_k};\theta_0+r \Delta_n, \BS{\eta}_{a0}+r(\hat{\BS{\eta}}_k-\BS{\eta}_{a0}))]}{\partial r^2 }  \Big|_{r = r_0}.
    \end{split}
\end{equation*}
By resorting to Lemma \ref{lmidentification} in Section \ref{SecC.6}, Proposition \ref{lmnearortho}, and Lemma \ref{lmderivative} in Section \ref{SecB.12}, we can deduce that 
\begin{equation} \label{eq-pf-thm1-neww-3}
\begin{split}
    R_3 & \leq - c_1 C_0 s_{\BS{\beta}_0} \sqrt{\log (np) /n } + o( n^{-1/2} ) \\
    & \quad + O \big( {\|\BS{\Sigma}}^{-1}_{\BS{\beta}\BS{\beta}} \|_{\infty}  s_{{\BS{\mu}}_{a0}} s_{\BS{\beta}_0}^2   \log (np)/n  +  \tau_n  {\|\BS{\Sigma}}^{-1}_{\BS{\beta}\BS{\beta}} \|_{\infty}  s_{{\BS{\mu}}_{a0}} s_{\BS{\beta}_0} \sqrt{\log (np) /n}  \big) \\
    & = - c_1   C_0 s_{\BS{\beta}_0} \sqrt{\log (np) /n }  + o(s_{\BS{\beta}_0} \sqrt{\log (np) /n} ),
\end{split}
\end{equation}
where in the last step above we have used the facts that $ {\|\BS{\Sigma}}^{-1}_{\BS{\beta}\BS{\beta}} \|_{\infty} s_{{\BS{\mu}}_{a0}} s_{\BS{\beta}_0} \sqrt{\log (np)} = o(\sqrt{n}) $ and $ \tau_n {\|\BS{\Sigma}}^{-1}_{\BS{\beta}\BS{\beta}} \|_{\infty} s_{{\BS{\mu}}_{a0}} = o(1) $ by Condition \ref{coninimu}(v).
Substituting the bounds in \eqref{eq-pf-thm1-neww-2}, \eqref{eq-pf-thm1-neww-R1}, and \eqref{eq-pf-thm1-neww-3} into decomposition \eqref{eq-pf-thm1-neww-1} yields that when $n$ is sufficiently large,
\begin{equation}
\begin{split}
    \Phi(\{\mathcal{O}_{i}\}_{i \in I_k} ; \theta_0 + \Delta_n, \hat{\BS{\eta}}_k) & \leq - c_1 C_0 s_{\BS{\beta}_0} \sqrt{\log (np) /n } + o(s_{\BS{\beta}_0} \sqrt{\log (np) /n} ) \\
    & \leq - c_1   C_0 s_{\BS{\beta}_0} \sqrt{\log (np) /n }  /2 < 0,
\end{split}
\end{equation}
which establishes the claim in \eqref{eq-pf-thm1-neww-goal1}. An application of similar arguments gives \eqref{eq-pf-thm1-neww-goal2}.

\medskip

\textbf{Step 2}. We aim to establish the asymptotic normality of estimator $\check{\theta}_1$ lying in the neighborhood $ \mathcal{N}_{\theta_0}^*$. Recall that $|I_k| = m$. A key step is to show that $\check{\theta}_{1,k} - \theta_0$ admits the following asymptotically linear representation
	\begin{align} \label{eq-pf-thm1-asym-normal-1}
		\sqrt{m}( \check{\theta}_{1,k} - \theta_0)
		 = - \sqrt{m} J_0^{-1} \Phi (\{\mathcal{O}_i\}_{i \in I_k}; \theta_0, \BS{\eta}_{a0})+ o_p(1).
	\end{align}
    It follows from Lemma \ref{lemma-asymp-norm-main-term} in Section \ref{new.sec.lem6} that $\sqrt{m}( \check{\theta}_{1,k} - \theta_0) \stackrel{\mathcal D} {\rightarrow} N(0, J_0^{-1} \sigma_{\Phi}^2 J_0^{-1}) $.
    Averaging over the fixed number of folds, we can derive the limiting distribution of $\check{\theta}_1$
    \begin{equation} \label{eq-neww-check_theta1-normal}
        \sqrt{n}(\check{\theta}_1 - \theta_0) = \frac{1}{\sqrt {K}} \sum_{k = 1}^K \sqrt{m}( \check{\theta}_{1,k} - \theta_0) \stackrel{\mathcal D} {\rightarrow} N(0, J_0^{-1} \sigma_{\Phi}^2 J_0^{-1}).
    \end{equation}
This establishes the desired asymptotic normality of $\check{\theta}_1$ in \eqref{eq: limiting-dist}.

It remains to justify the representation in \eqref{eq-pf-thm1-asym-normal-1} above. For each $\theta \in \mathcal{N}_{\theta_0}^*$ and $\BS{\eta} \in \mathcal{T}_{N}$, a Taylor expansion gives that there exists some $( \tilde{\theta}, \tilde{\BS{\eta}} )$ lying on the line segment connecting $( \theta, \BS{\eta})$ and $ ( \theta_0, \BS{\eta}_{a0})$ such that
\begin{align*}
		& \sqrt{m} \Phi (\{\mathcal{O}_i\}_{i \in I_k}; \theta, \BS{\eta}) \nonumber  \\
		 & = \sqrt{m} \Phi (\{\mathcal{O}_i\}_{i \in I_k}; \theta_0, \BS{\eta}_{a0}) + \sqrt{m} [\Phi (\{\mathcal{O}_i\}_{i \in I_k}; \theta, \BS{\eta})-\Phi (\{\mathcal{O}_i\}_{i \in I_k}; \theta_0, \BS{\eta}_{a0})] \nonumber\\
     &= \sqrt{m} \Phi (\{\mathcal{O}_i\}_{i \in I_k}; \theta_0, \BS{\eta}_{a0}) + \sqrt{m} \partial_{\theta}  \Phi (\{\mathcal{O}_i\}_{i \in I_k}; \theta_0, \BS{\eta}_{a0})   ( \theta - \theta_0 ) \nonumber\\
        &\quad + \sqrt{m} \partial_{\BS{\eta}} \Phi (\{\mathcal{O}_i\}_{i \in I_k}; \theta_0, \BS{\eta}_{a0}) [ \BS{\eta} - \BS{\eta}_{a0} ] \nonumber \\
        & \quad + \sqrt{m}\int_{0}^{1}(1 - r) \frac{\partial^2  \Phi (\{\mathcal{O}_i\}_{i \in I_k}; \theta_0+r(\theta-\theta_0), \BS{\eta}_{a0}+r( {\BS{\eta}}-\BS{\eta}_{a0})) }{{\partial r}^2} dr.
 \end{align*}
In light of $ J_0 = - \Sigma_{\theta \theta} + \BS{\mu}_{a0}^{\top} \BS{\Sigma}_{\BS{\beta} \theta}$, applying the expansion above to $\BS{\eta} = \hat{\BS{\eta}}_{k} $ and 
solution $ \theta = \check{{\theta}}_{1,k} $ of the estimating equation \eqref{eqesttheta1} yields that 
	\begin{align} \label{eqdecompose}
		&  \sqrt{m} J_0 (\check{\theta}_{1, k} - \theta_0) + \sqrt{m} \Phi (\{\mathcal{O}_i\}_{i \in I_k}; \theta_0, \BS{\eta}_{a0}) \nonumber \\
       & =  \sqrt{m}  \big(J_0 -   \partial_{\theta}  \Phi (\{\mathcal{O}_i\}_{i \in I_k}; \theta_0, \BS{\eta}_{a0})\big)   ( \check{\theta}_{1, k}- \theta_0 )   - \sqrt{m} \partial_{\BS{\eta}}  \mathbb{E}\Phi (\{\mathcal{O}_i\}_{i \in I_k}; \theta_0, \BS{\eta}_{a0}) [ \hat{\BS{\eta}}_k - \BS{\eta}_{a0} ] \nonumber  \\
       & \quad -   \sqrt{m} \big( \partial_{\BS{\eta}}   \Phi (\{\mathcal{O}_i\}_{i \in I_k}; \theta_0, \BS{\eta}_{a0}) -  \partial_{\BS{\eta}}  \mathbb{E}\Phi (\{\mathcal{O}_i\}_{i \in I_k}; \theta_0, \BS{\eta}_{a0})\big)^{\top} ( \hat{\BS{\eta}}_k - \BS{\eta}_{a0} ) \nonumber \\
         & \quad - \sqrt{m}\int_{0}^{1}(1 - r) \frac{\partial^2  \Phi (\{\mathcal{O}_i\}_{i \in I_k}; \theta_0+r(\theta-\theta_0), \BS{\eta}_{a0} +r(\hat{\BS{\eta}}_k-\BS{\eta}_{a0})) }{{\partial r}^2} dr \nonumber\\
         &:= I_1 + I_2 + I_3 + I_4 .
 \end{align}
 
 The four terms on the right-hand side of (\ref{eqdecompose}) above are dealt with by Lemma \ref{lmentropy1}, Proposition \ref{lmnearortho}, Lemma \ref{newlemma1} in Section \ref{Sec:pf-newLemma1}, and Lemma \ref{lmderivative}.  Specifically, we have that 
\begin{align*}
     |I_1| & = O_p \big( s_{\BS{\beta}_0} n^{-1/2} {\log (np)} \big), \\
      |I_2| & = O_p(s_{\BS{\beta}_0} n^{-1/2} \log (np) ), \\
    |I_3|
    & = O_p \Big( {\|\BS{\Sigma}}^{-1}_{\BS{\beta}\BS{\beta}} \|_{\infty} s_{{\BS{\mu}}_{a0}} \big(s_{\BS{\beta}_0} n^{-1/2} \log (np) + \tau_n \sqrt{\log (np)} \big) \Big),
\end{align*}
and
\begin{align*}
    |I_4|
    & \leq \sup_{r \in (0, 1), \theta \in \mathcal{N}_{\theta_0}^*, \BS{\eta} \in \mathcal{T}_{N}}  \sqrt{m}  \bigg| \frac{\partial^2 \mathbb{E} [\Phi (\{\mathcal{O}_i\}_{i \in I_k};\theta_0+r(\theta-\theta_0), \BS{\eta}_{a0}+r(\BS{\eta}-\BS{\eta}_{a0}))]}{\partial r^2 }  \bigg| \\
    &\leq O \big( {\|\BS{\Sigma}}^{-1}_{\BS{\beta}\BS{\beta}} \|_{\infty}  s_{{\BS{\mu}}_{a0}} s_{\BS{\beta}_0}^2   \log (np)/\sqrt{n}  +  \tau_n  {\|\BS{\Sigma}}^{-1}_{\BS{\beta}\BS{\beta}} \|_{\infty}  s_{{\BS{\mu}}_{a0}} s_{\BS{\beta}_0} \sqrt{\log (np)}  \big) = o(1),
\end{align*}
where we have utilized Conditions \ref{coninicox}(iii) and \ref{coninimu}(v). Hence, it follows from \eqref{eqdecompose} that
$$
\sqrt{m} (\check{\theta}_{1, k} - \theta_0) = - J_0^{-1} \sqrt{m} \Phi (\{\mathcal{O}_i\}_{i \in I_k}; \theta_0, \BS{\eta}_{a0}) + o_p(1),
$$
which establishes \eqref{eq-pf-thm1-asym-normal-1}.

\textbf{Step 3}. We next show that
$|\hat \sigma_1^2 - \sigma^2|= o_p(1)$, and the same arguments are applicable to $\hat \sigma_2^2$. Then by Slutsky's lemma, the asymptotic result remains valid after substituting the variance estimator $\hat{\sigma}_1^2$.

To start the variance-consistency argument, let us recall that $${\sigma}^2=J_0^{-1} \sigma_{\Phi}^2 J_0^{-1} =J_0^{-1} \mathbb{E}[m \Phi (\{\mathcal{O}_i\}_{i \in I_k}; \theta_0, \BS{\eta}_{a0})^{2}] J_0^{-1}$$ and
	\begin{align} \label{eq-neww-est-sigma}
		{\hat \sigma}_1^2=\hat {J}_{0,1}^{-1} \bigg\{\frac{1}{n} \sum_{k=1}^K \sum_{i \in I_k} \int_0^{\tau} [(D_i - \bar D_k(t, \check \theta_1,\hat { \BS{\beta}}_k)) - \hat{\BS{\mu}}_{ak}^{\top} ( \BS{Z}_i -\bar {\BS{Z}}_k (t,\check \theta_1, \hat { \BS{\beta}}_k) ) ]^2 dN_i(t) \bigg\}\hat{J}_{0,1}^{-1}.
	\end{align}
	Here, $\hat{J}_{0,1} = -K^{-1}\sum\limits_{k=1}^K( \hat {\Sigma}_{\theta \theta, k,1} - {\hat {\BS{\mu}}_{ak}}^{\top} \hat {\BS{\Sigma}}_{\BS{\beta} \theta, k,1})$, $\hat {\Sigma}_{\theta \theta, k,1} = m^{-1}\sum\limits_{j \in I_k} \delta_j (D_j - \bar D_k (X_j,\check \theta_1, \hat {\BS{\beta}}_k) )^{ 2}$, and $\hat {\BS{\Sigma}}_{\BS{\beta} \theta, k,1} = m^{-1} \sum\limits_{j \in I_k} \delta_j ( \BS{Z}_j - \bar {\BS{Z}}_k (X_j,\check \theta_1, \hat {\BS{\beta}}_k)) (D_j - \bar D_k (X_j,\check \theta_1, \hat {\BS{\beta}}_k) )$. In the proof of Lemma \ref{lmidentification}, we have shown in \eqref{eq-neww-J0} that $J_0 = -{J}_{\theta \theta} + \BS{\mu}_{a0}^{\top} \BS{J}_{\BS{\beta} \theta }$. Then it holds that for each fixed $k \in \{1, \cdots, K\}$,
	\begin{align*}
		&  | -\hat {\Sigma}_{\theta \theta, k,1} + {\hat {\BS{\mu}}_{ak}}^{\top} \hat {\BS{\Sigma}}_{\BS{\beta} \theta, k,1} - J_0| \\
		&= |- \hat {\Sigma}_{\theta \theta, k,1} + {\hat {\BS{\mu}}_{ak}}^{\top} \hat {\BS{\Sigma}}_{\BS{\beta} \theta, k,1} + {J}_{\theta \theta} - \BS{\mu}_{a0}^{\top} \BS{J}_{\BS{\beta} \theta }|  \\
		&\le  |   \hat {\Sigma}_{\theta \theta, k,1} -{J}_{\theta \theta} | +| {\hat {\BS{\mu}}_{ak}}^{\top} \hat {\BS{\Sigma}}_{\BS{\beta} \theta, k,1} - \BS{\mu}_{a0}^{\top} \BS{J}_{\BS{\beta} \theta }|  \\
		&\le   | \hat {\Sigma}_{\theta \theta, k,1} -{J}_{\theta \theta} | +| (\hat {\BS{\mu}}_{ak} -\BS{\mu}_{a0})^{\top} \hat {\BS{\Sigma}}_{\BS{\beta} \theta, k,1}| +| \BS{\mu}_{a0}^{\top} ( \hat {\BS{\Sigma}}_{\BS{\beta} \theta, k,1}-\BS{J}_{\BS{\beta} \theta })|  \\
		&\le   | \hat {\Sigma}_{\theta \theta, k,1} -{J}_{\theta \theta} | + \| \hat {\BS{\mu}}_{ak} -\BS{\mu}_{a0} \|_1 \| \hat {\BS{\Sigma}}_{\BS{\beta} \theta, k,1} \|_{\infty} + \| \BS{\mu}_{a0} \|_1 \|\hat {\BS{\Sigma}}_{\BS{\beta} \theta, k,1}-\BS{J}_{\BS{\beta} \theta }\|_{\infty} \\
		& := U_1 + U_2 + U_3.
	\end{align*}

    We first examine term $\hat{J}_{0,1}$. From the formula for $\hat {\BS{\Sigma}}_{\BS{\beta} \theta, k,1}$ and Condition \ref{coninicox}(i), we have that $ \| \hat {\BS{\Sigma}}_{\BS{\beta} \theta, k,1} \|_{\infty} \le 4 K_{\BS{Z}}$. In addition, using the same techniques as for \eqref{pf-lemma1-claim1} in Section \ref{SecC.4} and \eqref{pf-thm1-eq4} in Section \ref{SecC.5}, we can deduce that 
	\begin{align*}
		|\hat {\Sigma}_{\theta \theta, k, 1} - \Sigma_{\theta \theta} |
		&= O_p(s_{\BS{\beta}_0} \sqrt{\log (np) /n} + \tau_n), \\
		\|\hat {\BS{\Sigma}}_{\BS{\beta}\theta, k, 1} - {\BS{\Sigma}}_{\BS{\beta}\theta} \|_{\infty}
		&= O_p(s_{\BS{\beta}_0} \sqrt{\log (np) /n} + \tau_n), \\
		| J_{\theta \theta} - \Sigma_{\theta \theta} | & = O( \sqrt{\log (np) /n}), \\
		\| {\BS{J}}_{\BS{\beta} \theta} - {\BS{\Sigma}}_{\BS{\beta}\theta} \|_{\infty} & = O( \sqrt{\log (np) /n}),
	\end{align*}
which along with Proposition \ref{lminimut} and Condition \ref{coninimu}(vi) lead to
	\begin{align*}
		U_1 & \le |\hat {\Sigma}_{\theta \theta, k, 1} - \Sigma_{\theta \theta} | + | J_{\theta \theta} - \Sigma_{\theta \theta} | \leq O_p(s_{\BS{\beta}_0} \sqrt{\log (np) /n} + \tau_n) = o_p(1), \\
        U_2 & \leq O_p({\|\BS{\Sigma}}^{-1}_{\BS{\beta}\BS{\beta}} \|_{\infty} s_{{\BS{\mu}}_{a0}} (s_{\BS{\beta}_0} \sqrt{\log (np) /n} + \tau_n) ) = o_p(1),\\
		U_3 & \le \| \BS{\mu}_{a0} \|_1 ( \|\hat {\BS{\Sigma}}_{\BS{\beta}\theta, k, 1} - {\BS{\Sigma}}_{\BS{\beta}\theta} \|_{\infty} + \| {\BS{J}}_{\BS{\beta} \theta} - {\BS{\Sigma}}_{\BS{\beta}\theta} \|_{\infty} ) = O_p(s_{\BS{\beta}_0} \sqrt{\log (np) /n} + \tau_n) = o_p(1).
	\end{align*}
Consequently, it follows that $| -\hat {\Sigma}_{\theta \theta, k,1} + {\hat {\BS{\mu}}_{ak}}^{\top} \hat {\BS{\Sigma}}_{\BS{\beta} \theta, k,1} - J_0| = o_p(1)$ for any fixed $k$. Since $K$ is fixed, we can obtain that $|\hat{J}_{0,1} -J_0|= o_p(1)$.

	It remains to control the inner quadratic term in the sandwich estimator in \eqref{eq-neww-est-sigma}. For this term, it holds that 
	\begin{align*}
		& \frac{1}{n} \sum_{k = 1}^K  \sum_{i \in I_k} \int_0^{\tau} [(D_i - \bar D_k(t, \check \theta_1,\hat { \BS{\beta}}_k))  - \hat{\BS{\mu}}_{ak}^{\top} ( \BS{Z}_i -\bar {\BS{Z}}_k (t,\check \theta_1, \hat { \BS{\beta}}_k) ) ]^2 dN_i(t)  \\
		&= \frac{1}{K} \sum_{k=1}^K \frac{1}{m}\sum_{i \in I_k}\int_0^{\tau} [(D_i - \bar D_k(t, \check \theta_1,\hat { \BS{\beta}}_k))  - \hat{\BS{\mu}}_{ak}^{\top} ( \BS{Z}_i -\bar {\BS{Z}}_k (t,\check \theta_1, \hat { \BS{\beta}}_k) ) ]^2 dN_i(t).
	\end{align*}
	To prove that this term converges to $\sigma_{\Phi}^2 = \mathbb{E}[m \Phi (\{\mathcal{O}_i\}_{i \in I_k}; \theta_0, \BS{\eta}_{a0})^{2}]$, it suffices to show that $ \frac{1}{m}\sum_{i \in I_k}\int_0^{\tau} [(D_i - \bar D_k(t, \check \theta_1,\hat { \BS{\beta}}_k)) - \hat{\BS{\mu}}_{ak}^{\top} ( \BS{Z}_i -\bar {\BS{Z}}_k (t,\check \theta_1, \hat { \BS{\beta}}_k) ) ]^2 dN_i(t) \rightarrow \sigma_{\Phi}^2 $ for each $k= 1,\dots, K$. Indeed, the variance representation in \eqref{eq-neww-sigma-oracle} reduces such problem to bounding
	\begin{align*}
		&\bigg|\frac{1}{m}\sum_{i \in I_k}\int_0^{\tau} [(D_i - \bar D_k(t, \check \theta_1,\hat { \BS{\beta}}_k))  - \hat{\BS{\mu}}_{ak}^{\top} ( \BS{Z}_i -\bar {\BS{Z}}_k (t,\check \theta_1, \hat { \BS{\beta}}_k) ) ]^2 dN_i(t) - \sigma_{\Phi}^2\bigg| \\
		&= \bigg|\frac{1}{m}\sum_{i \in I_k}\int_0^{\tau} [(D_i - \bar D_k(t, \check \theta_1,\hat { \BS{\beta}}_k))  - \hat{\BS{\mu}}_{ak}^{\top} ( \BS{Z}_i -\bar {\BS{Z}}_k (t,\check \theta_1, \hat { \BS{\beta}}_k) ) ]^2 dN_i(t)\\
		&\quad - \mathbb E\bigg\{ \frac{1}{m} \sum_{i \in I_k} \int_0^{\tau} [(D_i - \bar D_k(t, \theta_0, \BS{\beta}_0))  - {\BS{\mu}}_{a0}^{\top} ( \BS{Z}_i -\bar {\BS{Z}}_k (t,\theta_0, \BS{\beta}_0) ) ]^2 dN_i(t)\bigg\}\bigg|\\
		&\leq \bigg|\frac{1}{m}\sum_{i \in I_k}\int_0^{\tau} [(D_i - \bar D_k(t, \check \theta_1,\hat { \BS{\beta}}_k))  - \hat{\BS{\mu}}_{ak}^{\top} ( \BS{Z}_i -\bar {\BS{Z}}_k (t,\check \theta_1, \hat { \BS{\beta}}_k) ) ]^2 dN_i(t)\\
		& \quad -  \frac{1}{m} \sum_{i \in I_k} \int_0^{\tau} [(D_i - \bar D_k(t, \theta_0, \BS{\beta}_0))  - {\BS{\mu}}_{a0}^{\top} ( \BS{Z}_i -\bar {\BS{Z}}_k (t,\theta_0, \BS{\beta}_0) ) ]^2 dN_i(t)\bigg|\\
		& \quad + \bigg|\frac{1}{m} \sum_{i \in I_k} \int_0^{\tau} [(D_i - \bar D_k(t, \theta_0, \BS{\beta}_0))  - {\BS{\mu}}_{a0}^{\top} ( \BS{Z}_i -\bar {\BS{Z}}_k (t,\theta_0, \BS{\beta}_0) ) ]^2 dN_i(t)\\
		&\quad - \mathbb E\bigg\{ \frac{1}{m} \sum_{i \in I_k} \int_0^{\tau} [(D_i - \bar D_k(t, \theta_0, \BS{\beta}_0))  - {\BS{\mu}}_{a0}^{\top} ( \BS{Z}_i -\bar {\BS{Z}}_k (t,\theta_0, \BS{\beta}_0) ) ]^2 dN_i(t)\bigg\}\bigg|\\
		&: = Q_1 +Q_2.
	\end{align*}
	
    The following bounds for terms $Q_1$ and $Q_2$ on the right-hand side of the expression above rely on several auxiliary inequalities. An application of similar arguments as for \eqref{pf-lemma1-eq6} gives that 
    \begin{align*}
		\sup_{t \in [0,\tau]} | \bar D_k(t,\check \theta_1, \hat {\BS{\beta}}_k) - \bar D_k (t,\theta_0, \BS{\beta}_0) | &= O_p(s_{\BS{\beta}_0} \sqrt{\log (np) /n}), \\
		\sup_{t \in [0,\tau]} \| \bar {\BS{Z}}_k (t,\check \theta_1, \hat {\BS{\beta}}_k) - \bar {\BS{Z}}_k (t,\theta_0, \BS{\beta}_0)\|_{\infty} &= O_p(s_{\BS{\beta}_0} \sqrt{\log (np) /n}).
	\end{align*}
In addition, Proposition \ref{lminimut} shows that $\| \hat {\BS{\mu}}_{ak} -\BS{\mu}_{a0} \|_1 = O_p({\|\BS{\Sigma}}^{-1}_{\BS{\beta}\BS{\beta}} \|_{\infty} s_{{\BS{\mu}}_{a0}} (s_{\BS{\beta}_0} \sqrt{\log (np) /n} + \tau_n) )$, and Condition \ref{coninimu}(vi) assumes that $\|\BS{\mu}_{a0} \|_1 = O(1)$.
For term $Q_1$, we can decompose the difference into three terms 
	\begin{align*}
		Q_1 &\leq \bigg|\frac{1}{m}\sum_{i \in I_k}\int_0^{\tau} (D_i - \bar D_k(t, \check \theta_1,\hat { \BS{\beta}}_k))^2-(D_i - \bar D_k(t, \theta_0, \BS{\beta}_0))^2 dN_i(t)\bigg|\\
		& \quad +2\bigg|\frac{1}{m}\sum_{i \in I_k}\int_0^{\tau} (D_i - \bar D_k(t, \check \theta_1,\hat { \BS{\beta}}_k))\hat{\BS{\mu}}_{ak}^{\top} ( \BS{Z}_i -\bar {\BS{Z}}_k (t,\check \theta_1, \hat { \BS{\beta}}_k) )\\
		& \quad -(D_i - \bar D_k(t, \theta_0, \BS{\beta}_0)){\BS{\mu}}_{a0}^{\top} ( \BS{Z}_i -\bar {\BS{Z}}_k (t,\theta_0, \BS{\beta}_0) ) dN_i(t)\bigg|\\
		& \quad + \bigg|\frac{1}{m}\sum_{i \in I_k}\int_0^{\tau} [\hat{\BS{\mu}}_{ak}^{\top} ( \BS{Z}_i -\bar {\BS{Z}}_k (t,\check \theta_1, \hat { \BS{\beta}}_k) ) ]^2- [{\BS{\mu}}_{a0}^{\top} ( \BS{Z}_i -\bar {\BS{Z}}_k (t,\theta_0, \BS{\beta}_0) ) ]^2 dN_i(t)\bigg|\\
		&:= Q_{11} + 2 Q_{12} +Q_{13}.
	\end{align*}
	
    We will bound the three terms on the right-hand side of the expression above in turn.
The first term above can be analyzed by the uniform bound for $\bar D_k$. Specifically, it holds that 
	\begin{align*}
		Q_{11} &\leq \Big|\frac{1}{m}\sum_{i \in I_k}\int_0^{\tau} 2(D_i - \bar D_k(t, \theta_0, \BS{\beta}_0))(\bar D_k(t, \theta_0, \BS{\beta}_0)-\bar D_k(t, \check \theta_1,\hat { \BS{\beta}}_k))dN_i(t)\Big| \\
		& \quad + \Big|\frac{1}{m}\sum_{i \in I_k}\int_0^{\tau}(\bar D_k(t, \theta_0, \BS{\beta}_0)-\bar D_k(t, \check \theta_1,\hat { \BS{\beta}}_k))^2 dN_i(t)\Big| \\
		&\leq \frac{4}{m}\sum_{i \in I_k}\int_0^{\tau} \big|\bar D_k(t, \theta_0, \BS{\beta}_0)-\bar D_k(t, \check \theta_1,\hat { \BS{\beta}}_k) \big| dN_i(t) \\
		& \quad +\frac{1}{m}\sum_{i \in I_k}\int_0^{\tau} \big| \bar D_k(t, \theta_0, \BS{\beta}_0)-\bar D_k(t, \check \theta_1,\hat { \BS{\beta}}_k) \big|^2 dN_i(t) \\
		&= O_p(s_{\BS{\beta}_0} \sqrt{\log (np) /n}).
	\end{align*}

	The cross term $Q_{12}$ can be handled similarly with the error split according to the three sources of estimation error
	\begin{align*}
		Q_{12} &\leq \Big|\frac{1}{m}\sum_{i \in I_k}\int_0^{\tau} (\bar D_k(t, \theta_0, \BS{\beta}_0) - \bar D_k(t, \check \theta_1,\hat { \BS{\beta}}_k))\hat{\BS{\mu}}_{ak}^{\top} ( \BS{Z}_i -\bar {\BS{Z}}_k (t,\check \theta_1, \hat { \BS{\beta}}_k) ) dN_i(t)\Big| \\
		& \quad + \Big|\frac{1}{m}\sum_{i \in I_k}\int_0^{\tau} (D_i - \bar D_k(t, \theta_0, \BS{\beta}_0))(\hat{\BS{\mu}}_{ak}-{\BS{\mu}}_{a0})^{\top} ( \BS{Z}_i -\bar {\BS{Z}}_k (t,\check \theta_1, \hat { \BS{\beta}}_k) )dN_i(t)\Big|\\
		&\quad  + \Big|\frac{1}{m}\sum_{i \in I_k}\int_0^{\tau}(D_i - \bar D_k(t, \theta_0, \BS{\beta}_0)){\BS{\mu}}_{a0}^{\top} (\bar {\BS{Z}}_k (t,\theta_0, \BS{\beta}_0)-\bar {\BS{Z}}_k (t,\check \theta_1, \hat { \BS{\beta}}_k) ) dN_i(t)\Big| \\
		&\leq 2K_{\BS{Z}}\|\hat{\BS{\mu}}_{ak}\|_1\frac{1}{m}\sum_{i \in I_k}\int_0^{\tau} |\bar D_k(t, \theta_0, \BS{\beta}_0) - \bar D_k(t, \check \theta_1,\hat { \BS{\beta}}_k)| dN_i(t) \\
		& \quad + 4K_{\BS{Z}}\|\hat{\BS{\mu}}_{ak}-{\BS{\mu}}_{a0}\|_1\\
		& \quad + 2\|{\BS{\mu}}_{a0}\|_1\frac{1}{m}\sum_{i \in I_k}\int_0^{\tau}\|\bar {\BS{Z}}_k (t,\theta_0, \BS{\beta}_0)-\bar {\BS{Z}}_k (t,\check \theta_1, \hat { \BS{\beta}}_k)\|_\infty dN_i(t) \\
		&=    O_p({\|\BS{\Sigma}}^{-1}_{\BS{\beta}\BS{\beta}} \|_{\infty}  s_{{\BS{\mu}}_{a0}} (s_{\BS{\beta}_0} \sqrt{\log (np) /n} + \tau_n) ).
	\end{align*}
    
	Further, for the quadratic nuisance term $Q_{13}$, we can show that 
	\begin{align*}
		Q_{13} &= \Big|\frac{1}{m}\sum_{i \in I_k}\int_0^{\tau} [\hat{\BS{\mu}}_{ak}^{\top} ( \BS{Z}_i -\bar {\BS{Z}}_k (t,\check \theta_1, \hat { \BS{\beta}}_k) ) + {\BS{\mu}}_{a0}^{\top} ( \BS{Z}_i -\bar {\BS{Z}}_k (t,\theta_0, \BS{\beta}_0) ) ]\\
		& \quad \times[\hat{\BS{\mu}}_{ak}^{\top} ( \BS{Z}_i -\bar {\BS{Z}}_k (t,\check \theta_1, \hat { \BS{\beta}}_k) ) - {\BS{\mu}}_{a0}^{\top} ( \BS{Z}_i -\bar {\BS{Z}}_k (t,\theta_0, \BS{\beta}_0) ) ] dN_i(t)\Big|\\
		&\leq (\|{\hat {\BS{\mu}}_{ak}}^{\top} \|_1 + \|{\BS{\mu}}_{a0}^{\top}\|_1 )K_{\BS{Z}}\frac{1}{m}\sum_{i \in I_k}\int_0^{\tau} \|\hat{\BS{\mu}}_{ak}-{\BS{\mu}}_{a0}\|_1 \|\BS{Z}_i -\bar {\BS{Z}}_k (t,\check \theta_1, \hat { \BS{\beta}}_k)\|_\infty\\
		& \quad +\| {\BS{\mu}}_{a0}\|_1 \|\bar {\BS{Z}}_k (t,\check \theta_1, \hat { \BS{\beta}}_k) -\bar {\BS{Z}}_k (t,\theta_0, \BS{\beta}_0) \|_\infty dN_i(t)\\
		&\leq   O_p({\|\BS{\Sigma}}^{-1}_{\BS{\beta}\BS{\beta}} \|_{\infty}  s_{{\BS{\mu}}_{a0}} (s_{\BS{\beta}_0} \sqrt{\log (np) /n} + \tau_n) ).
	\end{align*}
Hence, combining the three bounds above, we can obtain that 
\begin{equation}
    Q_1 \leq O_p({\|\BS{\Sigma}}^{-1}_{\BS{\beta}\BS{\beta}} \|_{\infty} s_{{\BS{\mu}}_{a0}} (s_{\BS{\beta}_0} \sqrt{\log (np) /n} + \tau_n) ). \label{eq-neww-Q1-bound}
\end{equation}

We next turn to term $Q_2$. By the definition of the empirical process, it holds that 
	\begin{align*}
		Q_2 & = \frac{1}{\sqrt{m}}\Big|\mathbb G_m \bigg(\int_0^{\tau} [(D_i - \bar D_k(t, \theta_0, \BS{\beta}_0)) - {\BS{\mu}}_{a0}^{\top} ( \BS{Z}_i -\bar {\BS{Z}}_k (t,\theta_0, \BS{\beta}_0) ) ]^2 dN_i(t) \Big)\bigg|\\
		&\leq  \frac{1}{\sqrt{m}}\Big|\mathbb G_m \Big(\int_0^{\tau} (D_i - \bar D_k(t, \theta_0, \BS{\beta}_0))^2 dN_i(t) \Big)\Big|\\
		& \quad +\frac{2\|{\BS{\mu}}_{a0}\|_1}{\sqrt{m}}\Big\|\mathbb G_m \Big(\int_0^{\tau} (D_i - \bar D_k(t, \theta_0, \BS{\beta}_0))( \BS{Z}_i -\bar {\BS{Z}}_k (t,\theta_0, \BS{\beta}_0)) dN_i(t) \Big)\Big\|_\infty\\
		& \quad +\frac{\|{\BS{\mu}}_{a0}\|_1^2}{\sqrt{m}}\Big\|\mathbb G_m \Big(\int_0^{\tau}( \BS{Z}_i -\bar {\BS{Z}}_k (t,\theta_0, \BS{\beta}_0) )^{\otimes 2}dN_i(t) \Big)\Big\|_\infty\\
		&:= Q_{21}+Q_{22}+Q_{23}.
	\end{align*}
	The required empirical-process bounds have already been established in Lemma \ref{lmentropy} in Section \ref{SecB.11-new}.  Specifically, we have that 
	\[
	\big|\mathbb G_m(\int_0^{\tau} D_i^2 dN_i(t)) \big|= O_p(\sqrt{\log n}),
	\]
	\[
	\big\|\mathbb G_m(\int_0^{\tau} D_i\BS{Z}_i dN_i(t))\big\|_\infty = O_p(\sqrt{\log (np)}),
	\]
	\[
	\big\|\mathbb G_m(\int_0^{\tau} \BS{Z}_i^{\otimes 2} dN_i(t))\big\|_{\max} = O_p(\sqrt{\log (np^2)}),
	\]
	\[
	\big|\mathbb G_m(\int_0^{\tau} \bar D_k(t, \theta_0, \BS{\beta}_0)^2 dN_i(t))\big| = O_p(\sqrt{\log n}),
	\]
	\[
	\big\|\mathbb G_m(\int_0^{\tau} \bar D_k(t, \theta_0, \BS{\beta}_0)\bar {\BS{Z}}_k (t,\theta_0, \BS{\beta}_0) dN_i(t))\big\|_\infty = O_p(\sqrt{\log (np)}),
	\]
	and
	\[
	\big\|\mathbb G_m(\int_0^{\tau} \bar {\BS{Z}}_k (t,\theta_0, \BS{\beta}_0)^{\otimes 2} dN_i(t))\big\|_{\max} = O_p(\sqrt{\log (np)}).
	\]

	Additionally, by taking $a_i(t) = (D_i - \bar D_k(t,\theta_0,\BS{\beta}_0))\bar D_k(t,\theta_0,\BS{\beta}_0)$ in Lemma \ref{concentrationformg} in Section \ref{SecB.10}, we can show that
	\[
	\big|\mathbb G_m (\int_0^{\tau} (D_i - \bar D_k(t,\theta_0,\BS{\beta}_0))\bar D_k(t,\theta_0,\BS{\beta}_0) dN_i(t))\big| = O_p(\sqrt{\log n}).
	\]
	Then it follows that
	\begin{align*}
		Q_{21} &\leq \frac{1}{\sqrt{m}}\Big\{\big|\mathbb G_m(\int_0^{\tau} D_i^2 dN_i(t)) \big|\\
		& \quad+ 2\big|\mathbb G_m(\int_0^{\tau} (D_i - \bar D_k(t,\theta_0,\BS{\beta}_0))\bar D_k(t,\theta_0,\BS{\beta}_0) dN_i(t))\big|\\
		& \quad+ \big|\mathbb G_m(\int_0^{\tau} \bar D_k(t, \theta_0, \BS{\beta}_0)^2 dN_i(t))\big|\Big\}\\
		&= O_p(\sqrt{(\log n )/ n} ).
	\end{align*}

Similarly, by taking $\BS{a}_i(t)$ in Lemma \ref{concentrationformg} as $(D_i - \bar D_k(t,\theta_0,\BS{\beta}_0))\bar {\BS{Z}}_k(t,\theta_0,\BS{\beta}_0)$ and $(\BS{Z}_i - \bar {\BS{Z}}_k(t,\theta_0,\BS{\beta}_0))\bar {\BS{Z}}_k(t,\theta_0,\BS{\beta}_0)^\top$, respectively, we can deduce that
	 \begin{align*}
	 Q_{22} = O_p( \sqrt{\log (np) /n} ) \quad\mbox{and}\quad
	 Q_{23} = O_p( \sqrt{\log (np^2)/n}) = O_p( \sqrt{\log (np)/n}).
	 \end{align*}
Combining the componentwise bounds above yields that 
\begin{equation} \label{eq-neww-Q2-bound}
    Q_2 \leq O_p( \sqrt{\log (np)/n}).
\end{equation}

Finally, in view of \eqref{eq-neww-Q1-bound} and \eqref{eq-neww-Q2-bound}, we can obtain that 
\begin{align*}
		&  \Big|\frac{1}{n} \sum_{k = 1}^K \sum_{i \in I_k} \int_0^{\tau} [(D_i - \bar D_k(t, \check \theta_1,\hat { \BS{\beta}}_k))  - \hat{\BS{\mu}}_{ak}^{\top} ( \BS{Z}_i -\bar {\BS{Z}}_k (t,\check \theta_1, \hat { \BS{\beta}}_k) ) ]^2 dN_i(t)  - \sigma_{\Phi}^2\Big|  \\
        & =   O_p({\|\BS{\Sigma}}^{-1}_{\BS{\beta}\BS{\beta}} \|_{\infty}  s_{{\BS{\mu}}_{a0}} (s_{\BS{\beta}_0} \sqrt{\log (np) /n} + \tau_n) ) = o_p(1).
	\end{align*}
Consequently, the sandwich variance estimator is consistent
	\[
	|\hat \sigma_1^2 - \sigma^2| = o_p(1).
	\]
Therefore, by invoking Slutsky's lemma, the asymptotic normality also holds when $\sigma^2$ is replaced with $\hat \sigma_1^2 $. Using such asymptotic distribution, the asymptotic confidence interval for $\theta_0$ can be constructed accordingly.

With the aid of the same arguments, we can obtain similar results for $ \check{\theta}_2 $ and $\hat{\sigma}_2^2$.
This completes the proof of Theorem \ref{ththeta}.

\subsection{Proof of Theorem \ref{thbeta}} \label{SecB.2}

The proof of this theorem will expand the debiased estimator into a leading linear influence term plus five remainder terms. The leading term is asymptotically normal after substituting the asymptotically linear representation of $\check{\theta}-\theta_0$ from Theorem \ref{ththeta}. The remaining terms are shown to be asymptotically negligible using the CLIME-type precision matrix bounds, concentration for the Cox Hessian, and the sparsity and rate conditions imposed in Condition \ref{condebeta}. A key technical challenge is to propagate the uncertainty from estimating $\theta_0$ through the debiased estimator for $\BS{\beta}_0$ without losing the first-order normality.

To derive the asymptotic expansion, it follows from the definition of $\check {\BS{\beta}}$ in (\ref{eqestbeta}) and the mean value theorem that there exist some $\tilde \theta$ lying between $\theta_0$ and $\check \theta$, and some $\tilde {\BS{\beta}}$ lying on the line segment connecting $\BS{\beta}_0$ and $\hat {\BS{\beta}}$ such that
	\begin{align*}
		& \sqrt{n} {\BS{c}}^{\top} (\check {\BS{\beta}} -\BS{\beta}_{0}) \\
		&= \sqrt{n} {\BS{c}}^{\top} ( \hat{\BS{\beta}} + \hat {\BS{\Xi}} \dot l_{ \BS{\beta}}(\{\mathcal{O}_i\}_{i=1}^n; \check \theta,\hat{\BS{\beta}})-\BS{\beta}_{0}) \\
		& = \sqrt{n} {\BS{c}}^{\top} \BS{\Sigma}_{\BS{\beta}\BS{\beta}}^{-1}  \dot l_{ \BS{\beta}}(\{\mathcal{O}_i\}_{i=1}^n; \theta_0, \BS{\beta}_0) + \sqrt{n} {\BS{c}}^{\top} (\hat {\BS{\Xi}} - \BS{\Sigma}_{\BS{\beta}\BS{\beta}}^{-1} ) \dot l_{ \BS{\beta}}(\{\mathcal{O}_i\}_{i=1}^n; \theta_0, \BS{\beta}_0) \\
		& \quad + \sqrt{n} {\BS{c}}^{\top} \hat {\BS{\Xi}} ( \dot l_{ \BS{\beta}}(\{\mathcal{O}_i\}_{i=1}^n; \check \theta,\hat{\BS{\beta}})- \dot l_{ \BS{\beta}}(\{\mathcal{O}_i\}_{i=1}^n; \theta_0, \BS{\beta}_0)) + \sqrt{n} {\BS{c}}^{\top} (\hat {\BS{\beta}} -\BS{\beta}_{0}) \\
		&= \sqrt{n} {\BS{c}}^{\top} \BS{\Sigma}_{\BS{\beta}\BS{\beta}}^{-1}  \big\{ \dot l_{ \BS{\beta}}(\{\mathcal{O}_i\}_{i=1}^n; \theta_0, \BS{\beta}_0)   -  \BS{\Sigma}_{\BS{\beta} \theta}  (\check {\theta} -\theta_{0}) \big\} \\
        & \quad + \sqrt{n} {\BS{c}}^{\top} (\hat {\BS{\Xi}} - \BS{\Sigma}_{\BS{\beta}\BS{\beta}}^{-1} ) \dot l_{ \BS{\beta}}(\{\mathcal{O}_i\}_{i=1}^n; \theta_0, \BS{\beta}_0) \\
		& \quad + \sqrt{n} {\BS{c}}^{\top}  (\hat {\BS{\Xi}} \ddot l_{ \BS{\beta}\BS{\beta}}(\{\mathcal{O}_i\}_{i=1}^n; \tilde \theta,\tilde {\BS{\beta}}) + \BS{I}) (\hat {\BS{\beta}}- \BS{\beta}_0)   \\
        & \quad + \sqrt{n} {\BS{c}}^{\top} ( \hat {\BS{\Xi}} - \BS{\Sigma}_{\BS{\beta}\BS{\beta}}^{-1}  ) \ddot l_{ \BS{\beta}\theta}(\{\mathcal{O}_i\}_{i=1}^n; \tilde \theta,\tilde {\BS{\beta}})(\check {\theta} -\theta_{0}) \\
        & \quad + \sqrt{n}  {\BS{c}}^{\top} \BS{\Sigma}_{\BS{\beta}\BS{\beta}}^{-1}  \big\{\ddot l_{ \BS{\beta}\theta}(\{\mathcal{O}_i\}_{i=1}^n; \tilde \theta,\tilde {\BS{\beta}}) - \ddot l_{ \BS{\beta}\theta}(\{\mathcal{O}_i\}_{i=1}^n;  \theta_0, {\BS{\beta}}_0) \big\} (\check {\theta} -\theta_{0}) \\
        & \quad + \sqrt{n} {\BS{c}}^{\top} \BS{\Sigma}_{\BS{\beta}\BS{\beta}}^{-1} \big\{ \ddot l_{ \BS{\beta}\theta}(\{\mathcal{O}_i\}_{i=1}^n;  \theta_0, {\BS{\beta}}_0) +\BS{\Sigma}_{\BS{\beta} \theta}  \big\} (\check {\theta} -\theta_{0}) \\
		& := \Pi_1 +  \Pi_2 +  \Pi_3 +  \Pi_4 + \Pi_5 + \Pi_6.
	\end{align*}
    
	 We will first handle the leading term $ \Pi_1$ on the right-hand side of the expression above. In view of \eqref{eq-pf-thm1-asym-normal-1}, \eqref{eq-neww-check_theta1-normal}, and the definition in \eqref{eqortho}, it holds that 
\begin{align*}
        \Pi_1 & = \sqrt{n} {\BS{c}}^{\top} \BS{\Sigma}_{\BS{\beta}\BS{\beta}}^{-1} \big\{ \dot l_{ \BS{\beta}}(\{\mathcal{O}_i\}_{i=1}^n; \theta_0, \BS{\beta}_0) - \BS{\Sigma}_{\BS{\beta} \theta} (\check {\theta} -\theta_{0}) \big\} \\
        & = \sqrt{n} {\BS{c}}^{\top} \BS{\Sigma}_{\BS{\beta}\BS{\beta}}^{-1} \dot l_{ \BS{\beta}}(\{\mathcal{O}_i\}_{i=1}^n; \theta_0, \BS{\beta}_0) \\
        & \quad -   {\BS{c}}^{\top} \BS{\Sigma}_{\BS{\beta}\BS{\beta}}^{-1}  \BS{\Sigma}_{\BS{\beta} \theta}  J_0^{-1} \frac{1}{\sqrt{K }} \sum_{k=1}^K \sqrt{m}(   \dot l_{\theta}(\{\mathcal{O}_i\}_{i\in I_k}; \theta_0, \BS{\beta}_0)   - \BS{\Sigma}_{ \theta \BS{\beta}} \BS{\Sigma}_{\BS{\beta}\BS{\beta}}^{-1}  \dot l_{ \BS{\beta}}(\{\mathcal{O}_i\}_{i\in I_k}; \theta_0, \BS{\beta}_0)) \\
        & = \sqrt{n} {\BS{c}}^{\top} \BS{\Sigma}_{\BS{\beta}\BS{\beta}}^{-1} \dot l_{ \BS{\beta}}(\{\mathcal{O}_i\}_{i=1}^n; \theta_0, \BS{\beta}_0)  \\
        &  \quad -  \sqrt{n }  {\BS{c}}^{\top} \BS{\Sigma}_{\BS{\beta}\BS{\beta}}^{-1}  \BS{\Sigma}_{\BS{\beta} \theta}  J_0^{-1}   \big\{   \dot l_{\theta}(\{\mathcal{O}_i\}_{i=1}^n; \theta_0, \BS{\beta}_0)   - \BS{\Sigma}_{ \theta \BS{\beta}} \BS{\Sigma}_{\BS{\beta}\BS{\beta}}^{-1}  \dot l_{ \BS{\beta}}(\{\mathcal{O}_i\}_{i = 1}^n; \theta_0, \BS{\beta}_0)\big\} + o_p(1) \\
        & = \sqrt{n} {\BS{c}}^{\top} \BS{\Sigma}_{\BS{\beta}\BS{\beta}}^{-1} \Big\{ \big(I + \BS{\Sigma}_{\BS{\beta} \theta}  J_0^{-1} \BS{\Sigma}_{ \theta \BS{\beta}} \BS{\Sigma}_{\BS{\beta}\BS{\beta}}^{-1} \big) \dot l_{ \BS{\beta}}(\{\mathcal{O}_i\}_{i=1}^n; \theta_0, \BS{\beta}_0) - \BS{\Sigma}_{\BS{\beta} \theta}  J_0^{-1}  \dot l_{\theta}(\{\mathcal{O}_i\}_{i=1}^n; \theta_0, \BS{\beta}_0)    \Big\} \\
        & \quad + o_p(1).
\end{align*}
Notice that $J_0 = - ( \BS{\Sigma}_{\theta \theta} - \BS{\Sigma}_{\theta\BS{\beta}} \BS{\Sigma}_{\BS{\beta}\BS{\beta}}^{-1} \BS{\Sigma}_{\BS{\beta}\theta } ) + o(1)$,
and thus 
\begin{equation}
    \Var(\Pi_1) \rightarrow {\BS{c}}^{\top} \big\{\BS{\Sigma}_{\BS{\beta}\BS{\beta}}^{-1} + \BS{\Sigma}_{\BS{\beta}\BS{\beta}}^{-1} \BS{\Sigma}_{\BS{\beta} \theta} ( \BS{\Sigma}_{\theta \theta} - \BS{\Sigma}_{\theta\BS{\beta}} \BS{\Sigma}_{\BS{\beta}\BS{\beta}}^{-1} \BS{\Sigma}_{\BS{\beta}\theta } )^{-1} \BS{\Sigma}_{\theta\BS{\beta} } \BS{\Sigma}_{\BS{\beta}\BS{\beta}}^{-1} \big\} {\BS{c}}.
\end{equation}
Consequently, by resorting to the proof of Lemma \ref{lemma-asymp-norm-main-term}, similarly we can obtain that under Conditions \ref{coninicox}(i), \ref{coninimu}(ii)--(iii), and \ref{condebeta},
	\begin{eqnarray}\label{eqmainbeta}
		\Pi_1 \stackrel{\mathcal D} {\rightarrow} N(0, \varrho^2)
	\end{eqnarray}
	as $n \rightarrow \infty$, where $\varrho^2$ denotes the limit of ${\BS{c}}^{\top} \big\{\BS{\Sigma}_{\BS{\beta}\BS{\beta}}^{-1} + \BS{\Sigma}_{\BS{\beta}\BS{\beta}}^{-1} \BS{\Sigma}_{\BS{\beta} \theta} ( \BS{\Sigma}_{\theta \theta} - \BS{\Sigma}_{\theta\BS{\beta}} \BS{\Sigma}_{\BS{\beta}\BS{\beta}}^{-1} \BS{\Sigma}_{\BS{\beta}\theta } )^{-1} \BS{\Sigma}_{\theta\BS{\beta} } \BS{\Sigma}_{\BS{\beta}\BS{\beta}}^{-1} \big\} {\BS{c}}$.

    It remains to show that the five remainder terms above are asymptotically negligible.
	Using similar arguments as in the proof of \eqref{pf-lemma1-claim1}, we can show that
    \begin{eqnarray*}
        \|\check {\BS{\Sigma}}_{\BS{\beta}\BS{\beta}} - {\BS{\Sigma}}_{\BS{\beta}\BS{\beta}} \|_{\max} = O_p(s_{\BS{\beta}_0} \sqrt{\log (np) /n} + \tau_n).
    \end{eqnarray*}
	Since $\check {\BS{\Sigma}}_{\BS{\beta}\BS{\beta}}$ and ${\BS{\Sigma}}_{\BS{\beta}\BS{\beta}}$ are symmetric matrices, it holds that
	\begin{align*}
		 &\|\check {\BS{\Sigma}}_{\BS{\beta}\BS{\beta}} {\BS{\Sigma}}_{\BS{\beta}\BS{\beta}}^{-1}- \BS{I} \|_{\max} =  \| {\BS{\Sigma}}_{\BS{\beta}\BS{\beta}}^{-1} \check {\BS{\Sigma}}_{\BS{\beta}\BS{\beta}} - \BS{I} \|_{\max}  \\
	 &\le  \|{\BS{\Sigma}}_{\BS{\beta}\BS{\beta}}^{-1} \|_{\infty} \| \check {\BS{\Sigma}}_{\BS{\beta}\BS{\beta}} - {\BS{\Sigma}}_{\BS{\beta}\BS{\beta}} \|_{\max}
		 = O_p\big( \|{\BS{\Sigma}}_{\BS{\beta}\BS{\beta}}^{-1} \|_{\infty} (s_{\BS{\beta}_0} \sqrt{\log (np) /n} + \tau_n) \big).
	\end{align*}
	Let us define an event $\mathcal{D}=\{ \|\check {\BS{\Sigma}}_{\BS{\beta}\BS{\beta}} {\BS{\Sigma}}_{\BS{\beta}\BS{\beta}}^{-1}- \BS{I} \|_{\infty} \le \lambda_n^* \}$ with $ \lambda_n^* = C \|{\BS{\Sigma}}_{\BS{\beta}\BS{\beta}}^{-1} \|_{\infty} (s_{\BS{\beta}_0} \sqrt{\log (np) /n} + \tau_n) $ for a large constant $C > 0$. Then we have that $P(\mathcal{D}) \rightarrow 1$.

Denote by $({\BS{\Sigma}}_{\BS{\beta}\BS{\beta}}^{-1})_j$ the $j$th column of ${\BS{\Sigma}}_{\BS{\beta}\BS{\beta}}^{-1}$. In light of \eqref{eqClime2}, conditional on event $\mathcal{D}$ it holds that 
	\[
	\|{\hat {\BS{\Xi}}}_j \|_1 \le \| ({\BS{\Sigma}}_{\BS{\beta}\BS{\beta}}^{-1})_j \|_1 \ \mbox{ for each } j=1, \dots, p.
	\]
	It follows that 
	\[
	\| \hat {\BS{\Xi}} \|_{ \infty} = \max_{1 \leq j\leq p} \|{\hat {\BS{\Xi}}}_j \|_1 \le \max_{1 \leq j\leq p} \| ({\BS{\Sigma}}_{\BS{\beta}\BS{\beta}}^{-1})_j \|_1 = \|{\BS{\Sigma}}_{\BS{\beta}\BS{\beta}}^{-1} \|_{ \infty},
	\]
	and by the definition of $\hat {\BS{\Xi}}$,
	\[
	\|\hat {\BS{\Xi}} \check {\BS{\Sigma}}_{\BS{\beta}\BS{\beta}} - \BS{I} \|_{\max} \le \lambda_n^*.
	\]
	Hence, on event $\mathcal{D}$, we can deduce that
	\begin{align*}
		\| \hat {\BS{\Xi}}- {\BS{\Sigma}}_{\BS{\beta}\BS{\beta}}^{-1} \|_{\max} &= \| \hat {\BS{\Xi}} (\BS{I}- \check {\BS{\Sigma}}_{\BS{\beta}\BS{\beta}} {\BS{\Sigma}}_{\BS{\beta}\BS{\beta}}^{-1}) + (\hat {\BS{\Xi}} \check {\BS{\Sigma}}_{\BS{\beta}\BS{\beta}} - \BS{I}) {\BS{\Sigma}}_{\BS{\beta}\BS{\beta}}^{-1} \|_{\infty} \\
		& \le   \|  \hat {\BS{\Xi}}\|_{\infty} \|\BS{I}- \check {\BS{\Sigma}}_{\BS{\beta}\BS{\beta}} {\BS{\Sigma}}_{\BS{\beta}\BS{\beta}}^{-1} \|_{\max} + \| {\BS{\Sigma}}_{\BS{\beta}\BS{\beta}}^{-1} \|_{ \infty}  \|\hat {\BS{\Xi}} \check {\BS{\Sigma}}_{\BS{\beta}\BS{\beta}} - \BS{I} \|_{\max}  \\
		&\le  2 \|{\Sigma}_{\BS{\beta}\BS{\beta}}^{-1} \|_{\infty} \lambda_n^*.
	\end{align*}
    
	By invoking Lemma 9 in \cite{yu2021confidence}, we can show that on event $\mathcal{D}$,
	\[
	\| \hat {\BS{\Xi}}- {\BS{\Sigma}}_{\BS{\beta}\BS{\beta}}^{-1} \|_{ \infty} \le 12 \| \hat {\BS{\Xi}}- {\BS{\Sigma}}_{\BS{\beta}\BS{\beta}}^{-1} \|_{\max} \max_{j=1,\dots, p} h_j.
	\]
	Then it holds that on event $\mathcal{D}$, 
	\begin{align*}
		|\Pi_2| & \le \sqrt{n} \| \hat {\BS{\Xi}} - \BS{\Sigma}_{\BS{\beta}\BS{\beta}}^{-1}\|_{\infty} \| \dot l_{ \BS{\beta}}(\{\mathcal{O}_i\}_{i=1}^n; \theta_0, \BS{\beta}_0)\|_{\infty} \\
		& \le  12 \sqrt{n} \|  \hat {\BS{\Xi}}- {\BS{\Sigma}}_{\BS{\beta}\BS{\beta}}^{-1}  \|_{\max} \max_{j=1,\dots, p} h_j \| \dot l_{ \BS{\beta}}(\{\mathcal{O}_i\}_{i=1}^n; \theta_0, \BS{\beta}_0)\|_{\infty} \\
		& \le  24 \sqrt{n}  \|{\Sigma}_{\BS{\beta}\BS{\beta}}^{-1} \|_{\infty} \lambda_n^* \max_{j=1,\dots, p} h_j \| \dot l_{ \BS{\beta}}(\{\mathcal{O}_i\}_{i=1}^n; \theta_0, \BS{\beta}_0)\|_{\infty}.
	\end{align*}
	Moreover, an application of Lemma 1(i) in \cite{yu2021confidence} leads to 
    $$ \| \dot l_{ \BS{\beta}}(\{\mathcal{O}_i\}_{i=1}^n; \theta_0, \BS{\beta}_0)\|_{\infty} = O_p(\sqrt{\log(np)/n}).$$
	As a result, it follows from Condition \ref{condebeta} that 
    \begin{equation} \label{eq-neww-Pi2}
    |\Pi_2| = o_p(1).
    \end{equation}
With the aid of similar arguments as for Proposition 5 of \cite{yu2021confidence}, we can obtain that 
	\begin{eqnarray} \label{eq-neww-Pi3}
		| \Pi_3 | = o_p(1).
	\end{eqnarray}

	For term $\Pi_4$, in view of the formula of $\ddot l_{ \BS{\beta}\theta}(\{\mathcal{O}_i\}_{i=1}^n; \tilde \theta,\tilde {\BS{\beta}})$, it holds that 
    $$ \| \ddot l_{ \BS{\beta}\theta}(\{\mathcal{O}_i\}_{i=1}^n; \tilde \theta,\tilde {\BS{\beta}})\|_{\infty} \le 4 K_{\BS{Z}}$$ under Condition \ref{coninicox}(i). In addition, $\sqrt{n} (\check{\theta} - \theta_0)$ is asymptotically normal, and thus is $o_p(\sqrt{\log (np)})$.
    Hence, we have that under Condition \ref{condebeta},
	\begin{align} \label{eq-neww-Pi4}
		|\Pi_4 |
		&\le   \sqrt{n} |\check {\theta} -\theta_{0}| \| \hat {\BS{\Xi}} - \BS{\Sigma}_{\BS{\beta}\BS{\beta}}^{-1} \|_{ \infty} \|\ddot l_{ \BS{\beta}\theta}(\{\mathcal{O}_i\}_{i=1}^n; \tilde \theta,\tilde {\BS{\beta}})\|_{\infty} \nonumber\\
        & \leq  O_p( \sqrt{\log (np)} \|  \hat {\BS{\Xi}}- {\BS{\Sigma}}_{\BS{\beta}\BS{\beta}}^{-1}  \|_{\max} \max_{j=1,\dots, p} h_j) \nonumber \\
        & \leq  O_p(\sqrt{\log (np)}  \|{\Sigma}_{\BS{\beta}\BS{\beta}}^{-1} \|_{\infty} \lambda_n^* \max_{j=1,\dots, p} h_j )  \nonumber\\
        & =  O_p( \|{\Sigma}_{\BS{\beta}\BS{\beta}}^{-1} \|_{\infty}^2     \big(s_{\BS{\beta}_0}  \log (np) /\sqrt{n} + \tau_n \sqrt{\log(np) } \big) \max_{j=1,\dots, p} h_j )  = o_p(1).
 \end{align}
For term $\Pi_5$, it follows from the error bound in \eqref{pf-lemma1-eq6} (and a similar result holds for $ \sup_{t \in [0, \tau]} | \bar {D}(t,\hat \theta, \hat {\BS{\beta}}) - \bar {D} (t,\theta_0, \BS{\beta}_0) | $) that under Condition \ref{condebeta},
 \begin{align} \label{eq-neww-Pi5}
    |\Pi_5|
    & \leq  | \sqrt{n}(\check {\theta} -\theta_{0})   |  \|{\Sigma}_{\BS{\beta}\BS{\beta}}^{-1} \|_{\infty} \| \ddot l_{ \BS{\beta}\theta}(\{\mathcal{O}_i\}_{i=1}^n; \tilde \theta,\tilde {\BS{\beta}}) - \ddot l_{ \BS{\beta}\theta}(\{\mathcal{O}_i\}_{i=1}^n;  \theta_0, {\BS{\beta}}_0)  \|_{\infty} \nonumber \\
    & \leq  o_p( \sqrt{\log (np) }    \|{\Sigma}_{\BS{\beta}\BS{\beta}}^{-1} \|_{\infty}   s_{\BS{\beta}_0} \sqrt{\log (np) /n} ) = o_p(1).
\end{align}

Finally, for term $\Pi_6$, it holds that 
\begin{align*}
    \Pi_6
    & \leq | \sqrt{n} (\check {\theta} -\theta_{0})  |  \|{\Sigma}_{\BS{\beta}\BS{\beta}}^{-1} \|_{\infty} \| \ddot l_{ \BS{\beta}\theta}(\{\mathcal{O}_i\}_{i=1}^n;  \theta_0, {\BS{\beta}}_0) +\BS{\Sigma}_{\BS{\beta} \theta}  \|_{\infty}.
\end{align*}
Similar to the proof of Proposition 5 in \cite{yu2021confidence}, we can show that 
\begin{equation*}
    \| \ddot l_{ \BS{\beta}\theta}(\{\mathcal{O}_i\}_{i=1}^n; \theta_0, {\BS{\beta}}_0) +\BS{\Sigma}_{\BS{\beta} \theta} \|_{\infty} \leq O_p( \sqrt{\log (np) / n} ).
\end{equation*}
Consequently, we have that under Condition \ref{condebeta},
\begin{equation} \label{eq-neww-Pi6}
     \Pi_6 \leq O_p(\sqrt{\log(np)} \|{\Sigma}_{\BS{\beta}\BS{\beta}}^{-1} \|_{\infty} \sqrt{\log (np)/n}) = o_p(1).
\end{equation}
Therefore, combining the leading term limit in \eqref{eqmainbeta} with the remainder bounds \eqref{eq-neww-Pi2}--\eqref{eq-neww-Pi6} yields the desired asymptotic normality for $\sqrt{n} {\BS{c}}^{\top} (\check {\BS{\beta}} -\BS{\beta}_{0}) $. This concludes the proof of Theorem \ref{thbeta}.

\subsection{Proof of Proposition \ref{lminilog}} \label{SecC.2}

The proof of this proposition will exploit the 
oracle inequality for sparse logistic regression. We first verify the quadratic margin condition for the logistic loss around the true linear predictor, and then combine it with the logistic Lasso oracle inequality to obtain both the $\ell_1$ coefficient rate and the induced propensity score rate. A main technical point is that bounded covariates convert the coefficient error into a uniform error for the fitted propensity score.

Let us first introduce the excess risk that links the logistic loss to the coefficient error. For any $\BS{\gamma} \in \mathbb{R}^p$, define the excess risk $R(\BS{Z}^{\top} \BS{\gamma} )$ as
	\[
	R(\BS{Z}^{\top} \BS{\gamma} ):= \mathbb{E}[\rho({\BS{Z}}^{\top}\BS{\gamma}, D)-\rho({\BS{Z}}^{\top}\BS{\gamma}_0, D)|\BS{Z}],
	\]
    where $ \rho(u, d)=-du +\log(1+e^u)$ is the logistic loss function.
	To verify the corresponding margin condition, for each $\BS{z} \in \mathcal{\BS{Z}} \subset \mathbb{R}^p$ we set
	\[
	h(u, \BS{z}) := \mathbb{E} [\rho( u, D) | \BS{Z} =\BS{z} ] =-m_0(\BS{z})u+\log(1+e^{u}).
	\]
	Some standard calculations give that 
	\begin{align*}
		\dot h_{u}(u, \BS{z}) &:= \frac{\partial h(u, \BS{z})}{\partial u}=-m_0(\BS{z}) + \frac{e^{u}}{1+e^{u}}, \\
		\ddot h_{u}(u, \BS{z}) &:= \frac{\partial^2 h(u, \BS{z})}{\partial u \partial u}=\frac{e^{u}}{(1+e^{u})^2},
	\end{align*}
	and $h(u, \BS{z}) $ is minimized at $u_0= \log \Big(\frac{m_0(\BS{z}) }{1-m_0(\BS{z}) }\Big) ={\BS{z}}^{\top} \BS{\gamma}_0$.

	Moreover, the second-order derivative $\frac{e^{u}}{(1+e^{u})^2}$ is positive and bounded away from zero for any $u$ near $u_0$. As a result, $h(u, \BS{z})$ behaves quadratically near its minimum $u_0$. Specifically, it holds that 
	\[
	\ddot h_{u}(u, \BS{z}) \ge (e^{|u_0|+\epsilon} +1)^{-2} \ge
	(e^{\epsilon}/{c_0} +1)^{-2}
	\]
	for any $u$ such that $|u -u_0 | \le \epsilon$, where $c_0 = e^{-|u_0|}$.
	This lower bound entails the quadratic margin condition (\cite{buhlmann2011statistics}, p. 120), i.e.,
	\begin{eqnarray}\label{eqexrisk}
	R(\BS{Z}^{\top} \BS{\gamma} ) \ge C \big|{\BS{Z}}^{\top} (\BS{\gamma} - \BS{\gamma}_0) \big|^2
	\end{eqnarray}
	for all $\big|{\BS{Z}}^{\top} (\BS{\gamma} - \BS{\gamma}_0) \big| \le \epsilon$ with $C=(e^{\epsilon}/{c_0} +1)^{-2}$.

	Under Conditions \ref{coninicox}(i) and \ref{coninilog}, it follows from Lemma 6.8 (p.134) in \cite{buhlmann2011statistics} that there exists some positive constant $C^*$ such that
	\begin{align}\label{eqlogorac}
	R(\BS{Z}^{\top} \hat {\BS{\gamma}} )+\lambda_{ \BS{\gamma}} \| \hat {\BS{\gamma}} - \BS{\gamma}_0 \|_1 &\le C^* s_{\BS{\gamma}_0} \lambda_{ \BS{\gamma}}^2.
	\end{align}
	Combining the margin inequality in (\ref{eqexrisk}) with the oracle inequality in (\ref{eqlogorac}) yields that with probability tending to one,
	\begin{eqnarray*}
		\| \hat {\BS{\gamma}} - \BS{\gamma}_0 \|_1 \le C^* s_{\BS{\gamma}_0} \lambda_{ \BS{\gamma}}, ~~ \big| {\BS{Z}}^{\top} ( \hat {\BS{\gamma}} - \BS{\gamma}_0 ) \big|^2 \le \frac{C^*}{C} s_{\BS{\gamma}_0} \lambda_{ \BS{\gamma}}^2.
	\end{eqnarray*}

Finally, observe that
$$
\big| \hat{m}(\BS{Z}_i) - {m}_0(\BS{Z}_i) \big|
 \le ( \exp\{ K_{\BS{Z}} \|\hat {\BS{\gamma}} -\BS{\gamma}_0 \|_1\} -1) = O_p( s_{\BS{\gamma}_0} \sqrt{\log (np) /n}).
$$
Therefore, combining the coefficient and propensity score bounds above yields the desired conclusion, which completes the proof of Proposition \ref{lminilog}.

\subsection{Proof of Proposition \ref{lminicox}} \label{SecC.1}

The result of this proposition will follow by specializing the existing Cox Lasso oracle inequality to the time-independent covariate setting considered here. Since the estimator in \eqref{eqBetaL1} is a special case of the time-dependent estimator investigated in \cite{yu2021confidence}, the required assumptions reduce directly to Conditions \ref{coninicox}(i)--(iii). Thus, an application of Lemma 1(ii) of \cite{yu2021confidence} gives that 
\begin{align*}
    | \hat \theta - \theta_0 | +\| \hat {\BS{\beta}} - \BS{\beta}_0 \|_1= O_p( s_{\BS{\beta}_0} \sqrt{\log (np)/{n}}\,).
\end{align*}
This establishes the stated initial Cox Lasso rate, which concludes the proof of Proposition \ref{lminicox}.

\subsection{Proof of Proposition \ref{lminimut}} \label{SecC.4}

The proof of this proposition will first convert a sup-norm bound for $\hat{\BS{\mu}}_a-\BS{\mu}_{a0}$ into an $\ell_1$-bound using a thresholding argument and the beta-min condition. It will then prove the required sup-norm bound by decomposing the error into sample covariance deviations and the CLIME feasibility error. The main difficulty is that the covariance matrices involve both the Cox risk-set averages and the estimated propensity score. We will handle it by separating the estimation error, empirical fluctuation, and population approximation terms.

We begin with showing \eqref{eq-lminimut-2} assuming that \eqref{eq-lminimut-1} holds. Denote by $ t_n = \|\hat{\BS{\mu}}_{a} - \BS{\mu}_{a0} \|_{\infty} $,
$ h_j = \hat{\BS{\mu}}_{a, j} - \BS{\mu}_{a0, j} $, $ h_j^{(1)} = \hat{\BS{\mu}}_{a, j} \mathbf{1}( \hat{\BS{\mu}}_{a, j} \geq 2 t_n ) - \BS{\mu}_{a0, j}$, and $ h_j^{(2)} = h_j - h_j^{(1)} $. Then it holds that 
$$
\|\BS{\mu}_{a0}\|_1 - \| \BS{h}^{(1)} \|_1 + \| \BS{h}^{(2)} \|_1 \leq \|\BS{\mu}_{a0} + \BS{h}^{(1)} \|_1 + \| \BS{h}^{(2)} \|_1 = \|\hat{\BS{\mu}}_{a}\|_1 \leq \|\BS{\mu}_{a0} \|_{1},
$$
where the last step above is due to \eqref{pf-lemma1-eq2}. Hence, we have that $ \| \BS{h}^{(2)} \|_1 \leq \| \BS{h}^{(1)} \|_1 $, and thus $\| \BS{h} \|_1 = \| \BS{h}^{(1)} \|_1 + \| \BS{h}^{(2)} \|_1 \leq 2 \| \BS{h}^{(1)} \|_1 $. In addition, it follows that 
\begin{equation*}
    \begin{split}
    \| \BS{h}^{(1)} \|_1 & \leq \sum_{j = 1}^p | \hat{\BS{\mu}}_{a, j} - \BS{\mu}_{a0, j} | \mathbf{1} (\hat{\BS{\mu}}_{a, j} \geq 2 t_n) + \sum_{j = 1}^p | \BS{\mu}_{a0, j} | \mathbf{1} (\hat{\BS{\mu}}_{a, j} < 2 t_n).
    \end{split}
\end{equation*}
On event $ \big\{t_n := \| \hat{\BS{\mu}}_{a} - \BS{\mu}_{a0} \|_{\infty} \lesssim \| \BS{\Sigma}_{\BS{\beta} \BS{\beta}}^{-1} \|_{\infty}  s_{\BS{\mu}_{a0} } ( s_{\BS{\beta}_0} \sqrt{\log (np) / n } + \tau_n )  \big\}$, in view of Condition \ref{coninimu}(vi) that $\min_{j \in S_{{\BS{\mu}}_{a0}}} |\bmu_{a0, j}| \geq C {\|\BS{\Sigma}}^{-1}_{\BS{\beta}\BS{\beta}} \|_{\infty} s_{{\BS{\mu}}_{a0}} (s_{\BS{\beta}_0} \sqrt{\log (np) /n} + \tau_n) $ for a sufficiently large constant $C > 0$, we can show that $ \min_{j \in S_{{\BS{\mu}}_{a0}}} |\bmu_{a0, j}|> 3 t_n $ and thus $ \min_{j \in S_{{\BS{\mu}}_{a0}}} |\hat{\bmu}_{a, j}|> 2 t_n $. 

Further, it holds that 
\begin{equation*}
     \sum_{j = 1}^p | \BS{\mu}_{a0, j} | \mathbf{1} (\hat{\BS{\mu}}_{a, j} < 2 t_n) \leq \sum_{j \in S_{{\BS{\mu}}_{a0}}^c } | \BS{\mu}_{a0, j} | = 0.
\end{equation*}
Hence, on event $ \big\{t_n := \| \hat{\BS{\mu}}_{a} - \BS{\mu}_{a0} \|_{\infty} \lesssim \| \BS{\Sigma}_{\BS{\beta} \BS{\beta}}^{-1}  \|_{\infty} s_{\BS{\mu}_{a0} } ( s_{\BS{\beta}_0} \sqrt{\log (np) / n } + \tau_n )\big\}$, we can deduce that 
\begin{equation*}
\begin{split}
    \| \BS{h} \|_1 & \leq 2 \| \BS{h}^{(1)} \|_1 \leq 2 \sum_{j = 1}^p | \hat{\BS{\mu}}_{a, j} - \BS{\mu}_{a0, j} | \mathbf{1} (\hat{\BS{\mu}}_{a, j} \geq 2 t_n) \\
    & \leq 2 t_n \sum_{j = 1}^p \mathbf{1} ( {\BS{\mu}}_{a0, j} \geq  t_n) \leq 2 t_n s_{\BS{\mu}_{a0}},
\end{split}
\end{equation*}
which establishes \eqref{eq-lminimut-2}. It remains to verify the sup-norm bound that underlies the argument.

We now proceed to prove \eqref{eq-lminimut-1}. In light of the definition of estimate $ {\hat {\BS{\mu}}}_a$ in (\ref{eqClime1}) for $\BS{\mu}_{a0}$, we have the following decomposition
	\begin{align} \label{pf-lemma1-eq0}
		\| {\hat {\BS{\mu}}}_a - \BS{\mu}_{a0} \|_{\infty}
		&= \| {\BS{\Sigma}}^{-1}_{\BS{\beta}\BS{\beta}}({\BS{\Sigma}}_{\BS{\beta}\BS{\beta}}{\hat {\BS{\mu}}}_a - {\BS{\Sigma}}_{\BS{\beta} \theta}) \|_{\infty} \nonumber \\
		& =   \big\| {\BS{\Sigma}}^{-1}_{\BS{\beta}\BS{\beta}} \big[ ({\BS{\Sigma}}_{\BS{\beta}\BS{\beta}} -\hat {\BS{\Sigma}}_{\BS{\beta}\BS{\beta}}) {\hat {\BS{\mu}}}_a + (\hat {\BS{\Sigma}}_{\BS{\beta}\BS{\beta}} {\hat {\BS{\mu}}}_a - \hat {\BS{\Sigma}}_{\BS{\beta} \theta} ) + (\hat {\BS{\Sigma}}_{\BS{\beta} \theta} - {\BS{\Sigma}}_{\BS{\beta} \theta})\big] \big\|_{\infty} \nonumber\\
        & \leq  \|{\BS{\Sigma}}^{-1}_{\BS{\beta}\BS{\beta}} \|_{\infty} \big( \| \hat {\BS{\Sigma}}_{\BS{\beta}\BS{\beta}}- {\BS{\Sigma}}_{\BS{\beta}\BS{\beta}} \|_{\max} \|\hat{\BS{\mu}}_{a} \|_{1} + \lambda_n + \|\hat {\BS{\Sigma}}_{\BS{\beta} \theta} - {\BS{\Sigma}}_{\BS{\beta} \theta}\|_{\infty} \big).
	\end{align}
	To derive the desired upper bound for $\| {\hat {\BS{\mu}}}_a - \BS{\mu}_{a0} \|_{\infty}$, we will compare $\BS{\Sigma}$ and its sample counterpart $\hat{\BS{\Sigma}}$. In particular, we claim that with probability tending to one,
    \begin{eqnarray} \label{pf-lemma1-claim1}
		\|\hat {\BS{\Sigma}}_{\BS{\beta}\BS{\beta}} - {\BS{\Sigma}}_{\BS{\beta} \BS{\beta}} \|_{\max} = O_p(s_{\BS{\beta}_0} \sqrt{\log (np) /n} + \tau_n)
	\end{eqnarray}
    and
    \begin{eqnarray} \label{pf-lemma1-claim2}
        \|\hat {\BS{\Sigma}}_{\BS{\beta}\theta} - {\BS{\Sigma}}_{\BS{\beta}\theta} \|_{\infty} = O_p(s_{\BS{\beta}_0} \sqrt{\log (np) /n} + \tau_n).
    \end{eqnarray}
Then it follows from Condition \ref{coninimu}(vi) that with probability tending to one,
	\begin{align} \label{pf-lemma1-eq1}
		\| \hat {\BS{\Sigma}}_{\BS{\beta}\BS{\beta}} \BS{\mu}_{a0} -\hat {\BS{\Sigma}}_{\BS{\beta} \theta} \|_{\infty} & = \|\hat {\BS{\Sigma}}_{\BS{\beta}\BS{\beta}} {\BS{\Sigma}}_{\BS{\beta}\BS{\beta}}^{-1} {\BS{\Sigma}}_{\BS{\beta}\theta} - \hat {\BS{\Sigma}}_{\BS{\beta}\theta} \|_{\infty} \nonumber \\
        &=  \|(\hat {\BS{\Sigma}}_{\BS{\beta}\BS{\beta}} -{\BS{\Sigma}}_{\BS{\beta}\BS{\beta}}) {\BS{\Sigma}}_{\BS{\beta}\BS{\beta}}^{-1} {\BS{\Sigma}}_{\BS{\beta}\theta} +( {\BS{\Sigma}}_{\BS{\beta}\theta}- \hat {\BS{\Sigma}}_{\BS{\beta}\theta})  \|_{\infty}  \nonumber\\
		& \le \| \hat {\BS{\Sigma}}_{\BS{\beta}\BS{\beta}} -{\BS{\Sigma}}_{\BS{\beta}\BS{\beta}}\|_{\max} \|{\BS{\Sigma}}_{\BS{\beta}\BS{\beta}}^{-1} {\BS{\Sigma}}_{\BS{\beta}\theta}\|_{1} +\| {\BS{\Sigma}}_{\BS{\beta}\theta}- \hat {\BS{\Sigma}}_{\BS{\beta}\theta} \|_{\infty} \nonumber\\
		& =  O_p(s_{\BS{\beta}_0} \sqrt{\log (np) /n} + \tau_n).
	\end{align}

    By the construction of ${\hat {\BS{\mu}}}_a$ in (\ref{eqClime1}), when $\lambda_n = C (s_{\BS{\beta}_0} \sqrt{\log (np) /n} + \tau_n)$ with a sufficiently large constant $C > 0$, it holds with probability tending to one that $\|\hat{\BS{\mu}}_a\|_1 \leq \| \BS{\mu}_{a0} \|_1 $ since $\BS{\mu}_{a0}$ satisfies the constraint $ \| \hat {\BS{\Sigma}}_{\BS{\beta}\BS{\beta}} \BS{\mu}_{a0} -\hat {\BS{\Sigma}}_{\BS{\beta} \theta} \|_{\infty} \le \lambda_n$ according to \eqref{pf-lemma1-eq1}. Hence, in view of Condition \ref{coninimu}(vi), we can deduce that with probability tending to one,
	\begin{eqnarray} \label{pf-lemma1-eq2}
		\| {\hat {\BS{\mu}}}_a \|_1 \le \| {\BS{\mu}}_{a0} \|_1 \leq K_{\BS{\mu}}.
	\end{eqnarray}
    Consequently, combining \eqref{pf-lemma1-eq0}--\eqref{pf-lemma1-claim2} and \eqref{pf-lemma1-eq2} yields the desired conclusion of Proposition \ref{lminimut}.

\medskip
    It remains only to establish the two covariance-deviation claims in \eqref{pf-lemma1-claim1} and \eqref{pf-lemma1-claim2}.

\bigskip
\noindent \textbf{Proof of \eqref{pf-lemma1-claim1}}. Let us first prove the deviation bound for the covariance block. In light of the definitions in \eqref{eq-hatSigma_3} and \eqref{eq-Sigma-3}, $\hat {\BS{\Sigma}}_{\BS{\beta}\BS{\beta}}$ and ${\BS{\Sigma}}_{\BS{\beta}\BS{\beta}}$ can be written and decomposed as
	\begin{align*}
		\hat {\BS{\Sigma}}_{\BS{\beta}\BS{\beta}}
		&= \frac{1}{n_1} \sum_{1 \le i \le n, D_i=1}  \int_0^{\tau} ( \BS{Z}_i - \bar {\BS{Z}}(t,\hat \theta, \hat {\BS{\beta}}))^{\otimes 2}  \hat{m}(\BS{Z}_i) dN_i^{(1)}(t) \\
		&\quad + \frac{1}{n_0} \sum_{1 \le i \le n, D_i=0}  \int_0^{\tau} ( \BS{Z}_i - \bar {\BS{Z}}(t,\hat \theta, \hat {\BS{\beta}}))^{\otimes 2}  (1-\hat{m}(\BS{Z}_i) ) dN_i^{(0)}(t)   \\
		&:= \hat{\BS{\Sigma}}_{\BS{\beta}\BS{\beta}}^{(1)} + \hat {\BS{\Sigma}}_{\BS{\beta}\BS{\beta}}^{(0)}
        \end{align*}
        and
        \begin{align*}
		\BS{\Sigma}_{\BS{\beta}\BS{\beta}} &= \mathbb{E} \left\{ \int_0^{\tau} ( \BS{Z} - \BS{U}_{\BS{Z}} (t,\theta_0, \BS{\beta}_0))^{\otimes 2} m_0(\BS{Z}) dN^{(1)}(t) \right\} \\
		&  + \mathbb{E} \left\{ \int_0^{\tau}(\BS{Z} - \BS{U}_{\BS{Z}} (t,\theta_0, \BS{\beta}_0))^{\otimes 2}  (1-m_0(\BS{Z})) dN^{(0)}(t)  \right\} \\
		&:= {\BS{\Sigma}}_{\BS{\beta}\BS{\beta}}^{(1)} + {\BS{\Sigma}}_{\BS{\beta}\BS{\beta}}^{(0)},
	\end{align*}
    respectively. It follows that 
    \begin{eqnarray} \label{pf-lemma1-eq3}
       \|\hat {\BS{\Sigma}}_{\BS{\beta}\BS{\beta}} - {\BS{\Sigma}}_{\BS{\beta}\BS{\beta}} \|_{\max} \leq \|\hat {\BS{\Sigma}}_{\BS{\beta}\BS{\beta}}^{(1)} - {\BS{\Sigma}}_{\BS{\beta}\BS{\beta}}^{(1)} \|_{\max} + \|\hat {\BS{\Sigma}}_{\BS{\beta}\BS{\beta}}^{(0)} - {\BS{\Sigma}}_{\BS{\beta}\BS{\beta}}^{(0)} \|_{\max}.
    \end{eqnarray}
    
    For the first term on the right-hand side of the expression above, an application of the triangle inequality leads to 
    	\begin{align}
		& \big\|\hat {\BS{\Sigma}}_{\BS{\beta}\BS{\beta}}^{(1)} - {\BS{\Sigma}}_{\BS{\beta}\BS{\beta}}^{(1)} \big\|_{\max} \nonumber \\
		& \le   \bigg\| \frac{1}{n_1} \sum_{1 \le i \le n, D_i=1}  \int_0^{\tau} [ ( \BS{Z}_i - \bar {\BS{Z}}(t,\hat \theta, \hat {\BS{\beta}}))^{\otimes 2} - ( \BS{Z}_i - \BS{U}_{\BS{Z}} (t,\theta_0, \BS{\beta}_0))^{\otimes 2}]  \hat{m}(\BS{Z}_i) dN_i^{(1)}(t)  \bigg\|_{\max} \nonumber \\
		&  \quad +  \bigg\|\frac{1}{n_1} \sum_{1 \le i \le n, D_i=1} \int_0^{\tau} ( \BS{Z}_i - \BS{U}_{\BS{Z}} (t,\theta_0, \BS{\beta}_0))^{\otimes 2} (\hat{m}(\BS{Z}_i) -  {m}_0(\BS{Z}_i)) dN_i^{(1)}(t) \bigg\|_{\max} \nonumber \\
		&  \quad +  \bigg\| \frac{1}{n_1} \sum_{1 \le i \le n, D_i=1}  \int_0^{\tau} ( \BS{Z}_i - \BS{U}_{\BS{Z}} (t,\theta_0, \BS{\beta}_0))^{\otimes 2}  {m}_0(\BS{Z}_i) dN_i^{(1)}(t)  \nonumber \\
		&  \quad - \mathbb{E} \left\{ \int_0^{\tau}  ( \BS{Z} - \BS{U}_{\BS{Z}} (t,\theta_0, \BS{\beta}_0))^{\otimes 2} m_0(\BS{Z}) dN^{(1)}(t)  \right\} \bigg\|_{\max} \nonumber\\
		&:= \Delta_1 +\Delta_2 +\Delta_3.  \label{pf-lemma1-eq4}
	\end{align}

We will bound each term on the right-hand side of \eqref{pf-lemma1-eq4} above. For term $\Delta_1$, by the fact of $ 0< \hat{m}(\BS{Z}_i) < 1 $ and Condition \ref{coninicox}(i), it holds that 
\begin{align}
     &  \big\| [ ( \BS{Z}_i - \bar {\BS{Z}}(t,\hat \theta, \hat {\BS{\beta}}))^{\otimes 2} - ( \BS{Z}_i - \BS{U}_{\BS{Z}} (t,\theta_0, \BS{\beta}_0))^{\otimes 2}]  \hat{m}(\BS{Z}_i)  \big\|_{\max} \nonumber \\
       & =     \big\| ( 2\BS{Z}_i - \bar {\BS{Z}}(t,\hat \theta, \hat {\BS{\beta}})- \BS{U}_{\BS{Z}} (t,\theta_0, \BS{\beta}_0))  (  \BS{U}_{\BS{Z}} (t,\theta_0, \BS{\beta}_0)- \bar {\BS{Z}}(t,\hat \theta, \hat {\BS{\beta}}))^{\top}  \hat{m}(\BS{Z}_i) \big\|_{\max} \nonumber \\
       & \leq  4K_{\BS{Z}} \big\|\BS{U}_{\BS{Z}} (t,\theta_0, \BS{\beta}_0)- \bar {\BS{Z}}(t,\hat \theta, \hat {\BS{\beta}})  \big\|_{\infty} \nonumber\\
       & \leq 4K_{\BS{Z}}  \Big( \big\|  \bar {\BS{Z}}(t,\hat \theta, \hat {\BS{\beta}})- \bar {\BS{Z}} (t,\theta_0, \BS{\beta}_0)\|_{\infty} + \|\bar {\BS{Z}}(t,\theta_0, \BS{\beta}_0)- \BS{U}_{\BS{Z}} (t,\theta_0, \BS{\beta}_0) \big\|_{\infty} \Big). \label{pf-lemma1-eq5}
\end{align}
Moreover, a direct calculation shows that for each $t \in [0, \tau]$,
	\begin{equation} \label{eq: new-bound_Z-cont}
    \begin{split}
		&   \|  \bar {\BS{Z}}(t,\hat \theta, \hat {\BS{\beta}}) - \bar {\BS{Z}} (t,\theta_0, \BS{\beta}_0)\|_{\infty}  \\
		&= \Bigg \| \frac{\sum_{j=1}^n \BS{Z}_j Y_j(t) \exp\{D_j \hat \theta + {\BS{Z}_j}^{\top} \hat {\BS{\beta}}\}}{\sum_{j=1}^n Y_j(t) \exp\{D_j \hat \theta + {\BS{Z}_j}^{\top} \hat {\BS{\beta}} \}}-\frac{\sum_{j=1}^n \BS{Z}_j Y_j(t) \exp\{D_j  \theta_0 + {\BS{Z}_j}^{\top} \BS{\beta}_0\}}{\sum_{j=1}^n Y_j(t) \exp\{D_j  \theta_0 + {\BS{Z}_j}^{\top} \BS{\beta}_0\} }\Bigg \|_{\infty} \\
		& \le  2 K_{\BS{Z}}  \sum_{j=1}^n \frac{Y_j(t)\big|\exp\{D_j \hat \theta + {\BS{Z}_j}^{\top} \hat {\BS{\beta}}\} - \exp\{D_j  \theta_0 + {\BS{Z}_j}^{\top}  {\BS{\beta}_0}\}\big|}{\sum_{i=1}^n Y_i(t)\exp\{D_i \hat \theta + {\BS{Z}_i}^{\top} \hat {\BS{\beta}}\}}   \\
		& \leq  2 K_{\BS{Z}} (\exp\{|\hat \theta-\theta_0| +K_{\BS{Z}} \|\hat {\BS{\beta}}- \BS{\beta}_0 \|_1\} -1).
    \end{split}
	\end{equation}
    
    With the aid of Proposition \ref{lminicox} and the fact that $|1 - e^x| = O(x)$ for $|x| < c$ with small constant $c > 0$, we can deduce that 
	\begin{equation} \label{pf-lemma1-eq6}
		\sup_{t \in [0, \tau]} \| \bar {\BS{Z}}(t,\hat \theta, \hat {\BS{\beta}}) - \bar {\BS{Z}} (t,\theta_0, \BS{\beta}_0)\|_{\infty} = O_p(s_{\BS{\beta}_0} \sqrt{\log (np) /n}).
	\end{equation}
In addition, Condition \ref{coninimu}(iii) implies that there exists a sufficiently small positive constant $c_1 >0$ such that $ 1-F_T(\tau) > c_1$.  Then an application of Lemma 7 of \cite{yu2021confidence} gives that 
	\begin{equation} \label{pf-lemma1-eq7}
	 \sup_{t \in [0,\tau]} \| \bar {\BS{Z}}(t,\theta_0, \BS{\beta}_0) - \BS{U}_{\BS{Z}} (t,\theta_0, \BS{\beta}_0) \|_{\infty} = O_p(\sqrt{\log (np) /n}).
	\end{equation}
Since $0 < \hat{m}(\BS{Z}_i) < 1$, combining \eqref{pf-lemma1-eq5}--\eqref{pf-lemma1-eq7} above yields that 
\begin{align} \label{pf-lemma1-eq8}
      \Delta_1 & \leq \frac{1}{n_1} \sum_{1 \le i \le n, D_i=1} \int_0^{\tau} \big\| ( \BS{Z}_i - \bar {\BS{Z}}(t,\hat \theta, \hat {\BS{\beta}}))^{\otimes 2} - ( \BS{Z}_i - \BS{U}_{\BS{Z}} (t,\theta_0, \BS{\beta}_0))^{\otimes 2} \big\|_{\max} \hat{m}(\BS{Z}_i) dN_i^{(1)}(t) \nonumber\\
      & \leq  O_p(s_{\BS{\beta}_0} \sqrt{\log (np) /n}) \cdot  \frac{1}{n_1} \sum_{1 \le i \le n, D_i=1}  \int_0^{\tau}  \hat{m}(\BS{Z}_i)  dN_i^{(1)}(t) \nonumber\\
      & = O_p(s_{\BS{\beta}_0} \sqrt{\log (np) /n}).
\end{align}

Similarly, it follows from Proposition \ref{lminilog} that
	\begin{align} \label{pf-lemma1-eq9}
	 \Delta_2 \nonumber
	&\le \frac{1}{n_1}  \sum_{1 \le i \le n, D_i=1} \int_0^{\tau}  \big\| ( \BS{Z}_i - \BS{U}_{\BS{Z}} (t,\theta_0, \BS{\beta}_0))^{\otimes 2} \big\|_{\max} \big| \hat{m}(\BS{Z}_i) -  {m}_0(\BS{Z}_i) \big|  dN_i^{(1)}(t)  \nonumber \\
	& \le   O_p(4K_{\BS{Z}}^2 \tau_n ) = O_p( \tau_n ) .
	\end{align}
For term $\Delta_3$, an application of Hoeffding's inequality shows that for each $t>0$,
	\begin{eqnarray*}
		P( \Delta_3 > t) \le p(p+1) \exp \Big\{-\frac{n_1 t^2}{128K_{\BS{Z}}^4} \Big\},
	\end{eqnarray*}
	and thus we can obtain that
	\begin{eqnarray} \label{pf-lemma1-eq10}
	\Delta_3 = O_p(\sqrt{\log (np) /n}).
	\end{eqnarray}
    
Combining equations \eqref{pf-lemma1-eq4} and \eqref{pf-lemma1-eq8}--\eqref{pf-lemma1-eq10} leads to
	\begin{equation} \label{pf-lemma1-eq11}
		\|\hat {\BS{\Sigma}}_{\BS{\beta}\BS{\beta}}^{(1)} - {\BS{\Sigma}}_{\BS{\beta}\BS{\beta}}^{(1)} \|_{\max} = O_p(s_{\BS{\beta}_0} \sqrt{\log (np) /n} + \tau_n).
	\end{equation}
Using similar arguments, we can show that 
\begin{equation} \label{pf-lemma1-eq12}
    \|\hat {\BS{\Sigma}}_{\BS{\beta}\BS{\beta}}^{(0)} - {\BS{\Sigma}}_{\BS{\beta}\BS{\beta}}^{(0)} \|_{\max} = O_p(s_{\BS{\beta}_0} \sqrt{\log (np) /n} + \tau_n),
\end{equation}
which together with \eqref{pf-lemma1-eq3} and \eqref{pf-lemma1-eq11} yields the desired claim \eqref{pf-lemma1-claim1}.

\bigskip
\noindent \textbf{Proof of \eqref{pf-lemma1-claim2}}. Let us rewrite $\hat {\Sigma}_{\BS{\beta} \theta}$ and $ \Sigma_{\BS{\beta} \theta} $ as

	\begin{align*}
		\hat {\BS{\Sigma}}_{\BS{\beta} \theta} &= \frac{1}{n_1} \sum_{1 \le i \le n, D_i=1} \int_0^{\tau} ( \BS{Z}_i - \bar {\BS{Z}}(t,\hat \theta, \hat {\BS{\beta}})) (1 - \bar{D}(t, \hat{\theta}, \hat{\BS{\beta}})) \hat{m}(\BS{Z}_i) dN_i^{(1)}(t) \\
		& + \frac{1}{n_0} \sum_{1 \le i \le n, D_i=0}  \int_0^{\tau} ( \BS{Z}_i - \bar {\BS{Z}}(t,\hat \theta, \hat {\BS{\beta}}))\bar{D}(t, \hat{\theta}, \hat{\BS{\beta}}) (1-\hat{m}(\BS{Z}_i) ) dN_i^{(0)}(t)   \\
		&:= \hat {\BS\Sigma}_{\BS{\beta} \theta}^{(1)}   - \hat {\BS\Sigma}_{\BS{\beta} \theta}^{(0)}
        \end{align*}
        and
        \begin{align*}
		\BS{\Sigma}_{\BS{\beta} \theta} &= \mathbb{E} \left\{ \int_0^{\tau} ( \BS{Z} - U_{\BS{Z}} (t,\theta_0, \BS{\beta}_0)) (1 - U_D(t,\theta_0, \BS{\beta}_0)) m_{0}(\BS{Z}) dN^{(1)}(t) \right\} \\
		& - \mathbb{E} \left\{   \int_0^{\tau}  ( \BS{Z} - U_{\BS{Z}} (t,\theta_0, \BS{\beta}_0)) U_D(t,\theta_0, \BS{\beta}_0) (1-m_{0}(\BS{Z})) dN^{(0)}(t) \right\} \\
		&:= \BS\Sigma_{\BS{\beta} \theta}^{(1)} -\BS\Sigma_{\BS{\beta} \theta}^{(0)}.
	\end{align*}
    
	As in the proof of \eqref{pf-lemma1-eq11} and \eqref{pf-lemma1-eq12}, we can deduce that 
	\begin{align*}
		& \big\| \hat {\BS\Sigma}_{\BS{\beta} \theta}^{(1)} - \BS\Sigma_{\BS{\beta} \theta}^{(1)} \big\|_{\infty} \\
		&\le  \bigg\| \frac{1}{n_1} \sum_{1 \le i \le n: D_i=1}  \int_0^{\tau} [(\BS{Z}_i - \bar {\BS{Z}}(t,\hat \theta, \hat {\BS{\beta}})) (U_D(t,\theta_0, \BS{\beta}_0) - \bar D(t,\hat \theta, \hat {\BS{\beta}})) \\
		&\quad + (U_{\BS{Z}}(t,\theta_0, \BS{\beta}_0)-\bar {\BS{Z}} (X_i,\hat \theta, \hat {\BS{\beta}})) (1-U_D(t,\theta_0, \BS{\beta}_0) )]  \hat{m}(\BS{Z}_i) dN_i^{(1)}(t) \bigg\|_{\infty}  \\
		&\quad + \bigg\| \frac{1}{n_1} \sum_{1 \le i \le n: D_i=1}  \int_0^{\tau} ( \BS{Z}_i - U_{\BS{Z}}(t,\theta_0, \BS{\beta}_0)) (1 - U_D(t,\theta_0, \BS{\beta}_0)) (\hat {m}(\BS{Z}_i) - {m}_0(\BS{Z}_i) )dN_i^{(1)}(t) \bigg\|_{\infty} \\
		&\quad + \bigg\| \frac{1}{n_1} \sum_{1 \le i \le n: D_i=1}  \int_0^{\tau} ( \BS{Z}_i - U_{\BS{Z}}(t,\theta_0, \BS{\beta}_0)) (1 - U_D(t,\theta_0, \BS{\beta}_0)) {m}_0(\BS{Z}_i) dN_i^{(1)}(t) \\
		& \quad- \mathbb{E} \left\{   \int_0^{\tau}  ( \BS{Z} - U_{\BS{Z}} (t,\theta_0, \BS{\beta}_0)) (1 - U_D(t,\theta_0, \BS{\beta}_0)) m_{0}(\BS{Z}) dN^{(1)}(t)  \right\} \bigg\|_{\infty} \\
		&=  O_p(s_{\BS{\beta}_0} \sqrt{\log (np) /n} + \tau_n)
	\end{align*}
	and
	\begin{align*}
		&  \| \hat {\BS\Sigma}_{\BS{\beta} \theta}^{(0)} - \BS\Sigma_{\BS{\beta} \theta}^{(0)}\|_{\infty} \\
		& \le   \bigg\| \frac{1}{n_0} \sum_{1 \le i \le n: D_i=0}  \int_0^{\tau} [(\BS{Z}_i - \bar {\BS{Z}}(t,\hat \theta, \hat {\BS{\beta}})) ( \bar D(t,\hat \theta, \hat {\BS{\beta}})-U_D(t,\theta_0, \BS{\beta}_0)) \\
		& \quad + (U_{\BS{Z}}(t,\theta_0, \BS{\beta}_0)-\bar {\BS{Z}} (X_i,\hat \theta, \hat {\BS{\beta}})) U_D(t,\theta_0, \BS{\beta}_0) ] (1- \hat{m}(\BS{Z}_i)) dN_i^{(0)}(t) \bigg\|_{\infty}  \\
		& \quad + \bigg\| \frac{1}{n_0} \sum_{1 \le i \le n: D_i=0}  \int_0^{\tau} ( \BS{Z}_i - U_{\BS{Z}}(t,\theta_0, \BS{\beta}_0)) U_D(t,\theta_0, \BS{\beta}_0) (\hat {m}(\BS{Z}_i) - {m}_0(\BS{Z}_i) )dN_i^{(0)}(t) \bigg\|_{\infty} \\
		& \quad + \bigg\| \frac{1}{n_0} \sum_{1 \le i \le n: D_i=0}  \int_0^{\tau} ( \BS{Z}_i - U_{\BS{Z}}(t,\theta_0, \BS{\beta}_0)) U_D(t,\theta_0, \BS{\beta}_0)(1- {m}_0(\BS{Z}_i)) dN_i^{(0)}(t) \\
		& \quad - \mathbb{E} \left\{   \int_0^{\tau}  ( \BS{Z} - U_{\BS{Z}} (t,\theta_0, \BS{\beta}_0)) U_D(t,\theta_0, \BS{\beta}_0)(1- m_{0}(\BS{Z})) dN^{(0)}(t)  \right\} \bigg\|_{\infty} \\
		&= O_p(s_{\BS{\beta}_0} \sqrt{\log (np) /n} + \tau_n).
	\end{align*}
    
	Consequently, we can obtain that 
	\begin{align} \label{pf-lemma1-eq13}
		\|\hat {\BS{\Sigma}}_{\BS{\beta}\theta} -{\BS{\Sigma}}_{\BS{\beta}\theta} \|_{\infty} &\leq \|\hat {\BS{\Sigma}}_{\BS{\beta}\theta}^{(1)} -{\BS{\Sigma}}_{\BS{\beta}\theta}^{(1)} \|_{\infty} + \|\hat {\BS{\Sigma}}_{\BS{\beta}\theta}^{(0)} -{\BS{\Sigma}}_{\BS{\beta}\theta}^{(0)} \|_{\infty} \nonumber \\
		&= O_p(s_{\BS{\beta}_0} \sqrt{\log (np) /n} + \tau_n).
	\end{align}
This together with \eqref{pf-lemma1-claim1} establishes \eqref{pf-lemma1-claim2}, which completes the proof of Proposition \ref{lminimut}.

\subsection{Proof of Proposition \ref{lmnearortho}} \label{SecC.5}

The proof of this proposition will calculate the Gateaux derivative of the population score with respect to the nuisance parameter, and show that it is of smaller order than $n^{-1/2}$ on the nuisance realization set $\mathcal{T}_N$. The cancellation comes from the definition of $\BS{\mu}_{a0}=\BS{\Sigma}_{\BS{\beta}\BS{\beta}}^{-1}\BS{\Sigma}_{\BS{\beta}\theta}$, while the residual terms are controlled by the uniform risk-set concentration. A key challenge is to compare finite-sample risk-set matrices to their population counterparts at a rate sharp enough to preserve the root-$n$ inference.

Similar to \cite{chernozhukov2018double}, we can show that the pathwise derivative map
\[
D_r[\BS{\eta} - \BS{\eta}_{a0}] = \frac{\partial \mathbb{E}\Phi (\{\mathcal{O}_i\}_{i=1}^n;\theta_0, \BS{\eta}_{a0}+ r( \BS{\eta}- \BS{\eta}_{a0}))}{\partial r}
\]
exists for all $r \in [0,1)$ and $\BS{\eta}$ in the nuisance realization set $\mathcal T_N$, and the near-orthogonality condition holds at $r = 0$; that is,
\[
\big| \partial_{\BS{\eta}} \mathbb{E}\Phi (\{\mathcal{O}_i\}_{i=1}^n;\theta_0, \BS{\eta}_{a0}) [\BS{\eta} -\BS{\eta}_{a0}] \big| := \big| D_0[\BS{\eta} - \BS{\eta}_{a0}] \big| = o(n^{-1/2})
\]
holds for all $\BS\eta \in \mathcal T_N$. In view of \eqref{eqpartiallogparlik}, we have that $\mathbb{E} [\Phi (\{\mathcal{O}_i\}_{i=1}^n;\theta_0, \BS{\eta}_{a0})] =0$. From the formula of $\Phi (\{\mathcal{O}_i\}_{i=1}^n;\theta, \BS{\eta})$, we can verify that the Gateaux derivative map of $\mathbb{E}[\Phi (\{\mathcal{O}_i\}_{i=1}^n;\theta_0, \BS{\eta}_{a0}+r(\BS{\eta}-\BS{\eta}_{a0}))]$ exists for all $r \in [0,1)$ and $\BS{\eta}\in \mathcal{T}_N$.

To calculate such derivative, observe that
\begin{eqnarray*}
	\frac{\partial \Phi (\{\mathcal{O}_i\}_{i=1}^n; \theta, \BS{\eta}) }{\partial {\BS{\eta}} }= \left( \begin{array}{c}
		\ddot l_{ \BS{\beta}\theta}(\{\mathcal{O}_i\}_{i=1}^n;\theta, \BS{\beta})- \ddot l_{ \BS{\beta}\BS{\beta}}(\{\mathcal{O}_i\}_{i=1}^n;\theta, \BS{\beta}) \BS{\mu}_a \\
		- \dot l_{ \BS{\beta}}(\{\mathcal{O}_i\}_{i=1}^n;\theta, \BS{\beta})
	\end{array} \right).
\end{eqnarray*}
By interchanging the differentiation and expectation via the dominated convergence theorem, we can deduce that 
\begin{align}
	\frac{\partial \mathbb{E}\Phi (\{\mathcal{O}_i\}_{i=1}^n;\theta, \BS{\eta})}{ \partial {\BS{\eta}}} \Big|_{\theta=\theta_0, \BS{\eta}=\BS{\eta}_{a0}}
	&= \left( \begin{array}{c}
		-\BS{J}_{\BS{\beta} \theta}+\BS{J}_{\BS{\beta} \BS{\beta}} \BS{\mu}_{a0} \\
		\BS{0}
	\end{array} \right). \label{eq-new-E-Phi}
\end{align}
Since the nuisance parameter $\BS{\eta}$ is finite-dimensional, it follows that 
\begin{align}
	\partial_{\BS{\eta}} \mathbb{E}\Phi (\{\mathcal{O}_i\}_{i=1}^n;\theta_0, \BS{\eta}_{a0}) [\BS{\eta} -\BS{\eta}_{a0}] &:= \frac{\partial \mathbb{E}\Phi (\{\mathcal{O}_i\}_{i=1}^n;\theta_0, \BS{\eta}_{a0}+ r( \BS{\eta}- \BS{\eta}_{a0}))}{\partial r} \Big|_{r=0} \nonumber\\
	&=\Big( \frac{\partial \mathbb{E}\Phi (\{\mathcal{O}_i\}_{i=1}^n;\theta, \BS{\eta})}{ \partial  {\BS{\eta}}} \Big|_{\theta=\theta_0, \BS{\eta}=\BS{\eta}_{a0}} \Big)^{\top} (\BS{\eta} -\BS{\eta}_{a0}) \nonumber \\
	&= {\left( \begin{array}{c}
			-\BS{J}_{\BS{\beta} \theta} +\BS{J}_{\BS{\beta} \BS{\beta}} \BS{\mu}_{a0}\\
			\BS{0}
		\end{array} \right) }^{\top} \left( \begin{array}{c}
		\BS{\beta} -\BS{\beta}_0 \nonumber \\
		\BS{\mu}_a-\BS{\mu}_{a0}
	\end{array} \right) \\
	&= (-\BS{J}_{\BS{\beta} \theta} + \BS{J}_{\BS{\beta} \BS{\beta}}\BS{\mu}_{a0})^{\top} (\BS{\beta} -\BS{\beta}_0). \label{pf-thm1-eq00}
\end{align}

As a result, it holds that 
\begin{eqnarray*}
	\big| \partial_{\BS{\eta}} \mathbb{E}\Phi (\{\mathcal{O}_i\}_{i=1}^n;\theta_0, \BS{\eta}_{a0}) [\BS{\eta} -\BS{\eta}_{a0}] \big| \le \big\| \BS{J}_{\BS{\beta} \theta}- \BS{J}_{\BS{\beta} \BS{\beta}} \BS{\mu}_{a0} \big\|_{\infty} \big\| \BS{\beta} -\BS{\beta}_0 \big\|_1.
\end{eqnarray*}
To quantify the residual deviation, denote by  $\BS{E}_{\BS{\beta}\BS{\beta}} := {\BS{J}}_{\BS{\beta}\BS{\beta}} -{\BS{\Sigma}}_{\BS{\beta}\BS{\beta}}$ and $\BS{E}_{\BS{\beta}\theta} := {\BS{J}}_{\BS{\beta} \theta} - {\BS{\Sigma}}_{\BS{\beta}\theta}$. Since $ {\BS{\mu}}_{a0} = {\BS{\Sigma}}^{-1}_{\BS{\beta}\BS{\beta}} \BS{\Sigma}_{\BS{\beta} \theta}$, we can show that 
\begin{align}
	\|\BS{J}_{\BS{\beta} \theta}- \BS{J}_{\BS{\beta} \BS{\beta}} \BS{\mu}_{a0} \|_{\infty}
	&= \|\BS{J}_{\BS{\beta} \theta}- \BS{J}_{\BS{\beta} \BS{\beta}}{\BS{\Sigma}}^{-1}_{\BS{\beta}\BS{\beta}} \BS{\Sigma}_{\BS{\beta} \theta} \|_{\infty} \nonumber \\
	&\le  \|\BS{E}_{\BS{\beta} \theta}\|_{\infty}+\| \BS{E}_{\BS{\beta}\BS{\beta}}{\BS{\Sigma}}^{-1}_{\BS{\beta}\BS{\beta}} \BS{\Sigma}_{\BS{\beta} \theta} \|_{\infty} \nonumber  \\
	&\le  \| \BS{E}_{\BS{\beta} \theta}\|_{\infty}+\| \BS{E}_{\BS{\beta}\BS{\beta}}\|_{\max}  \| {\BS{\Sigma}}^{-1}_{\BS{\beta}\BS{\beta}} \BS{\Sigma}_{\BS{\beta} \theta} \|_{1}. \label{pf-thm1-eq0}
\end{align}

For a sufficiently large constant $\widetilde{C} > 0$, let us define an event
\begin{align} \label{eq-neww-event-eps}
  \mathcal{E}=&\left\{ \sup_{t \in [0,\tau]} \big| \bar D(t,\theta_0, \BS{\beta}_0) - U_D(t,\theta_0, \BS{\beta}_0)\big| \le \tilde C \sqrt{\log (np) /n}\,\,\,\mbox{and}\right. \nonumber\\
  &  \quad \quad\quad\quad \left.\sup_{t \in [0,\tau]} \| \bar {\BS{Z}}(t,\theta_0, \BS{\beta}_0) - \BS{U}_{\BS{Z}} (t,\theta_0, \BS{\beta}_0) \|_{\infty}\le \tilde C \sqrt{\log (np) /n}
  \right\}.
\end{align}
By invoking Lemma 7 in \cite{yu2021confidence}, it follows from Conditions \ref{coninicox}(i), \ref{coninicox}(iv), and \ref{coninimu}(ii)--(iii) that $\mathbb{P}(\mathcal{E}^c) = O(n^{-1})$, and thus $\mathbb{P}(\mathcal{E}) \rightarrow 1$.

Further, in light of the definitions of  $\BS{E}_{\BS{\beta}\theta} := {\BS{J}}_{\BS{\beta} \theta} - {\BS{\Sigma}}_{\BS{\beta}\theta}$, ${\BS{J}}_{\BS{\beta} \theta}$, and ${\BS{\Sigma}}_{\BS{\beta}\theta}$, it holds that 
	\begin{align}
		\| \BS{E}_{\BS{\beta} \theta} \|_{\infty}
		&= \Big\| \mathbb{E} \Big \{\frac{1}{n} \sum_{i=1}^n \int_0^{\tau}  \big[( \BS{Z}_i - \bar {\BS{Z}} (t,\theta_0, \BS{\beta}_0) ) (\bar D(t,\theta_0, \BS{\beta}_0) -U_D(t,\theta_0, \BS{\beta}_0)) \nonumber\\
        & \quad + ( \bar {\BS{Z}}(t,\theta_0, \BS{\beta}_0)- \BS{U}_{\BS{Z}} (t,\theta_0, \BS{\beta}_0)) ( D_i - U_D(t,\theta_0, \BS{\beta}_0) ) \big]  dN_i(t) \Big \} \Big\|_{\infty} \nonumber\\
		&\leq  \int_{\mathcal{E}}  \Big\| \frac{1}{n} \sum_{i=1}^n \int_0^{\tau}  \big[( \BS{Z}_i - \bar {\BS{Z}} (t,\theta_0, \BS{\beta}_0) ) (\bar D(t,\theta_0, \BS{\beta}_0) -U_D(t,\theta_0, \BS{\beta}_0)) \nonumber \\
		&\quad  + ( \bar {\BS{Z}}(t,\theta_0, \BS{\beta}_0)- \BS{U}_{\BS{Z}} (t,\theta_0, \BS{\beta}_0)) (D_i-  U_D(t,\theta_0, \BS{\beta}_0) )  \big] dN_i(t) \Big\|_{\infty} dP  + \mathbb{P} (\mathcal{E}^c) \cdot 4  K_{\BS{Z}}  \nonumber\\
        & \leq   \int_{\mathcal{E}}  \Big\| \frac{1}{n} \sum_{i=1}^n \int_0^{\tau}  ( \BS{Z}_i - \bar {\BS{Z}} (t,\theta_0, \BS{\beta}_0) ) (\bar D(t,\theta_0, \BS{\beta}_0) -U_D(t,\theta_0, \BS{\beta}_0)) dN_i(t)\Big\|_{\infty} dP \nonumber\\
        & \quad + \int_{\mathcal{E}}  \Big\| \frac{1}{n} \sum_{i=1}^n  ( \bar {\BS{Z}}(t,\theta_0, \BS{\beta}_0)- \BS{U}_{\BS{Z}} (t,\theta_0, \BS{\beta}_0)) (D_i-  U_D(t,\theta_0, \BS{\beta}_0) )    dN_i(t) \Big\|_{\infty} dP \nonumber \\
        & \quad +  O(n^{-1}). \label{pf-thm1-eq1}
\end{align}

On event $\mathcal{E}$, the first two terms on the right-hand side of the expression above can be bounded as 
\begin{align}
	&  \Big\| \frac{1}{n} \sum_{i=1}^n \int_0^{\tau} (\BS{Z}_i - \bar {\BS{Z}} (t,\theta_0, \BS{\beta}_0))(\bar D(t,\theta_0, \BS{\beta}_0) - U_D(t,\theta_0, \BS{\beta}_0)) dN_i(t)\Big\|_{\infty}  \nonumber\\
	& \le   \frac{1}{n} \sum_{i=1}^n \Big\| \int_0^{\tau} (\BS{Z}_i - \bar {\BS{Z}} (t,\theta_0, \BS{\beta}_0))(\bar D(t,\theta_0, \BS{\beta}_0) - U_D(t,\theta_0, \BS{\beta}_0)) dN_i(t)\Big\|_{\infty}  \nonumber\\
	& =   \frac{1}{n} \sum_{i=1}^n \Big\| \delta_i (\BS{Z}_i - \bar {\BS{Z}} (X_i,\theta_0, \BS{\beta}_0))(\bar D(X_i,\theta_0, \BS{\beta}_0) - U_D(X_i,\theta_0, \BS{\beta}_0)) \Big\|_{\infty} \nonumber \\
	& \le   \frac{1}{n} \sum_{i=1}^n \Big\|  \BS{Z}_i - \bar {\BS{Z}} (X_i,\theta_0, \BS{\beta}_0) \Big\|_{\infty} \Big|\bar D(X_i,\theta_0, \BS{\beta}_0) - U_D(X_i,\theta_0, \BS{\beta}_0) \Big|   \nonumber\\
	& \le  2 K_{\BS{Z}} \tilde C \sqrt{\log (np) /n}  \label{pf-thm1-eq2}
	\end{align}
	and
	\begin{align}
	&   \Big\| \frac{1}{n} \sum_{i=1}^n \int_0^{\tau} (\bar {\BS{Z}} (t,\theta_0, \BS{\beta}_0)  - \BS{U}_{\BS{Z}} (t,\theta_0, \BS{\beta}_0)) (D_i-  U_D(X_i,\theta_0, \BS{\beta}_0) )  dN_i(t) \Big\|_{\infty}  \nonumber \\
	& \le   \frac{1}{n} \sum_{i=1}^n \Big\| \int_0^{\tau} (\bar {\BS{Z}} (t,\theta_0, \BS{\beta}_0)  - \BS{U}_{\BS{Z}} (t,\theta_0, \BS{\beta}_0)) (D_i-  U_D(X_i,\theta_0, \BS{\beta}_0))  dN_i(t)\Big\|_{\infty} \nonumber \\
	& =   \frac{1}{n} \sum_{i=1}^n \Big\| \delta_i (\bar {\BS{Z}} (X_i,\theta_0, \BS{\beta}_0)  - \BS{U}_{\BS{Z}} (X_i,\theta_0, \BS{\beta}_0)) (D_i-  U_D(X_i,\theta_0, \BS{\beta}_0) )  \Big\|_{\infty} \nonumber \\
	& \le   \frac{1}{n} \sum_{i=1}^n \Big\|  \bar {\BS{Z}} (X_i,\theta_0, \BS{\beta}_0)  - \BS{U}_{\BS{Z}} (X_i,\theta_0, \BS{\beta}_0) \Big\|_{\infty} \Big|D_i-  U_D(X_i,\theta_0, \BS{\beta}_0) \Big|  \nonumber \\
	& \le   \tilde C \sqrt{\log (np) /n},\label{pf-thm1-eq3}
\end{align}
respectively.

Combining \eqref{pf-thm1-eq1}--\eqref{pf-thm1-eq3}, we can derive the desired bound for $\BS{E}_{\BS{\beta} \theta}$
\begin{align}
	\| \BS{E}_{\BS{\beta} \theta} \|_{\infty} &\le (2 K_{\BS{Z}} +1) \tilde C \sqrt{\log (np) /n} + O( n^{-1}) \nonumber\\
	&= O( \sqrt{\log (np) /n}). \label{pf-thm1-eq4}
\end{align}
An application of similar arguments gives that 
\begin{align}
		\| \BS{E}_{\BS{\beta} \BS{\beta}} \|_{\max}
		&= \Big\| \mathbb{E} \Big \{ \frac{1}{n} \sum_{i=1}^n \int_0^{\tau} [2 ( \BS{Z}_i - \bar {\BS{Z}}(t,\theta_0, \BS{\beta}_0)) ( \bar {\BS{Z}}(t,\theta_0, \BS{\beta}_0)- \BS{U}_{\BS{Z}} (t,\theta_0, \BS{\beta}_0))^{\top}  \nonumber \\
		& \quad + (\bar {\BS{Z}}(t,\theta_0, \BS{\beta}_0) - \BS{U}_{\BS{Z}} (t,\theta_0, \BS{\beta}_0)) ^{\otimes 2}] dN_i(t) \Big \} \Big\|_{\max}  \nonumber\\
		&\le    4 K_{\BS{Z}}  \tilde C \sqrt{\log (np) /n} + \tilde C^2  \log (np) /n + O(n^{-1})  \nonumber \\
		&= O( \sqrt{\log (np) /n}),  \label{pf-thm1-eq42}
\end{align}
which together with \eqref{pf-thm1-eq0}, \eqref{pf-thm1-eq4}, and Condition \ref{coninimu}(vi) leads to
\begin{eqnarray*}
	\|\BS{J}_{\BS{\beta} \theta}- \BS{J}_{\BS{\beta} \BS{\beta}} \BS{\mu}_{a0} \|_{\infty} = O( \sqrt{\log (np) /n}).
\end{eqnarray*}

Consequently, it follows from Condition \ref{coninicox}(iii) that
\begin{align*}
	\big| \partial_{\BS{\eta}} \mathbb{E}\Phi (\{\mathcal{O}_i\}_{i=1}^n;\theta_0, \BS{\eta}_{a0}) [\BS{\eta} -\BS{\eta}_{a0}] \big|
	&\le   \big\| \BS{J}_{\BS{\beta} \theta}- \BS{J}_{\BS{\beta} \BS{\beta}} \BS{\mu}_{a0}  \big\|_{\infty}  \big\| \BS{\beta} -\BS{\beta}_0 \big\|_1 \\
	& =   O(s_{\BS{\beta}_0} \log (np)/n )  \\
	& =  o(n^{-1/2}).
\end{align*}
Therefore, the Neyman near-orthogonality condition holds in view of the definition 2.2 of \cite{chernozhukov2018double}. This concludes the proof of Proposition \ref{lmnearortho}.

\section{Some key lemmas and their proofs} \label{SecC}

\subsection{Proof of Lemma \ref{lemma-equi-Sigma}} \label{Sec:pf-newlemma1}

The proof of this lemma will condition on $\BS{Z}$, apply the propensity score model, and then exploit Fubini's theorem to exchange the conditional expectations and time integration. A technical step is to express the observed counting-process contribution through the potential counting processes $N^{(1)}$ and $N^{(0)}$ while preserving the Cox weights.

Recall that $\tilde w(t, \theta_0, \BS{\beta}_0) = Y(t) \exp\{D \theta_0+\BS{Z}^{\top} \BS{\beta_0} \}$ and $ Y(t) = D Y^{(1)}(t) + (1 - D) Y^{(0)}(t) $. Then we can condition on $\BS{Z}$ and use the propensity score model in \eqref{eqmodellog} to deduce that 
\begin{align*}
	&  \mathbb{E} [( \BS{Z} - \BS{U}_{\BS{Z}} (t,\theta_0, \BS{\beta}_0)) (D - U_D(t,\theta_0, \BS{\beta}_0))  \tilde w(t, \theta_0, \BS{\beta}_0) | \BS{Z} ] \\
	&=   \mathbb{E} [( \BS{Z} - \BS{U}_{\BS{Z}} (t,\theta_0, \BS{\beta}_0)) (D - U_D(t,\theta_0, \BS{\beta}_0))  \tilde w(t, \theta_0, \BS{\beta}_0) | \BS{Z}, D=1 ]P(D=1 |\BS{Z})  \\
	& \quad +  \mathbb{E} [( \BS{Z} - \BS{U}_{\BS{Z}} (t,\theta_0, \BS{\beta}_0)) (D - U_D(t,\theta_0, \BS{\beta}_0))  \tilde w(t, \theta_0, \BS{\beta}_0) | \BS{Z}, D=0 ]P(D=0 |\BS{Z}) \\
	&=  ( \BS{Z} - \BS{U}_{\BS{Z}} (t,\theta_0, \BS{\beta}_0)) (1 - U_D(t,\theta_0, \BS{\beta}_0)) \mathbb{E} [  Y(t)| \BS{Z}, D=1 ] \exp\{\theta_0+\BS{Z}^{\top} \BS{\beta}_0 \} m_0(\BS{Z}) \\
	& \quad - ( \BS{Z} - \BS{U}_{\BS{Z}} (t,\theta_0, \BS{\beta}_0)) U_D(t,\theta_0, \BS{\beta}_0) \mathbb{E} [ Y(t)| \BS{Z}, D=0 ]\exp\{\BS{Z}^{\top} \BS{\beta}_0 \} (1-m_0(\BS{Z})) \\
	&=  ( \BS{Z} - \BS{U}_{\BS{Z}} (t,\theta_0, \BS{\beta}_0)) (1 - U_D(t,\theta_0, \BS{\beta}_0)) \mathbb{E} [  Y^{(1)}(t)| \BS{Z}] \exp\{\theta_0+\BS{Z}^{\top} \BS{\beta}_0 \} m_0(\BS{Z}) \\
	& \quad - ( \BS{Z} - \BS{U}_{\BS{Z}} (t,\theta_0, \BS{\beta}_0)) U_D(t,\theta_0, \BS{\beta}_0) \mathbb{E} [ Y^{(0)}(t) | \BS{Z} ]\exp\{\BS{Z}^{\top} \BS{\beta}_0 \} (1-m_0(\BS{Z})).
    \end{align*}
    
Hence, it follows from Fubini's theorem and the law of iterated expectation that $\BS{\Sigma}_{\BS{\beta} \theta}$ can be written as
\begin{align}
     \BS{\Sigma}_{\BS{\beta} \theta}
	&=  \mathbb{E} \left\{   \int_0^{\tau}  ( \BS{Z} - \BS{U}_{\BS{Z}} (t,\theta_0, \BS{\beta}_0)) (1 - U_D(t,\theta_0, \BS{\beta}_0)) Y^{(1)}(t) \exp\{\theta_0+\BS{Z}^{\top} \BS{\beta}_0 \} m_0(\BS{Z}) \lambda_0(t) dt  \right\} \nonumber\\
	&  \quad- \mathbb{E} \left\{   \int_0^{\tau}  ( \BS{Z} -  \BS{U}_{\BS{Z}} (t,\theta_0, \BS{\beta}_0)) U_D(t,\theta_0, \BS{\beta}_0) Y^{(0)}(t) \exp\{\BS{Z}^{\top} \BS{\beta}_0 \} (1-m_0(\BS{Z})) \lambda_0(t) dt  \right\} \nonumber \\
	& =  \mathbb{E} \left\{   \int_0^{\tau}  ( \BS{Z} -  \BS{U}_{\BS{Z}} (t,\theta_0, \BS{\beta}_0)) (1 - U_D(t,\theta_0, \BS{\beta}_0)) m_0(\BS{Z}) dN^{(1)}(t)  \right\} \nonumber \\
	&  \quad - \mathbb{E} \left\{   \int_0^{\tau}  ( \BS{Z} -  \BS{U}_{\BS{Z}} (t,\theta_0, \BS{\beta}_0)) U_D(t,\theta_0, \BS{\beta}_0) (1-m_0(\BS{Z})) dN^{(0)}(t) \right\}.
\end{align}
Similarly, we can obtain \eqref{eq-Sigma-2} and \eqref{eq-Sigma-3}. This completes the proof of Lemma \ref{lemma-equi-Sigma}.

\subsection{Lemma \ref{lmidentification} and its proof} \label{SecC.6}

Lemma \ref{lmidentification} below establishes the local identification of the scalar target parameter. It shows that the population orthogonal score has a stable, nonzero derivative at $\theta_0$, and the score is locally separated from zero away from $\theta_0$.

\begin{lemma}\label{lmidentification}
	Assume that Conditions \ref{coninicox}(i), \ref{coninimu}(ii)--(iv), and \ref{coninimu}(vi) are satisfied. Let $J^* := \frac{\partial \mathbb{E}[\Phi (\{\mathcal{O}_i\}_{i =1}^n; \theta, \BS{\eta}_{a0})]}{\partial \theta} \Big|_{\theta=\theta_0}$. Then there exist some finite constants $c_2 > c_1 >0$ such that $c_1 < -J^*<c_2$. Moreover, we have that for each $\theta \in \mathcal{N}_{\theta_0}^*$, 
	\[
	2 \big| \mathbb{E} [\Phi (\{\mathcal{O}_i\}_{i=1}^n;\theta, \BS{\eta}_{a0})] \big| \ge |J^*(\theta-\theta_0)|.
	\]
\end{lemma}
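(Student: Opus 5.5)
I will compute $J^*$ explicitly, show that it is close to $-(\Sigma_{\theta\theta}-\BS{\Sigma}_{\theta\BS{\beta}}\BS{\Sigma}_{\BS{\beta}\BS{\beta}}^{-1}\BS{\Sigma}_{\BS{\beta}\theta})$, and then use a second-order Taylor expansion in $\theta$ for the separation inequality.

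\textbf{Computing $J^*$.} Differentiating $\Phi(\{\mathcal{O}_i\}_{i=1}^n;\theta,\BS{\eta}_{a0})$ in $\theta$ gives
$$\partial_\theta\Phi=-\frac1n\sum_i\int_0^\tau\Big[\sum_j D_j(D_j-\bar D)w_j-\BS{\mu}_{a0}^{\top}\sum_j(\BS{Z}_j-\bar{\BS{Z}})(D_j-\bar D)w_j\Big]dN_i(t).$$
Interchanging derivative and expectation is justified by dominated convergence, since Condition \ref{coninicox}(i) and $\|\BS{\mu}_{a0}\|_1\le K_{\BS{\mu}}$ make the integrand bounded. The Doob--Meyer decomposition then yields
$$J^*=-J_{\theta\theta}+\BS{\mu}_{a0}^{\top}\BS{J}_{\BS{\beta}\theta}.$$
This is the identity \eqref{eq-neww-J0} quoted in the proof of Theorem \ref{ththeta}, so $J^*=J_0$ up to replacing $\BS{J}$ by $\BS{\Sigma}$.

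\textbf{Two-sided bound on $-J^*$.} The bounds \eqref{pf-thm1-eq4}, \eqref{pf-thm1-eq42} and their analogue for $J_{\theta\theta}$ give $|J_{\theta\theta}-\Sigma_{\theta\theta}|$ and $\|\BS{J}_{\BS{\beta}\theta}-\BS{\Sigma}_{\BS{\beta}\theta}\|_\infty$ of order $O(\sqrt{\log(np)/n})$; the analogue for $J_{\theta\theta}$ is proved on the event $\mathcal{E}$ of \eqref{eq-neww-event-eps}. Since $\|\BS{\mu}_{a0}\|_1\le K_{\BS{\mu}}$, it follows that
$$-J^*=\Sigma_{\theta\theta}-\BS{\Sigma}_{\theta\BS{\beta}}\BS{\Sigma}_{\BS{\beta}\BS{\beta}}^{-1}\BS{\Sigma}_{\BS{\beta}\theta}+o(1).$$
The leading term is the Schur complement of $\BS{\Sigma}$. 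By Condition \ref{coninimu}(iv) it lies in $[\Lambda_1,\Lambda_{p+1}]$, because the inverse of a Schur complement is a diagonal entry of $\BS{\Sigma}^{-1}$. Taking $c_1<\Lambda_1$ and $c_2>\Lambda_{p+1}$ gives $c_1<-J^*<c_2$ for large $n$.

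\textbf{Separation inequality.} Since $\mathbb{E}\Phi(\theta_0,\BS{\eta}_{a0})=0$ by \eqref{eqpartiallogparlik}, Taylor's theorem gives
$$\mathbb{E}\Phi(\theta,\BS{\eta}_{a0})=J^*(\theta-\theta_0)+\tfrac12\,\partial_\theta^2\mathbb{E}\Phi(\tilde\theta,\BS{\eta}_{a0})(\theta-\theta_0)^2 .$$
The second derivative consists of third-order weighted central moments of $D$ and $\BS{Z}$ under the risk-set weights. It is therefore bounded uniformly by a constant $C_3$ depending on $K_{\BS{Z}}$ and $K_{\BS{\mu}}$, because $D\in\{0,1\}$ and $\|\BS{Z}\|_\infty\le K_{\BS{Z}}$. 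For $\theta\in\mathcal{N}_{\theta_0}^*$ we have $|\theta-\theta_0|\le 2C_0s_{\BS{\beta}_0}\sqrt{\log(np)/n}=o(1)$ by Condition \ref{coninicox}(iii). Hence the quadratic term is at most $\tfrac12C_3|\theta-\theta_0|\cdot|\theta-\theta_0|\le\tfrac12|J^*||\theta-\theta_0|$ for large $n$, which gives $|\mathbb{E}\Phi|\ge\tfrac12|J^*(\theta-\theta_0)|$.

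The main obstacle is the two-sided bound on $-J^*$, in particular establishing the $J_{\theta\theta}$ approximation and confirming that the Schur complement is bounded away from zero. I will handle it exactly as in \eqref{pf-thm1-eq1}--\eqref{pf-thm1-eq42}, and take the derivative bound from the uniform boundedness of the covariates.
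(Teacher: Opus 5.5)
Your proposal is correct and follows the paper's proof. The steps match: the same dominated-convergence and Doob--Meyer computation giving $J^*=-J_{\theta\theta}+\BS{\mu}_{a0}^{\top}\BS{J}_{\BS{\beta}\theta}$, the same perturbation of the Schur complement of $\BS{\Sigma}$ by $O(\sqrt{\log(np)/n})$ errors, and a local expansion of $\mathbb{E}\Phi$ around $\theta_0$. Your two refinements make rigorous the uniformity in $n$ that the paper's ``continuity'' step leaves implicit: you read the Schur-complement bounds off $(\BS{\Sigma}^{-1})_{11}$ and Condition \ref{coninimu}(iv), and you replace the mean-value-plus-continuity argument with an $n$-free bound on $\partial_\theta^2\mathbb{E}\Phi$ (one can take $1+2K_{\BS{Z}}K_{\BS{\mu}}$) over the shrinking neighborhood, which uses Condition \ref{coninicox}(iii) just as the paper tacitly does.
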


\noindent\textit{Proof}. The proof of this lemma will calculate the Jacobian $J^*$ by differentiating the population score and rewriting it in terms of the population covariance blocks. The Schur-complement-type quantity is bounded away from zero by the model conditions, while finite-sample to population deviations are controlled by the risk-set concentration. The local separation statement then follows from the mean value theorem and continuity of the derivative.

By resorting to Fubini's theorem and the dominated convergence theorem, the Jacobian matrix $J^*$ (a scalar in this case) is given by 
\begin{align*}
	J^* &= \frac{\partial \mathbb{E}[\Phi (\{\mathcal{O}_i\}_{i =1}^n; \theta, \BS{\eta}_{a0})]}{\partial \theta} \bigg|_{\theta=\theta_0} = \mathbb{E} \bigg[\frac{\partial [\Phi (\{\mathcal{O}_i\}_{i =1}^n; \theta, \BS{\eta}_{a0})]}{\partial \theta} \bigg|_{\theta=\theta_0} \bigg]\\
	&= \mathbb{E} \bigg[- \frac{1}{n} \sum_{i=1}^n \int_0^{\tau}  \sum_{j=1}^n
	( D_j - \bar D(t,\theta_0, \BS{\beta}_0))^{2} w_j(t, \theta_0, \BS{\beta}_0) dN_i(t) \bigg] \\
	& \quad + \BS{\mu}_{a0}^{\top} \mathbb{E} \bigg[ \frac{1}{n} \sum_{i=1}^n \int_0^{\tau}  \sum_{j=1}^n  ( D_j - \bar D(t,\theta_0, \BS{\beta}_0)) (\BS{Z}_j - \bar {\BS{Z}} (t,\theta_0, \BS{\beta}_0)) w_j(t, \theta_0, \BS{\beta}_0)  dN_i(t) \bigg].
\end{align*}
Moreover, it follows from the Doob--Meyer decomposition for the counting process $ \{N_i(t)\}_{t \in [0, \tau]} $ that
\begin{equation} \label{doob-mayer-decom}
    N_i(t) = M_i(t) + \int_{0}^ {t} \widetilde{w}_i(s, \theta_0, \BS{\beta}_0) \lambda_0(s) ds,
\end{equation}
where $ \{M_i(t)\}_{t \in [0, \tau]} $ is a mean-zero martingale. 

Substituting the decomposition in (\ref{doob-mayer-decom}) above into the oracle score leads to 
\begin{align} \label{eq-neww-J0}
    J^* &= - \mathbb{E} \bigg[\frac{1}{n} \sum_{i=1}^n \int_0^{\tau} ( D_i - \bar D(t,\theta_0, \BS{\beta}_0))^{2} \tilde w_i(t, \theta_0, \BS{\beta}_0) \lambda_0(t) dt \bigg] \nonumber\\
	& \quad + \BS{\mu}_{a0}^{\top} \mathbb{E} \bigg[ \frac{1}{n} \sum_{i=1}^n \int_0^{\tau}  ( D_i - \bar D(t,\theta_0, \BS{\beta}_0)) (\BS{Z}_i - \bar {\BS{Z}} (t,\theta_0, \BS{\beta}_0)) \tilde w_i(t, \theta_0, \BS{\beta}_0) \lambda_0(t) dt \bigg] \nonumber \\
    	&= - \mathbb{E} \Big\{\frac{1}{n} \sum_{i=1}^n  \int_0^{\tau} (D_i- \bar D(t,\theta_0, \BS{\beta}_0))^{2} dN_i(t) \Big\} \nonumber \\
	&\quad + \BS{\mu}_{a0}^{\top} \mathbb{E} \Big\{\frac{1}{n} \sum_{i=1}^n \int_0^{\tau}  ( D_i - \bar D(t,\theta_0, \BS{\beta}_0)) (\BS{Z}_i - \bar {\BS{Z}} (t,\theta_0, \BS{\beta}_0)) dN_i(t) \Big\}  \nonumber\\
	&= -{J}_{\theta \theta} + \BS{\mu}_{a0}^{\top} \BS{J}_{\BS{\beta} \theta }.
\end{align}

Denote by ${J}_{\theta \theta} -{ \Sigma}_{\theta \theta} := {E}_{\theta \theta}$. Using the same notation as in the proof of Proposition \ref{lmnearortho} (Cf. Section \ref{SecC.5}), we can show that
\begin{align*}
	-J^* &= { \Sigma}_{\theta \theta} + {E}_{\theta \theta}- (\BS{ \Sigma}_{ \BS{\beta}\theta}+\BS{E}_{ \BS{\beta}\theta})^{\top} \BS{ \Sigma}_{\BS{\beta}\BS{\beta}} ^{-1}\BS{ \Sigma}_{\BS{\beta} \theta}\\
	&\geq   { \Sigma}_{\theta \theta} - (\BS{ \Sigma}_{ \BS{\beta}\theta})^{\top} \BS{ \Sigma}_{\BS{\beta}\BS{\beta}} ^{-1}\BS{ \Sigma}_{\BS{\beta} \theta} -| {E}_{\theta \theta}| - |(\BS{E}_{ \BS{\beta}\theta})^{\top} \BS{ \Sigma}_{\BS{\beta}\BS{\beta}} ^{-1} \BS{ \Sigma}_{\BS{\beta} \theta} | \\
	&\ge   { \Sigma}_{\theta \theta} - (\BS{ \Sigma}_{ \BS{\beta}\theta})^{\top} \BS{ \Sigma}_{\BS{\beta}\BS{\beta}} ^{-1}\BS{ \Sigma}_{\BS{\beta} \theta} -|{E}_{\theta \theta} | - \|\BS{E}_{ \BS{\beta}\theta}\|_{\infty} \|\BS{ \Sigma}_{\BS{\beta}\BS{\beta}} ^{-1}\BS{ \Sigma}_{\BS{\beta} \theta} \|_1,
\end{align*}
and similarly,
\begin{align*}
	- J^*
	&\le    { \Sigma}_{\theta \theta} - (\BS{ \Sigma}_{ \BS{\beta}\theta})^{\top} \BS{ \Sigma}_{\BS{\beta}\BS{\beta}} ^{-1}\BS{ \Sigma}_{\BS{\beta} \theta}  + | {E}_{\theta \theta} | + \|\BS{E}_{ \BS{\beta}\theta}\|_{\infty} \| \BS{ \Sigma}_{\BS{\beta}\BS{\beta}} ^{-1}\BS{ \Sigma}_{\BS{\beta} \theta} \|_1.
\end{align*}
In view of Condition \ref{coninimu}(iv), the quantity $ {\Sigma}_{\theta \theta} - (\BS{ \Sigma}_{ \BS{\beta}\theta})^{\top} \BS{ \Sigma}_{\BS{\beta}\BS{\beta}} ^{-1}\BS{ \Sigma}_{\BS{\beta} \theta}$ is invertible and positive. Thus, there exists a constant $\tilde c >0$ such that $ { \Sigma}_{\theta \theta} - (\BS{ \Sigma}_{ \BS{\beta}\theta})^{\top} \BS{ \Sigma}_{\BS{\beta}\BS{\beta}} ^{-1}\BS{ \Sigma}_{\BS{\beta} \theta} > \tilde c$. On the other hand, Conditions \ref{coninicox}(i) and \ref{coninimu}(vi) entail that $ { \Sigma}_{\theta \theta} - (\BS{ \Sigma}_{ \BS{\beta}\theta})^{\top} \BS{ \Sigma}_{\BS{\beta}\BS{\beta}} ^{-1}\BS{ \Sigma}_{\BS{\beta} \theta} \le 4(1+K_{\BS{Z}}K_{\BS{\mu}})$. 

Recall that we have derived in \eqref{pf-thm1-eq4} that $\| \BS{E}_{\BS{\beta} \theta} \|_{\infty} = O(\sqrt{\log (np) /n} )$. Similarly, we can obtain that $| {E}_{\theta \theta}| \leq O(\sqrt{\log (np) /n})$, and thus 
\[
| {E}_{\theta \theta}| + \|\BS{E}_{ \BS{\beta}\theta}\|_{\infty} \|\BS{ \Sigma}_{\BS{\beta}\BS{\beta}} ^{-1}\BS{ \Sigma}_{\BS{\beta} \theta} \|_1 = O(\sqrt{\log (np) /n}).
\]
As a result, there exist some constants $c_1 < c_2$ such that
\begin{equation} \label{eq: new-bound-J0}
0< c_1 < \tilde c - O(\sqrt{\log (np) /n}) \le - J^* \le 4(1+K_{\BS{Z}}K_{\BS{\mu}}) + O(\sqrt{\log (np) /n}) < c_2.
\end{equation}
Hence, the first part of Lemma \ref{lmidentification} is established.

To prove the second part of Lemma \ref{lmidentification}, observe that for all $\theta \in \mathcal{N}_{\theta_0}^*$,
\begin{align}
	&  \frac{\partial  \mathbb{E}[\Phi (\{\mathcal{O}_i\}_{i =1}^n; \theta, \BS{\eta}_{a0})]}{ \partial \theta}  \nonumber\\
	&= \mathbb{E} \bigg[\frac{\partial  [\Phi (\{\mathcal{O}_i\}_{i =1}^n; \theta, \BS{\eta}_{a0})]}{\partial \theta} \bigg] \nonumber\\
	&= \mathbb{E} \bigg[- \frac{1}{n} \sum_{i=1}^n \int_0^{\tau}  \sum_{j=1}^n
	( D_j - \bar D(t,\theta, \BS{\beta}_0))^{ 2} w_j(t, \theta, \BS{\beta}_0) dN_i(t) \nonumber\\
	& \quad + {\BS{\mu}_{a0}}^{\top} \frac{1}{n} \sum_{i=1}^n \int_0^{\tau}  \sum_{j=1}^n  ( D_j - \bar D(t,\theta, \BS{\beta}_0)) (\BS{Z}_j - \bar {\BS{Z}} (t,\theta, \BS{\beta}_0)) w_j(t, \theta, \BS{\beta}_0)  dN_i(t)  \bigg]. \label{eq-new-Phi-deri-theta}
\end{align}
By the mean value theorem and recalling that $\mathbb{E} [\Phi (\{\mathcal{O}_i\}_{i=1}^n;\theta_0, \BS{\eta}_0)] =0$, it holds that 
\begin{eqnarray*}
	\mathbb{E} [\Phi (\{\mathcal{O}_i\}_{i=1}^n;\theta, \BS{\eta}_{a0})]
	= \frac{\partial \mathbb{E}[\Phi (\{\mathcal{O}_i\}_{i =1}^n; \theta, \BS{\eta}_{a0})]}{\partial \theta} \Big|_{\theta=\tilde \theta} \cdot (\theta-\theta_0),
\end{eqnarray*}
where $\tilde \theta$ lies between $\theta$ and $\theta_0$. Since $ \theta \in \mathcal{N}_{\theta_0}^*$, it follows that $ \tilde\theta \in \mathcal{N}_{\theta_0}^*$ and
\begin{eqnarray*}
	\big| \mathbb{E} [\Phi (\{\mathcal{O}_i\}_{i=1}^n;\theta, \BS{\eta}_{a0})] \big| = \bigg|\frac{\partial \mathbb{E}[\Phi (\{\mathcal{O}_i\}_{i =1}^n; \theta, \BS{\eta}_{a0})]}{\partial \theta} \Big|_{\theta=\tilde \theta} \bigg| \cdot \big| \theta-\theta_0 \big|.
\end{eqnarray*}

The local separation now follows from the continuity. Since $\frac{\partial \mathbb{E}[\Phi (\{\mathcal{O}_i\}_{i =1}^n; \theta, \BS{\eta}_{a0})]}{\partial \theta}$ is continuous with respect to $\theta$ and is equal to $J^*$ when $\theta=\theta_0$, combining this with the fact that $c_1 < - J^* < c_2$, we can obtain that when $n$ is sufficiently large,
\begin{equation} \label{eq-Phi-derivative}
\inf_{\theta \in \mathcal{N}_{\theta_0}^*} \bigg( - \frac{\partial \mathbb{E}[\Phi (\{\mathcal{O}_i\}_{i =1}^n; \theta, \BS{\eta}_{a0})]}{\partial \theta} \bigg) \ge |J^*|/2,
\end{equation}
and thus 
\begin{equation*}
\inf_{\theta \in \mathcal{N}_{\theta_0}^*} \bigg| - \frac{\partial \mathbb{E}[\Phi (\{\mathcal{O}_i\}_{i =1}^n; \theta, \BS{\eta}_{a0})]}{\partial \theta} \bigg| \ge |J^*|/2.
\end{equation*}
Consequently, it holds that 
\[
2 \big| \mathbb{E} [\Phi (\{\mathcal{O}_i\}_{i=1}^n;\theta, \BS{\eta}_{a0})] \big| \ge |J^*(\theta-\theta_0)|, ~\forall~ \theta \in \mathcal{N}_{\theta_0}^*.
\]
This concludes the proof of Lemma \ref{lmidentification}.

\subsection{Lemma \ref{lmentropy1} and its proof} \label{SecB.11}

Lemma \ref{lmentropy1} below controls the empirical error in the derivative of the orthogonal score with respect to $\theta$. This bound is employed in the linearization of the estimating equation for $\check\theta_{1,k}$.

\begin{lemma}\label{lmentropy1}
	Assume that Conditions \ref{coninicox}--\ref{contheta} are satisfied. Then we have that 
	\begin{align*}
		& \big|  \sqrt{m}  \big(J_0 -   \partial_{\theta}  \Phi (\{\mathcal{O}_i\}_{i \in I_k}; \theta_0, \BS{\eta}_{a0})\big)   ( \check{\theta}_{1, k}- \theta_0 ) \big|  =    O_p \big(  s_{\BS{\beta}_0} n^{-1/2}   {\log (np)}   \big).
	\end{align*}
\end{lemma}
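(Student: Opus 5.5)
The plan is to write the quantity as a product of two factors and bound each separately:
\[
\big|\sqrt{m}\,(J_0-\partial_\theta\Phi)(\check\theta_{1,k}-\theta_0)\big|\le |J_0-\partial_\theta\Phi(\{\mathcal{O}_i\}_{i\in I_k};\theta_0,\BS{\eta}_{a0})|\cdot\sqrt{m}\,|\check\theta_{1,k}-\theta_0| .
\]
Step 1 of the proof of Theorem \ref{ththeta} places $\check\theta_{1,k}$ in $\mathcal{N}_{\theta_0}^*$ with probability tending to one. Hence the second factor is $O_p(s_{\BS{\beta}_0}\sqrt{\log(np)})$; we do not use a $\sqrt{m}$-rate for $\check\theta_{1,k}$ here, since that would be circular. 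It then suffices to show
\[
|J_0-\partial_\theta\Phi(\{\mathcal{O}_i\}_{i\in I_k};\theta_0,\BS{\eta}_{a0})|=O_p(\sqrt{\log(np)/n}).
\]
The product of the two rates is then exactly $s_{\BS{\beta}_0}\log(np)/\sqrt n$, which is the claimed bound.

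For the derivative, the explicit formula \eqref{eq-new-Phi-deri-theta}, taken at $\theta=\theta_0$ and restricted to block $I_k$, gives
\[
\partial_\theta\Phi=-\frac1m\sum_{i\in I_k}\int_0^\tau\Big[\sum_{j\in I_k}(D_j-\bar D_k)^2w_{k,j}-\BS{\mu}_{a0}^\top\sum_{j\in I_k}(D_j-\bar D_k)(\BS{Z}_j-\bar{\BS{Z}}_k)w_{k,j}\Big]dN_i(t).
\]
The same Doob--Meyer reduction as in \eqref{eq-neww-J0} shows that its expectation equals $J_0$ up to the $O(\sqrt{\log(np)/n})$ discrepancies $E_{\theta\theta}$ and $\BS{E}_{\BS{\beta}\theta}$ between $\BS J$ and $\BS\Sigma$. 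These discrepancies are bounded as in \eqref{pf-thm1-eq4}, and they are multiplied by $\|\BS{\mu}_{a0}\|_1\le K_{\BS{\mu}}$ from Condition \ref{coninimu}(vi). This reduces the problem to the centered fluctuation $\partial_\theta\Phi-\mathbb{E}\partial_\theta\Phi$.

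For that fluctuation, we replace the risk-set averages $\bar D_k$ and $\bar{\BS{Z}}_k$ by their population counterparts $U_D$ and $\BS{U}_{\BS{Z}}$. On the event $\mathcal{E}$ of \eqref{eq-neww-event-eps}, applied to block $I_k$ through Lemma 7 of \cite{yu2021confidence}, each replacement costs $O(\sqrt{\log(np)/n})$ uniformly in $t$, because all integrands are bounded by Condition \ref{coninicox}(i). The $j$-sum $\sum_j(\cdot)w_{k,j}$ is a ratio of two empirical averages. Concentration of its numerator and denominator gives the same rate for the $j$-sum, uniformly in $t$, with the denominator bounded below by Condition \ref{coninimu}(iii).

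What remains is $m^{-1/2}\mathbb{G}_m$ applied to integrals against $dN_i$ of deterministic bounded functions: a scalar one, and a $p$-vector one contracted with $\BS{\mu}_{a0}$. Hoeffding's inequality with a union bound over $p$ coordinates, as in the bound \eqref{pf-lemma1-eq10} for $\Delta_3$ or through Lemma \ref{concentrationformg}, gives $O_p(\sqrt{\log(np)/n})$, and again $\|\BS{\mu}_{a0}\|_1$ is bounded.

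I expect the main obstacle to be the risk-set dependence in the double sum over $i$ and $j$, which prevents treating $\partial_\theta\Phi$ as an i.i.d.\ average. The approach is to handle it by the uniform-in-$t$ replacement argument on the event $\mathcal{E}$ before applying empirical-process bounds, with the complement $\mathcal{E}^c$ contributing $O(n^{-1})$ by boundedness.
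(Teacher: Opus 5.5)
Your proposal is correct and follows the paper's proof. It uses the same product bound, the same placement of $\check\theta_{1,k}$ in $\mathcal{N}_{\theta_0}^*$ to get $\sqrt{m}\,|\check\theta_{1,k}-\theta_0|=O(s_{\BS{\beta}_0}\sqrt{\log(np)})$, and the same $O_p(\sqrt{\log(np)/n})$ Hessian concentration combined with $\|\BS{\mu}_{a0}\|_1\le K_{\BS{\mu}}$. The paper simply imports $|\ddot l_{\theta\theta}+\Sigma_{\theta\theta}|$ and $\|\ddot l_{\BS{\beta}\theta}+\BS{\Sigma}_{\BS{\beta}\theta}\|_\infty=O_p(n^{-1/2}\sqrt{\log(np)})$ from the proof of Proposition~3 of \cite{yu2021confidence}, where you re-derive them. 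In that re-derivation, use Hoeffding for the deterministic integrands: Lemma~\ref{concentrationformg} requires a zero weighted risk-set mean, so your alternative via that lemma does not apply there.
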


\noindent\textit{Proof}. The proof of this lemma will decompose the derivative error into a scalar Hessian fluctuation and a cross-Hessian fluctuation weighted by $\BS{\mu}_{a0}$. The concentration bounds for the Cox Hessian and the boundedness of $\|\BS{\mu}_{a0}\|_1$ then yield the stated order after multiplying by the local rate of $\check\theta_{1,k}-\theta_0$.

Recall that $ J_0 = -\Sigma_{\theta \theta} + \BS{\mu}_{a0}^{\top} \BS{\Sigma}_{\BS{\beta} \theta}$. It follows from the definition of $\Phi(\{\mathcal{O}_i\}_{i \in I_k}; \theta, \BS{\eta} )$ in \eqref{eqortho} that
$$
 \partial_{\theta} \Phi (\{\mathcal{O}_i\}_{i \in I_k}; \theta_0, \BS{\eta}_{a0}) = \ddot l_{\theta \theta} (\{\mathcal{O}_i\}_{i \in I_k}; \theta_0, \BS{\eta}_{a0}) - \bmu_{a0} ^{\top} \ddot{l}_{\BS{\beta} \theta} (\{\mathcal{O}_i\}_{i \in I_k}; \theta_0, \BS{\eta}_{a0}).
$$
Then it holds that 
\begin{equation} \label{eq-new-pf-lemma3-1}
    \begin{split}
        & \big|  \sqrt{m}  \big(J_0 -   \partial_{\theta}  \Phi (\{\mathcal{O}_i\}_{i \in I_k}; \theta_0, \BS{\eta}_{a0})\big)   ( \check{\theta}_{1, k}- \theta_0 ) \big| \\
        & \leq \sqrt{m}\big|   \check{\theta}_{1, k}- \theta_0   \big|    \cdot \big|\big(J_0 -   \partial_{\theta}  \Phi (\{\mathcal{O}_i\}_{i \in I_k}; \theta_0, \BS{\eta}_{a0})\big)   \big| \\
        & \leq \sqrt{m}\big|   \check{\theta}_{1, k}- \theta_0   \big|   \cdot \Big(\big|  \ddot l_{\theta \theta} (\{\mathcal{O}_i\}_{i \in I_k}; \theta_0, \BS{\eta}_{a0}) + \Sigma_{\theta \theta} \big| \\
        &\quad+ \big| \bmu_{a0}^{\top}\big( \ddot l_{\BS{\beta} \theta} (\{\mathcal{O}_i\}_{i \in I_k}; \theta_0, \BS{\eta}_{a0}) + \BS{\Sigma}_{\BS{\beta}  \theta} \big) \big|\Big).
    \end{split}
\end{equation}

In addition, from the proof of Proposition 3 in \cite{yu2021confidence}, we have that 
\begin{equation} \label{eq-new-pf-lemma3-2}
    \big| \ddot{l}_{\theta \theta} + \Sigma_{\theta \theta} \big| = O_p( n^{-1/2} \sqrt{\log (np)} ),
\end{equation}
\begin{equation} \label{eq-new-pf-lemma3-3}
    \| \ddot{l}_{\BS{\beta} \theta} + \BS{\Sigma}_{\BS{\beta} \theta} \|_{\infty} = O_p( n^{-1/2} \sqrt{\log (np)} ),
\end{equation}
and
\begin{equation} \label{eq-new-pf-lemma3-4}
    \| \ddot{l}_{\BS{\beta} \BS{\beta}} + \BS{\Sigma}_{\BS{\beta} \BS{\beta}} \|_{\max} = O_p( n^{-1/2} \sqrt{\log (np)} ).
\end{equation}
Consequently, since $ \|\bmu_{a0}\|_1 \leq K_{\mu} $, we can obtain from \eqref{eq-new-pf-lemma3-1}--\eqref{eq-new-pf-lemma3-3} that for $ \check{\theta}_{1, k} \in \mathcal{N}_{\theta_0}^*$,
\begin{equation*}
    \begin{split}
         & \big|  \sqrt{m}  \big(J_0 -   \partial_{\theta}  \Phi (\{\mathcal{O}_i\}_{i \in I_k}; \theta_0, \BS{\eta}_{a0})\big)   ( \check{\theta}_{1, k}- \theta_0 ) \big| \\
         & \leq O_p( s_{\BS{\beta}_0} \sqrt{\log (np)} ( n^{-1/2} \sqrt{\log(np)} +  K_{\mu}n^{-1/2} \sqrt{\log(np)} ) ) \\
         & = O_p \big(  s_{\BS{\beta}_0} n^{-1/2}   {\log (np)}   \big).
    \end{split}
\end{equation*}
This completes the proof of Lemma \ref{lmentropy1}.

\subsection{Lemma \ref{newlemma1} and its proof} \label{Sec:pf-newLemma1}

Lemma \ref{newlemma1} below bounds the empirical fluctuation of the nuisance derivative of the orthogonal score. This result will be needed to show that estimating $\BS{\eta}_0$ does not create a first-order remainder in the expansion of the score.

\begin{lemma} \label{newlemma1}
Assume that Conditions \ref{coninicox}--\ref{contheta} are satisfied. Then we have that 
    \begin{align*}
        & \big| \sqrt{m} \big( \partial_{\BS{\eta}}   \Phi (\{\mathcal{O}_i\}_{i \in I_k}; \theta_0, \BS{\eta}_{a0}) -  \partial_{\BS{\eta}}  \mathbb{E}\Phi (\{\mathcal{O}_i\}_{i \in I_k}; \theta_0, \BS{\eta}_{a0})\big)^{\top} ( \hat{\BS{\eta}}_k - \BS{\eta}_{a0} )  \big| \\
        &= O_p \Big(  {\|\BS{\Sigma}}^{-1}_{\BS{\beta}\BS{\beta}} \|_{\infty}  s_{{\BS{\mu}}_{a0}} \big(s_{\BS{\beta}_0} n^{-1/2} \log (np)  +  \tau_n \sqrt{\log (np)} \big)  \Big).
    \end{align*}
\end{lemma}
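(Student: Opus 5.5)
The plan starts from the explicit gradient of the score in the nuisance, computed in the proof of Proposition \ref{lmnearortho}:
\[
\partial_{\BS{\eta}}\Phi(\{\mathcal{O}_i\}_{i\in I_k};\theta_0,\BS{\eta}_{a0})=\Big(\ddot l_{\BS{\beta}\theta}-\ddot l_{\BS{\beta}\BS{\beta}}\BS{\mu}_{a0},\; -\dot l_{\BS{\beta}}\Big),
\]
with every term evaluated on fold $I_k$ at $(\theta_0,\BS{\beta}_0)$. Its expectation is given by \eqref{eq-new-E-Phi}, namely $(-\BS{J}_{\BS{\beta}\theta}+\BS{J}_{\BS{\beta}\BS{\beta}}\BS{\mu}_{a0},\BS{0})$. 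We split $\hat{\BS{\eta}}_k-\BS{\eta}_{a0}$ into its $\BS{\beta}$-block and its $\BS{\mu}$-block. Hölder's inequality then bounds the target quantity by $\sqrt m\,(T_1+T_2)$, where
\[
T_1=\big\|(\ddot l_{\BS{\beta}\theta}+\BS{J}_{\BS{\beta}\theta})-(\ddot l_{\BS{\beta}\BS{\beta}}+\BS{J}_{\BS{\beta}\BS{\beta}})\BS{\mu}_{a0}\big\|_\infty\,\|\hat{\BS{\beta}}_k-\BS{\beta}_0\|_1,\qquad T_2=\|\dot l_{\BS{\beta}}\|_\infty\,\|\hat{\BS{\mu}}_{ak}-\BS{\mu}_{a0}\|_1 .
\]
Cross-fitting makes $\hat{\BS{\eta}}_k$ independent of $\{\mathcal{O}_i\}_{i\in I_k}$. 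We may therefore condition on $I_k^c$ and treat the nuisance errors as fixed vectors in $\mathcal{T}_N$. By Propositions \ref{lminicox} and \ref{lminimut} (applied on $I_k^c$, which has size $\asymp n$), this happens with probability tending to one.

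For $T_1$ we use the Hessian concentration bounds \eqref{eq-new-pf-lemma3-3}--\eqref{eq-new-pf-lemma3-4}, taken from the proof of Proposition 3 in \cite{yu2021confidence}. They give $\|\ddot l_{\BS{\beta}\theta}+\BS{\Sigma}_{\BS{\beta}\theta}\|_\infty$ and $\|\ddot l_{\BS{\beta}\BS{\beta}}+\BS{\Sigma}_{\BS{\beta}\BS{\beta}}\|_{\max}$ both of order $O_p(\sqrt{\log(np)/n})$. Passing from $\BS{\Sigma}$ to $\BS{J}$ costs another $O(\sqrt{\log(np)/n})$ by \eqref{pf-thm1-eq4} and \eqref{pf-thm1-eq42}. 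Using $\|\BS{A}\BS{\mu}_{a0}\|_\infty\le\|\BS{A}\|_{\max}\|\BS{\mu}_{a0}\|_1$ together with $\|\BS{\mu}_{a0}\|_1\le K_{\BS{\mu}}$ from Condition \ref{coninimu}(vi), the sup-norm factor in $T_1$ is $O_p(\sqrt{\log(np)/n})$. The $\ell_1$ factor is $O_p(s_{\BS{\beta}_0}\sqrt{\log(np)/n})$ by Proposition \ref{lminicox}. Hence $\sqrt m\,T_1=O_p(s_{\BS{\beta}_0}\log(np)/\sqrt n)$. This is already dominated by the claimed rate, since $\|\BS{\Sigma}^{-1}_{\BS{\beta}\BS{\beta}}\|_\infty\ge 1/\Lambda_{p+1}$ is bounded below and $s_{\BS{\mu}_{a0}}\ge1$; if $S_{\BS{\mu}_{a0}}=\emptyset$ we absorb constants.

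For $T_2$, Lemma 1(i) of \cite{yu2021confidence}, applied to the fold-$I_k$ score, gives $\|\dot l_{\BS{\beta}}(\{\mathcal{O}_i\}_{i\in I_k};\theta_0,\BS{\beta}_0)\|_\infty=O_p(\sqrt{\log(np)/n})$. This is a martingale-based maximal inequality over the $p$ coordinates, with the risk-set average $\bar{\BS{Z}}_k$ handled through Lemma \ref{concentrationformg}. Proposition \ref{lminimut} gives $\|\hat{\BS{\mu}}_{ak}-\BS{\mu}_{a0}\|_1=O_p(\|\BS{\Sigma}^{-1}_{\BS{\beta}\BS{\beta}}\|_\infty s_{\BS{\mu}_{a0}}(s_{\BS{\beta}_0}\sqrt{\log(np)/n}+\tau_n))$. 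Multiplying the two bounds by $\sqrt m$ yields exactly
\[
\|\BS{\Sigma}^{-1}_{\BS{\beta}\BS{\beta}}\|_\infty s_{\BS{\mu}_{a0}}\big(s_{\BS{\beta}_0}\log(np)/\sqrt n+\tau_n\sqrt{\log(np)}\big).
\]
Adding the bounds for $T_1$ and $T_2$ completes the proof.

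The main obstacle I expect is justifying the concentration bounds on the fold $I_k$ alone. The risk-set averages $\bar D_k$ and $\bar{\BS{Z}}_k$ are built from the $m$ observations in $I_k$, so the summands are not independent. The workaround is to compare with the population quantities $U_D$ and $\BS{U}_{\BS{Z}}$ via Lemma 7 of \cite{yu2021confidence} on the event $\mathcal{E}$ of \eqref{eq-neww-event-eps} (defined with $m$ in place of $n$). On that event, the remaining sums are i.i.d. bounded terms, to which Hoeffding's inequality and a union bound over $p^2$ entries apply.
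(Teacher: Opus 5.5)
Your proposal is correct and is essentially the paper's proof. You write out $\partial_{\BS{\eta}}\Phi$, subtract its expectation from \eqref{eq-new-E-Phi}, and control the centered derivative in sup-norm via the Hessian concentration bounds \eqref{eq-new-pf-lemma3-3}--\eqref{eq-new-pf-lemma3-4}, the $\BS{J}$-versus-$\BS{\Sigma}$ bounds \eqref{pf-thm1-eq4}--\eqref{pf-thm1-eq42}, and Lemma 1 of \cite{yu2021confidence} for $\|\dot l_{\BS{\beta}}\|_\infty$, then apply H\"older against the $\ell_1$ nuisance rates of Propositions \ref{lminicox} and \ref{lminimut}. The only difference is bookkeeping: you pair each derivative block with its own error block instead of using one $\|\cdot\|_\infty\|\cdot\|_1$ bound on the stacked vector, which makes explicit what the paper leaves implicit, namely that the $\BS{\beta}$-block contributes only $O_p(s_{\BS{\beta}_0}\log(np)/\sqrt{n})$ and is dominated by the stated rate.
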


\noindent\textit{Proof}. The proof of this lemma will write the nuisance derivative explicitly and subtract its expectation. The resulting terms are controlled by the Hessian concentration, the near-orthogonality population bounds, and the $\ell_1$ rates for $\hat{\BS{\eta}}_k-\BS{\eta}_{a0}$. A key point is that the multiplier $\sqrt m$ is offset by the high-dimensional rates imposed in the assumptions.

In light of the definition of $\Phi(\{\mathcal{O}_i\}_{i \in I_k}; \theta, \BS{\eta} )$ in \eqref{eqortho}, it holds that
\begin{equation}
    \begin{split}
        \partial_{\BS{\eta}} \Phi (\{\mathcal{O}_i\}_{i \in I_k}; \theta_0, \BS{\eta}_{a0}) = \begin{pmatrix}
         \ddot{l}_{ \BS{\beta} \theta} (\{\mathcal{O}_i\}_{i \in I_k}; \theta_0, \BS{\beta}_0) - \ddot{l}_{ \BS{\beta} \BS{\beta}} ( \{\mathcal{O}_i\}_{i \in I_k}; \theta_0, \BS{\beta}_0) \bmu_{a0} \\
          - \dot{l}_{ \BS{\beta}} (\{\mathcal{O}_i\}_{i \in I_k}; \theta_0, \BS{\beta}_0)
        \end{pmatrix}.
    \end{split}
\end{equation}
Then it follows from \eqref{eq-new-E-Phi} that 
\begin{equation} \label{eq-new-pf-lemma4-0}
    \begin{split}
        & \partial_{\BS{\eta}}   \Phi (\{\mathcal{O}_i\}_{i \in I_k}; \theta_0, \BS{\eta}_{a0}) -  \partial_{\BS{\eta}}  \mathbb{E}\Phi (\{\mathcal{O}_i\}_{i \in I_k}; \theta_0, \BS{\eta}_{a0})  \\
        & = \begin{pmatrix}
         \ddot{l}_{ \BS{\beta} \theta} (\{\mathcal{O}_i\}_{i \in I_k}; \theta_0, \BS{\beta}_0) + \BS{J}_{\BS{\beta} \theta} - (\ddot{l}_{ \BS{\beta} \BS{\beta}} ( \{\mathcal{O}_i\}_{i \in I_k}; \theta_0, \BS{\beta}_0) + \BS{J}_{\BS{\beta} \BS{\beta}}) \bmu_{a0} \\
          - \dot{l}_{ \BS{\beta}} (\{\mathcal{O}_i\}_{i \in I_k}; \theta_0, \BS{\beta}_0)
        \end{pmatrix}.
    \end{split}
\end{equation}
Using the definition of $ \bmu_{a0} = \BS{\Sigma}_{\BS{\beta} \BS{\beta}}^{-1} \BS{\Sigma}_{\BS{\beta} \theta} $, we can deduce that 
\begin{equation} \label{eq-new-pf-lemma4-1}
    \begin{split}
        &  \ddot{l}_{ \BS{\beta} \theta} (\{\mathcal{O}_i\}_{i \in I_k}; \theta_0, \BS{\beta}_0) + \BS{J}_{\BS{\beta} \theta} - (\ddot{l}_{ \BS{\beta} \BS{\beta}} ( \{\mathcal{O}_i\}_{i \in I_k}; \theta_0, \BS{\beta}_0) + \BS{J}_{\BS{\beta} \BS{\beta}}) \bmu_{a0}  \\
        & = \ddot{l}_{ \BS{\beta} \theta} (\{\mathcal{O}_i\}_{i \in I_k}; \theta_0, \BS{\beta}_0)+ \BS{\Sigma}_{\BS{\beta} \theta} - (\ddot{l}_{ \BS{\beta} \BS{\beta}} ( \{\mathcal{O}_i\}_{i \in I_k}; \theta_0, \BS{\beta}_0) + \BS{\Sigma}_{\BS{\beta} \BS{\beta}}) \bmu_{a0} \\
        & \quad + (\BS{J}_{\BS{\beta} \theta} -\BS{\Sigma}_{\BS{\beta} \theta}  ) - (\BS{J}_{\BS{\beta} \BS{\beta}} - \BS{\Sigma}_{\BS{\beta} \BS{\beta}})  \bmu_{a0}.
    \end{split}
\end{equation}

Substituting the bounds in \eqref{eq-new-pf-lemma3-2}--\eqref{eq-new-pf-lemma3-4} and \eqref{pf-thm1-eq4}--\eqref{pf-thm1-eq42} into \eqref{eq-new-pf-lemma4-1} yields that 
\begin{equation} \label{eq-new-pf-lemma4-2}
    \begin{split}
         & \| \ddot{l}_{ \BS{\beta} \theta} (\{\mathcal{O}_i\}_{i \in I_k}; \theta_0, \BS{\beta}_0) + \BS{J}_{\BS{\beta} \theta} - (\ddot{l}_{ \BS{\beta} \BS{\beta}} ( \{\mathcal{O}_i\}_{i \in I_k}; \theta_0, \BS{\beta}_0) + \BS{J}_{\BS{\beta} \BS{\beta}}) \bmu_{a0} \|_{\infty} \\
         & \leq O_p( n^{-1/2} \sqrt{\log (np)}  ).
    \end{split}
\end{equation}
Moreover, it follows from Lemma 1 in \cite{yu2021confidence} that
\begin{equation} \label{eq-new-pf-lemma4-3}
   \| \dot{l}_{ \BS{\beta}} (\{\mathcal{O}_i\}_{i \in I_k}; \theta_0, \BS{\beta}_0) \|_{\infty} = O_p(n^{-1/2} \sqrt{\log (np)}).
\end{equation}
Consequently, combining \eqref{eq-new-pf-lemma4-0}, \eqref{eq-new-pf-lemma4-2}, and \eqref{eq-new-pf-lemma4-3}, we can obtain that 
\begin{equation}
    \begin{split}
        &  \big| \sqrt{m} \big( \partial_{\BS{\eta}}   \Phi (\{\mathcal{O}_i\}_{i \in I_k}; \theta_0, \BS{\eta}_{a0}) -  \partial_{\BS{\eta}}  \mathbb{E}\Phi (\{\mathcal{O}_i\}_{i \in I_k}; \theta_0, \BS{\eta}_{a0})\big)^{\top} ( \hat{\BS{\eta}}_k - \BS{\eta}_{a0} )  \big| \\
         & \leq  \sqrt{m} \big\|\partial_{\BS{\eta}}   \Phi (\{\mathcal{O}_i\}_{i \in I_k}; \theta_0, \BS{\eta}_{a0}) -  \partial_{\BS{\eta}}  \mathbb{E}\Phi (\{\mathcal{O}_i\}_{i \in I_k}; \theta_0, \BS{\eta}_{a0})\big\|_{\infty} \big\| \hat{\BS{\eta}}_k - \BS{\eta}_{a0}\big\|_{1} \\
         & = O_p(\sqrt{\log (np)} {\|\BS{\Sigma}}^{-1}_{\BS{\beta}\BS{\beta}} \|_{\infty}  s_{{\BS{\mu}}_{a0}} (s_{\BS{\beta}_0} \sqrt{\log (np) /n} + \tau_n)  ).
    \end{split}
\end{equation}
This concludes the proof of Lemma \ref{newlemma1}.

\subsection{Lemma \ref{lmderivative} and its proof} \label{SecB.12}

Lemma \ref{lmderivative} below controls the second-order Taylor remainder of the orthogonal score uniformly over the local parameter set. It ensures that the Taylor expansion used for the asymptotic normality has an asymptotically negligible nonlinear remainder. Recall that $\mathcal{T}_N =\{\BS{\eta}= (\BS{\beta}^{\top}, \BS{\mu}_a^{\top})^{\top} \in \mathcal{B} \times \mathcal{U} \subset \mathbb{R}^{2p}: \, \|\BS{\beta}-\BS{\beta}_0\|_1 \le \tilde C s_{\BS{\beta}_0} \sqrt{\log (np)/n}, \, \|\BS{\mu}_a-\BS{\mu}_{a0} \|_1 \le \tilde C {\|\BS{\Sigma}}^{-1}_{\BS{\beta}\BS{\beta}} \|_{\infty} s_{{\BS{\mu}}_{a0}} (s_{\BS{\beta}_0} \sqrt{\log (np) /n} + \tau_n) \}$.

\begin{lemma}\label{lmderivative}
	Assume that Conditions \ref{coninicox}(i) and \ref{coninimu}(v)--(vi) are satisfied. Then we have that for any $k \in \{1, \cdots, K\}$, 
	\begin{align*}
		&  \sup_{r \in (0, 1), \theta \in \mathcal{N}_{\theta_0}^*, \BS{\eta} \in \mathcal{T}_{N}} \bigg| \frac{\partial^2 \mathbb{E}[  \Phi (\{\mathcal{O}_i\}_{i \in I_k};\theta_0+r(\theta-\theta_0), \BS{\eta}_{a0}+r(\BS{\eta}-\BS{\eta}_{a0})) ]}{\partial r^2 }  \bigg| \\
        &\le  O \big( {\|\BS{\Sigma}}^{-1}_{\BS{\beta}\BS{\beta}} \|_{\infty}  s_{{\BS{\mu}}_{a0}} s_{\BS{\beta}_0}^2   \log (np)/n  +  \tau_n  {\|\BS{\Sigma}}^{-1}_{\BS{\beta}\BS{\beta}} \|_{\infty}  s_{{\BS{\mu}}_{a0}} s_{\BS{\beta}_0} \sqrt{\log (np) /n}  \big).
	\end{align*}
\end{lemma}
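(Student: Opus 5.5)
The plan is to compute the second derivative in $r$ explicitly along the path $(\theta_r,\BS{\beta}_r,\BS{\mu}_r)=(\theta_0,\BS{\beta}_0,\BS{\mu}_{a0})+r(\Delta\theta,\Delta\BS{\beta},\Delta\BS{\mu})$. Here $\Delta\theta=\theta-\theta_0$, $\Delta\BS{\beta}=\BS{\beta}-\BS{\beta}_0$ and $\Delta\BS{\mu}=\BS{\mu}_a-\BS{\mu}_{a0}$. We then bound each resulting term uniformly over $\mathcal{N}_{\theta_0}^*\times\mathcal{T}_N$, using only the deterministic structure of the Cox score and bounded covariates. Since $\Phi=\dot l_\theta-\BS{\mu}^\top\dot l_{\BS{\beta}}$ is linear in $\BS{\mu}$, write $\BS{v}=(\Delta\theta,\Delta\BS{\beta}^\top)^\top$ and $g(r)=\dot l_\theta(\theta_r,\BS{\beta}_r)$, $\BS{h}(r)=\dot l_{\BS{\beta}}(\theta_r,\BS{\beta}_r)$. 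Then
\[
\frac{\partial^2\Phi}{\partial r^2}=g''(r)-\BS{\mu}_r^\top\BS{h}''(r)-2\Delta\BS{\mu}^\top\BS{h}'(r).
\]
We interchange expectation and differentiation by dominated convergence, which is justified by Condition \ref{coninicox}(i) and the bounded weights. It then suffices to bound the three pieces pathwise, uniformly in $r\in(0,1)$.

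\textbf{Step 1 (first-derivative cross term).} We have $\BS{h}'(r)=\ddot l_{\BS{\beta}(\theta,\BS{\beta})}(\theta_r,\BS{\beta}_r)\BS{v}$, and every entry of the Cox Hessian is a risk-set weighted covariance of $(D,\BS{Z})$, hence bounded by $4K_{\BS{Z}}^2$. Therefore $\|\BS{h}'(r)\|_\infty\le 4K_{\BS{Z}}^2\|\BS{v}\|_1$. This gives
\[
|\Delta\BS{\mu}^\top\BS{h}'(r)|\le 4K_{\BS{Z}}^2\|\Delta\BS{\mu}\|_1(|\Delta\theta|+\|\Delta\BS{\beta}\|_1).
\]
On $\mathcal{T}_N$ and $\mathcal{N}_{\theta_0}^*$ we have $\|\BS{v}\|_1\lesssim s_{\BS{\beta}_0}\sqrt{\log(np)/n}$ and $\|\Delta\BS{\mu}\|_1\lesssim\|\BS{\Sigma}^{-1}_{\BS{\beta}\BS{\beta}}\|_\infty s_{\BS{\mu}_{a0}}(s_{\BS{\beta}_0}\sqrt{\log(np)/n}+\tau_n)$. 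Their product is exactly the stated bound
\[
\|\BS{\Sigma}^{-1}_{\BS{\beta}\BS{\beta}}\|_\infty s_{\BS{\mu}_{a0}}s_{\BS{\beta}_0}^2\log(np)/n+\tau_n\|\BS{\Sigma}^{-1}_{\BS{\beta}\BS{\beta}}\|_\infty s_{\BS{\mu}_{a0}}s_{\BS{\beta}_0}\sqrt{\log(np)/n}.
\]
This term is the dominant one.

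\textbf{Step 2 (third-order Cox terms).} We have $g''(r)=\sum_{a,b}\partial^3_{\theta ab}l\,v_av_b$, and $\BS{h}''$ is defined similarly. The third derivatives of the log-partial likelihood are integrals against $dN_i$ of third central moments of $(D,\BS{Z})$ under the risk-set weights $w_j(t,\theta_r,\BS{\beta}_r)$. Each such entry is bounded by $8K_{\BS{Z}}^3$, because $|D|\le1$ and $\|\BS{Z}\|_\infty\le K_{\BS{Z}}$, and $n^{-1}\sum_i\int dN_i\le1$. Hence
\[
|g''(r)|\le C\|\BS{v}\|_1^2,\qquad |\BS{\mu}_r^\top\BS{h}''(r)|\le C\|\BS{\mu}_r\|_1\|\BS{v}\|_1^2.
\]
Here $\|\BS{\mu}_r\|_1\le K_{\BS{\mu}}+\|\Delta\BS{\mu}\|_1=O(1)$ by Condition \ref{coninimu}(vi) and (v). Both pieces are therefore $O(s_{\BS{\beta}_0}^2\log(np)/n)$.

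\textbf{Step 3 (combine, and the main obstacle).} We take expectations and suprema; all bounds are deterministic, so no concentration is needed. The main obstacle is Step 2: its natural bound $s_{\BS{\beta}_0}^2\log(np)/n$ is not obviously dominated by the stated rate unless $\|\BS{\Sigma}^{-1}_{\BS{\beta}\BS{\beta}}\|_\infty s_{\BS{\mu}_{a0}}\gtrsim1$. I would close this gap by noting the following. If $S_{\BS{\mu}_{a0}}=\emptyset$, then $\BS{\mu}_{a0}=\BS{0}$ and $\Phi$ reduces to $\dot l_\theta$. In every other case $s_{\BS{\mu}_{a0}}\ge1$, and $\|\BS{\Sigma}^{-1}_{\BS{\beta}\BS{\beta}}\|_\infty\ge\Lambda_{p+1}^{-1}$ is bounded below by Condition \ref{coninimu}(iv). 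So the product is bounded below by a constant and the $s_{\BS{\beta}_0}^2\log(np)/n$ term is absorbed into the stated bound up to constants. If that absorption fails, I would instead report the extra $O(s_{\BS{\beta}_0}^2\log(np)/n)$ term separately. It is still $o(n^{-1/2})$ by Condition \ref{coninicox}(iii), and that is all that Theorem \ref{ththeta} needs.
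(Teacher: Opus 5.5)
Your proposal is correct and is essentially the paper's argument. The paper also differentiates $\Phi$ twice along the path and bounds every term deterministically through $\max_j|D_j(\theta-\theta_0)+\BS{Z}_j^{\top}(\BS{\beta}-\BS{\beta}_0)|\le|\theta-\theta_0|+K_{\BS{Z}}\|\BS{\beta}-\BS{\beta}_0\|_1$, which amounts to your entrywise Hessian and third-cumulant bounds. It gets the stated rate from the $\|\BS{\mu}_a-\BS{\mu}_{a0}\|_1\|\BS{v}\|_1$ cross term, and it silently absorbs the leftover $O(s_{\BS{\beta}_0}^2\log(np)/n)$ term that you handle explicitly in Step 3.

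One caveat on that step concerns the case $S_{\BS{\mu}_{a0}}=\emptyset$. There $\mathcal{T}_N$ forces $\BS{\mu}_a=\BS{0}$ and the stated bound is literally zero, yet $g''(r)$ need not vanish (e.g.\ the third risk-set cumulant of $D$ times $(\theta-\theta_0)^2$). So ``$\Phi$ reduces to $\dot l_\theta$'' does not settle that case, and your fallback of carrying the extra $o(n^{-1/2})$ term separately is genuinely needed there.
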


\noindent\textit{Proof}. The proof of this lemma will differentiate the score twice along a path from $(\theta_0,\BS{\eta}_{a0})$ to $(\theta,\BS{\eta})$. Each term in the derivative is bounded using the maximum deviation $|D_j(\theta-\theta_0)+\BS{Z}_j^\top(\BS{\beta}-\BS{\beta}_0)|$, the normalization of the risk-set weights, and the $\ell_1$-bound on $\BS{\mu}_a-\BS{\mu}_{a0}$. The resulting expression is of smaller order under Condition \ref{coninimu}(v).

For any $r \in (0,1)$, $\theta \in \mathcal{N}_{\theta_0}^*$, and $\BS{\eta}=(\BS{\beta}^{\top}, \BS{\mu}_{a}^{\top})^{\top} \in \mathcal{T}_{N}$, denote by 
$d_j(\theta, \BS{\beta}) =D_j(\theta-\theta_0)+\BS{Z}_j^{\top}(\BS{\beta}-\BS{\beta}_0)$ and $v_j(r, \theta, \BS{\beta}) = w_{k, j}(t, \theta_0+r(\theta-\theta_0),\BS{\beta}_0+r(\BS{\beta}-\BS{\beta}_0))$ for $j \in I_k$ and $k \in \{1, \cdots, K\}$. Then it holds that 
\begin{align*}
	\max_{j \in I_k} \sup_{\theta \in \mathcal{N}_{\theta_0}^*, \BS{\eta} \in \mathcal{T}_{N}} |d_j(\theta, \BS{\beta})|
    & \le   \sup_{\theta \in \mathcal{N}_{\theta_0}^*, \BS{\eta} \in \mathcal{T}_{N}} (|\theta-\theta_0|+ K_{\BS{Z}} \| \BS{\beta}-\BS{\beta}_0 \|_1) \\
	& \le  (2C_0+ K_{\BS{Z}} \tilde C)s_{\BS{\beta}_0} \sqrt{\log (np)/n},
\end{align*}
and by definition,
\[
\sum_{j \in I_k} v_j(r, \theta, \BS{\beta}) =1.
\]

A direct but lengthy calculation gives that 
\begin{align*}
	&  \frac{\partial^2 \Phi (\{\mathcal{O}_i\}_{i \in I_k};\theta_0+r(\theta-\theta_0), \BS{\eta}_{a0}+r(\BS{\eta}-\BS{\eta}_{a0}))}{\partial r^2} \\
	&= - \frac{1}{m} \sum_{i \in I_k} \int_0^{\tau} \sum_{j \in I_k} D_j d_j(\theta, \BS{\beta})^2  v_j(r, \theta, \BS{\beta})  dN_i(t) \\
	& \quad +\frac{2}{m} \sum_{i \in I_k} \int_0^{\tau}  \sum_{j \in I_k} D_j d_j(\theta, \BS{\beta}) v_j(r, \theta, \BS{\beta}) \times \sum_{j=1}^n d_j(\theta, \BS{\beta})  v_j(r, \theta, \BS{\beta})  dN_i(t)  \\
	&\quad +\frac{1}{m} \sum_{i \in I_k} \int_0^{\tau}  \sum_{j \in I_k} D_j  v_j(r, \theta, \BS{\beta}) \times \sum_{j=1}^n d_j(\theta, \BS{\beta})^2  v_j(r, \theta, \BS{\beta})  dN_i(t)  \\
	& \quad-\frac{2}{m} \sum_{i \in I_k} \int_0^{\tau} \sum_{j \in I_k} D_j  v_j(r, \theta, \BS{\beta}) \times \Big\{\sum_{j \in I_k} d_j(\theta, \BS{\beta}) v_j(r, \theta, \BS{\beta}) \Big\}^2 dN_i(t)  \\
	& \quad+(\BS{\mu}_a -\BS{\mu}_{a0})^{\top} \frac{2}{m} \sum_{i \in I_k} \int_0^{\tau}  \sum_{j \in I_k} \BS{Z}_j d_j(\theta, \BS{\beta}) v_j(r, \theta, \BS{\beta})  dN_i(t) \\
	& \quad -(\BS{\mu}_a -\BS{\mu}_{a0})^{\top} \frac{2}{m} \sum_{i \in I_k} \int_0^{\tau} \sum_{j \in I_k} \BS{Z}_j  v_j(r, \theta, \BS{\beta}) \times \sum_{j \in I_k} d_j(\theta, \BS{\beta}) v_j(r, \theta, \BS{\beta})  dN_i(t)  \\
	& \quad +(\BS{\mu}_{a0}+ r(\BS{\mu}_a -\BS{\mu}_{a0}))^{\top} \frac{1}{m} \sum_{i \in I_k} \int_0^{\tau} \sum_{j \in I_k} \BS{Z}_j d_j(\theta, \BS{\beta})^2  v_j(r, \theta, \BS{\beta})  dN_i(t) \\
	& \quad-(\BS{\mu}_{a0}+ r(\BS{\mu}_a -\BS{\mu}_{a0}))^{\top}\frac{2}{m} \sum_{i \in I_k} \int_0^{\tau} \sum_{j \in I_k} \BS{Z}_j d_j(\theta, \BS{\beta})  v_j(r, \theta, \BS{\beta})  \times \sum_{j \in I_k} d_j(\theta, \BS{\beta})  v_j(r, \theta, \BS{\beta})  dN_i(t) \\
	& \quad-(\BS{\mu}_{a0}+ r(\BS{\mu}_a -\BS{\mu}_{a0}))^{\top} \frac{1}{m} \sum_{i \in I_k}\int_0^{\tau}  \sum_{j \in I_k} \BS{Z}_j  v_j(r, \theta, \BS{\beta}) \times \sum_{j \in I_k} d_j(\theta, \BS{\beta})^2 v_j(r, \theta, \BS{\beta})  dN_i(t) \\
	& \quad +(\BS{\mu}_{a0}+ r(\BS{\mu}_a -\BS{\mu}_{a0}))^{\top} \frac{2}{m} \sum_{i \in I_k} \int_0^{\tau} \sum_{j \in I_k} \BS{Z}_j  v_j(r, \theta, \BS{\beta})   \times \Big\{\sum_{j \in I_k} d_j(\theta, \BS{\beta})  v_j(r, \theta, \BS{\beta})  \Big\}^2 dN_i(t) .
\end{align*}

Consequently, it follows from the representation above that 
\begin{align*}
	&  \sup_{r \in (0, 1), \theta \in \mathcal{N}_{\theta_0}^*, \BS{\eta} \in \mathcal{T}_{N}} \bigg| \frac{\partial^2  \Phi (\{\mathcal{O}_i\}_{i \in I_k};\theta_0+r(\theta-\theta_0), \BS{\eta}_{a0}+r(\BS{\eta}-\BS{\eta}_{a0}))}{\partial r ^2}  \bigg| \\
	& \le  6(1+ K_{\BS{Z}} (\|\BS{\mu}_{a0}\|_1 +\| \BS{\mu}_a -\BS{\mu}_{a0}\|_1)) (\max_{j \in I_k} \sup_{\theta \in \mathcal{N}_{\theta_0}^*, \BS{\eta} \in \mathcal{T}_{N}} |d_j(\theta, \BS{\beta})|)^2 \\
	& \quad +4 K_{\BS{Z}} \| \BS{\mu}_a -\BS{\mu}_{a0}\|_1 \max_{j \in I_k} \sup_{\theta \in \mathcal{N}_{\theta_0}^*, \BS{\eta} \in \mathcal{T}_{N}} |d_j(\theta, \BS{\beta})| \\
	& \le   6 \big(1+ K_{\BS{Z}} [ K_{\BS{\mu}}+ {\|\BS{\Sigma}}^{-1}_{\BS{\beta}\BS{\beta}} \|_{\infty}  s_{{\BS{\mu}}_{a0}} (s_{\BS{\beta}_0} \sqrt{\log (np) /n} + \tau_n)  ] \big) (2C_0+ K_{\BS{Z}} \tilde C)^2 s_{\BS{\beta}_0}^2 \log (np)/n \\
	& \quad + 4 K_{\BS{Z}}
	(2C_0+ K_{\BS{Z}} \tilde C) s_{\BS{\beta}_0}\sqrt{\log (np) / n } {\|\BS{\Sigma}}^{-1}_{\BS{\beta}\BS{\beta}} \|_{\infty} s_{{\BS{\mu}}_{a0}} (s_{\BS{\beta}_0} \sqrt{\log (np) /n} + \tau_n) \\
	& = O_p \big( {\|\BS{\Sigma}}^{-1}_{\BS{\beta}\BS{\beta}} \|_{\infty}  s_{{\BS{\mu}}_{a0}} s_{\BS{\beta}_0}^2   \log (np)/n  +  \tau_n  {\|\BS{\Sigma}}^{-1}_{\BS{\beta}\BS{\beta}} \|_{\infty}  s_{{\BS{\mu}}_{a0}} s_{\BS{\beta}_0} \sqrt{\log (np) /n}  \big).
\end{align*}
Taking an expectation gives the same upper bound order for $ \frac{\partial^2 \mathbb{E} [\Phi (\{\mathcal{O}_i\}_{i \in I_k};\theta_0+r(\theta-\theta_0), \BS{\eta}_{a0}+r(\BS{\eta}-\BS{\eta}_{a0})) ]}{\partial r^2}$.  This completes the proof of Lemma \ref{lmderivative}.

\subsection{Lemma \ref{lmentropy0-1} and its proof} \label{SecB.7}

\begin{lemma}\label{lmentropy0-1}
	Assume Conditions \ref{coninicox} -  \ref{contheta} are satisfied. Then we have
	\begin{eqnarray}\label{ieqB}
		& & \sup_{\substack{\theta \in \mathcal{N}_{\theta_0}^* \\  \| \BS{\beta} -\BS{\beta}_0\|_1 \le \tilde C s_{\BS{\beta}_0} \sqrt{\log(np) /n} }} \bigg| \frac{1}{\sqrt{m}} \sum_{i \in I_k} \bigg\{ \int_0^{\tau} \sum_{j \in I_k} D_j (D_j - \bar D_k(t,  \theta, \BS{\beta})) w_{k,j} (t,  \theta,  \BS{\beta}) dN_i(t) \nonumber \\
		& & - \mathbb{E} \Big[ \int_0^{\tau} \sum_{j \in I_k} D_j (D_j - \bar D_k(t, \theta,  \BS{\beta})) w_{k,j} (t,  \theta,  \BS{\beta})  dN_i(t) \Big] \bigg\}  \bigg|   \le O_p(s_{\BS{\beta}_0 }\sqrt{\log (np)}) .
	\end{eqnarray} 
\end{lemma}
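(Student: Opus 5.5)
The plan is to use a crude Lipschitz argument in $(\theta,\BS{\beta})$ around $(\theta_0,\BS{\beta}_0)$, together with a direct bound at the true parameter. No chaining or entropy argument is needed, because the target rate $s_{\BS{\beta}_0}\sqrt{\log(np)}$ is exactly what the deterministic Lipschitz bound delivers.

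\textbf{Step 1: simplify the integrand.} Since $D_j^2=D_j$ and $\sum_{j\in I_k} w_{k,j}=1$, the integrand simplifies:
\[
\sum_{j \in I_k} D_j (D_j - \bar D_k(t,\theta,\BS{\beta})) w_{k,j}(t,\theta,\BS{\beta}) = \bar D_k(t,\theta,\BS{\beta})\{1-\bar D_k(t,\theta,\BS{\beta})\} =: g(\bar D_k(t,\theta,\BS{\beta})),
\]
where $g(x)=x(1-x)$ satisfies $|g'(x)|\le 1$ on $[0,1]$. The $i$th summand is therefore $\delta_i\, g(\bar D_k(X_i,\theta,\BS{\beta}))$.

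\textbf{Step 2: split into two pieces.} Write each centered summand as the difference between its value at $(\theta,\BS{\beta})$ and its value at $(\theta_0,\BS{\beta}_0)$, plus the centered term at $(\theta_0,\BS{\beta}_0)$.

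\textbf{Step 3: the deviation piece, handled deterministically.} Exactly as in \eqref{eq: new-bound_Z-cont}, with $D_j\in\{0,1\}$ in place of $\BS{Z}_j$, we get for every $t$
\[
|\bar D_k(t,\theta,\BS{\beta})-\bar D_k(t,\theta_0,\BS{\beta}_0)| \le 2\big(\exp\{|\theta-\theta_0|+K_{\BS{Z}}\|\BS{\beta}-\BS{\beta}_0\|_1\}-1\big).
\]
Over the supremum set in \eqref{ieqB}, which is $\theta\in\mathcal{N}_{\theta_0}^*$ together with the $\ell_1$ ball for $\BS{\beta}$, this is at most $C s_{\BS{\beta}_0}\sqrt{\log(np)/n}$. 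This holds surely, uniformly in $t$, and because Condition \ref{coninicox}(iii) makes the exponent $o(1)$. By $|g'|\le1$, each summand difference is at most $\delta_i C s_{\BS{\beta}_0}\sqrt{\log(np)/n}$, and the same bound holds for its expectation. Summing over $i\in I_k$ and multiplying by $m^{-1/2}$ gives at most
\[
2\sqrt{m}\,C s_{\BS{\beta}_0}\sqrt{\log(np)/n}=O(s_{\BS{\beta}_0}\sqrt{\log(np)}),
\]
uniformly over the set.

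\textbf{Step 4: the term at the truth.} The difficulty here is that $\bar D_k(t,\theta_0,\BS{\beta}_0)$ depends on the whole fold, so the summands are not i.i.d. I will replace $\bar D_k$ by its population counterpart $U_D(t,\theta_0,\BS{\beta}_0)$. On the event $\mathcal{E}$ of \eqref{eq-neww-event-eps}, applied to fold $I_k$ via Lemma 7 of \cite{yu2021confidence}, we have
\[
\sup_t|\bar D_k-U_D|\le \tilde C\sqrt{\log(np)/n},
\]
and $\mathbb{P}(\mathcal{E}^c)=O(n^{-1})$. The replacement error in the empirical sum is then $\sqrt{m}\,\tilde C\sqrt{\log(np)/n}=O_p(\sqrt{\log(np)})$. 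The replacement error in the expectation is bounded the same way on $\mathcal{E}$, plus $\sqrt{m}\cdot O(n^{-1})$ from $\mathcal{E}^c$, exactly as in \eqref{pf-thm1-eq1}. After the replacement, the remainder $m^{-1/2}\sum_{i\in I_k}\{\delta_i g(U_D(X_i))-\mathbb{E}[\delta_i g(U_D(X_i))]\}$ is a normalized sum of i.i.d. variables bounded by $1$, so it is $O_p(1)$ by Hoeffding or Chebyshev.

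Combining Steps 3 and 4 gives $O_p(s_{\BS{\beta}_0}\sqrt{\log(np)})$. The main point needing care is Step 4: the expectation of the dependent risk-set average must be controlled through the good event $\mathcal{E}$ plus its small complement. The analogous computation in \eqref{pf-thm1-eq1}--\eqref{pf-thm1-eq3} shows this goes through.
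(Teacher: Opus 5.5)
Your proof is correct, and it takes a cleaner route than the paper's.

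\textbf{The paper's decomposition.} The paper expands the integrand as $\sum_{j} D_j^2 w_{k,j} - \bar D_k^2$. It then splits it into four pieces $B_1,\dots,B_4$ by inserting $\pm D_i^2$ and $\pm$ the values at $(\theta_0,\BS{\beta}_0)$. Its $B_1$ and $B_4$ are exactly your Step 3: they are bounded surely via the weight-ratio estimate \eqref{pf-lemma3-new-eq1}, whereas you merge them into one term using $|g'|\le 1$.

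\textbf{Where you differ: the term at the truth.} The paper treats $B_2=\int(\bar D_k-D_i)\,dN_i$ with the score concentration of Lemma 1(i) in \cite{yu2021confidence}. It treats $B_3=\int(D_i-\bar D_k^2)\,dN_i$ with Lemma~\ref{lmentropy}. There it combines Hoeffding for $\int D_i\,dN_i$ with a $U_D$-replacement and a Doob--Meyer/martingale/Gaussian-process argument for $\int \bar D_k^2\,dN_i$.

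Your identity shows that $B_2+B_3=\delta_i\,g(\bar D_k(X_i,\theta_0,\BS{\beta}_0))$, so the inserted $\pm D_i$ simply cancel. You therefore need only two ingredients:
\begin{itemize}
\item the uniform risk-set concentration (Lemma 7 of \cite{yu2021confidence}), with $\mathbb{P}(\mathcal{E}^c)=O(n^{-1})$ so that $\sqrt{m}\,\mathbb{P}(\mathcal{E}^c)$ is negligible in the expectation term;
\item Chebyshev for an i.i.d.\ bounded sum, since $\int U_D^2\,dN_i=\delta_i U_D(X_i)^2$ is already i.i.d.\ and needs no martingale or Gaussian-process machinery.
\end{itemize}
Your crude replacement bound costs $O_p(\sqrt{\log(np)})$, which is harmless against the target $s_{\BS{\beta}_0}\sqrt{\log(np)}$.

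\textbf{What each approach buys.} The paper's four-term template is reused almost verbatim in Lemmas~\ref{lmentropy0-2} and \ref{lmentropy0-3}. There the integrands involve $\BS{Z}_j$, and the collapse $D_j^2=D_j$ is unavailable. Your idea still extends to those cases: write each integrand as a Lipschitz function of finitely many risk-set averages, such as $\sum_j D_j\BS{Z}_j w_{k,j}$, $\bar D_k$, $\bar{\BS{Z}}_k$, or $\sum_j \BS{Z}_j^{\otimes 2} w_{k,j}$. The cost is that you would need uniform-in-$t$ concentration of these additional weighted averages, with union bounds over $p$ or $p^2$ coordinates.
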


\noindent \textit{Proof}. 
 To prove the bound, we decompose the objective into four parts as follows.  Specifically, for each $i \in I_k$, we can decompose the integrand in the inequality  as
\begin{align*}
	&  \int_0^\tau \sum_{j\in I_k} D_j (D_j - \bar D_k(t,\theta,\BS{\beta})) w_{k,j}(t,\theta,\BS{\beta}) dN_i(t)  \\
	&= \int_0^\tau \sum_{j\in I_k} [D_j^2w_{k,j}(t,\theta,\BS{\beta}) - D_j^2w_{k,j}(t,\theta_0,\BS{\beta}_0)] dN_i(t)\\
	&  + \int_0^\tau \Big \{\sum_{j\in I_k} [D_j^2 w_{k,j}(t,\theta_0,\BS{\beta}_0)] - D_i^2 \Big \}dN_i(t) \\
	&  + \int_0^\tau [D_i^2 - \bar D_k(t,\theta_0,\BS{\beta}_0)^2] dN_i(t)
	+ \int_0^\tau [\bar D_k(t,\theta_0,\BS{\beta}_0)^2- \bar D_k(t,\theta,\BS{\beta})^2] dN_i(t) \\
	&= B_1 +B_2 +B_3 + B_4.
\end{align*}
Intuitively, since $w_{k,j}(t,\theta,\BS{\beta})$ is continuous with respect to $\theta$ and $\BS{\beta}$, and we consider the supremum only on a small space around $(\theta_0, \BS{\beta}_0)$, thus both the first term $B_1$ and the last term $B_4$ are small. We borrow a fundamental lasso-type result based on Cox model from  Lemma 1 (i) in \cite{yu2021confidence} to tackle the second term $B_2$. Moreover,  the third term $B_3$ can be addressed by applying Lemma \ref{lmentropy} in Section \ref{SecB.11-new}.  The detailed arguments are provided below.


For any $k \in \{1, \cdots, K\}$ and $j \in I_k$, we have
\begin{align*}
	\tilde w_{j}(t,\theta,\BS{\beta}) - \tilde w_{j}(t,\theta_0,\BS{\beta}_0) &= Y_j(t)(\exp\{D_j\theta + \BS{Z}_j^T\BS{\beta}\}- \exp\{D_j\theta_0 + \BS{Z}_j^T\BS{\beta}_0\})\\
	& =   \tilde w_{j}(t,\theta_0,\BS{\beta}_0) (\exp\{D_j(\theta - \theta_0) + \BS{Z}_j^T(\BS{\beta}-\BS{\beta}_0)\})-1).
\end{align*}
Since we focus on the region for $(\theta,\BS{\beta})$ such that $\theta \in \mathcal{N}_{\theta_0}^*$ and 
$\| \BS{\beta} -\BS{\beta}_0\|_1 \le \tilde C s_{\BS{\beta}_0} \sqrt{\log(np) /n}$, we have 
   \begin{align*}
      |D_j(\theta - \theta_0) + \BS{Z}_j^T(\BS{\beta}-\BS{\beta}_0)| \leq (2C_0 + K_{\BS{Z}}\tilde{C})s_{\BS{\beta}_0} \sqrt{\log(np) /n},
   \end{align*}
which entails 
   \begin{align*}
     \frac{|\tilde w_{j}(t,\theta,\BS{\beta}) - \tilde w_{j}(t,\theta_0,\BS{\beta}_0)|}{\tilde w_{j}(t,\theta_0,\BS{\beta}_0)} = O(s_{\BS{\beta}_0} \sqrt{\log(np) /n}).
   \end{align*}
For each $j \in I_k$, let $\alpha_j=\tilde w_{j}(t,\theta,\BS{\beta})/\tilde w_{j}(t,\theta_0,\BS{\beta}_0)$ (in case that both the numerator and denominator are zero, define $\alpha_j$ as 1). Then we have that
\[
\alpha_j \geq 1 - C_1s_{\BS{\beta}_0} \sqrt{\log(np) /n},
\]
holds for some universal constant $C_1 >0$, 
and the bound is positive for $n$ large enough.

Now we have
\begin{align}
	 & \frac{|w_{k, j}(t,\theta,\BS{\beta}) - w_{k, j}(t,\theta_0,\BS{\beta}_0)|}{w_{k, j}(t,\theta_0,\BS{\beta}_0)} 
	=  \frac{|\sum_{l \in I_k} (\alpha_j-\alpha_l)\tilde  w_l(t,\theta_0,\BS{\beta}_0)|}{\sum_{l \in I_k} \alpha_l\tilde w_l(t,\theta_0,\BS{\beta}_0)} \nonumber\\
	\leq &  \frac{\sum_{l \in I_k} |\alpha_j-\alpha_l|\tilde w_l(t,\theta_0,\BS{\beta}_0)}{\sum_{l \in I_k} \alpha_l\tilde w_l(t,\theta_0,\BS{\beta}_0)} \nonumber\\
	\leq & \frac{\sum_{l \in I_k} 2(\exp( (2C_0 + K_{\BS{Z}}\tilde{C}) s_{\BS{\beta}_0} \sqrt{\log(np) /n})-1)\tilde w_l(t,\theta_0,\BS{\beta}_0)}{\sum_{l \in I_k} (1 - C_1s_{\BS{\beta}_0} \sqrt{\log(np) /n})\tilde w_l(t,\theta_0,\BS{\beta}_0)} \nonumber\\
	= & \frac{2(\exp( (2C_0 + K_{\BS{Z}}\tilde{C}) s_{\BS{\beta}_0} \sqrt{\log(np) /n})-1)}{1 - C_1s_{\BS{\beta}_0} \sqrt{\log(np) /n}} \nonumber\\
	= & O(s_{\BS{\beta}_0}\sqrt{\log(np)/n}) \label{pf-lemma3-new-eq1}.
\end{align}
Thus, for the first term $B_1$, we have
\begin{align*}
	|B_1| &=  \bigg|\int_0^\tau \sum_{j\in I_k} [D_j^2w_{k, j}(t,\theta,\BS{\beta}) - D_j^2w_{k,j}(t,\theta_0,\BS{\beta}_0)] dN_i(t) \bigg |\\
	&\leq  \int_0^\tau \sum_{j\in I_k} [D_j^2w_{k, j}(t,\theta_0,\BS{\beta}_0)\frac{|w_{k, j}(t,\theta,\BS{\beta}) - w_{k, j}(t,\theta_0,\BS{\beta}_0)|}{w_{k, j}(t,\theta_0,\BS{\beta}_0)}] dN_i(t)\\
	&\leq  \frac{2(\exp( (2C_0 + K_{\BS{Z}}\tilde{C})  s_{\BS{\beta}_0} \sqrt{\log(np) /n})-1)}{1 - C_1s_{\BS{\beta}_0} \sqrt{\log(np) /n}}\int_0^\tau \sum_{j\in I_k} [D_j^2w_{k, j}(t,\theta_0,\BS{\beta}_0)] dN_i(t)\\ 
	&\leq  \frac{2(\exp( (2C_0 + K_{\BS{Z}}\tilde{C}) s_{\BS{\beta}_0} \sqrt{\log(np) /n})-1)}{1 - C_1s_{\BS{\beta}_0} \sqrt{\log(np) /n}}\\
	& = O(s_{\BS{\beta}_0}\sqrt{\log(np)/n}).
\end{align*}

Furthermore, the expectation of $B_1$ is also of order $O(s_{\BS{\beta}_0}\sqrt{\log(np)/n})$. Summing over all $i \in I_k$ and dividing by $\sqrt{m}$ gives
\begin{eqnarray}\label{ieqB1}
& & \sup_{\substack{\theta \in \mathcal{N}_{\theta_0}^* \\  \| \BS{\beta} -\BS{\beta}_0\|_1 \le \tilde C s_{\BS{\beta}_0} \sqrt{\log(np) /n}} } \bigg| \frac{1}{\sqrt{m}} \sum_{i \in I_k} \bigg\{ \int_0^{\tau} \sum_{j\in I_k} [D_j^2w_{k, j}(t,\theta,\BS{\beta}) - D_j^2w_{k,j}(t,\theta_0,\BS{\beta}_0)]  dN_i(t)  \nonumber \\
& & - \mathbb{E} \Big[ \int_0^{\tau} \sum_{j\in I_k} [D_j^2w_{k, j}(t,\theta,\BS{\beta}) - D_j^2w_{k,j}(t,\theta_0,\BS{\beta}_0)]   dN_i(t) \Big] \bigg\}  \bigg|   \le O(s_{\BS{\beta}_0}\sqrt{\log(np) }).
\end{eqnarray} 

Next, we deal with $B_2$. Note that $\int_0^\tau \big \{\sum_{j\in I_k} [D_j^2 w_{k, j}(t,\theta_0,\BS{\beta}_0)] - D_i^2 \big \}dN_i(t)$ is actually $\int_0^\tau [\bar D_k(t,\theta_0,\BS{\beta}_0) - D_i ]dN_i(t)$ since $D_j^2 = D_j$.
We can conclude from Lemma 1 (i) in \cite{yu2021confidence} that 
\[
\bigg|\frac{1}{m}\sum_{i \in I_k} \int_0^\tau(D_i-\bar D_k(t,\theta_0,\BS{\beta}_0))dN_i(t)\bigg| = O_p(\sqrt{\log(np)/n}).
\]
Thus summing the second term $B_2$ over all $i \in I_k$ and dividing by $\sqrt{m}$ yields
\[
\bigg| \frac{1}{\sqrt{m}}\sum_{i \in I_k} \int_0^\tau(\bar D_k(t,\theta_0,\BS{\beta}_0)-D_i)dN_i(t) \bigg| = O_p(\sqrt{\log(np)}).
\]
 The expectation of $B_2$ is also of order $ O(\sqrt{\log(np)})$. Therefore, we have
\begin{eqnarray}\label{ieqB2}
& & \sup_{\substack{\theta \in \mathcal{N}_{\theta_0}^* \\  \| \BS{\beta} -\BS{\beta}_0\|_1 \le \tilde C s_{\BS{\beta}_0} \sqrt{\log(np) /n}} } \bigg| \frac{1}{\sqrt{m}} \sum_{i \in I_k} \bigg\{  \int_0^\tau \big \{\sum_{j\in I_k} [D_j^2 w_{k,j}(t,\theta_0,\BS{\beta}_0)] - D_i^2 \big \}dN_i(t)  \nonumber \\
& & - \mathbb{E} \Big[ \int_0^{\tau} \big \{\sum_{j\in I_k} [D_j^2 w_{k,j}(t,\theta_0,\BS{\beta}_0)] - D_i^2 \big \} dN_i(t) \Big] \bigg\}  \bigg|   \le O_p(\sqrt{\log(np)}).
\end{eqnarray} 


For the last term $B_{4}$, since
\begin{align*}
	|B_{4}| &= \bigg| \int_0^\tau (\bar D_k(t,\theta_0,\BS{\beta}_0) + \bar D_k(t,\theta,\BS{\beta}))(\bar D_k(t,\theta_0,\BS{\beta}_0) - \bar D_k(t,\theta,\BS{\beta})) dN_i(t) \bigg| \\
	& \le  \int_0^\tau \big|(\bar D_k(t,\theta_0,\BS{\beta}_0) + \bar D_k(t,\theta,\BS{\beta}))(\bar D_k(t,\theta_0,\BS{\beta}_0) - \bar D_k(t,\theta,\BS{\beta}))\big| dN_i(t) \\
	& \le  2 \int_0^\tau \big| (\bar D_k(t,\theta_0,\BS{\beta}_0) - \bar D_k(t,\theta,\BS{\beta}))\big| dN_i(t) \\
	& \le  2  \int_0^\tau \sum_{j\in I_k} [D_j w_{k, j}(t,\theta_0,\BS{\beta}_0)\frac{|w_{k, j}(t,\theta,\BS{\beta}) - w_{k, j}(t,\theta_0,\BS{\beta}_0)|}{w_{k, j}(t,\theta_0,\BS{\beta}_0)}] dN_i(t),
\end{align*}
we can use the bound of $\frac{|w_{k, j}(t,\theta,\BS{\beta}) - w_{k, j}(t,\theta_0,\BS{\beta}_0)|}{w_{k, j}(t,\theta_0,\BS{\beta}_0)}$ in \eqref{pf-lemma3-new-eq1} again to yield that $$|B_{4}| = O_p(s_{\BS{\beta}_0}\sqrt{\log(np)/n}).$$
Similar to previous discussion, we can obtain that
\begin{eqnarray}\label{ieqB3}
& & \sup_{\substack{\theta \in \mathcal{N}_{\theta_0}^* \\  \| \BS{\beta} -\BS{\beta}_0\|_1 \le \tilde C s_{\BS{\beta}_0} \sqrt{\log(np) /n}} } \bigg| \frac{1}{\sqrt{m}} \sum_{i \in I_k} \bigg\{  \int_0^\tau [\bar D_k(t,\theta_0,\BS{\beta}_0)^2- \bar D_k(t,\theta,\BS{\beta})^2] dN_i(t)  \nonumber \\
& & - \mathbb{E} \Big[ \int_0^{\tau} [\bar D_k(t,\theta_0,\BS{\beta}_0)^2- \bar D_k(t,\theta,\BS{\beta})^2] dN_i(t) \Big] \bigg\}  \bigg|   \le O_p(s_{\BS{\beta}_0}\sqrt{\log(np)}).
\end{eqnarray} 
%

For the third term $B_{3}= \int_0^\tau [D_i^2 - \bar D_k(t,\theta_0,\BS{\beta}_0)^2] dN_i(t)$, it follows from \eqref{ieqB4} and \eqref{ieqB5} in Lemma \ref{lmentropy} that 

\begin{equation} \label{eq: new-B3-0}
\begin{split}
    & \sup_{\substack{\theta \in \mathcal{N}_{\theta_0}^* \\  \| \BS{\beta} -\BS{\beta}_0\|_1 \le \tilde C s_{\BS{\beta}_0} \sqrt{\log(np) /n}} } \bigg| \frac{1}{\sqrt{m}} \sum_{i \in I_k} \bigg\{  \int_0^\tau  (D_i^2 - \bar D_k(t,\theta_0,\BS{\beta}_0)^2)  dN_i(t)  \\
    & \quad - \mathbb{E}  \int_0^\tau  (D_i^2 - \bar D_k(t,\theta_0,\BS{\beta}_0)^2)  dN_i(t) \bigg\} \bigg| \leq O_p(\sqrt{\log m }) = O_p(\sqrt{\log n }).
\end{split}    
\end{equation}

 Therefore, combining the results (\ref{ieqB1})--(\ref{eq: new-B3-0}), we can finally prove the desired result (\ref{ieqB}). This completes the proof of Lemma \ref{lmentropy0-1}.

\subsection{Lemma \ref{lmentropy0-2} and its proof} \label{SecC.8}

\begin{lemma}\label{lmentropy0-2}
	Assume Conditions \ref{coninicox} -  \ref{contheta} are satisfied. Then we have
	\begin{eqnarray}\label{ieqC}
		& & \sup_{\substack{\theta \in \mathcal{N}_{\theta_0^*} \\  \| \BS{\beta} -\BS{\beta}_0\|_1 \le \tilde C s_{\BS{\beta}_0} \sqrt{\log(np) /n}} }  \bigg\| \frac{1}{\sqrt{m}} \sum_{i \in I_k} \bigg\{ \int_0^{\tau}  \sum_{j \in I_k} D_j (\BS{Z}_j - \bar {\BS{Z}}_k(t, \theta,  {\BS{\beta}})) w_{k,j} (t, \theta, {\BS{\beta}}) dN_i(t) \nonumber \\
		& & - \mathbb{E} \Big[\int_0^{\tau} \sum_{j \in I_k} D_j  (\BS{Z}_j - \bar {\BS{Z}}_k(t, \theta, {\BS{\beta}})) w_{k,j} (t, \theta, {\BS{\beta}})  dN_i(t) \Big] \bigg\} \bigg\|_{\infty}   \le O_p(s_{\BS{\beta}_0}\sqrt{\log(np) }).
	\end{eqnarray} 
\end{lemma}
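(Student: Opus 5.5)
I would mirror the proof of Lemma \ref{lmentropy0-1} coordinatewise, since the integrand is the same object with one factor $D_j$ replaced by the vector $\BS{Z}_j$. For each $i\in I_k$ I write the vector integrand as
\begin{align*}
\int_0^\tau \sum_{j\in I_k} D_j(\BS{Z}_j-\bar{\BS{Z}}_k(t,\theta,\BS{\beta}))w_{k,j}(t,\theta,\BS{\beta})\,dN_i(t)
= \BS{B}_1+\BS{B}_2+\BS{B}_3+\BS{B}_4 .
\end{align*}
Here $\BS{B}_1$ replaces $\sum_j D_j\BS{Z}_j w_{k,j}(t,\theta,\BS{\beta})$ by its value at $(\theta_0,\BS{\beta}_0)$. $\BS{B}_2=\int_0^\tau\{\sum_j D_j\BS{Z}_jw_{k,j}(t,\theta_0,\BS{\beta}_0)-D_i\BS{Z}_i\}dN_i(t)$. $\BS{B}_3=\int_0^\tau\{D_i\BS{Z}_i-\bar D_k(t,\theta_0,\BS{\beta}_0)\bar{\BS{Z}}_k(t,\theta_0,\BS{\beta}_0)\}dN_i(t)$. $\BS{B}_4$ collects the difference $\bar D_k\bar{\BS{Z}}_k(t,\theta_0,\BS{\beta}_0)-\bar D_k\bar{\BS{Z}}_k(t,\theta,\BS{\beta})$. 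This decomposition uses the identity $\sum_j D_j\bar{\BS{Z}}_k w_{k,j}=\bar D_k\bar{\BS{Z}}_k$.

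For $\BS{B}_1$ and $\BS{B}_4$ I would reuse the relative weight perturbation bound \eqref{pf-lemma3-new-eq1}, which holds uniformly over $\theta\in\mathcal{N}_{\theta_0}^*$ and $\|\BS{\beta}-\BS{\beta}_0\|_1\le\tilde Cs_{\BS{\beta}_0}\sqrt{\log(np)/n}$. With $\|\BS{Z}_j\|_\infty\le K_{\BS{Z}}$ this gives $\|\BS{B}_1\|_\infty\le K_{\BS{Z}}\,O(s_{\BS{\beta}_0}\sqrt{\log(np)/n})$ deterministically, and the same bound holds for its expectation. For $\BS{B}_4$ I split
\[
\bar D_k\bar{\BS{Z}}_k-\bar D_k'\bar{\BS{Z}}_k'=(\bar D_k-\bar D_k')\bar{\BS{Z}}_k+\bar D_k'(\bar{\BS{Z}}_k-\bar{\BS{Z}}_k').
\]
Each factor difference is controlled by the same relative-weight bound, as in \eqref{eq: new-bound_Z-cont}, so $\|\BS{B}_4\|_\infty=O(s_{\BS{\beta}_0}\sqrt{\log(np)/n})$. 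Summing over $i$, dividing by $\sqrt m$, and centering then give $O(s_{\BS{\beta}_0}\sqrt{\log(np)})$ uniformly. This deterministic route sidesteps any chaining over the supremum.

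The remaining terms do not depend on $(\theta,\BS{\beta})$, so the supremum is trivial and only maximal inequalities over the $p$ coordinates are needed. For $\BS{B}_2$ I would rewrite it as $-\int_0^\tau D_i(\BS{Z}_i-\bar{\BS{Z}}^{D}_k)dN_i(t)$, the score of a weighted Cox-type quantity. The Doob--Meyer decomposition \eqref{doob-mayer-decom} turns $m^{-1}\sum_i$ of it into a martingale integral with bounded predictable integrands, up to a compensator term. The martingale concentration used in Lemma 1(i) of \cite{yu2021confidence}, together with a union bound over $p$ coordinates, then yields $O_p(\sqrt{\log(np)/n})$ in sup-norm, hence $O_p(\sqrt{\log(np)})$ after scaling.

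The main obstacle is $\BS{B}_3$, because the compensator term in $\BS{B}_2$ is not obviously mean zero coordinatewise and $\BS{B}_3$ involves risk-set averages that couple observations. For $\BS{B}_3$ I would invoke the entropy bounds of Lemma \ref{lmentropy}, specifically the results giving $\|\mathbb{G}_m(\int_0^\tau D_i\BS{Z}_idN_i)\|_\infty=O_p(\sqrt{\log(np)})$ and $\|\mathbb{G}_m(\int_0^\tau\bar D_k\bar{\BS{Z}}_kdN_i)\|_\infty=O_p(\sqrt{\log(np)})$, as in the proof of Theorem \ref{ththeta}. If the $\BS{B}_2$ compensator cannot be handled by the martingale argument, I would instead replace $\bar{\BS{Z}}_k$ by its population counterpart using \eqref{pf-lemma1-eq7}, which costs only $O_p(\sqrt{\log(np)})$ after scaling, and apply Hoeffding with a union bound to the i.i.d. remainder.

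Collecting the four bounds gives the stated $O_p(s_{\BS{\beta}_0}\sqrt{\log(np)})$.
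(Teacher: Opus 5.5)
Your proposal is essentially the paper's proof. It uses the same four-term split (the paper's $C_1,\dots,C_4$), the relative-weight bound \eqref{pf-lemma3-new-eq1} for the two $(\theta,\BS{\beta})$-dependent pieces, and Lemma \ref{lmentropy} for the third term. The one point you miss concerns $\BS{B}_2=-\int_0^\tau \BS{a}_i(t)\,dN_i(t)$ with $\BS{a}_i(t)=D_i\BS{Z}_i-\sum_{j\in I_k}D_j\BS{Z}_j w_{k,j}(t,\theta_0,\BS{\beta}_0)$; your rewrite as $-\int_0^\tau D_i(\BS{Z}_i-\bar{\BS{Z}}^{D}_k)\,dN_i(t)$ is not the same integrand. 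Here $\sum_{i\in I_k}\BS{a}_i(t)w_{k,i}(t,\theta_0,\BS{\beta}_0)=0$, so the compensator vanishes identically and the paper applies Lemma \ref{concentrationformg} directly; your population-replacement fallback is therefore unnecessary, though still valid.
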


\noindent \textit{Proof}. This proof is similar to the proof of Lemma \ref{lmentropy0-1} in Section \ref{SecB.7}.  Specifically, for each $i \in I_k$, we decompose the integrand in the inequality  as
\begin{align*}
	&  \int_0^{\tau}  \sum_{j \in I_k} D_j (\BS{Z}_j - \bar {\BS{Z}}_k(t, \theta,  {\BS{\beta}})) w_{k,j} (t, \theta, {\BS{\beta}}) dN_i(t)  \\
	&=  \int_0^\tau \sum_{j\in I_k} [D_j \BS{Z}_j  w_{k,j}(t,\theta,\BS{\beta}) - D_j  \BS{Z}_j  w_{k,j}(t,\theta_0,\BS{\beta}_0)] dN_i(t)\\
	&  + \int_0^\tau \bigg \{\sum_{j\in I_k} [D_j \BS{Z}_j  w_{k,j}(t,\theta_0,\BS{\beta}_0)] - D_i \BS{Z}_i  \bigg \}dN_i(t) \\
	&   +  \int_0^\tau [D_i \BS{Z}_i  - \bar D_k(t,\theta_0,\BS{\beta}_0) \bar {\BS{Z}}_k(t, \theta_0,  {\BS{\beta}_0}) ] dN_i(t) \\
	&   +   \int_0^\tau [\bar D_k(t,\theta_0,\BS{\beta}_0) \bar {\BS{Z}}_k(t, \theta_0,  {\BS{\beta}_0})  - \bar D_k(t,\theta,\BS{\beta}) \bar {\BS{Z}}_k(t, \theta,  {\BS{\beta}}) ] dN_i(t) \\
	&= C_1 +C_2 +C_3+C_4.
\end{align*} 

For the first term $C_1$, we have
\begin{align*}
	\|C_1 \|_{\infty} &=  \bigg\| \int_0^\tau \sum_{j\in I_k} [D_j \BS{Z}_j  w_{k,j}(t,\theta,\BS{\beta}) - D_j  \BS{Z}_j  w_{k,j}(t,\theta_0,\BS{\beta}_0)] dN_i(t)\bigg\|_{\infty}  \\
	&\leq  \int_0^\tau \sum_{j\in I_k} \| D_j \BS{Z}_j\|_{\infty} w_{k, j}(t,\theta_0,\BS{\beta}_0) \frac{|w_{k, j}(t,\theta,\BS{\beta}) - w_{k, j}(t,\theta_0,\BS{\beta}_0)|}{w_{k, j}(t,\theta_0,\BS{\beta}_0)} dN_i(t)\\
	&\leq \frac{2(\exp( (2C_0 + K_{\BS{Z}}\tilde{C})  s_{\BS{\beta}_0} \sqrt{\log(np) /n})-1)}{1 - C_1s_{\BS{\beta}_0} \sqrt{\log(np) /n}}\int_0^\tau \sum_{j\in I_k} [ \| D_j \BS{Z}_j\|_{\infty} w_{k,j}(t,\theta_0,\BS{\beta}_0)] dN_i(t)\\ 
	&\leq  K_{\BS{Z}} \frac{2(\exp( (2C_0 + K_{\BS{Z}}\tilde{C}) s_{\BS{\beta}_0} \sqrt{\log(np) /n})-1)}{1 - C_1s_{\BS{\beta}_0} \sqrt{\log(np) /n}}\\
	& = O(s_{\BS{\beta}_0}\sqrt{\log(np)/n}), 
\end{align*}
where the second inequality follows from the bound of $\frac{|w_{k, j}(t,\theta,\BS{\beta}) - w_{k, j}(t,\theta_0,\BS{\beta}_0)|}{w_{k, j}(t,\theta_0,\BS{\beta}_0)}$ in the proof of Lemma \ref{lmentropy0-1}, and the third inequality follows from Condition \ref{coninicox} (i).
Thus, the expectation of each component of $C_1$ is also of order $O(s_{\BS{\beta}_0}\sqrt{\log(np)/n})$. Summing over all $i \in I_k$ and divided by $\sqrt{m}$ gives
\begin{align}\label{ieqC1}
& \sup_{\substack{\theta \in \mathcal{N}_{\theta_0}^* \\  \| \BS{\beta} -\BS{\beta}_0\|_1 \le \tilde C s_{\BS{\beta}_0} \sqrt{\log(np) /n}} } \bigg\| \frac{1}{\sqrt{m}} \sum_{i \in I_k} \bigg\{ \int_0^\tau \sum_{j\in I_k} [D_j \BS{Z}_j  w_{k,j}(t,\theta,\BS{\beta}) - D_j  \BS{Z}_j  w_{k,j}(t,\theta_0,\BS{\beta}_0)] dN_i(t)  \nonumber \\
& - \mathbb{E} \Big[  \int_0^\tau \sum_{j\in I_k} [D_j \BS{Z}_j  w_{k,j}(t,\theta,\BS{\beta}) - D_j  \BS{Z}_j  w_{k,j}(t,\theta_0,\BS{\beta}_0)] dN_i(t) \Big] \bigg\}  \bigg\|_{\infty}   \le O(s_{\BS{\beta}_0}\sqrt{\log(np) }).
\end{align} 

For the second term $C_2$,  let us define
\begin{eqnarray*}
	\BS{a}_i(t) = D_i\BS{Z}_i - \sum_{j\in I_k} [D_j \BS{Z}_j  w_{k,j}(t,\theta_0,\BS{\beta}_0)]
\end{eqnarray*}
for all $i\in I_k$. Then we have 
\begin{align*}
	&\sum_{i\in I_k} \BS{a}_i(t)w_{k,i}(t,\theta_0,\BS{\beta}_0) \\
	&= \sum_{i\in I_k}\{D_i\BS{Z}_iw_{k,i}(t,\theta_0,\BS{\beta}_0) - \sum_{j\in I_k} [D_j \BS{Z}_j  w_{k,j}(t,\theta_0,\BS{\beta}_0)]w_{k,i}(t,\theta_0,\BS{\beta}_0)\} \\
	& = \sum_{i\in I_k}[D_i\BS{Z}_iw_{k,i}(t,\theta_0,\BS{\beta}_0)] - \sum_{j\in I_k} [D_j \BS{Z}_j  w_{k,j}(t,\theta_0,\BS{\beta}_0)] \sum_{i\in I_k}w_{k,i}(t,\theta_0,\BS{\beta}_0) \\
	& =  0.
\end{align*}
Thus, it follows from  Lemma \ref{concentrationformg} that 
\begin{eqnarray*}
	\frac{1}{\sqrt{m}}\bigg \| \sum_{i\in I_k}\int_0^\tau \BS{a}_i(t)dN_i(t)- \mathbb{E}\sum_{i\in I_k}\int_0^\tau \BS{a}_i(t)dN_i(t)\bigg \|_\infty= O_p(\sqrt{\log(np)}).
\end{eqnarray*}
This implies 
\begin{eqnarray}\label{ieqC2}
& & \sup_{\substack{\theta \in \mathcal{N}_{\theta_0}^* \\  \| \BS{\beta} -\BS{\beta}_0\|_1 \le \tilde C s_{\BS{\beta}_0} \sqrt{\log(np) /n}}} \bigg\| \frac{1}{\sqrt{m}} \sum_{i \in I_k} \bigg\{ \int_0^\tau \big \{\sum_{j\in I_k} [D_j \BS{Z}_j  w_{k,j}(t,\theta_0,\BS{\beta}_0)] - D_i \BS{Z}_i  \big \}dN_i(t)  \nonumber \\
& & - \mathbb{E} \Big[  \int_0^\tau \big \{\sum_{j\in I_k} [D_j \BS{Z}_j  w_{k,j}(t,\theta_0,\BS{\beta}_0)] - D_i \BS{Z}_i  \big \} dN_i(t) \Big] \bigg\}  \bigg\|_{\infty}   \le O_p(\sqrt{\log(np)}).
\end{eqnarray} 

For the third term $C_3$, it holds that 
\begin{align*}
	& \frac{1}{\sqrt{m}}\bigg\| \sum_{i\in I_k} \int_0^\tau [D_i \BS{Z}_i  - \bar D_k(t,\theta_0,\BS{\beta}_0) \bar {\BS{Z}}_k(t, \theta_0,  {\BS{\beta}_0}) ] dN_i(t)\\
	& - \sum_{i\in I_k} \mathbb E \int_0^\tau [D_i \BS{Z}_i  - \bar D_k(t,\theta_0,\BS{\beta}_0) \bar {\BS{Z}}_k(t, \theta_0,  {\BS{\beta}_0}) ] dN_i(t)\bigg\|_\infty\\
	&\leq \frac{1}{\sqrt{m}}\bigg\| \sum_{i\in I_k} \Big\{\int_0^\tau D_i \BS{Z}_i dN_i(t) - \mathbb E\int_0^\tau D_i \BS{Z}_i dN_i(t) \Big\}\bigg\|_\infty\\
	&+  \frac{1}{\sqrt{m}}\bigg\| \sum_{i\in I_k} \Big\{\int_0^\tau \bar D_k(t,\theta_0,\BS{\beta}_0) \bar {\BS{Z}}_k(t, \theta_0,  {\BS{\beta}_0}) dN_i(t) \\
	& - \mathbb E\int_0^\tau \bar D_k(t,\theta_0,\BS{\beta}_0) \bar {\BS{Z}}_k(t, \theta_0,  {\BS{\beta}_0})dN_i(t) \Big\}\bigg\|_\infty 
\end{align*} 
According to  the results \eqref{eq-neww-Gm-DZ} and \eqref{eq-neww-G-1} in Lemma \ref{lmentropy}, we have 
\begin{eqnarray}\label{ieqC3}
& & \sup_{\substack{\theta \in \mathcal{N}_{\theta_0}^* \\  \| \BS{\beta} -\BS{\beta}_0\|_1 \le \tilde C s_{\BS{\beta}_0} \sqrt{\log(np) /n}} } \bigg\| \frac{1}{\sqrt{m}} \sum_{i \in I_k} \bigg\{ \int_0^\tau [D_i \BS{Z}_i  - \bar D_k(t,\theta_0,\BS{\beta}_0) \bar {\BS{Z}}_k(t, \theta_0,  {\BS{\beta}_0}) ] dN_i(t)  \nonumber \\
& & - \mathbb{E} \Big[  \int_0^\tau [D_i \BS{Z}_i  - \bar D_k(t,\theta_0,\BS{\beta}_0) \bar {\BS{Z}}_k(t, \theta_0,  {\BS{\beta}_0}) ] dN_i(t) \Big] \bigg\}  \bigg\|_{\infty}   \le O_p(\sqrt{\log(np)}) .
\end{eqnarray}

For $C_4$, 
we may apply the same technique used in the proof of Lemma \ref{lmentropy0-1}. To be more specific, we have
\begin{align*}
	&\bigg\| \int_0^\tau [\bar D_k(t,\theta_0,\BS{\beta}_0) \bar {\BS{Z}}_k(t, \theta_0,  {\BS{\beta}_0})  - \bar D_k(t,\theta,\BS{\beta}) \bar {\BS{Z}}_k(t, \theta,  {\BS{\beta}}) ] dN_i(t)\bigg\|_\infty \\
	&\leq \frac{1}{2}\bigg\|\int_0^\tau (\bar D_k(t,\theta_0,\BS{\beta}_0) - \bar D_k(t,\theta,\BS{\beta}))(\bar {\BS{Z}}_k(t, \theta_0,  {\BS{\beta}_0})+\bar {\BS{Z}}_k(t, \theta,  {\BS{\beta}}) ) dN_i(t)\bigg\|_\infty\\
	& + \frac{1}{2}\bigg\|\int_0^\tau (\bar D_k(t,\theta_0,\BS{\beta}_0) + \bar D_k(t,\theta,\BS{\beta}))(\bar {\BS{Z}}_k(t, \theta_0,  {\BS{\beta}_0})-\bar {\BS{Z}}_k(t, \theta,  {\BS{\beta}}) ) dN_i(t) \bigg\|_\infty\\
	& \leq   K_Z\int_0^\tau |\bar D_k(t,\theta_0,\BS{\beta}_0) - \bar D_k(t,\theta,\BS{\beta})| dN_i(t)  + \int_0^\tau \|\bar {\BS{Z}}_k(t, \theta_0,  {\BS{\beta}_0})-\bar {\BS{Z}}_k(t, \theta,  {\BS{\beta}})\|_\infty dN_i(t) \\
	& =  K_Z\int_0^\tau  \sum_{j\in I_k} [D_j w_{k, j}(t,\theta_0,\BS{\beta}_0)\frac{|w_{k, j}(t,\theta,\BS{\beta}) - w_{k, j}(t,\theta_0,\BS{\beta}_0)|}{w_{k, j}(t,\theta_0,\BS{\beta}_0)}] dN_i(t) \\
	&+ \int_0^\tau  \sum_{j\in I_k} [\|Z_j\|_\infty w_{k, j}(t,\theta_0,\BS{\beta}_0)\frac{|w_{k, j}(t,\theta,\BS{\beta}) - w_{k, j}(t,\theta_0,\BS{\beta}_0)|}{w_{k, j}(t,\theta_0,\BS{\beta}_0)}]dN_i(t).
\end{align*}

The upper bound of $\frac{|w_{k, j}(t,\theta,\BS{\beta}) - w_{k, j}(t,\theta_0,\BS{\beta}_0)|}{w_{k, j}(t,\theta_0,\BS{\beta}_0)}$ established in \eqref{pf-lemma3-new-eq1} yields that $\|C_{4}\|_\infty = O(s_{\BS{\beta}_0}\sqrt{\log(np)/n}) $. Hence, for the empirical-process term associated with $C_4$, we have    
\begin{align}\label{ieqC4}
& \sup_{\substack{\theta \in \mathcal{N}_{\theta_0}^* \\  
\| \BS{\beta} -\BS{\beta}_0\|_1 \le \tilde C s_{\BS{\beta}_0} \sqrt{\log(np) /n}} } \bigg| \frac{1}{\sqrt{m}} \sum_{i \in I_k} \bigg\{  \int_0^\tau [\bar D_k(t,\theta_0,\BS{\beta}_0) \bar {\BS{Z}}_k(t, \theta_0,  {\BS{\beta}_0})  - \bar D_k(t,\theta,\BS{\beta}) \bar {\BS{Z}}_k(t, \theta,  {\BS{\beta}})] dN_i(t)  \nonumber \\
& - \mathbb{E} \Big[ \int_0^{\tau} [\bar D_k(t,\theta_0,\BS{\beta}_0) \bar {\BS{Z}}_k(t, \theta_0,  {\BS{\beta}_0})  - \bar D_k(t,\theta,\BS{\beta}) \bar {\BS{Z}}_k(t, \theta,  {\BS{\beta}})] dN_i(t) \Big] \bigg\}  \bigg|   \le O_p(s_{\BS{\beta}_0}\sqrt{\log(np) }) .
\end{align} 
 Finally, we can  obtain (\ref{ieqC}) by combining the results in (\ref{ieqC1})--(\ref{ieqC4}). This concludes the proof of Lemma \ref{lmentropy0-2}.

\subsection{Lemma \ref{lmentropy0-3} and its proof} \label{SecB.9}

\begin{lemma}\label{lmentropy0-3}
	Assume Conditions \ref{coninicox} -  \ref{contheta} are satisfied. Then we have
	\begin{align}\label{ieqG}
		&  \sup_{\substack{\theta \in \mathcal{N}_{\theta_0^*} \\  \| \BS{\beta} -\BS{\beta}_0\|_1 \le \tilde C s_{\BS{\beta}_0} \sqrt{\log(np) /n} }} \bigg\| \frac{1}{\sqrt{m}} \sum_{i \in I_k} \bigg\{ \int_0^{\tau}  \sum_{j \in I_k} \BS{Z}_j   (\BS{Z}_j - \bar {\BS{Z}}_k(t,  \theta, {\BS{\beta}}))^{\top} w_{k,j} (t,  \theta, {\BS{\beta}})  dN_i(t) \nonumber \\
		& - \mathbb{E} \Big[ \int_0^{\tau}  \sum_{j \in I_k} \BS{Z}_j (\BS{Z}_j - \bar {\BS{Z}}_k(t,  \theta,  {\BS{\beta}}))^{\top} w_{k,j} (t,  \theta,  {\BS{\beta}})  dN_i(t) \Big] \bigg\} \bigg\|_{\max}   \le O_p(s_{\BS{\beta}_0}\sqrt{\log(np) }).
	\end{align} 
\end{lemma}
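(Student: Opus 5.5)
The plan is to reuse the four-term decomposition from the proofs of Lemmas \ref{lmentropy0-1} and \ref{lmentropy0-2}, now applied entrywise to a $p\times p$ matrix. For each $i\in I_k$, write the integrand $\int_0^\tau \sum_{j\in I_k}\BS{Z}_j(\BS{Z}_j-\bar{\BS{Z}}_k(t,\theta,\BS{\beta}))^{\top}w_{k,j}(t,\theta,\BS{\beta})\,dN_i(t)$ as $G_1+G_2+G_3+G_4$. Here $G_1=\int_0^\tau\sum_{j\in I_k}\BS{Z}_j\BS{Z}_j^{\top}[w_{k,j}(t,\theta,\BS{\beta})-w_{k,j}(t,\theta_0,\BS{\beta}_0)]dN_i(t)$ and $G_2=\int_0^\tau\{\sum_{j\in I_k}\BS{Z}_j\BS{Z}_j^{\top}w_{k,j}(t,\theta_0,\BS{\beta}_0)-\BS{Z}_i\BS{Z}_i^{\top}\}dN_i(t)$. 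The third term is $G_3=\int_0^\tau[\BS{Z}_i\BS{Z}_i^{\top}-\bar{\BS{Z}}_k(t,\theta_0,\BS{\beta}_0)^{\otimes 2}]dN_i(t)$, and the last is $G_4=\int_0^\tau[\bar{\BS{Z}}_k(t,\theta_0,\BS{\beta}_0)^{\otimes2}-\bar{\BS{Z}}_k(t,\theta,\BS{\beta})^{\otimes2}]dN_i(t)$. This uses the identity $\sum_j \BS{Z}_j\bar{\BS{Z}}_k^{\top}w_{k,j}=\bar{\BS{Z}}_k^{\otimes 2}$. Each $G_l$ is then centered, summed over $i\in I_k$, scaled by $m^{-1/2}$, and bounded in $\|\cdot\|_{\max}$ uniformly over the local set.

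For the deterministic-perturbation terms $G_1$ and $G_4$, I will invoke the ratio bound \eqref{pf-lemma3-new-eq1}: on the local set, $|w_{k,j}(t,\theta,\BS{\beta})-w_{k,j}(t,\theta_0,\BS{\beta}_0)|/w_{k,j}(t,\theta_0,\BS{\beta}_0)=O(s_{\BS{\beta}_0}\sqrt{\log(np)/n})$ uniformly in $j$ and $t$. Combined with $\|\BS{Z}_j\BS{Z}_j^{\top}\|_{\max}\le K_{\BS{Z}}^2$ and $\sum_j w_{k,j}=1$, this gives $\|G_1\|_{\max}=O(s_{\BS{\beta}_0}\sqrt{\log(np)/n})$ for every $i$, and the same bound holds for its expectation. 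For $G_4$, I write $a a^{\top}-b b^{\top}=\tfrac12[(a-b)(a+b)^{\top}+(a+b)(a-b)^{\top}]$ and bound $\|\bar{\BS{Z}}_k(t,\theta_0,\BS{\beta}_0)-\bar{\BS{Z}}_k(t,\theta,\BS{\beta})\|_\infty$ by the same ratio bound, getting $O(K_{\BS{Z}}s_{\BS{\beta}_0}\sqrt{\log(np)/n})$. Summing $m$ such terms and dividing by $\sqrt m$ gives $O(s_{\BS{\beta}_0}\sqrt{\log(np)})$ deterministically, uniformly over the supremum set, exactly as in \eqref{ieqB1} and \eqref{ieqC1}.

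The terms $G_2$ and $G_3$ do not depend on $(\theta,\BS{\beta})$, so their supremum is trivial and only a fixed centered sum needs to be controlled. For $G_2$, set $\BS{a}_i(t)=\BS{Z}_i\BS{Z}_i^{\top}-\sum_{j\in I_k}\BS{Z}_j\BS{Z}_j^{\top}w_{k,j}(t,\theta_0,\BS{\beta}_0)$. This process satisfies $\sum_i \BS{a}_i(t)w_{k,i}(t,\theta_0,\BS{\beta}_0)=0$, so Lemma \ref{concentrationformg} applies entrywise. A union bound over $p^2$ entries gives $O_p(\sqrt{\log(np^2)})=O_p(\sqrt{\log(np)})$. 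For $G_3$, the empirical-process bounds in Lemma \ref{lmentropy} for $\|\mathbb G_m(\int_0^\tau\BS{Z}_i^{\otimes2}dN_i)\|_{\max}$ and $\|\mathbb G_m(\int_0^\tau\bar{\BS{Z}}_k(t,\theta_0,\BS{\beta}_0)^{\otimes2}dN_i)\|_{\max}$ give $O_p(\sqrt{\log(np)})$. Adding the four contributions yields the claimed $O_p(s_{\BS{\beta}_0}\sqrt{\log(np)})$.

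The main obstacle is the matrix dimension in the $G_2$ step. Lemma \ref{concentrationformg} must be applicable to matrix-valued, bounded predictable processes with a max-norm conclusion, which requires a union bound over $p^2$ entries. I expect this to cost only $\log(p^2)=2\log p$, but I need to check that the entrywise process $\BS{a}_i(t)$ satisfies the boundedness hypothesis ($\|\BS{a}_i\|_{\max}\le 2K_{\BS{Z}}^2$) and the weighted-zero-sum condition for each entry separately. The latter holds by linearity. If the lemma is stated only for vectors, I would apply it column by column and then take a further maximum over the $p$ columns.
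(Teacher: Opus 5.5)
Your proposal is correct and follows the paper's proof essentially step for step: the same split into $G_1,\dots,G_4$, the weight-ratio bound \eqref{pf-lemma3-new-eq1} for $G_1$ and $G_4$, Lemma \ref{concentrationformg} for $G_2$ (the paper already states that lemma with the matrix extension via a union bound, giving $O_p(\sqrt{\log(np^2)})$), and Lemma \ref{lmentropy} for $G_3$. One small point in your favour: for $G_4$, your symmetrized identity $aa^{\top}-bb^{\top}=\tfrac12[(a-b)(a+b)^{\top}+(a+b)(a-b)^{\top}]$ is more careful than the paper's factorization $(a-b)(a+b)^{\top}$, which is not an identity for outer products, although both lead to the same bound.
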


\noindent \textit{Proof}. This proof is also similar to the proof of Lemma \ref{lmentropy0-1} in Section \ref{SecB.7} and the proof of Lemma \ref{lmentropy0-2} in Section \ref{SecC.8}.  For each $i \in I_k$, we decompose the integrand in the inequality  as
\begin{align*}
	&  \int_0^{\tau}  \sum_{j \in I_k} \BS{Z}_j   (\BS{Z}_j - \bar {\BS{Z}}_k(t,  \theta, {\BS{\beta}}))^{\top} w_{k,j} (t,  \theta, {\BS{\beta}})  dN_i(t) \\
	&=  \int_0^\tau \sum_{j\in I_k} [\BS{Z}_j^{\otimes 2} w_{k,j}(t,\theta,\BS{\beta}) - \BS{Z}_j^{\otimes 2} w_{k,j}(t,\theta_0,\BS{\beta}_0)] dN_i(t)\\
	&  + \int_0^\tau \big \{\sum_{j\in I_k} [\BS{Z}_j^{\otimes 2}  w_{k,j}(t,\theta_0,\BS{\beta}_0)] -\BS{Z}_i^{\otimes 2} \big \}dN_i(t) \\
	&  +  \int_0^\tau [\BS{Z}_i^{\otimes 2}  - \bar {\BS{Z}}_k(t,\theta_0,\BS{\beta}_0)^{\otimes 2} ] dN_i(t) \\
	&  +  \int_0^\tau [\bar {\BS{Z}}_k(t,\theta_0,\BS{\beta}_0)^{\otimes 2}  - \bar {\BS{Z}}_k(t,\theta,\BS{\beta})^{\otimes 2} ] dN_i(t) \\
	&= G_1 +G_2 +G_3 +G_4.
\end{align*} 

For the first term $G_1$, we have
\begin{align*}
	\|G_1 \|_{\max} = & \bigg\|  \int_0^\tau \sum_{j\in I_k} [\BS{Z}_j^{\otimes 2} w_{k,j}(t,\theta,\BS{\beta}) - \BS{Z}_j^{\otimes 2} w_{k,j}(t,\theta_0,\BS{\beta}_0)] dN_i(t) \bigg\|_{\max}  \\
	\leq & \int_0^\tau \sum_{j\in I_k} \|  \BS{Z}_j^{\otimes 2}  \|_{\max} w_{k, j}(t,\theta_0,\BS{\beta}_0) \frac{|w_{k, j}(t,\theta,\BS{\beta}) - w_{k, j}(t,\theta_0,\BS{\beta}_0)|}{w_{k, j}(t,\theta_0,\BS{\beta}_0)}] dN_i(t)\\
	\leq & \frac{2(\exp( (2C_0 + K_{\BS{Z}}\tilde{C})  s_{\BS{\beta}_0} \sqrt{\log(np) /n})-1)}{1 - C_1s_{\BS{\beta}_0} \sqrt{\log(np) /n}}\int_0^\tau \sum_{j\in I_k} [ \|  \BS{Z}_j^{\otimes 2} \|_{\max} w_{k, j}(t,\theta_0,\BS{\beta}_0)] dN_i(t)\\ 
	\leq & K_{\BS{Z}}^2 \frac{2(\exp( (2C_0 + K_{\BS{Z}}\tilde{C}) s_{\BS{\beta}_0} \sqrt{\log(np) /n})-1)}{1 - C_1s_{\BS{\beta}_0} \sqrt{\log(np) /n}}\\
	 = & O_p(s_{\BS{\beta}_0}\sqrt{\log(np)/n}).
\end{align*}
Thus, the expectation of each component of $G_1$ is also of order $O(s_{\BS{\beta}_0}\sqrt{\log(np)/n})$. Summing over all $i \in I_k$ and divided by $\sqrt{m}$ derives
\begin{align}\label{ieqG1}
&  \sup_{\substack{\theta \in \mathcal{N}_{\theta_0}^* \\  \| \BS{\beta} -\BS{\beta}_0\|_1 \le \tilde C s_{\BS{\beta}_0} \sqrt{\log(np) /n}} } \bigg\| \frac{1}{\sqrt{m}} \sum_{i \in I_k} \bigg\{  \int_0^\tau \sum_{j\in I_k} [\BS{Z}_j^{\otimes 2} w_{k,j}(t,\theta,\BS{\beta}) - \BS{Z}_j^{\otimes 2} w_{k,j}(t,\theta_0,\BS{\beta}_0)] dN_i(t)  \nonumber \\
&  - \mathbb{E} \Big[   \int_0^\tau \sum_{j\in I_k} [\BS{Z}_j^{\otimes 2} w_{k,j}(t,\theta,\BS{\beta}) - \BS{Z}_j^{\otimes 2} w_{k,j}(t,\theta_0,\BS{\beta}_0)] dN_i(t) \Big] \bigg\}  \bigg\|_{\max}   \le O_p(s_{\BS{\beta}_0}\sqrt{\log(np) }).
\end{align} 



For the second term $G_2$, let us define
\begin{eqnarray*}
	\BS{a}_i(t) = \BS{Z}_i^{\otimes 2} - \sum_{j\in I_k} [\BS{Z}_j^{\otimes 2}   w_{k,j}(t,\theta_0,\BS{\beta}_0)]
\end{eqnarray*}
for all $i\in I_k$. Observe that 
\begin{align*}
	&\sum_{i\in I_k} \BS{a}_i(t)w_{k,i}(t,\theta_0,\BS{\beta}_0) \\
	&= \sum_{i\in I_k} [\BS{Z}_i^{\otimes 2}w_{k,i}(t,\theta_0,\BS{\beta}_0) - \sum_{j\in I_k} \BS{Z}_j^{\otimes 2}  w_{k,j}(t,\theta_0,\BS{\beta}_0)]w_{k,i}(t,\theta_0,\BS{\beta}_0) \\
	& = \sum_{i\in I_k}[\BS{Z}_i^{\otimes 2}w_{k,i}(t,\theta_0,\BS{\beta}_0)] - \sum_{j\in I_k} [\BS{Z}_j^{\otimes 2}  w_{k,j}(t,\theta_0,\BS{\beta}_0)] \sum_{i\in I_k}w_{k,i}(t,\theta_0,\BS{\beta}_0) \\
	& = 0.
\end{align*}
Thus, an application of Lemma \ref{concentrationformg} again leads to 
\begin{eqnarray*}
	\frac{1}{\sqrt{m}}\bigg \| \sum_{i\in I_k}\int_0^\tau \BS{a}_i(t)dN_i(t)- \mathbb{E}\sum_{i\in I_k}\int_0^\tau \BS{a}_i(t)dN_i(t) \bigg\|_{\max}= O_p(\sqrt{\log(np^2)}).
\end{eqnarray*}
Therefore, we obtain that
\begin{eqnarray}\label{ieqG2}
& & \sup_{\substack{\theta \in \mathcal{N}_{\theta_0}^* \\  \| \BS{\beta} -\BS{\beta}_0\|_1 \le \tilde C s_{\BS{\beta}_0} \sqrt{\log(np) /n}} } \bigg\| \frac{1}{\sqrt{m}} \sum_{i \in I_k} \bigg\{  \int_0^\tau \big \{\sum_{j\in I_k} [\BS{Z}_j^{\otimes 2}  w_{k,j}(t,\theta_0,\BS{\beta}_0)] -\BS{Z}_i^{\otimes 2}  \big \}dN_i(t)  \nonumber \\
& & - \mathbb{E} \Big[ \int_0^{\tau} \big \{\sum_{j\in I_k} [\BS{Z}_j^{\otimes 2}  w_{k,j}(t,\theta_0,\BS{\beta}_0)] -\BS{Z}_i^{\otimes 2} \big \} dN_i(t) \Big] \bigg\}  \bigg\|_{\max}   \le O_p(\sqrt{\log(np^2)}).
\end{eqnarray} 

For the third term $G_3 = \int_0^\tau \BS{Z}_i^{\otimes 2}  dN_i(t) - \int_0^\tau  \bar {\BS{Z}}_k(t,\theta_0,\BS{\beta}_0)^{\otimes 2}  dN_i(t)$,  it follows from \eqref{eq-neww-Gm-Z2} and \eqref{eq-neww-G-matrix} in Lemma \ref{lmentropy} that 
\begin{eqnarray}\label{ieqG3}
& & \sup_{\substack{\theta \in \mathcal{N}_{\theta_0}^* \\  \| \BS{\beta} -\BS{\beta}_0\|_1 \le \tilde C s_{\BS{\beta}_0} \sqrt{\log(np) /n}} } \big\| \mathbb G_m (G_3)\big\|_{\infty}   \le O_p( \sqrt{\log(np)} ).
\end{eqnarray}

For $G_4$, similarly we rewrite 
\begin{align*}
	\big\|G_4\big\|_{\max}&=\bigg\| \int_0^\tau [\bar {\BS{Z}}_k(t,\theta_0,\BS{\beta}_0)^{\otimes 2}  - \bar {\BS{Z}}_k(t,\theta,\BS{\beta})^{\otimes 2} ] dN_i(t)\bigg\|_{\max}\\
	&= \bigg\|\int_0^\tau (\bar {\BS{Z}}_k(t, \theta_0,  {\BS{\beta}_0}) - \bar {\BS{Z}}_k(t, \theta,  {\BS{\beta}}))(\bar {\BS{Z}}_k(t, \theta_0,  {\BS{\beta}_0})+\bar {\BS{Z}}_k(t, \theta,  {\BS{\beta}}) )^\top dN_i(t)\bigg\|_{\max}\\
	& \leq   2K_Z\int_0^\tau \|\bar {\BS{Z}}_k(t, \theta_0,  {\BS{\beta}_0})-\bar {\BS{Z}}_k(t, \theta,  {\BS{\beta}})\|_{\infty} dN_i(t) \\
	& =  2K_Z\int_0^\tau  \sum_{j\in I_k} [\|\BS{Z}_j\|_\infty w_{k, j}(t,\theta_0,\BS{\beta}_0)\frac{|w_{k, j}(t,\theta,\BS{\beta}) - w_{k, j}(t,\theta_0,\BS{\beta}_0)|}{w_{k, j}(t,\theta_0,\BS{\beta}_0)}]dN_i(t).
\end{align*}
Taking into account the upper bound of $\frac{|w_{k, j}(t,\theta,\BS{\beta}) - w_{k, j}(t,\theta_0,\BS{\beta}_0)|}{w_{k, j}(t,\theta_0,\BS{\beta}_0)}$ obtained in \eqref{pf-lemma3-new-eq1} gives us that
\begin{eqnarray}\label{ieqG4}
	\big\|\mathbb G_m(G_4)\big\|_\infty = O(s_{\BS{\beta}_0}\sqrt{\log(np) }).
\end{eqnarray}
Summing up all four statements (\ref{ieqG1})--(\ref{ieqG4}) leads to the desired conclusion. This completes the proof of Lemma \ref{lmentropy0-3}.

\subsection{Lemma \ref{lmentropy0-4} and its proof} \label{SecB.9-new}
Recall that $\mathcal{T}_N =\{\BS{\eta}_a= (\BS{\beta}^{\top}, \BS{\mu}_a^{\top})^{\top} \in \mathcal{B} \times \mathcal{U} \subset  \mathbb{R}^{2p}: \, \|\BS{\beta}-\BS{\beta}_0\|_1 \le \tilde C s_{\BS{\beta}_0} \sqrt{\log (np)/n}, \, 
     \|\BS{\mu}_a-\BS{\mu}_{a0} \|_1 \le \tilde C {\|\BS{\Sigma}}^{-1}_{\BS{\beta}\BS{\beta}} \|_{\infty}  s_{{\BS{\mu}}_{a0}} (s_{\BS{\beta}_0} \sqrt{\log (np) /n} + \tau_n) \}$.
\begin{lemma}\label{lmentropy0-4}
	Assume Conditions \ref{coninicox} -  \ref{contheta} are satisfied. Then for any $k \in \{1, \cdots, K\}$ and conditional on $\{\mathcal{O}_i\}_{i \in I_k^c}$ and the event $\hat{\BS{\eta}}_k \in \mathcal{T}_N $, 
	we have
	\begin{align*}
		&\sup_{ \theta \in \mathcal{N}_{\theta_0^*} } \big| \mathbb{G}_m [\Phi (\{\mathcal{O}_i\}_{i \in I_k};\theta, \hat{\BS{\eta}}_k) -\Phi (\{\mathcal{O}_i\}_{i \in I_k};\theta_0, \BS{\eta}_{a0})] \big| \\
        &\leq O_p\big(  (s_{\BS{\beta}_0}^2 + {\|\BS{\Sigma}}^{-1}_{\BS{\beta}\BS{\beta}} \|_{\infty} s_{{\BS{\mu}}_{a0}}  s_{\BS{\beta}_0}  ) \log(np) /\sqrt{n}  +  \tau_n  {\|\BS{\Sigma}}^{-1}_{\BS{\beta}\BS{\beta}} \|_{\infty} s_{{\BS{\mu}}_{a0}}   \sqrt{\log (np)}   \big)
	\end{align*} 
\end{lemma}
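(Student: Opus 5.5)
We decompose the difference $\Phi(\{\mathcal{O}_i\}_{i\in I_k};\theta,\hat{\BS{\eta}}_k)-\Phi(\{\mathcal{O}_i\}_{i\in I_k};\theta_0,\BS{\eta}_{a0})$ along the two nuisance components and then apply the uniform empirical-process bounds already established in Lemmas \ref{lmentropy0-1}--\ref{lmentropy0-3}. Throughout we work conditionally on $\{\mathcal{O}_i\}_{i\in I_k^c}$. Under this conditioning, $\hat{\BS{\eta}}_k=(\hat{\BS{\beta}}_k^{\top},\hat{\BS{\mu}}_{ak}^{\top})^{\top}$ is a fixed point of $\mathcal{T}_N$, and the sum over $I_k$ is an i.i.d.-type sum up to the risk-set dependence, which those lemmas already handle. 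We first write
\[
\Phi(\theta,\hat{\BS{\eta}}_k)-\Phi(\theta_0,\BS{\eta}_{a0})
=\big[\dot l_\theta(\theta,\hat{\BS{\beta}}_k)-\dot l_\theta(\theta_0,\BS{\beta}_0)\big]
-\BS{\mu}_{a0}^{\top}\big[\dot l_{\BS{\beta}}(\theta,\hat{\BS{\beta}}_k)-\dot l_{\BS{\beta}}(\theta_0,\BS{\beta}_0)\big]
-(\hat{\BS{\mu}}_{ak}-\BS{\mu}_{a0})^{\top}\dot l_{\BS{\beta}}(\theta,\hat{\BS{\beta}}_k),
\]
with all scores computed on $I_k$. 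We then apply $\mathbb{G}_m$ to each bracket separately.

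For the first two brackets we use a mean-value expansion in $(\theta,\BS{\beta})$ along the segment to $(\theta_0,\BS{\beta}_0)$. For the Cox score, $\partial_\theta\dot l_\theta$ and $\partial_{\BS{\beta}}\dot l_\theta$ are exactly minus the integrands $\sum_j D_j(D_j-\bar D_k)w_{k,j}\,dN_i$ and $\sum_j D_j(\BS{Z}_j-\bar{\BS{Z}}_k)w_{k,j}\,dN_i$. Likewise, the derivatives of $\dot l_{\BS{\beta}}$ are the matrix analogues $\sum_j \BS{Z}_j(D_j-\bar D_k)w_{k,j}$ and $\sum_j\BS{Z}_j(\BS{Z}_j-\bar{\BS{Z}}_k)^{\top}w_{k,j}$. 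Since $\mathbb{G}_m$ commutes with the linear increment, we obtain
\[
|\mathbb{G}_m[\text{first bracket}]|\le |\theta-\theta_0|\sup|\mathbb{G}_m(\cdot)|+\|\hat{\BS{\beta}}_k-\BS{\beta}_0\|_1\sup\|\mathbb{G}_m(\cdot)\|_\infty .
\]
The suprema are over $\mathcal{N}_{\theta_0}^*$ and the $\ell_1$-ball, and these are precisely the quantities bounded by $O_p(s_{\BS{\beta}_0}\sqrt{\log(np)})$ in Lemmas \ref{lmentropy0-1}--\ref{lmentropy0-3}. For the cross term in $\theta$ we use the transpose of Lemma \ref{lmentropy0-2}. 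Because $|\theta-\theta_0|+\|\hat{\BS{\beta}}_k-\BS{\beta}_0\|_1\lesssim s_{\BS{\beta}_0}\sqrt{\log(np)/n}$, the first bracket is $O_p(s_{\BS{\beta}_0}^2\log(np)/\sqrt n)$. The second bracket picks up an extra factor $\|\BS{\mu}_{a0}\|_1\le K_{\BS{\mu}}$ through $|\BS{\mu}_{a0}^{\top}\BS{v}|\le\|\BS{\mu}_{a0}\|_1\|\BS{v}\|_\infty$ and is of the same order. One subtlety needs care here: the mean-value point depends on $\theta$ and on the data. To avoid this, we write the increment as an integral $\int_0^1$ of the derivative along the path, so that only the supremum over the neighbourhood matters. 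This is exactly the form in which the lemmas are stated.

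For the third term we use $|(\hat{\BS{\mu}}_{ak}-\BS{\mu}_{a0})^{\top}\mathbb{G}_m\dot l_{\BS{\beta}}|\le\|\hat{\BS{\mu}}_{ak}-\BS{\mu}_{a0}\|_1\sup_{\theta}\|\mathbb{G}_m\dot l_{\BS{\beta}}(\theta,\hat{\BS{\beta}}_k)\|_\infty$. We then bound $\|\mathbb{G}_m\dot l_{\BS{\beta}}(\theta,\hat{\BS{\beta}}_k)\|_\infty\le\|\mathbb{G}_m\dot l_{\BS{\beta}}(\theta_0,\BS{\beta}_0)\|_\infty+O_p(s_{\BS{\beta}_0}^2\log(np)/\sqrt n)$ via the previous step. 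The leading piece is $O_p(\sqrt{\log(np)})$ by Lemma \ref{lmentropy}, or equivalently by the martingale representation together with Lemma \ref{concentrationformg}. Multiplying by the $\mathcal{T}_N$ radius $\|\BS{\Sigma}^{-1}_{\BS{\beta}\BS{\beta}}\|_\infty s_{\BS{\mu}_{a0}}(s_{\BS{\beta}_0}\sqrt{\log(np)/n}+\tau_n)$ gives exactly the stated terms $\|\BS{\Sigma}^{-1}_{\BS{\beta}\BS{\beta}}\|_\infty s_{\BS{\mu}_{a0}}s_{\BS{\beta}_0}\log(np)/\sqrt n+\tau_n\|\BS{\Sigma}^{-1}_{\BS{\beta}\BS{\beta}}\|_\infty s_{\BS{\mu}_{a0}}\sqrt{\log(np)}$. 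The product remainder from the $O_p(s_{\BS{\beta}_0}^2\log(np)/\sqrt n)$ correction is of smaller order by Condition \ref{coninimu}(v).

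The main obstacle is legitimacy of applying the neighbourhood-uniform bounds to a random nuisance point. We will invoke them conditionally on $I_k^c$ and on the event $\hat{\BS{\eta}}_k\in\mathcal{T}_N$, whose probability tends to one by Propositions \ref{lminicox} and \ref{lminimut}. Since the lemmas control suprema over deterministic balls, plugging in the conditionally fixed $\hat{\BS{\beta}}_k$ is valid, and we can remove the conditioning by dominated convergence on the $O_p$ statements.
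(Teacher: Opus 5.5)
Your proposal is correct and follows essentially the paper's route. You linearize the score increment in $(\theta,\BS{\beta})$, bound the resulting derivative processes uniformly via Lemmas \ref{lmentropy0-1}--\ref{lmentropy0-3} times the local radii, and control the $(\hat{\BS{\mu}}_{ak}-\BS{\mu}_{a0})$ term by its $\ell_1$ radius times the $O_p(\sqrt{\log(np)})$ sup-norm of the Cox score. The only differences are cosmetic: you attach $\BS{\mu}_{a0}$ instead of $\hat{\BS{\mu}}_{ak}$ to the increment, which creates a harmless cross term, and your integral-form remainder along the segment handles more cleanly the data-dependent mean-value points $(\tilde\theta,\tilde{\BS{\beta}})$ that the paper uses.
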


\noindent \textit{Proof}. First, noting that conditional on $\{\mathcal{O}_i\}_{i \in I_k^c}$ and $I_k^c$, the estimator $\hat {\BS{\eta}}_k$ can be regarded as fixed, thus the expectation $\mathbb{E}$ is only with respect to $\{\mathcal{O}_i\}_{i \in I_k}$. By definition,  we have
\begin{align*}
	\Phi (\{\mathcal{O}_i\}_{i \in I_k}; \theta, \BS{\eta}) &= \dot l_{ \theta}(\{\mathcal{O}_i\}_{i \in I_k}; \theta, \BS{\beta})-  \BS{\mu}_{a}^{\top} \dot l_{ \BS{\beta}}(\{\mathcal{O}_i\}_{i \in I_k}; \theta, \BS{\beta}) \\
	&=  \frac{1}{m} \sum_{i \in I_k} \int_0^{\tau} (D_i - \bar D_k(t,  \theta, \BS{\beta})) dN_i(t) \\
	& -  \frac{1}{m} \sum_{i \in I_k} \int_0^{\tau} \BS{\mu}_{a}^{\top} ( \BS{Z}_i -\bar {\BS{Z}}_k (t, \theta, \BS{\beta}) ) dN_i(t),
\end{align*}
 It follows by the mean value theorem that  for any $\theta \in \mathcal{N}_{\theta_0}^*$, there exist  $\tilde \theta$ lying between $\theta$ and $\theta_0$, and $\tilde {\BS{\beta}}$ lying  between $\hat {\BS{\beta}}_k$ and $\BS{\beta}_0$ such that 
\begin{align*}
	&  \Phi (\{\mathcal{O}_i\}_{i \in I_k};\theta, \hat{\BS{\eta}}_k) -\Phi (\{\mathcal{O}_i\}_{i \in I_k};\theta_0, \BS{\eta}_{a0}) \\
	&=  -\frac{1}{m} \sum_{i \in I_k} \int_0^{\tau} (\bar D_k(t, \theta, \hat{ \BS{\beta}}_k) - \bar D_k(t,  \theta_0, \BS{\beta}_0)) dN_i(t) \\
	&  + \frac{1}{m} \sum_{i \in I_k} \int_0^{\tau} \hat {\BS{\mu}}_{ak}^{\top} ( \bar {\BS{Z}}_k (t, \theta, \hat {\BS{\beta}}_k) -\bar {\BS{Z}}_k (t, \theta_0, \BS{\beta}_0) ) dN_i(t) \\
	&  -   \frac{1}{m} \sum_{i \in I_k} \int_0^{\tau} (\hat {\BS{\mu}}_{ak} -\BS{\mu}_{a0})^{\top} ( \BS{Z}_i -\bar {\BS{Z}}_k (t, \theta_0, \BS{\beta}_0) ) dN_i(t) \\
	&= -\frac{1}{m} \sum_{i \in I_k} \int_0^{\tau} \sum_{j \in I_k} D_j (D_j - \bar {D}_k(t, \tilde \theta, \tilde {\BS{\beta}})) w_{k,j} (t, \tilde \theta, \tilde {\BS{\beta}}) ({\theta}  -\theta_0) dN_i(t) \\
	&-\frac{1}{m} \sum_{i \in I_k} \int_0^{\tau} \sum_{j \in I_k} D_j (\BS{Z}_j - \bar {\BS{Z}}_k(t, \tilde \theta, \tilde {\BS{\beta}}))^{\top} w_{k,j} (t, \tilde \theta, \tilde {\BS{\beta}}) (\hat {\BS{\beta}}_k  -\BS{\beta}_0) dN_i(t)\\
	&  + \frac{1}{m} \sum_{i \in I_k} \int_0^{\tau} \hat {\BS{\mu}}_{ak}^{\top} \sum_{j \in I_k} D_j (\BS{Z}_j - \bar {\BS{Z}}_k(t, \tilde \theta, \tilde {\BS{\beta}}))  w_{k,j} (t, \tilde \theta, \tilde {\BS{\beta}}) (\theta -\theta_0) dN_i(t) \\
	&  + \frac{1}{m} \sum_{i \in I_k} \int_0^{\tau} \hat {\BS{\mu}}_{ak}^{\top} \sum_{j \in I_k} \BS{Z}_j   (\BS{Z}_j - \bar {\BS{Z}}_k(t, \tilde \theta, \tilde {\BS{\beta}}))^{\top} w_{k,j} (t, \tilde \theta, \tilde {\BS{\beta}}) (\hat {\BS{\beta}}_k  -\BS{\beta}_0) dN_i(t)  \\
	&  -   \frac{1}{m} \sum_{i \in I_k} \int_0^{\tau} (\hat {\BS{\mu}}_{ak} -\BS{\mu}_{a0})^{\top} ( \BS{Z}_i -\bar {\BS{Z}}_k (t, \theta_0, \BS{\beta}_0) ) dN_i(t). \\
    & := H_1 + H_2 + H_3 + H_4 + H_5.
\end{align*}
Next we proceed to deal with the empirical measure of the above five terms separately. All the following arguments are conditional on $\{\mathcal{O}_i\}_{i \in I_k^c}$. First, for $H_1$, we have
\begin{align*}
 	 |\mathbb{G}_m [H_1] | & =  \bigg| \frac{1}{\sqrt{m}} \sum_{i \in I_k} \bigg\{ \int_0^{\tau} \sum_{j \in I_k} D_j (D_j - \bar {D}_k(t, \tilde \theta, \tilde {\BS{\beta}})) w_{k,j} (t, \tilde \theta, \tilde {\BS{\beta}}) ({\theta}  -\theta_0) dN_i(t) \\
	&  - \mathbb{E} \Big[ \int_0^{\tau} \sum_{j \in I_k} D_j (D_j - \bar D_k(t, \tilde \theta, \tilde {\BS{\beta}})) w_{k,j} (t, \tilde \theta, \tilde {\BS{\beta}}) (\theta- \theta_0) dN_i(t) \Big] \bigg\} \bigg| \\
	& \le   \sup_{\substack{\theta \in \mathcal{N}_{\theta_0^*} \\  \| \BS{\beta} -\BS{\beta}_0\|_1 \le \tilde C s_{\BS{\beta}_0} \sqrt{\log(np) /n}} } \bigg| \frac{1}{\sqrt{m}} \sum_{i \in I_k} \bigg\{ \int_0^{\tau} \sum_{j \in I_k} D_j (D_j - \bar D_k(t,  \theta, \BS{\beta})) w_{k,j} (t,  \theta,  \BS{\beta}) dN_i(t)  \\
	&  - \mathbb{E} \Big[ \int_0^{\tau} \sum_{j \in I_k} D_j (D_j - \bar D_k(t, \theta,  \BS{\beta})) w_{k,j} (t,  \theta,  \BS{\beta})  dN_i(t) \Big] \bigg\}  \bigg|   \sup_{\theta \in \mathcal{N}_{\theta_0^*}} |\theta- \theta_0| \\
	& \le    \sup_{\substack{\theta \in \mathcal{N}_{\theta_0^*} \\  \| \BS{\beta} -\BS{\beta}_0\|_1 \le \tilde C s_{\BS{\beta}_0} \sqrt{\log(np) /n}} } \bigg| \frac{1}{\sqrt{m}} \sum_{i \in I_k} \bigg\{ \int_0^{\tau} \sum_{j \in I_k} D_j (D_j - \bar D_k(t,  \theta, \BS{\beta})) w_{k,j} (t,  \theta,  \BS{\beta}) dN_i(t) \\
	&  - \mathbb{E} \Big[ \int_0^{\tau} \sum_{j \in I_k} D_j (D_j - \bar D_k(t, \theta,  \BS{\beta})) w_{k,j} (t,  \theta,  \BS{\beta})  dN_i(t) \Big] \bigg\}  \bigg|  \cdot  O(s_{\BS{\beta}_0} \sqrt{\log (np)/{n}} ).
\end{align*}
Therefore, it follows from Lemma \ref{lmentropy0-1} that $ |\mathbb{G}_m [H_1] | = O_p ( s_{\BS{\beta}_0}^2 \log(np) /\sqrt {n} )$.

 Similarly, it follows from Lemmas \ref{lmentropy0-2} and \ref{lmentropy0-3} that 
  
\begin{align*}
	  |\mathbb{G}_m [H_2] | & =   \bigg| \frac{1}{\sqrt{m}} \sum_{i \in I_k}  \bigg\{ \int_0^{\tau} \sum_{j \in I_k} D_j (\BS{Z}_j - \bar {\BS{Z}}_k(t, \tilde \theta, \tilde {\BS{\beta}}))^{\top} w_{k,j} (t, \tilde \theta, \tilde {\BS{\beta}}) (\hat {\BS{\beta}}_k  -\BS{\beta}_0) dN_i(t)\\
	&  - \mathbb{E} \Big[  \int_0^{\tau} \sum_{j \in I_k} D_j (\BS{Z}_j - \bar {\BS{Z}}_k(t, \tilde \theta, \tilde {\BS{\beta}}))^{\top} w_{k,j} (t, \tilde \theta, \tilde {\BS{\beta}})(\hat {\BS{\beta}}_k -\BS{\beta}_0) dN_i(t) \Big] \bigg\} \bigg| \\
	& \le   \sup_{\substack{\theta \in \mathcal{N}_{\theta_0^*} \\  \| \BS{\beta} -\BS{\beta}_0\|_1 \le \tilde C s_{\BS{\beta}_0} \sqrt{\log(np) /n}} } \bigg\| \frac{1}{\sqrt{m}} \sum_{i \in I_k}  \bigg\{ \int_0^{\tau} \sum_{j \in I_k} D_j (\BS{Z}_j - \bar {\BS{Z}}_k(t,  \theta,  {\BS{\beta}}))^{\top} w_{k,j} (t,  \theta,  {\BS{\beta}}) dN_i(t)\\
	&  - \mathbb{E} \Big[  \int_0^{\tau} \sum_{j \in I_k} D_j (\BS{Z}_j - \bar {\BS{Z}}_k(t, \theta,  {\BS{\beta}}))^{\top} w_{k,j} (t, \theta,  {\BS{\beta}})dN_i(t) \Big] \bigg\} \bigg\|_{\infty}  \cdot \| \hat {\BS{\beta}}_k -\BS{\beta}_0\|_1  \\
    & \leq  O_{p} ( s_{\BS{\beta}_0}^2 \log(np) /\sqrt {n} ),
\end{align*}
\begin{align*}
	 |\mathbb{G}_m [H_3] |  &=  \bigg| \frac{1}{\sqrt{m}} \sum_{i \in I_k} \bigg\{ \int_0^{\tau} \hat {\BS{\mu}}_{ak}^{\top} \sum_{j \in I_k} D_j (\BS{Z}_j - \bar {\BS{Z}}_k(t, \tilde \theta, \tilde {\BS{\beta}}))  w_{k,j} (t, \tilde \theta, \tilde {\BS{\beta}}) (\theta -\theta_0) dN_i(t) \\
	&  - \mathbb{E} \Big[\int_0^{\tau} \hat {\BS{\mu}}_{ak}^{\top} \sum_{j \in I_k} D_j  (\BS{Z}_j - \bar {\BS{Z}}_k(t, \tilde \theta, \tilde {\BS{\beta}})) w_{k,j} (t, \tilde \theta, \tilde {\BS{\beta}}) (\theta -\theta_0) dN_i(t) \Big] \bigg\} \bigg| \\
	& \le  \sup_{\substack{\theta \in \mathcal{N}_{\theta_0^*} \\  \| \BS{\beta} -\BS{\beta}_0\|_1 \le \tilde C s_{\BS{\beta}_0} \sqrt{\log(np) /n}} }  \bigg\| \frac{1}{\sqrt{m}} \sum_{i \in I_k} \bigg\{ \int_0^{\tau}  \sum_{j \in I_k} D_j (\BS{Z}_j - \bar {\BS{Z}}_k(t, \theta,  {\BS{\beta}})) w_{k,j} (t, \theta, {\BS{\beta}}) dN_i(t) \\
	&  - \mathbb{E} \Big[\int_0^{\tau} \sum_{j \in I_k} D_j  (\BS{Z}_j - \bar {\BS{Z}}_k(t, \theta, {\BS{\beta}})) w_{k,j} (t, \theta, {\BS{\beta}})  dN_i(t) \Big] \bigg\} \bigg\|_{\infty} \cdot \|\hat {\BS{\mu}}_{ak} \|_1 \cdot  \sup_{\theta \in \mathcal{N}_{\theta_0^*}} |\theta -\theta_0| \\
    & \leq  O_p( s_{\BS{\beta}_0}^2 \log(np) /\sqrt {n} )
\end{align*}
\begin{align*}
	 |\mathbb{G}_m [H_4] | &=  \bigg| \frac{1}{\sqrt{m}} \sum_{i \in I_k} \bigg\{ \int_0^{\tau} \hat {\BS{\mu}}_{ak}^{\top} \sum_{j \in I_k} \BS{Z}_j   (\BS{Z}_j - \bar {\BS{Z}}_k(t, \tilde \theta, \tilde {\BS{\beta}}))^{\top} w_{k,j} (t, \tilde \theta, \tilde {\BS{\beta}}) (\hat {\BS{\beta}}_k  -\BS{\beta}_0) dN_i(t)  \\
	&  - \mathbb{E} \Big[ \int_0^{\tau} \hat {\BS{\mu}}_{ak}^{\top} \sum_{j \in I_k} \BS{Z}_j (\BS{Z}_j - \bar {\BS{Z}}_k(t, \tilde \theta, \tilde {\BS{\beta}}))^{\top} w_{k,j} (t, \tilde \theta, \tilde {\BS{\beta}}) (\hat {\BS{\beta}}_k  -\BS{\beta}_0) dN_i(t) \Big] \bigg\} \bigg| \\
	& \le  \sup_{\substack{\theta \in \mathcal{N}_{\theta_0^*} \\  \| \BS{\beta} -\BS{\beta}_0\|_1 \le \tilde C s_{\BS{\beta}_0} \sqrt{\log(np) /n}} } \bigg\| \frac{1}{\sqrt{m}} \sum_{i \in I_k} \bigg\{ \int_0^{\tau}  \sum_{j \in I_k} \BS{Z}_j   (\BS{Z}_j - \bar {\BS{Z}}_k(t,  \theta, {\BS{\beta}}))^{\top} w_{k,j} (t,  \theta, {\BS{\beta}})  dN_i(t)  \\
	&  - \mathbb{E} \Big[ \int_0^{\tau}  \sum_{j \in I_k} \BS{Z}_j (\BS{Z}_j - \bar {\BS{Z}}_k(t,  \theta,  {\BS{\beta}}))^{\top} w_{k,j} (t,  \theta,  {\BS{\beta}})  dN_i(t) \Big] \bigg\} \bigg\|_{\infty} \cdot \|\hat {\BS{\mu}}_{ak} \|_1 \cdot \| \hat {\BS{\beta}}_k -\BS{\beta}_0\|_1 \\
    & \leq  O_p( s_{\BS{\beta}_0}^2 \log(np) /\sqrt {n} ),
\end{align*}
where we have used the result that 
\begin{align*}
\| \hat{\BS{\mu}}_{a k} \|_1  \leq \| \hat{\BS\mu}_{a k}  -  \BS{\mu}_{a0}\|_1 + \|  \BS{\mu}_{a0} \|_1 \leq C {\|\BS{\Sigma}}^{-1}_{\BS{\beta}\BS{\beta}} \|_{\infty}  s_{{\BS{\mu}}_{a0}} (s_{\BS{\beta}_0} \sqrt{\log (np) /n} + \tau_n) + K_{\BS{\mu}} = O(1)
\end{align*}
by Proposition \ref{lminimut} and Condition \ref{coninimu} (v)--(vi). 

Regarding $ H_5 $, it follows from Proposition \ref{lminimut} and Lemma 2 (i) in \cite{yu2021confidence}  that 
\begin{align*}
    & \bigg| \frac{1}{\sqrt m} \sum_{i \in I_k} \int_0^{\tau} (\hat {\BS{\mu}}_{ak} -\BS{\mu}_{a0})^{\top} ( \BS{Z}_i -\bar {\BS{Z}}_k (t, \theta_0, \BS{\beta}_0) ) dN_i(t) \bigg| \\
    &\le  \| \hat {\BS{\mu}}_{ak} -\BS{\mu}_{a0} \|_1  \cdot \Big\| \frac{1}{\sqrt{m}} \sum_{i \in I_k} \int_0^{\tau}  ( \BS{Z}_i -\bar {\BS{Z}}_k (t, \theta_0, \BS{\beta}_0) ) dN_i(t)  \Big\|_{\infty} \\
	& \le O_p({\|\BS{\Sigma}}^{-1}_{\BS{\beta}\BS{\beta}} \|_{\infty}  s_{{\BS{\mu}}_{a0}} (s_{\BS{\beta}_0} \sqrt{\log (np) /n} + \tau_n) ) O_p(\sqrt{\log (np)}) \\
	& = O_p\big(  {\|\BS{\Sigma}}^{-1}_{\BS{\beta}\BS{\beta}} \|_{\infty} s_{{\BS{\mu}}_{a0}}  s_{\BS{\beta}_0}   \log(np) /\sqrt{n}  +  \tau_n  {\|\BS{\Sigma}}^{-1}_{\BS{\beta}\BS{\beta}} \|_{\infty} s_{{\BS{\mu}}_{a0}}   \sqrt{\log (np)}   \big) .
\end{align*}

Therefore, combining all the above results together, we can obtain that
\begin{align*}
	& \sup_{\theta \in \mathcal{N}_{\theta_0}^*} \big| \mathbb{G}_m [\Phi (\{\mathcal{O}_i\}_{i \in I_k};\theta, \hat{\BS{\eta}}_k) -\Phi (\{\mathcal{O}_i\}_{i \in I_k};\theta_0, \BS{\eta}_{a0})] \big| \\
    & \le  O_p\big(  (s_{\BS{\beta}_0}^2 + {\|\BS{\Sigma}}^{-1}_{\BS{\beta}\BS{\beta}} \|_{\infty} s_{{\BS{\mu}}_{a0}}  s_{\BS{\beta}_0}  ) \log(np) /\sqrt{n}  +  \tau_n  {\|\BS{\Sigma}}^{-1}_{\BS{\beta}\BS{\beta}} \|_{\infty} s_{{\BS{\mu}}_{a0}}   \sqrt{\log (np)}   \big).
\end{align*} 
This concludes the proof of Lemma \ref{lmentropy0-4}.

\subsection{Lemma \ref{lemma-asymp-norm-main-term} and its proof} \label{new.sec.lem6}

Lemma \ref{lemma-asymp-norm-main-term} below establishes the central limit theorem for the leading oracle score term. It is the probabilistic core of Theorem \ref{ththeta}.

\begin{lemma} \label{lemma-asymp-norm-main-term}
    Assume that Conditions \ref{coninicox} and \ref{coninimu} are satisfied. Then we have that 
    \begin{equation} \label{eq-lemma-8-result-normal}
        \sqrt{m} J_0^{-1} \Phi (\{\mathcal{O}_i\}_{i \in I_k}; \theta_0, \BS{\eta}_{a0}) \stackrel{\mathcal D} {\rightarrow} N(0, J_0^{-1} \sigma_{\Phi}^2 J_0^{-1})
    \end{equation}
    as $m \rightarrow \infty$.
\end{lemma}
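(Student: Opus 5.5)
The plan is to exhibit $\sqrt{m}\,\Phi(\{\mathcal{O}_i\}_{i\in I_k};\theta_0,\BS{\eta}_{a0})$ as a martingale integral, prove a CLT for it with the martingale central limit theorem (Rebolledo's theorem, as in \citet{fleming2011counting}), and then multiply by the deterministic scalar $J_0^{-1}$. Lemma \ref{lmidentification} and \eqref{eq-neww-J0} give $c_1<|J_0|<c_2$, so $J_0^{-1}$ is bounded and the final step is immediate. By \eqref{eq-neww-expand-Phi-0},
\[
\sqrt{m}\,\Phi(\{\mathcal{O}_i\}_{i\in I_k};\theta_0,\BS{\eta}_{a0}) = \sum_{i\in I_k}\int_0^\tau \frac{1}{\sqrt m}H_{k,i}(t)\,dM_i(t),
\]
where $H_{k,i}(t)=(D_i-\bar D_k(t,\theta_0,\BS{\beta}_0))-\BS{\mu}_{a0}^{\top}(\BS{Z}_i-\bar{\BS{Z}}_k(t,\theta_0,\BS{\beta}_0))$. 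This integrand is $\mathcal{F}_t$-predictable, because $\bar D_k$ and $\bar{\BS{Z}}_k$ depend only on the at-risk processes $Y_j(t)$, which are left continuous. By Condition \ref{coninicox}(i) and Condition \ref{coninimu}(vi) it is bounded by $2(1+K_{\BS{Z}}K_{\BS{\mu}})$. The resulting process $W_m(u)=\sum_{i\in I_k}\int_0^u m^{-1/2}H_{k,i}\,dM_i$ is therefore a locally square integrable martingale on $[0,\tau]$.

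The first condition of Rebolledo's theorem is convergence of the predictable variation. Since $\langle M_i,M_j\rangle=0$ for $i\ne j$,
\[
\langle W_m\rangle(\tau)=\frac1m\sum_{i\in I_k}\int_0^\tau H_{k,i}(t)^2\,\tilde w_i(t,\theta_0,\BS{\beta}_0)\lambda_0(t)\,dt .
\]
I would show this equals $\sigma_\Phi^2+o_p(1)$, with $\sigma_\Phi^2$ defined in \eqref{eq-neww-sigma-oracle}. The steps are as follows.

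\begin{enumerate}
\item Replace $\bar D_k$ and $\bar{\BS{Z}}_k$ by their population counterparts $U_D$ and $\BS{U}_{\BS{Z}}$. Lemma 7 of \cite{yu2021confidence} gives a uniform error of $O_p(\sqrt{\log(np)/n})$.
\item Propagate that error through the quadratic form using $\|\BS{\mu}_{a0}\|_1\le K_{\BS{\mu}}$ and $\int_0^\tau\lambda_0<\infty$ (Condition \ref{coninimu}(iii)). This produces an $O_p(\sqrt{\log(np)/n})$ error.
\item The resulting average of i.i.d.\ bounded terms concentrates around its mean by Hoeffding's inequality.
\item Compare that mean with $\sigma_\Phi^2$ using the same event $\mathcal{E}$ argument as in \eqref{pf-thm1-eq1}--\eqref{pf-thm1-eq4}.
\end{enumerate}

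The point needing care is that $\BS{\mu}_{a0}^\top\BS{Z}_i$ involves a $p$-vector. It is controlled scalar-wise by $|\BS{\mu}_{a0}^\top\BS{v}|\le K_{\BS{\mu}}\|\BS{v}\|_\infty$, so no dimension dependence enters beyond the logarithmic factors.

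The second condition is the Lindeberg condition on jumps. For any $\epsilon>0$,
\[
\frac1m\sum_{i\in I_k}\int_0^\tau H_{k,i}^2\,I\{|H_{k,i}|>\epsilon\sqrt m\}\,\tilde w_i\lambda_0\,dt=0
\]
once $\sqrt m\,\epsilon>2(1+K_{\BS{Z}}K_{\BS{\mu}})$, because $H_{k,i}$ is uniformly bounded. Rebolledo's theorem then gives $W_m(\tau)\stackrel{\mathcal D}{\rightarrow}N(0,\sigma^2_\infty)$, where $\sigma^2_\infty=\lim\sigma_\Phi^2$. The limit of $\sigma_\Phi^2$ may depend on $n$ through $p$, so I would state the result in standardized form, $\sigma_\Phi^{-1}W_m(\tau)\to N(0,1)$. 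This requires $\sigma_\Phi^2$ to be bounded below. Applying the martingale isometry again gives $\sigma_\Phi^2=-J_0+\BS{\mu}_{a0}^\top(\BS{J}_{\BS{\beta}\theta}-\BS{J}_{\BS{\beta}\BS{\beta}}\BS{\mu}_{a0})$. The second term is $O(\sqrt{\log(np)/n})$ by the proof of Proposition \ref{lmnearortho}, and $-J_0>c_1$ by Lemma \ref{lmidentification}, so $\sigma_\Phi^2$ is bounded away from zero. Multiplying by $J_0^{-1}$ and applying Slutsky's lemma then yields \eqref{eq-lemma-8-result-normal}.

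The main obstacle is the variance convergence step, which is the only place where the dependence induced by the risk sets and the high dimensionality interact. I expect the uniform-in-$t$ risk-set concentration (Lemma 7 of \cite{yu2021confidence}) together with the $\ell_1$ bound on $\BS{\mu}_{a0}$ to suffice for it.
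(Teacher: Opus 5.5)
Your proposal is correct, but it takes a different route from the paper. The paper never uses a martingale CLT. It splits $\sqrt{m}\,\Phi(\{\mathcal{O}_i\}_{i\in I_k};\theta_0,\BS{\eta}_{a0})$ into two parts:
\begin{itemize}
\item a leading part $m^{-1/2}\sum_{i\in I_k}U_i$, where $U_i$ is the martingale integral with the deterministic integrand $(D_i-U_D)-\BS{\mu}_{a0}^{\top}(\BS{Z}_i-\BS{U}_{\BS{Z}})$;
\item two replacement terms driven by $\bar D_k-U_D$ and $\bar{\BS{Z}}_k-\BS{U}_{\BS{Z}}$.
\end{itemize}
The $U_i$ are i.i.d.\ and bounded, so the classical CLT handles the first part. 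The replacement terms are shown to be $o_p(1)$ through second moments alone: the martingale isometry with $\langle M_i,M_j\rangle=0$, together with the uniform risk-set bound of Lemma 7 of \cite{yu2021confidence}. The same bound gives $\sigma_\Phi^2-\Var(U_i)\to 0$.

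You instead keep the empirical risk-set averages inside the predictable integrand and apply Rebolledo's theorem directly. The risk-set concentration is then needed only to show $\langle W_m\rangle(\tau)=\sigma_\Phi^2+o_p(1)$, and the Lindeberg condition is trivial by boundedness.

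What each route buys:
\begin{itemize}
\item The paper's route uses more elementary limit theory. It also produces an explicit i.i.d.\ influence function $U_i$, which the paper reuses for the leading term $\Pi_1$ in the proof of Theorem \ref{thbeta}.
\item Your route avoids the decomposition. It is also more careful on two points the paper glosses over. First, the variance depends on $n$ through $p$, so the statement has to be read in standardized form or along subsequences. Second, $\sigma_\Phi^2$ must be bounded away from zero.
\end{itemize}

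Two small corrections. First, expanding the isometry gives
$\sigma_\Phi^2=-J_0-\BS{\mu}_{a0}^{\top}(\BS{J}_{\BS{\beta}\theta}-\BS{J}_{\BS{\beta}\BS{\beta}}\BS{\mu}_{a0})$, with $J_0$ in its $\BS{J}$-form as in \eqref{eq-neww-J0}. Your sign on the second term is wrong. This is harmless, since that term is $O(\sqrt{\log(np)/n})$ in absolute value and your lower-bound argument goes through unchanged. Second, the textbook form of Rebolledo's theorem requires $\langle W_m\rangle(t)$ to converge in probability to a limit $V(t)$ at every $t\in[0,\tau]$, not only at $t=\tau$. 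You can either run your variance argument pointwise in $t$, which works verbatim with a Helly-type subsequence argument for the $n$-dependent limit, or cite a terminal-time version of the martingale CLT.
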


\noindent\textit{Proof}. The proof of this lemma will decompose the score into a sum of independent and identically distributed (i.i.d.) martingale-integral terms plus two terms caused by replacing the population risk-set averages with the empirical ones. The leading martingale sum satisfies a standard central limit theorem, while the two replacement terms have vanishing variance by the uniform risk-set concentration. This establishes the stated asymptotic Gaussian distribution with variance $J_0^{-1}\sigma_\Phi^2J_0^{-1}$.

In view of \eqref{eq-neww-expand-Phi-0}, we can derive the following decomposition
\begin{equation} \label{eq-neww-Phi-decomposition}
\begin{split}
     & \sqrt{m}    \Phi (\{\mathcal{O}_i\}_{i \in I_k}; \theta_0, \BS{\eta}_{a0} ) \\
     & = \frac{1}{\sqrt{m}} \sum_{i \in I_k}  \int_0^{\tau} [(D_i - \bar D_k(t, \theta_0, \BS{\beta}_0))  - {\BS{\mu}}_{a0}^{\top} ( \BS{Z}_i -\bar {\BS{Z}}_k (t,\theta_0, \BS{\beta}_0) ) ]dM_i(t) \\
     & = \frac{1}{\sqrt{m}} \sum_{i \in I_k}  \int_0^{\tau} [(D_i -  U_D(t, \theta_0, \BS{\beta}_0))  - {\BS{\mu}}_{a0}^{\top} ( \BS{Z}_i - U_{\BS{Z}} (t,\theta_0, \BS{\beta}_0) ) ]dM_i(t) \\
     & \quad - \frac{1}{\sqrt{m}} \sum_{i \in I_k}  \int_0^{\tau} ( \bar D_k(t, \theta_0, \BS{\beta}_0)  -  U_D(t, \theta_0, \BS{\beta}_0)) dM_i(t) \\
     & \quad + \frac{1}{\sqrt{m}} \sum_{i \in I_k}  \int_0^{\tau} {\BS{\mu}}_{a0}^{\top} ( \bar {\BS{Z}}_k(t, \theta_0, \BS{\beta}_0)  -  U_{\BS{Z}} (t, \theta_0, \BS{\beta}_0)) dM_i(t) \\
     & :=  \frac{1}{\sqrt{m}} \sum_{i \in I_k} U_{i} -  \frac{1}{\sqrt{m}} \sum_{i \in I_k}  V_{D, i} +  \frac{1}{\sqrt{m}} \sum_{i \in I_k}  V_{Z, i},
\end{split}
\end{equation}
where $ \{M_i(t)\}_{t \in [0, \tau]} $ is a mean-zero martingale. We first show that the first term on the right-hand side of the expression above is asymptotically normal. Note that $\{U_i\}_{i \in I_k}$ are i.i.d. mean-zero random variables with finite variance $ \Var(U_i) $. Hence, it follows that
\begin{equation} \label{eq-neww-main-normal}
     \frac{1}{\sqrt{m}} \sum_{i \in I_k} U_{i} \stackrel{D}{\rightarrow} N(0, \Var(U_i)).
\end{equation}

We next show that the last two terms above are asymptotically negligible. Observe that $\{V_{D, i}\}_{i \in I_k}$ are exchangeable with $\mathbb{E} [V_{D, i}] = 0$. As in the proof of Proposition 3 in \cite{yu2021confidence}, it holds that 
\begin{equation*}
\begin{split}
    \Var \Big( \frac{1}{\sqrt{m}} \sum_{i \in I_k} V_{D, i} \Big) & = \Var(V_{D, i}) \\
    & = \mathbb{E} \Big[ \int_{0}^{\tau} ( \bar D_k(t, \theta_0, \BS{\beta}_0)  -  U_D(t, \theta_0, \BS{\beta}_0))^{ 2} \tilde{w}_{i} (t, \theta_0, \BS{\beta}_0) \lambda_0 (t) d t\Big].
\end{split}
\end{equation*}
In addition, we have that 
$$ \tilde{w}_{i} (t, \theta_0, \BS{\beta}_0) \leq \exp ( D_i \theta_0 + \BS{Z}_{i}^{\top} \BS{\beta}_0 ) \leq \exp ( |\theta_0| + K_{\BS{Z}} \|\BS{\beta}_0 \|_1 ) \leq M $$ for some constant $M > 0$, and $ \int_{0}^{\tau} \lambda_0 (t) < \infty $ by Condition \ref{coninimu}(iii). Then it follows that 
\begin{equation*}
    \begin{split}
        \Var \Big( \frac{1}{\sqrt{m}} \sum_{i \in I_k} V_{D, i} \Big) \leq C \mathbb{E} \Big[\sup_{t \in [0, \tau]} (\bar{D}_k(t, \theta_0, \BS{\beta}_0) - U_D(t, \theta_0, \BS{\beta}_0))^{ 2} \Big].
    \end{split}
\end{equation*}
By resorting to Lemma 7 in \cite{yu2021confidence}, using Conditions \ref{coninicox}(i), \ref{coninicox}(iv), and \ref{coninimu}(ii)--(iii), we can show that with probability $1 - O(n^{-1})$, event in \eqref{eq-neww-event-eps} gives that 
\begin{align*}
   &  \sup_{t \in [0,\tau]} \big| \bar D_k(t,\theta_0, \BS{\beta}_0) - U_D(t,\theta_0, \BS{\beta}_0)\big| \leq C \sqrt{\log (np) /n}, \\
    & \sup_{t \in [0,\tau]} \| \bar {\BS{Z}}_k(t,\theta_0, \BS{\beta}_0) - \BS{U}_{\BS{Z}} (t,\theta_0, \BS{\beta}_0) \|_{\infty}\le   C \sqrt{\log (np) /n}.
\end{align*}
Consequently, we can obtain that 
\begin{equation} \label{eq-neww-VD-convergence}
    \Var \Big( \frac{1}{\sqrt{m}} \sum_{i \in I_k} V_{D, i} \Big) \leq C \Big( \frac{\log (np)}{n} + \frac{1}{n} \Big) \to 0.
\end{equation}

An application of similar arguments leads to 
\begin{equation} \label{eq-neww-VZ-convergence}
    \Var \Big( \frac{1}{\sqrt{m}} \sum_{i \in I_k} V_{Z, i} \Big) \to 0.
\end{equation}
Recall the definition of $\sigma_{\Phi}^2 = \Var(\sqrt{m} \Phi (\{\mathcal{O}_i\}_{i \in I_k}; \theta_0, \BS{\eta}_{a0}) )$ and the fact that $\Var(\frac{1}{\sqrt{m}} \sum_{i \in I_k} U_{i} ) = \Var(U_i)$. Then it follows from the decomposition in \eqref{eq-neww-Phi-decomposition} and the results in \eqref{eq-neww-VD-convergence}--\eqref{eq-neww-VZ-convergence} that
\begin{equation} \label{eq-neww-var-appro}
    \Big| \sigma_{\Phi}^2 - \Var(\frac{1}{\sqrt{m}} \sum_{i \in I_k} U_{i} ) \Big| \to 0.
\end{equation}
Therefore, combining \eqref{eq-neww-Phi-decomposition}, \eqref{eq-neww-main-normal}, \eqref{eq-neww-VD-convergence}, and \eqref{eq-neww-VZ-convergence} yields the desired conclusion in \eqref{eq-lemma-8-result-normal}. This concludes the proof of Lemma \ref{lemma-asymp-norm-main-term}.

\subsection{Lemma \ref{lmentropy} and its proof} \label{SecB.11-new}

Lemma \ref{lmentropy} below collects the concentration bounds for several empirical process terms that appear repeatedly in the variance and remainder analyses. These bounds keep the proof modular and avoid repeating similar Hoeffding and martingale arguments.

\begin{lemma}\label{lmentropy}
	Assume that Conditions \ref{coninicox}--\ref{contheta} are satisfied. Then we have that 
	\begin{align}\label{ieqB4}
\bigg| \frac{1}{\sqrt{m}} \sum_{i \in I_k} \bigg\{ \int_0^{\tau} D_i^2dN_i(t) - \mathbb{E} \Big[ \int_0^{\tau} D_i^2 dN_i(t) \Big] \bigg\} \bigg| = O_p(\sqrt{\log n}),
\end{align}
\begin{align} \label{eq-neww-Gm-DZ}
	 \bigg \| \frac{1}{\sqrt{m}} \sum_{i\in I_k}\Big[\int_0^\tau D_i \BS{Z}_i dN_i(t)- \mathbb E \int_0^\tau D_i \BS{Z}_i dN_i(t)\Big]\bigg \|_\infty = O_p(\sqrt{\log (np)}),
\end{align}
\begin{align} \label{eq-neww-Gm-Z2}
	 \bigg\|\sum_{i\in I_k} \frac{1}{\sqrt{m}} \Big[\int_0^\tau \BS{Z}_i^{\otimes 2} dN_i(t)- \mathbb E \int_0^\tau \BS{Z}_i^{\otimes 2} dN_i(t) \Big]\bigg\|_{\max} = O_p(\sqrt{\log (np)}),
\end{align}
\begin{align} \label{ieqB5}
	\bigg| \frac{1}{\sqrt{m}} \sum_{i \in I_k} \Big[\int_0^{\tau} \bar D_k(t,\theta_0,\BS{\beta}_0)^2 dN_i(t) - \mathbb{E} \int_0^{\tau} \bar D_k(t,\theta_0,\BS{\beta}_0)^2 dN_i(t) \Big] \bigg| = O_p(\sqrt{\log n}),
\end{align}
\begin{align} \label{eq-neww-G-1}
   & \bigg\| \frac{1}{\sqrt{m}}\sum_{i \in I_k} \Big[ \int_0^\tau \bar D_k(t,\theta_0,\BS{\beta}_0) \bar {\BS{Z}}_k(t, \theta_0,  {\BS{\beta}_0}) dN_i(t) \nonumber\\
   & \quad - \mathbb{E} \int_0^\tau \bar D_k(t,\theta_0,\BS{\beta}_0) \bar {\BS{Z}}_k(t, \theta_0,  {\BS{\beta}_0}) dN_i(t) \Big] \bigg\|_\infty = O_p(\sqrt{\log (np)}),
\end{align}
and
\begin{align} \label{eq-neww-G-matrix}
    & \bigg\| \frac{1}{\sqrt{m}}\sum_{i \in I_k} \Big[ \int_0^\tau  \bar {\BS{Z}}_k(t, \theta_0,  {\BS{\beta}_0})^{\otimes 2} dN_i(t)  \nonumber\\
    & \quad - \mathbb{E} \int_0^\tau  \bar {\BS{Z}}_k(t, \theta_0,  {\BS{\beta}_0})^{\otimes 2} dN_i(t) \Big]\bigg\|_{\max} = O_p(\sqrt{\log (np)}).
\end{align}
\end{lemma}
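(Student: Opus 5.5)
Each of the six bounds in Lemma \ref{lmentropy} has the form $m^{-1/2}\sum_{i\in I_k}\{\xi_i-\mathbb{E}\xi_i\}$, where $\xi_i$ is a bounded functional of $\mathcal{O}_i$, possibly vector- or matrix-valued, possibly depending on the whole block through the risk-set averages $\bar D_k$ and $\bar{\BS{Z}}_k$. The maximum over coordinates is then taken. We split the six bounds into two groups.

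\textbf{Group 1: \eqref{ieqB4}, \eqref{eq-neww-Gm-DZ}, \eqref{eq-neww-Gm-Z2}.} Here $\xi_i=\delta_i D_i$, $\delta_i D_i\BS{Z}_i$ and $\delta_i\BS{Z}_i^{\otimes 2}$ are i.i.d. across $i\in I_k$. By Condition \ref{coninicox}(i) every coordinate lies in $[-K_{\BS{Z}}^2,K_{\BS{Z}}^2]$.

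For a single coordinate, Hoeffding's inequality gives
\[
P\big(|m^{-1/2}\textstyle\sum_i(\xi_{ij}-\mathbb{E}\xi_{ij})|>x\big)\le 2\exp\{-x^2/(2K_{\BS{Z}}^4)\}.
\]
A union bound over $1$, $p$ and $p^2$ coordinates respectively, with $x=C\sqrt{\log(np)}$, makes the failure probability $O((np)^{-1})$. For \eqref{ieqB4} there is a single coordinate, so a bound of order $\sqrt{\log n}$ (indeed $O_p(1)$) follows at once. Since $\log(np^2)\le 2\log(np)$, the matrix bound has the stated order as well.

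\textbf{Group 2: \eqref{ieqB5}, \eqref{eq-neww-G-1}, \eqref{eq-neww-G-matrix}.} The summands here are not independent, because $\bar D_k(t,\theta_0,\BS{\beta}_0)$ and $\bar{\BS{Z}}_k(t,\theta_0,\BS{\beta}_0)$ involve all of block $I_k$. The plan is to replace them by their population counterparts $U_D(t,\theta_0,\BS{\beta}_0)$ and $\BS{U}_{\BS{Z}}(t,\theta_0,\BS{\beta}_0)$. For instance,
\[
\bar D_k^2-U_D^2=(\bar D_k-U_D)(\bar D_k+U_D),
\]
and $\bar D_k\bar{\BS{Z}}_k-U_D\BS{U}_{\BS{Z}}$ and $\bar{\BS{Z}}_k^{\otimes 2}-\BS{U}_{\BS{Z}}^{\otimes 2}$ are expanded in the same way.

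\emph{Main terms.} After the replacement, $\xi_i=\delta_i U_D(X_i)^2$, $\delta_i U_D(X_i)\BS{U}_{\BS{Z}}(X_i)$ and $\delta_i\BS{U}_{\BS{Z}}(X_i)^{\otimes 2}$. These are i.i.d. and bounded, since $|U_D|\le 1$ and $\|\BS{U}_{\BS{Z}}\|_\infty\le K_{\BS{Z}}$. The Hoeffding and union-bound argument of Group 1 therefore applies directly.

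\emph{Remainders.} The key step is the uniform risk-set concentration from Lemma 7 of \cite{yu2021confidence}, already used for the event $\mathcal{E}$ in \eqref{eq-neww-event-eps}. Its conditions are Conditions \ref{coninicox}(i), \ref{coninicox}(iv) and \ref{coninimu}(ii)--(iii), applied to the block of size $m\asymp n$. It gives, with probability $1-O(n^{-1})$,
\[
\sup_{t\in[0,\tau]}|\bar D_k-U_D|+\sup_{t\in[0,\tau]}\|\bar{\BS{Z}}_k-\BS{U}_{\BS{Z}}\|_\infty\le C\sqrt{\log(np)/n}.
\]
On this event each remainder summand is bounded in sup- or max-norm by $C'\delta_i\sqrt{\log(np)/n}$. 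The centred empirical remainder is then at most
\[
m^{-1/2}\cdot m\cdot 2C'\sqrt{\log(np)/n}=O(\sqrt{\log(np)}).
\]
For the expectation part, the contribution off the event is at most a bounded constant times $O(n^{-1})$, which is negligible. For \eqref{ieqB5} the stated order $\sqrt{\log n}$ is smaller than this $\sqrt{\log(np)}$ bound. If it cannot be sharpened to $\sqrt{\log n}$, we would record $O_p(\sqrt{\log(np)})$ instead. That weaker order still suffices everywhere \eqref{ieqB5} is used, namely in \eqref{eq: new-B3-0} and in $Q_{21}$.

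\textbf{Main obstacle.} The critical step is the Group 2 remainder. The crude bound just described loses nothing only because the target rate is $\sqrt{\log(np)}$, not $O_p(1)$; we rely on that slack rather than on a finer martingale or U-statistic decomposition. A second point to verify is that Lemma 7 of \cite{yu2021confidence} applies to within-fold averages, with $n$ replaced by $m=n/K$. This only changes constants, since $K$ is fixed.
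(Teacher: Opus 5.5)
Your architecture is the paper's: Hoeffding plus a union bound for \eqref{ieqB4}--\eqref{eq-neww-Gm-Z2}. For the risk-set terms you swap $\bar D_k,\bar{\BS{Z}}_k$ for $U_D,\BS{U}_{\BS{Z}}$ via Lemma 7 of \cite{yu2021confidence} and bound the swap error crudely by $\sqrt m$ times the uniform deviation. The paper also splits off a squared-difference piece of order $\log(np)/\sqrt n$, but it bounds the cross term $2(\bar D_k-U_D)U_D$ exactly as you do.

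You genuinely differ on the leading term with deterministic integrand, such as $\int_0^\tau U_D(t,\theta_0,\BS{\beta}_0)^2\,dN_i(t)$. The paper writes $N_i=M_i+A_i$ and bounds the martingale part through its predictable variation. It bounds the compensator part through weak convergence of $m^{-1/2}\sum_{j\in I_k}\{\tilde w_j-\mathbb{E}\tilde w_j\}$ to a tight Gaussian process in $\ell^\infty([0,\tau])$. You instead note that these integrals are $\delta_iU_D(X_i)^2$, $\delta_iU_D(X_i)\BS{U}_{\BS{Z}}(X_i)$ and $\delta_i\BS{U}_{\BS{Z}}(X_i)^{\otimes 2}$ (on $X_i\le\tau$), which are i.i.d.\ and bounded, so Hoeffding applies directly. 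Your route is shorter. For \eqref{eq-neww-G-1} and \eqref{eq-neww-G-matrix} it is also the one that actually controls the maximum over $p$ or $p^2$ coordinates: the paper's second-moment and weak-convergence bounds are coordinatewise and give no exponential tail for a union bound.

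The one place you stop short of the statement is \eqref{ieqB5}, and your hedge there is unnecessary. The factor $\log p$ in \eqref{eq-neww-event-eps} comes only from the union over the $p$ coordinates of $\bar{\BS{Z}}$. By contrast, $\bar D_k$ is a ratio of two scalar processes. Each is an average of bounded i.i.d.\ weights times the monotone indicator $Y_j(t)$, and the denominator is bounded below on $[0,\tau]$ by Conditions \ref{coninicox}(i), \ref{coninicox}(iv) and \ref{coninimu}(iii). Run the same concentration argument for this scalar pair alone, either as Lemma 7 of \cite{yu2021confidence} in dimension one or as Hoeffding on an $O(n)$-point grid in $t$ plus monotonicity. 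This gives $\sup_{t\in[0,\tau]}|\bar D_k(t,\theta_0,\BS{\beta}_0)-U_D(t,\theta_0,\BS{\beta}_0)|\le C\sqrt{\log n/n}$ with probability $1-O(n^{-1})$. Your crude remainder bound then yields $O_p(\sqrt{\log n})$, and the leading term is $O_p(1)$, so \eqref{ieqB5} holds as stated. The paper's own proof needs the same refinement: it records its cross term only as $O_p(\sqrt{\log(np)})$ before asserting the $\sqrt{\log n}$ rate.
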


\noindent\textit{Proof}. The proof of this lemma will first apply Hoeffding's inequality to bounded covariate terms. It will then handle terms involving empirical risk-set averages by replacing them with their population limits and bounding the resulting differences using the uniform concentration. For integrals with predictable compensators, the Doob--Meyer decomposition separates the martingale and predictable parts, each of which is analyzed separately.

Since $\{\int_0^{\tau} D_i^2dN_i(t)\}_{i\in I_k}$ is a sequence of bounded independent random variables and $m$ is of the same order as $n$, it follows from Hoeffding's inequality that for some constant $C > 0$,
\begin{eqnarray*}
	\mathbb P \bigg(\bigg| \frac{1}{\sqrt{m}} \sum_{i \in I_k} \bigg\{ \int_0^{\tau} D_i^2dN_i(t) - \mathbb{E} \Big[ \int_0^{\tau} D_i^2 dN_i(t) \Big] \bigg\} \bigg|> \sqrt{C\log(n)} \bigg) \leq 2 n^{-1},
\end{eqnarray*}
which entails bound \eqref{ieqB4}.
Similarly, an application of Hoeffding's inequality also gives that 
\begin{eqnarray*}
	\mathbb P \bigg(\bigg|\sum_{i\in I_k} \frac{1}{\sqrt{m}} \Big[\int_0^\tau D_i \BS{Z}_{i,l} dN_i(t)- \mathbb E \int_0^\tau D_i \BS{Z}_{i,l} dN_i(t) \Big] \bigg| > \sqrt{K_Z\log(np)} \bigg) \leq \frac{2}{np}
\end{eqnarray*}
and
\begin{eqnarray*}
	\mathbb P \bigg( \bigg|\sum_{i\in I_k} \frac{1}{\sqrt{m}} \bigg[\int_0^\tau \BS{Z}_{i,l} \BS{Z}_{i,k} dN_i(t)- \mathbb E \int_0^\tau \BS{Z}_{i,l} \BS{Z}_{i,k} dN_i(t) \bigg] \bigg| > C \sqrt{K_Z\log(np^2)} \bigg) \leq \frac{2}{np^2}.
\end{eqnarray*}
Then applying the union bound, we can deduce that 
\begin{eqnarray*}
	\mathbb P \bigg(\bigg|\sum_{i\in I_k} \frac{1}{\sqrt{m}} \Big[\int_0^\tau D_i \BS{Z}_{i} dN_i(t)- \mathbb E \int_0^\tau D_i \BS{Z}_{i} dN_i(t) \Big] \bigg| > \sqrt{K_Z\log(np)} \bigg) \leq \frac{2}{n}
\end{eqnarray*}
and
\begin{eqnarray*}
	\mathbb P \bigg( \bigg|\sum_{i\in I_k} \frac{1}{\sqrt{m}} \bigg[\int_0^\tau \BS{Z}_i^{\otimes 2} dN_i(t)- \mathbb E \int_0^\tau \BS{Z}_i^{\otimes 2} dN_i(t) \bigg] \bigg| > C \sqrt{K_Z\log(np^2)} \bigg) \leq \frac{2}{n}.
\end{eqnarray*}
Hence, we can establish \eqref{eq-neww-Gm-DZ} and \eqref{eq-neww-Gm-Z2}.

Next, we will show \eqref{ieqB5}. Let us define 
\begin{equation*}
\begin{split}
& \mathbb G_m \Big(\int_0^{\tau} \bar D_k(t,\theta_0,\BS{\beta}_0)^2 dN_i(t)\Big) \\
& := \frac{1}{\sqrt{m}}\sum_{i \in I_k} \Big[ \int_0^{\tau} \bar D_k(t,\theta_0,\BS{\beta}_0)^2 dN_i(t) - \mathbb{E} \int_0^{\tau} \bar D_k(t,\theta_0,\BS{\beta}_0)^2 dN_i(t) \Big].
\end{split}
\end{equation*}
Approximating $\bar D_k(t,\theta_0,\BS{\beta}_0)$ with its population counterpart $U_D(t,\theta_0,\BS{\beta}_0)$, it holds that 
\begin{align*}
	&\bigg|\mathbb G_m \Big(\int_0^{\tau} \bar D_k(t,\theta_0,\BS{\beta}_0)^2 dN_i(t) \Big)\bigg| \\
	&\leq \bigg | \mathbb G_m \Big(\int_0^\tau [\bar D_k(t,\theta_0,\BS{\beta}_0)-U_D(t,\theta_0,\BS{\beta}_0)] ^2dN_i(t) \Big)\bigg| \\
	&\quad  + \bigg | \mathbb G_m \Big(\int_0^\tau 2(\bar D_k(t,\theta_0,\BS{\beta}_0)-U_D(t,\theta_0,\BS{\beta}_0))U_D(t,\theta_0,\BS{\beta}_0)dN_i(t) \Big) \bigg| \\
	& \quad  + \bigg | \mathbb G_m \Big(\int_0^\tau U_D(t,\theta_0,\BS{\beta}_0)^2dN_i(t) \Big)\bigg|.
\end{align*}
For the first term in the decomposition above, by the triangle inequality it can be divided into a random part and the expectation of the random part 
\begin{align*}
	&\bigg | \mathbb G_m(\int_0^\tau (\bar D_k(t,\theta_0,\BS{\beta}_0)-U_D(t,\theta_0,\BS{\beta}_0)) ^2dN_i(t))\bigg| \\
	&\leq\frac{1}{\sqrt{m}}\sum_{i\in I_k} \int_0^\tau (\bar D_k(t,\theta_0,\BS{\beta}_0)-U_D(t,\theta_0,\BS{\beta}_0)) ^2dN_i(t) \\
	& \quad +  \frac{1}{\sqrt{m}}\sum_{i\in I_k} \mathbb E\int_0^\tau (\bar D_k(t,\theta_0,\BS{\beta}_0)-U_D(t,\theta_0,\BS{\beta}_0)) ^2dN_i(t).
\end{align*}

The random part above can be bounded by the fact that
\[
\sup_{t\in[0,\tau]}|\bar D_k(t,\theta_0,\BS{\beta}_0) - U_D(t,\theta_0,\BS{\beta}_0)| = O_p(\sqrt{ \log(np) / n}),
\]
which is based on Lemma 7 in \cite{yu2021confidence} that
\[
\mathbb P \Big(\sup_{t\in[0,\tau]}|\bar D_k(t,\theta_0,\BS{\beta}_0) - U_D(t,\theta_0,\BS{\beta}_0)| > C\sqrt{ \log(np) / n} \Big) \leq O(n^{-1})
\]
for some absolute constant $C$ when $n$ is large enough\footnote{This requires that $F_T(\tau) < 1$ from Condition \ref{coninimu}(iii), and technically speaking, we should use $m$ here, but since $m/n$ is of constant order, this statement still holds.}. Consequently, we can deduce that 
\begin{align*}
	&\frac{1}{\sqrt{m}}\sum_{i\in I_k} \int_0^\tau [(\bar D_k(t,\theta_0,\BS{\beta}_0)-U_D(t,\theta_0,\BS{\beta}_0)) ^2dN_i(t) \\
	&\leq  \frac{1}{\sqrt{m}}\sum_{i\in I_k}\int_0^\tau \sup_{t\in[0,\tau]}|\bar D_k(t,\theta_0,\BS{\beta}_0) - U_D(t,\theta_0,\BS{\beta}_0)|^2 dN_i(t)\\
	& \leq  \sqrt{m}\sup_{t\in[0,\tau]}|\bar D_k(t,\theta_0,\BS{\beta}_0) - U_D(t,\theta_0,\BS{\beta}_0)|^2\\
	& =  O_p( \log(np) /\sqrt{n}).
\end{align*}

For the deterministic part, we will work on an event
  \begin{align*}
      A_n= \bigg\{\sup_{t\in[0,\tau]}|\bar D_k(t,\theta_0,\BS{\beta}_0) - U_D(t,\theta_0,\BS{\beta}_0)| \leq c\sqrt{\frac{\log(np)}{n}}\bigg\}.
  \end{align*}
On event $A_n$, the expected value is indeed bounded, and the probability of $A_n^c$ is fairly small. Specifically, it holds that 
\begin{align*}
	& \frac{1}{\sqrt{m}}\sum_{i\in I_k} \mathbb E\int_0^\tau [(\bar D_k(t,\theta_0,\BS{\beta}_0)-U_D(t,\theta_0,\BS{\beta}_0)) ^2dN_i(t) \\
	&=  \frac{1}{\sqrt{m}}\sum_{i\in I_k} \mathbb E\int_0^\tau [(\bar D_k(t,\theta_0,\BS{\beta}_0)-U_D(t,\theta_0,\BS{\beta}_0))^2I_{A_n}dN_i(t) \\
	&+ \frac{1}{\sqrt{m}}\sum_{i\in I_k} \mathbb E\int_0^\tau [(\bar D_k(t,\theta_0,\BS{\beta}_0)-U_D(t,\theta_0,\BS{\beta}_0))^2I_{A_n^c} dN_i(t) \\
	&\leq  \frac{1}{\sqrt{m}}\sum_{i\in I_k} \mathbb E\int_0^\tau \sup_{t\in[0,\tau]}|\bar D_k(t,\theta_0,\BS{\beta}_0) - U_D(t,\theta_0,\BS{\beta}_0)|^2I_{A_n}dN_i(t) +  4 \frac{1}{\sqrt{m}}\sum_{i\in I_k} \mathbb P(I_{A_n^c}) \\
	&=  O(\log(np)/ \sqrt{n}) + O( n^{-1/2}) \\
	& = O(\log(np)/ \sqrt{n}).
\end{align*}
Since both terms are controlled at the required order, we can obtain that 
\[
\bigg | \mathbb G_m(\int_0^\tau [\bar D_k(t,\theta_0,\BS{\beta}_0)-U_D(t,\theta_0,\BS{\beta}_0)] ^2dN_i(t))\bigg| = O_p(\log(np)/ \sqrt{n}).
\]

For the second term $\big | \mathbb G_m(\int_0^\tau 2(\bar D_k(t,\theta_0,\BS{\beta}_0)-U_D(t,\theta_0,\BS{\beta}_0))U_D(t,\theta_0,\BS{\beta}_0)dN_i(t))\big|$, an application of similar arguments leads to 
\[
\bigg | \mathbb G_m(\int_0^\tau 2(\bar D_k(t,\theta_0,\BS{\beta}_0)-U_D(t,\theta_0,\BS{\beta}_0))U_D(t,\theta_0,\BS{\beta}_0)dN_i(t))\bigg | = O_p(\sqrt{\log(np)}).
\]
For the last term $\big | \mathbb G_m(\int_0^\tau U_D(t,\theta_0,\BS{\beta}_0)^2dN_i(t))\big|$, since there is no randomness in the integrand $U_D(t,\theta_0,\BS{\beta}_0)$, we need only to focus on the submartingale $N_i(t)$. In light of the Doob--Meyer decomposition $N_i(t) = M_i(t) + A_i(t)$, we have that 
\begin{align*}
	\bigg | \mathbb G_m(\int_0^\tau U_D(t,\theta_0,\BS{\beta}_0)^2dN_i(t))\big| &\leq \bigg | \int_0^\tau U_D(t,\theta_0,\BS{\beta}_0)^2d\mathbb G_m(M_i(t))\bigg|\\
	&+ \bigg | \int_0^\tau U_D(t,\theta_0,\BS{\beta}_0)^2d\mathbb G_m(A_i(t))\bigg|.
\end{align*}
In particular, we can bound the second moment of the first term above for martingales by It\^{o}'s formula using the i.i.d. assumption, that is, 
\begin{align*}
	\mathbb E \big | \int_0^\tau U_D(t,\theta_0,\BS{\beta}_0)^2d\mathbb G_m(M_i(t))\big|^2 & = \mathbb E \int_0^\tau U_D(t,\theta_0,\BS{\beta}_0)^4 d\langle \mathbb G_m(M_i), \mathbb G_m(M_i)\rangle(t)\\
	&= \mathbb E \int_0^\tau U_D(t,\theta_0,\BS{\beta}_0)^4 d\frac{\sum_{i\in I_k}A_i(t)}{m}\\
	&\leq \mathbb E \frac{\sum_{i\in I_k}A_i(\tau)}{m}\\
	& = \mathbb EA(\tau),
\end{align*}
which is a constant. Then dividing this term by any sequence diverging to infinity, say $\{\log n\}_{n=1}^\infty$, yields a degenerate random term. Indeed, since 
\begin{align*}
	\mathbb E \Big[ \frac{1}{\log(n)}\big | \int_0^\tau U_D(t,\theta_0,\BS{\beta}_0)^2d\mathbb G_m(M_i(t))\big|^2 \Big]= o(1),
\end{align*}
we have that $\big | \int_0^\tau U_D(t,\theta_0,\BS{\beta}_0)^2d\mathbb G_m(M_i(t))\big|/\sqrt{\log n}$ converges to some constant in probability. Hence, it is easy to show that the first term satisfies that 
$$\bigg | \int_0^\tau U_D(t,\theta_0,\BS{\beta}_0)^2d\mathbb G_m(M_i(t))\bigg| = O_p(\sqrt{\log n}).$$

For the second term with predictable processes, we exploit the fact that $\mathbb G_m(A_i(t))$ converges in some suitable sense to a tight process. Indeed, we can expand it
\begin{align*}
	&\bigg | \int_0^\tau U_D(t,\theta_0,\BS{\beta}_0)^2d\mathbb G_m(A_i(t))\bigg| \\
	&= \bigg|\int_0^\tau U_D(t,\theta_0,\BS{\beta}_0)^2\frac{\sum_{j\in I_k}\tilde w_{k,j}(t,\theta_0,\BS{\beta}_0) - \mathbb E\tilde w_{k,j}(t,\theta_0,\BS{\beta}_0)}{\sqrt{m}} \lambda_0(t)dt\bigg|,
\end{align*}
and according to Example 2.11.16 from \cite{van1996weak}, we have that \[
\frac{\sum_{j\in I_k}\tilde w_{k,j}(t,\theta_0,\BS{\beta}_0) - \mathbb E\tilde w_{k,j}(t,\theta_0,\BS{\beta}_0)}{\sqrt{m}}\]
converges in $l^\infty([0,\tau])$ to a tight Gaussian process $\mathbb G(t)$ with filtration $\{\mathcal{F}_t: t \in [0, \tau]\}$, where $\mathcal{F}_t=\sigma \{Y_i(s+), N_i(s), D_i, \BS{Z}_i, 0 \le s \le t, i=1, 2, \dots \}$. More precisely, it holds that 
\[\sup_{t\in [0,\tau]}\bigg|\frac{\sum_{j\in I_k}\tilde w_{k,j}(t,\theta_0,\BS{\beta}_0) - \mathbb E\tilde w_{k,j}(t,\theta_0,\BS{\beta}_0)}{\sqrt{m}} - \mathbb G(t)\bigg| \to 0\] in probability.

As a result, we can show that 
\begin{align*}
	& \bigg|\int_0^\tau U_D(t,\theta_0,\BS{\beta}_0)^2\frac{\sum_{j\in I_k}\tilde w_{k,j}(t,\theta_0,\BS{\beta}_0) - \mathbb E\tilde w_{k,j}(t,\theta_0,\BS{\beta}_0)}{\sqrt{m}} \lambda_0(t)dt\bigg|\\
	&\leq  \int_0^\tau U_D(t,\theta_0,\BS{\beta}_0)^2\Big|\frac{\sum_{j\in I_k}\tilde w_{k,j}(t,\theta_0,\BS{\beta}_0) - \mathbb E\tilde w_{k,j}(t,\theta_0,\BS{\beta}_0)}{\sqrt{m}}-\mathbb G(t)\Big| \lambda_0(t)dt\\
	& \quad+ \int_0^\tau U_D(t,\theta_0,\BS{\beta}_0)^2|\mathbb G(t)| \lambda_0(t)dt\\
	&= \quad O_p(1) + O_p(\sqrt{\log n}),
\end{align*}
where the last step above holds since $\int_0^\tau U_D(t,\theta_0,\BS{\beta}_0)^2|\mathbb G(t)| \lambda_0(t)dt$ does not depend on $n$. Hence, it follows that 
\begin{align*}
	\Big | \int_0^\tau U_D(t,\theta_0,\BS{\beta}_0)^2d\mathbb G_m(A_i(t))\Big| = O_p(\sqrt{\log n}).
\end{align*}
This together with the bounds for the martingale term implies that
$$\Big | \mathbb G_m(\int_0^\tau U_D(t,\theta_0,\BS{\beta}_0)^2dN_i(t))\Big| = O_p(\sqrt{\log n}).
$$
Consequently, combining the above results establishes \eqref{ieqB5}.

To show \eqref{eq-neww-G-1}, we can apply a similar decomposition strategy as in the proof for \eqref{ieqB5}. Specifically, it holds that 
\begin{align*}
  & \bigg\| \mathbb G_m(\int_0^\tau \bar D_k(t,\theta_0,\BS{\beta}_0) \bar {\BS{Z}}_k(t, \theta_0,  {\BS{\beta}_0}) dN_i(t))\bigg\|_\infty\\
	&\leq  \bigg \| \mathbb G_m(\int_0^\tau (\bar D_k(t,\theta_0,\BS{\beta}_0)-U_D(t,\theta_0,\BS{\beta}_0))(\bar{\BS{Z}}_k(t,\theta_0,\BS{\beta}_0)-\BS{U}_{\BS{Z}}(t,\theta_0,\BS{\beta}_0))dN_i(t))\bigg\|_\infty \\
	& \quad+ \bigg \| \mathbb G_m(\int_0^\tau (\bar D_k(t,\theta_0,\BS{\beta}_0)-U_D(t,\theta_0,\BS{\beta}_0))\BS{U}_{\BS{Z}}(t,\theta_0,\BS{\beta}_0)dN_i(t))\bigg\|_\infty\\
	& \quad + \bigg \| \mathbb G_m(\int_0^\tau U_D(t,\theta_0,\BS{\beta}_0)(\bar{\BS{Z}}_k(t,\theta_0,\BS{\beta}_0)-\BS{U}_{\BS{Z}}(t,\theta_0,\BS{\beta}_0))dN_i(t))\bigg\|_\infty\\
	& \quad +\bigg \| \mathbb G_m(\int_0^\tau U_D(t,\theta_0,\BS{\beta}_0)\BS{U}_{\BS{Z}}(t,\theta_0,\BS{\beta}_0)dN_i(t))\bigg\|_\infty.
\end{align*}
For each term on the right-hand side of the inequality above, an application of similar arguments as in the proof for \eqref{ieqB5} yields that 
\begin{eqnarray*}
	 \bigg \| \mathbb G_m(\int_0^\tau (\bar D_k(t,\theta_0,\BS{\beta}_0)-U_D(t,\theta_0,\BS{\beta}_0))(\bar{\BS{Z}}_k(t,\theta_0,\BS{\beta}_0)-\BS{U}_{\BS{Z}}(t,\theta_0,\BS{\beta}_0))dN_i(t)) \bigg\|_\infty = O_p \Big(\frac{\log(np)}{\sqrt{n}} \Big),
\end{eqnarray*}
\[
 \bigg \| \mathbb G_m(\int_0^\tau (\bar D_k(t,\theta_0,\BS{\beta}_0)-U_D(t,\theta_0,\BS{\beta}_0))\BS{U}_{\BS{Z}}(t,\theta_0,\BS{\beta}_0)dN_i(t)) \bigg \|_\infty = O_p(\sqrt{\log(np)}),
\]
\[
 \bigg \| \mathbb G_m(\int_0^\tau U_D(t,\theta_0,\BS{\beta}_0)(\bar{\BS{Z}}_k(t,\theta_0,\BS{\beta}_0)-\BS{U}_{\BS{Z}}(t,\theta_0,\BS{\beta}_0))dN_i(t)) \bigg \|_\infty = O_p(\sqrt{\log(np)}),
\]
and
\begin{align*}
	&\bigg   \| \mathbb G_m(\int_0^\tau U_D(t,\theta_0,\BS{\beta}_0)\BS{U}_{\BS{Z}}(t,\theta_0,\BS{\beta}_0)dN_i(t))\big\|_\infty \\
	&\leq    \bigg  \| \int_0^\tau U_D(t,\theta_0,\BS{\beta}_0)\BS{U}_{\BS{Z}}(t,\theta_0,\BS{\beta}_0)d\mathbb G_m(M_i(t))\big\|_\infty\\
	&\quad + \bigg   \| \int_0^\tau U_D(t,\theta_0,\BS{\beta}_0)\BS{U}_{\BS{Z}}(t,\theta_0,\BS{\beta}_0)d\mathbb G_m(A_i(t))  \bigg  \|_\infty\\
	&= O_p(\sqrt{\log(n)}).
\end{align*}
Hence, bound \eqref{eq-neww-G-1} holds.

Finally, we prove \eqref{eq-neww-G-matrix}. We again follow a similar decomposition strategy as in the proof for (\ref{ieqB5}). We can write
\begin{align*}
	 &   \bigg  \| \mathbb G_m(\int_0^\tau  \bar {\BS{Z}}_k(t, \theta_0,  {\BS{\beta}_0})^{\otimes 2} dN_i(t))  \bigg  \|_{\max}\\
	&\leq  \bigg   \| \mathbb G_m(\int_0^\tau(\bar{\BS{Z}}_k(t,\theta_0,\BS{\beta}_0)-\BS{U}_{\BS{Z}}(t,\theta_0,\BS{\beta}_0))^{\otimes 2}dN_i(t) )  \bigg  \|_{\max} \\
	& \quad +   \bigg   \| \mathbb G_m(\int_0^\tau 2(\bar{\BS{Z}}_k(t,\theta_0,\BS{\beta}_0)-\BS{U}_{\BS{Z}}(t,\theta_0,\BS{\beta}_0))\BS{U}_{\BS{Z}}(t,\theta_0,\BS{\beta}_0)^\top dN_i(t))  \bigg  \|_{\max}\\
	& \quad  +  \bigg   \| \mathbb G_m(\int_0^\tau \BS{U}_{\BS{Z}}(t,\theta_0,\BS{\beta}_0)^{\otimes 2}dN_i(t))  \bigg  \|_{\max}.
\end{align*}
By similar arguments, we can obtain that 
\begin{eqnarray*}
	 \bigg \| \mathbb G_m(\int_0^\tau(\bar{\BS{Z}}_k(t,\theta_0,\BS{\beta}_0)-\BS{U}_{\BS{Z}}(t,\theta_0,\BS{\beta}_0))^{\otimes 2}dN_i(t)) \bigg \|_{\max} = O_p \Big(\frac{\log(np)}{\sqrt{n}} \Big),
\end{eqnarray*}
\[
 \bigg \| \mathbb G_m(\int_0^\tau 2(\bar{\BS{Z}}_k(t,\theta_0,\BS{\beta}_0)-\BS{U}_{\BS{Z}}(t,\theta_0,\BS{\beta}_0))\BS{U}_{\BS{Z}}(t,\theta_0,\BS{\beta}_0)^\top dN_i(t)) \bigg \|_{\max}= O_p(\sqrt{\log(np)}),
\]
and
\begin{align*}
	 \bigg \| \mathbb G_m(\int_0^\tau \BS{U}_{\BS{Z}}(t,\theta_0,\BS{\beta}_0)^{\otimes 2}dN_i(t))\big\|_{\max} &\leq \big \| \int_0^\tau \BS{U}_{\BS{Z}}(t,\theta_0,\BS{\beta}_0)^{\otimes 2}d\mathbb G_m(M_i(t)) \bigg \|_{\max}\\
	& \quad + \bigg   \| \int_0^\tau \BS{U}_{\BS{Z}}(t,\theta_0,\BS{\beta}_0)^{\otimes 2}d\mathbb G_m(A_i(t))  \bigg  \|_{\max}\\
	&= O_p(\sqrt{\log(n)}).
\end{align*}
Thus, bound \eqref{eq-neww-G-matrix} follows. This completes the proof of Lemma \ref{lmentropy}.

\subsection{Lemma \ref{concentrationformg} and its proof} \label{SecB.10}

Lemma \ref{concentrationformg} below provides a reusable concentration inequality for predictable integrands with zero weighted risk-set mean. It is employed to control several scalar and matrix empirical process terms in the previous lemmas.

\begin{lemma}\label{concentrationformg}
Let $\{a_i(t)\}_{i=1}^n$ be a sequence of predictable processes on $[0,\tau]$ uniformly bounded by some constant $M>0$, i.e., $|a_i(t)| \leq M $ for all $t\in [0,\tau]$ and $i = 1,2,\dots,n$. Assume that $\sum_{i=1}^{n} a_i(t)w_i(t,\theta_0,\BS{\beta}_0) = 0$ holds for any $t \in [0,\tau]$. Then we have that 
	\begin{eqnarray*}
		\frac{1}{\sqrt{n}}\bigg | \sum_{i=1}^n\int_0^\tau a_i(t)dN_i(t)- \mathbb{E}\sum_{i=1}^n\int_0^\tau a_i(t)dN_i(t)\bigg | = O_p(\sqrt{\log(n)}).
	\end{eqnarray*}
\end{lemma}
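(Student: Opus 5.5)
The plan is to use the martingale structure of the counting processes, as in the variance computation \eqref{eq-neww-sigma-oracle}. By the Doob--Meyer decomposition $N_i(t)=M_i(t)+A_i(t)$ with $dA_i(t)=\tilde w_i(t,\theta_0,\BS{\beta}_0)\lambda_0(t)\,dt$, I would split
\[
\sum_{i=1}^n\int_0^\tau a_i(t)\,dN_i(t)=\sum_{i=1}^n\int_0^\tau a_i(t)\,dM_i(t)+\int_0^\tau \sum_{i=1}^n a_i(t)\tilde w_i(t,\theta_0,\BS{\beta}_0)\lambda_0(t)\,dt .
\]
The key observation is that the second (compensator) term vanishes identically. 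Since $w_i=\tilde w_i/\sum_j\tilde w_j$, the hypothesis $\sum_i a_i(t)w_i(t,\theta_0,\BS{\beta}_0)=0$ is equivalent to $\sum_i a_i(t)\tilde w_i(t,\theta_0,\BS{\beta}_0)=0$ whenever the risk set is nonempty. When the risk set is empty, every $\tilde w_i(t)=0$, so the integrand is zero anyway. Hence $\sum_i\int a_i\,dN_i=\sum_i\int a_i\,dM_i$, which is a mean-zero martingale at time $\tau$. Its expectation is therefore $0$, and the centering term in the statement drops out.

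Next I would bound the martingale term by its predictable variation. Because the $N_i$ have no common jumps, $\langle M_i,M_j\rangle=0$ for $i\neq j$, and so
\[
\mathbb{E}\Big(\sum_{i}\int_0^\tau a_i\,dM_i\Big)^2=\mathbb{E}\sum_i\int_0^\tau a_i(t)^2\tilde w_i(t,\theta_0,\BS{\beta}_0)\lambda_0(t)\,dt\le M^2 n\, e^{|\theta_0|+K_{\BS{Z}}\|\BS{\beta}_0\|_1}\int_0^\tau\lambda_0(t)\,dt .
\]
This bound is $O(n)$ by Conditions \ref{coninicox}(i),(iv) and \ref{coninimu}(iii). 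Chebyshev's inequality then gives $n^{-1/2}\big|\sum_i\int a_i\,dM_i\big|=O_p(1)$, which is stronger than the claimed $O_p(\sqrt{\log n})$. To get an exponential tail instead, I could apply Lenglart's inequality or a Bernstein-type inequality for martingales with bounded jumps (the jumps are bounded by $M$). Either route yields the $\sqrt{\log n}$ rate, and the polylog slack would also allow a union bound when the lemma is applied componentwise to vector or matrix integrands, as in Lemmas \ref{lmentropy0-2}--\ref{lmentropy0-3}.

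The main obstacle is the predictability of $a_i(t)$ with respect to the full-sample filtration $\mathcal{F}_t$. In the applications, $a_i(t)$ involves risk-set averages such as $\bar D_k(t,\theta_0,\BS{\beta}_0)$, which depend on all observations. These are left-continuous and adapted when evaluated at $t-$, so the martingale-integral and isometry arguments remain valid. I would make this explicit by using left limits, noting that this does not change the $dN_i$ integrals because distinct $N_i$ have no simultaneous jumps. I would also flag that the $\mathbb{E}$-centering is exact, not approximate, precisely because of the zero weighted-mean condition.
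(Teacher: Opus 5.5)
Your proof is correct for the stated scalar lemma, but after the common first step it takes a different route from the paper. Both arguments use the weighted-mean-zero condition to remove the compensator, so that $\sum_i\int_0^\tau a_i\,dN_i=\sum_i\int_0^\tau a_i\,dM_i$.

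The paper then indexes this martingale by the jump times $t_j$ of the pooled counting process. This gives a discrete martingale with at most $n$ increments, each a single value $a_i(t_j)$ bounded by $M$, and Azuma--Hoeffding then gives a sub-Gaussian tail $2e^{-nx^2/2}$. The paper treats the centering separately, integrating that tail to show $n^{-1/2}|\mathbb{E}\sum_i\int a_i\,dN_i|=O(1)$.

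You instead use the predictable-variation isometry plus Chebyshev. This yields the sharper $O_p(1)$. You also observe that the centering is exactly zero, because $\sum_i\int a_i\,dM_i$ is a genuine square-integrable martingale; this is cleaner than the paper's bound on the mean. Two small improvements are available. Your variance bound does not need the model conditions, since $\mathbb{E}\sum_i\int a_i^2\,dA_i\le M^2\sum_i\mathbb{E}A_i(\tau)=M^2\sum_i\mathbb{E}N_i(\tau)\le M^2n$. And no $t-$ modification is needed for predictability, because $Y_j(t)=I_{\{X_j\ge t\}}$ is already left-continuous.

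What the paper's route buys is the exponential tail. The paper's proof uses it at the end, and Lemmas \ref{lmentropy0-2}--\ref{lmentropy0-3} rely on it, to union-bound over $pq$ coordinates with $\log p=o(n^{\alpha})$ and obtain $O_p(\sqrt{\log(pqn)})$. Chebyshev's polynomial tail cannot support that extension. Neither can Lenglart's inequality, which you list as an option, since it too gives only polynomial tails.

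The Bernstein-type option you mention does work, via Freedman's inequality. It uses the almost-sure bound $\langle\sum_i\int a_i\,dM_i\rangle_\tau\le M^2 n\,e^{|\theta_0|+K_{\BS{Z}}\|\BS{\beta}_0\|_1}\int_0^\tau\lambda_0(t)\,dt$, which holds under Conditions \ref{coninicox}(i),(iv) and \ref{coninimu}(iii). The paper's Azuma argument on the jump chain is the alternative, and it needs only $|a_i|\le M$ and at most $n$ jumps.
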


\noindent\textit{Proof}. The proof of this lemma will evaluate the counting-process integral at jump times and construct a martingale with bounded increments. The Azuma--Hoeffding concentration gives the stochastic bound, while the zero weighted-mean condition makes the compensator contribution vanish. A coordinatewise union bound extends the result from scalar predictable processes to matrix-valued ones.

Specifically, let $t_j$ be the time of the $j$th jump of the process $\sum_{i=1}^n \int_0^\tau Y_i(t)dN_i(t)$, $j=1,\dots,m$ and $t_0 = 0$. Then $t_j$'s are stopping times, and notice that $t_j \leq \tau$ holds almost surely. We define
\begin{eqnarray*}
	X_j = \sum_{i=1}^n\int_0^{t_j} a_i(t)dN_i(t).
\end{eqnarray*}
The assumption of $ \sum_{i = 1}^n a_i(t) w_i (t, \theta_0, \BS{\beta}_0) = 0$ entails that 
$$\sum_{i = 1}^n a_i(t) \widetilde{w}_i (t, \theta_0, \BS{\beta}_0) = 0. $$ Using the Doob--Meyer decomposition $N_i(t)= M_i(t)+ A_i(t)$ for all $i=1,\dots,n$, we can deduce that 
\begin{align*}
	X_j & =\sum_{i=1}^n\int_0^{t_j} a_i(t)dN_i(t) \\
	&=\sum_{i=1}^n\int_0^{t_j} a_i(t)dM_i(t) +\sum_{i=1}^n\int_0^{t_j} a_i(t)dA_i(t)\\
	& =  \sum_{i=1}^n\int_0^{t_j} a_i(t)dM_i(t) +\sum_{i=1}^n\int_0^{t_j} a_i(t)\tilde w_i(t,\theta_0,\BS{\beta}_0)\lambda_0(t)dt\\
	& =  \sum_{i=1}^n\int_0^{t_j} a_i(t)dM_i(t).
\end{align*}

Since each $M_i(t)$ is a martingale and $a_i(t)$'s are predictable, $\{X_j,j=0,1,\dots\}$ is a martingale with difference 
$$|X_j - X_{j-1}| \leq \max_{t,i} |a_i(t)| \leq M.$$ Let $m_0  $ be the greatest integer lower bound of $C^2_0n$. By the martingale version of  Hoeffding's inequality \citep{azuma1967weighted}, it holds that 
\begin{eqnarray*}
	\mathbb{P}(|X_{m_0}| > MnC_0x) \leq 2\exp(-n^2C_0^2x^2/(2m_0))\leq 2e^{-nx^2/2}.
\end{eqnarray*}
Observe that $X_{m_0} = \sum_{i=1}^n\int_0^{\tau} a_i(t)dN_i(t)$ \textit{if and only if} there is no jump after $t_{m_0}$, equivalently, if there are at most $m_0$ jumps of the process $\sum_{i=1}^n \int_0^\tau Y_i(t)dN_i(t)$. In our setting, we have that $m_0 \leq n$, and thus
\begin{eqnarray*}
	\mathbb P\bigg( \bigg|\frac{1}{n}\sum_{i=1}^n \int_0^\tau a_i(t)dN_i(t) \bigg| > MC_0x \bigg) \leq 2e^{-nx^2/2}.
\end{eqnarray*}
Taking $x$ as $\sqrt{\frac{2\log(n)}{n}}$ gives that 
\begin{eqnarray*}
	\mathbb P \bigg( \bigg|\frac{1}{n}\sum_{i=1}^n \int_0^\tau a_i(t)dN_i(t) \bigg| > M\sqrt{\frac{\log(n)}{n}} \bigg) \leq \frac{2}{n},
\end{eqnarray*}
which implies $|\frac{1}{n}\sum_{i=1}^n \int_0^\tau a_i(t)dN_i(t)| = O_p(\sqrt{\frac{\log{n}}{n}})$.

On the other hand, by invoking Fubini's theorem, we can rewrite the expected value as
\begin{align*}
	\frac{1}{\sqrt n } \bigg |\mathbb E\sum_{i=1}^n \int_0^\tau a_i(t)dN_i(t)\bigg |
    &\leq \mathbb E\bigg |\sum_{i=1}^n \int_0^\tau a_i(t)dN_i(t)\bigg |\\
	& =  \frac{1}{\sqrt n } \int_0^\infty \mathbb P \bigg(\bigg |\sum_{i=1}^n \int_0^\tau a_i(t)dN_i(t)\bigg | \geq M C_0 n x \bigg) d(M C_0 n x) \\
	&  \leq   \int_0^\infty {2C_0 M \sqrt{n} e^{-nx^2/2}}  d x\\
	& =  O( 1 ) .
\end{align*}
Combining the two bounds above, we can obtain that 
\begin{eqnarray*}
	\frac{1}{\sqrt{n}}\big \| \sum_{i=1}^n\int_0^\tau a_i(t)dN_i(t)- \mathbb{E}\sum_{i=1}^n\int_0^\tau a_i(t)dN_i(t)\big\| = O_p(\sqrt{\log(n)}),
\end{eqnarray*}
which gives the desired conclusion.

Finally, the same reasoning extends to matrix-valued predictable processes. For a sequence of matrices $\{\BS{A}_i(t)\}_{i=1}^n$ of size $p\times q$ for some positive integers $p$ and $q$ instead of a sequence of scalars $\{a_i(t)\}_{i=1}^n$, if the condition of $\sum_{i=1  }^n \BS{A}_i(t)w_i(t,\theta_0,\BS{\beta}_0) = 0$ is satisfied, the same concentration argument leads to 
\begin{eqnarray*}
	\mathbb P\bigg( \bigg|\frac{1}{n}\sum_{i=1}^n \int_0^\tau (\BS{A}_i)_{l,k}(t)dN_i(t) \bigg| > M C_0 x \bigg) \leq 2e^{-nx^2/2}
\end{eqnarray*}
for any $l=1,2,\dots,p$ and $k = 1,2,\dots,q$. Consequently, choosing $x = \sqrt{\frac{2\log(pqn)}{n}}$ and applying the union bound yield that 
\begin{eqnarray*}
	\mathbb P\bigg(\bigg\|\frac{1}{ n}\sum_{i=1}^n \int_0^\tau \BS{A}_i(t)dN_i(t)\bigg\|_{\max} > MC_0 \sqrt{\frac{2\log(pqn)}{n}} \bigg) \leq \frac{2}{n},
\end{eqnarray*}
which entails that $\|\frac{1}{\sqrt n}\sum_{i=1}^n \int_0^\tau \BS{A}_i(t)dN_i(t)\|_{\max} = O_p(\sqrt{\log(pqn)}  )$. This concludes the proof of Lemma \ref{concentrationformg}.

\section{Additional simulation results for Studies 2 and 3 in Section \ref{Sec5}} \label{SecA}

In this section, we report simulation results for $\bbeta$ in Studies 2 and 3 in Section \ref{Sec5}. As mentioned earlier, based on two estimators $\check{\theta}_1$ and $\check{\theta}_2$ for $\theta$,
one can obtain two different debiased estimators $\check{\bbeta}_1$ and $\check{\bbeta}_2$ from \eqref{eqestbeta}. We compare them to the penalized maximum partial likelihood estimator $\hat{\bbeta}$, which is defined in \eqref{eqBetaL1} and can be obtained using all observations and the R package \texttt{glmnet} with regularization parameter $\lambda_{\theta\BS{\beta}}$ selected by the $10$-fold cross-validation. We also include the oracle estimator $\hat{\bbeta}_{\mathrm{or}}$ as a benchmark for comparisons. The oracle estimator $\hat{\bbeta}_{\mathrm{or}}$ is computed from $(\hat{\theta}_{\mathrm{or}}, \hat{\bbeta}_{\mathrm{or}})$, which is based on the true underlying sparse model and obtained using the  \texttt{coxph} function in the R package \texttt{survival} with all the observations.

Tables \ref{Table-beta-n500-p10-N1000} and \ref{Table-beta-n500-p100-N1000} summarize the simulation results for $\bbeta$ over $1000$ repetitions under different
settings in Studies 2 and 3, respectively. CR represents the censoring rate, BIAS is the bias of the estimator, SE.T denotes the estimate of the asymptotic standard deviation of the estimator, and SE stands for the empirical standard error of the estimator. The results reveal that the Lasso estimator has a larger bias, whereas our proposed method reduces substantially the bias. $\check{\bbeta}_1$ and $\check{\bbeta}_2$, the two versions of our estimator of $\bbeta$, are nearly identical. Moreover, the proposed method identifies all signal variables. Since $p=100$ in Study 3, we report only the estimates $\beta_j$ for all the signal variables and the first $5$ noise variables for conciseness.

\begin{table}[H]
	\caption{Simulation results for $\bbeta$ in Study 2.}
	\vspace{1mm}
	\centering

  \scalebox{0.63}{
	\begin{tabular}{clrrrr rrrrrr}
		\toprule
		\hline
		Approach & Measure & \multicolumn{1}{c}{$\beta_1$} & \multicolumn{1}{c}{$\beta_2$} & \multicolumn{1}{c}{$\beta_3$} & \multicolumn{1}{c}{$\beta_4$} & \multicolumn{1}{c}{$\beta_5$} & \multicolumn{1}{c}{$\beta_6$} & \multicolumn{1}{c}{$\beta_7$} & \multicolumn{1}{c}{$\beta_8$} & \multicolumn{1}{c}{$\beta_9$} & \multicolumn{1}{c}{$\beta_{10}$} \\
		\hline
		\multicolumn{12}{c}{Setting 1 (CR: 0.1325)} \\
		\midrule
        \multirow{2}{*}{$\hat{\bbeta}$} & $\mbox{BIAS}$ &-0.0311 & 0.0295 &-0.0309 & 0.0278 &-0.0249 & -0.0047 & 0.0004 & 0.0007 & 0.0010 & -0.0013 \\
                           & $\mbox{SE}$       &  0.0540 & 0.0535 & 0.0538 & 0.0551 & 0.0584 & 0.0387 & 0.0365 & 0.0364 & 0.0330 & 0.0345\\
   \hline
\multirow{3}{*}{$\check{\bbeta}_1$} & $\mbox{BIAS}$ & -0.0110 & 0.0095 & -0.0109 & 0.0076 & -0.0095 & -0.0019 & 0.0013 & 0.0010 & 0.0016 &-0.0016 \\
     & $\mbox{SE.T}$  & 0.0474 & 0.0473 & 0.0472 & 0.0473 & 0.0529 & 0.0500 & 0.0439 & 0.0439 & 0.0439 & 0.0439 \\
     & $\mbox{SE}$   & 0.0531 & 0.0527 & 0.0530 & 0.0542 & 0.0584 & 0.0489 & 0.0463 & 0.0463 & 0.0422 & 0.0441 \\   
 \hline
\multirow{3}{*}{$\check{\bbeta}_2$} & $\mbox{BIAS}$ & -0.0112 & 0.0096 & -0.0110 & 0.0078 & -0.0088 & -0.0026 & 0.0013 & 0.0010 & 0.0016 & -0.0016 \\
     & $\mbox{SE.T}$ &  0.0474 & 0.0473 & 0.0472 & 0.0473 & 0.0529 & 0.0500 & 0.0439 & 0.0439 & 0.0439 & 0.0439 \\
     & $\mbox{SE}$   &  0.0531 & 0.0527 & 0.0530 &0.0542 & 0.0584 & 0.0489 & 0.0463 & 0.0463 & 0.0422 & 0.0441 \\ 
		\midrule     
   \multirow{3}{*}{$\hat{\bbeta}_{or}$} & $\mbox{BIAS}$ & 0.0041 & -0.0056 & 0.0043 & -0.0078 & 0.0043 & 0 & 0 & 0 & 0 & 0\\
                           & $\mbox{SE.T}$ &  0.0529 & 0.0527 & 0.0526 & 0.0528 & 0.0560 & 0 & 0 & 0 & 0 & 0\\
                           & $\mbox{SE}$   &  0.0529 & 0.0532 & 0.0532 & 0.0542 & 0.0581 & 0 & 0 & 0 & 0 & 0\\
		\midrule
		\multicolumn{12}{c}{Setting 2 (CR: 0.3937)} \\
		\midrule

        \multirow{2}{*}{$\hat{\bbeta}$} & $\mbox{BIAS}$ & -0.0373 & 0.0327 & -0.0368 & 0.0325 & -0.0280 &-0.0054 &0.0004 & 0 &0.0003 & -0.0020\\
                           & $\mbox{SE}$       & 0.0618 & 0.0626 & 0.0652 & 0.0643 & 0.0688 & 0.0462 & 0.0413 & 0.0435 & 0.0371 & 0.0419 \\
  \hline
     \multirow{3}{*}{$\check{\bbeta}_1$} & $\mbox{BIAS}$ & -0.0134 & 0.0089 & -0.013 & 0.0087 & -0.0110 & -0.0016 & 0.0007 & 0.0010 & 0.0007 & -0.0021 \\
                & $\mbox{SE.T}$ & 0.0561 & 0.0561 & 0.0559 & 0.0560 & 0.0628 & 0.0601 & 0.0528 & 0.0528 & 0.0527 & 0.0528 \\
                & $\mbox{SE}$   & 0.0612 & 0.0619 & 0.0641 & 0.0633 & 0.0690 & 0.0588 & 0.0535 & 0.0555 & 0.0490 & 0.0541 \\             
   \hline
     \multirow{3}{*}{$\check{\bbeta}_2$} & $\mbox{BIAS}$ & -0.0136 & 0.0091 & -0.0132 & 0.0088 & -0.0102 & -0.0024 & 0.0006 & 0.0010 & 0.0007 & -0.0021 \\
                & $\mbox{SE.T}$ & 0.0561 & 0.0561 & 0.0559 & 0.0560 & 0.0628 & 0.0601 & 0.0528 & 0.0528 & 0.0527 & 0.0528 \\
                & $\mbox{SE}$   & 0.0612 & 0.0618 & 0.0641 & 0.0633 & 0.0689 & 0.0588 & 0.0535 & 0.0555 & 0.0490 & 0.0540 \\               
   \hline
   \multirow{3}{*}{$\hat{\bbeta}_{or}$} & $\mbox{BIAS}$ &  0.0036 & -0.0080 & 0.0038 & -0.0085 & 0.0052 & 0 & 0 & 0 & 0 & 0\\
                           & $\mbox{SE.T}$ & 0.0623 & 0.0622 & 0.0620 & 0.0622 & 0.0663 & 0 & 0 & 0 & 0 & 0 \\
                           & $\mbox{SE}$   & 0.0622 & 0.0624 & 0.0646 & 0.0638 & 0.0687 & 0 & 0 & 0 & 0 & 0\\
		\midrule
		\multicolumn{12}{c}{Setting 3 (CR: 0.1149)} \\
		\midrule

        \multirow{2}{*}{$\hat{\bbeta}$} & $\mbox{BIAS}$ & -0.0330 & 0.0423 &-0.0337 & 0.0465 &-0.0283 &-0.0019 &-0.0004 &-0.0003 & 0.0003 & -0.0021\\
                           & $\mbox{SE}$  &      0.0664 & 0.0694 & 0.0661 & 0.0677 &0.0700 & 0.0429 & 0.0429 & 0.0448 & 0.0439 & 0.0408\\                    
        \hline
        \multirow{3}{*}{$\check{\bbeta}_1$} & $\mbox{BIAS}$ & -0.0156 & 0.0213 &-0.0164 & 0.0256 &-0.0130 &-0.0002 & 0 & 0.0001 & 0.0001 & -0.0024\\
                           & $\mbox{SE.T}$ &  0.0587 & 0.0588 & 0.0588 & 0.0588 & 0.0632 & 0.0554 & 0.0562 & 0.0561 & 0.0561 & 0.0502\\
                           & $\mbox{SE}$   & 0.0659 & 0.0681 & 0.0653 & 0.0666 & 0.0696 & 0.0511 & 0.0505 & 0.0524 & 0.0518 & 0.0486\\
    \hline
     \multirow{3}{*}{$\check{\bbeta}_2$} & $\mbox{BIAS}$ & -0.0157 & 0.0215 &-0.0164 & 0.0257 & -0.0124 & -0.0010 &-0.0001 & 0.0001 &0.0001 & -0.0025\\
                           & $\mbox{SE.T}$ &  0.0587 & 0.0588 & 0.0588 & 0.0588 & 0.0632 & 0.0554 & 0.0562 & 0.0561 & 0.0561 & 0.0502 \\
                           & $\mbox{SE}$   & 0.0659 & 0.0681 & 0.0653 & 0.0665 & 0.0695 & 0.0510 & 0.0505 & 0.0523 & 0.0518 & 0.0486 \\
   \hline
   \multirow{3}{*}{$\hat{\bbeta}_{or}$} & $\mbox{BIAS}$ & 0.0051 & -0.0063 & 0.0040 &-0.0019 & 0.0052 & 0 & 0 & 0 & 0 & 0\\
                           & $\mbox{SE.T}$ & 0.0655 & 0.0657 & 0.0657 & 0.0656 & 0.0684 & 0 & 0 & 0 & 0 & 0\\
                           & $\mbox{SE}$   & 0.0662 & 0.0677 & 0.0650 & 0.0667 & 0.0700 & 0 & 0 & 0 & 0 & 0\\                        
		\midrule
		\multicolumn{12}{c}{Setting 4 (CR: 0.3845)} \\
		\midrule

        \multirow{2}{*}{$\hat{\bbeta}$} & $\mbox{BIAS}$ & -0.0354 & 0.0474 &-0.0367 & 0.0526 &-0.0326 &-0.0003 &-0.0020 &-0.0010 & 0.0009 & -0.0025\\
                           & $\mbox{SE}$       & 0.0806 & 0.0817 & 0.0815 & 0.0826 & 0.0839 & 0.0526 & 0.0500 & 0.0546 & 0.0532 & 0.0500 \\                
        \hline
        \multirow{3}{*}{$\check{\bbeta}_1$} & $\mbox{BIAS}$ & -0.0169 & 0.0260 &-0.0183 & 0.0312 &-0.0170 & 0.0017 &-0.0010 &-0.0006 & 0.0011 & -0.0029\\
                           & $\mbox{SE.T}$ &  0.0700 & 0.0700 & 0.0701 & 0.0700 & 0.0755 & 0.0668 & 0.0678 & 0.0676 & 0.0675 & 0.0606\\
                           & $\mbox{SE}$   & 0.0802 & 0.0802 & 0.0809 & 0.0812 & 0.0834 & 0.0626 & 0.0586 & 0.0633 & 0.0624 & 0.0592 \\
    \hline
     \multirow{3}{*}{$\check{\bbeta}_2$} & $\mbox{BIAS}$ & -0.0169 & 0.0262 & -0.0182 & 0.0314 & -0.0163 & 0.0008 &-0.0013 &-0.0007 & 0.0010 &-0.0029\\
                           & $\mbox{SE.T}$ & 0.0700 & 0.0700 & 0.0701 & 0.0700 & 0.0755 & 0.0668 & 0.0678 & 0.0676 & 0.0675 & 0.0606\\
                           & $\mbox{SE}$   & 0.0802 & 0.0803 & 0.0809 & 0.0811 & 0.0834 & 0.0625 & 0.0586 & 0.0632 & 0.0623 & 0.0591\\                      
   \hline
   \multirow{3}{*}{$\hat{\bbeta}_{or}$} & $\mbox{BIAS}$ & 0.0068 & -0.0076 & 0.0056 & -0.0021 & 0.0045 & 0 & 0 & 0 & 0 & 0\\
                           & $\mbox{SE.T}$ & 0.0778 & 0.0778 & 0.0779 & 0.0778 & 0.0811 & 0 & 0 & 0 & 0 & 0\\
                           & $\mbox{SE}$   & 0.0800 & 0.0801 & 0.0807 & 0.0815 & 0.0835 & 0 & 0 & 0 & 0 & 0\\                     
       \hline
		\bottomrule
	\end{tabular}\label{Table-beta-n500-p10-N1000}
    }
\end{table}

\begin{table}[H]
	\caption{Simulation results for $\bbeta$ in Study 3. Since $p=100$ in this study, we report only the estimates $\hat{\beta}_j$ for all the signal variables and the first $5$ noise variables for conciseness.}
	\vspace{1mm}
	\centering
  \scalebox{0.63}{
	\begin{tabular}{clrrrrr rrrrr}
		\toprule
		\midrule
		Approach & Measure & \multicolumn{1}{c}{$\beta_1$} & \multicolumn{1}{c}{$\beta_2$} & \multicolumn{1}{c}{$\beta_3$} & \multicolumn{1}{c}{$\beta_4$} & \multicolumn{1}{c}{$\beta_5$} & \multicolumn{1}{c}{$\beta_6$} & \multicolumn{1}{c}{$\beta_7$} & \multicolumn{1}{c}{$\beta_8$} & \multicolumn{1}{c}{$\beta_9$} & \multicolumn{1}{c}{$\beta_{10}$} \\
		\midrule
		\multicolumn{12}{c}{Setting 1 (CR: 0.1333)} \\
        \midrule
        \multirow{2}{*}{$\hat{\bbeta}$} & $\mbox{BIAS}$ & -0.1013 & 0.1019 & -0.1002 & 0.1005 & -0.0829 & -0.0022 & 0.0002 & 0.0004 & -0.0003 & 0\\
                          & $\mbox{SE}$       &  0.0514 & 0.0520 & 0.0518 & 0.0505 & 0.0553 & 0.0121 & 0.0110 & 0.0132 & 0.0114 & 0.0133\\     
        \hline
        \multirow{3}{*}{$\check{\bbeta}_1$} & $\mbox{BIAS}$ & -0.0449 & 0.0458 &-0.0439 & 0.0444 &-0.0439 & 0.0012 & 0.0015 & 0.0002 &-0.0001 &-0.0003\\
                          & $\mbox{SE.T}$ &   0.0540 & 0.0539 & 0.0539 & 0.0539 & 0.0596 & 0.0565 & 0.0504 & 0.0503 & 0.0503 & 0.0503\\
                          & $\mbox{SE}$   &  0.0512 & 0.0514 & 0.0510 & 0.0498 & 0.0558 & 0.0354 & 0.0366 & 0.0384 & 0.0375 & 0.0388\\                           
    \hline
     \multirow{3}{*}{$\check{\bbeta}_2$} & $\mbox{BIAS}$ & -0.0450 & 0.0460 &-0.0441 & 0.0445 &-0.0433 & 0.0005 & 0.0015 & 0.0002 &-0.0001 &-0.0003\\
                          & $\mbox{SE.T}$ &  0.0540 & 0.0539 & 0.0539 & 0.0539 & 0.0596 & 0.0565 & 0.0504 & 0.0503 & 0.0503 & 0.0503 \\
                          & $\mbox{SE}$   &  0.0512 & 0.0514 & 0.0510 & 0.0498 & 0.0557 & 0.0354 & 0.0366 & 0.0384 & 0.0375 & 0.0388\\  
  \hline
  \multirow{3}{*}{$\hat{\bbeta}_{or}$} & $\mbox{BIAS}$ & 0.0038 & -0.0029 & 0.0050 & -0.0044 & 0.0044 & 0 & 0 & 0 & 0 & 0\\
                          & $\mbox{SE.T}$ &  0.0527 &  0.0527 & 0.0527 &  0.0527 & 0.0560 & 0 & 0 & 0 & 0 & 0\\
                          & $\mbox{SE}$   &  0.0523 &  0.0529 & 0.0518 &  0.0508 & 0.0553 & 0 & 0 & 0 & 0 & 0  \\ 
                          
		\midrule
		\multicolumn{12}{c}{Setting 2 (CR: 0.3952)} \\
		\midrule

        \multirow{2}{*}{$\hat{\bbeta}$} & $\mbox{BIAS}$ & -0.1118 & 0.1131 &-0.1110 & 0.1119 & -0.0912 & -0.0022 & 0.0002 & 0.0005 &-0.0004 & 0\\
                          & $\mbox{SE}$       &  0.0627 & 0.0628 & 0.0617 & 0.0581 &  0.0631 &  0.0140 & 0.0121 & 0.0162 & 0.0136 & 0.0158\\                   
        \hline
        \multirow{3}{*}{$\check{\bbeta}_1$} & $\mbox{BIAS}$ & -0.0443 & 0.0461 &-0.0440 & 0.0448 &-0.0456 & 0.0029 & 0.0005 & 0.0005 &-0.0014 & 0.0007\\
                          & $\mbox{SE.T}$ & 0.0687 & 0.0685 & 0.0684 & 0.0685 & 0.0757 & 0.0728 & 0.0651 & 0.0649 & 0.0650 & 0.0650\\
                          & $\mbox{SE}$   & 0.0622 & 0.0623 & 0.0615 & 0.0579 & 0.0636 & 0.0430 & 0.0427 & 0.0478 & 0.0442 & 0.0467\\                                
    \hline
     \multirow{3}{*}{$\check{\bbeta}_2$} & $\mbox{BIAS}$ & -0.0444 & 0.0462 & -0.0441 & 0.0449 &-0.0449 & 0.0022 & 0.0005 & 0.0005 &-0.0014 & 0.0007\\
                          & $\mbox{SE.T}$ & 0.0687 & 0.0685 &  0.0684 & 0.0685 & 0.0757 & 0.0728 & 0.0651 & 0.0649 & 0.0650 & 0.0650\\
                          & $\mbox{SE}$   & 0.0622 & 0.0623 &  0.0615 & 0.0579 & 0.0635 & 0.0430 & 0.0427 & 0.0477 & 0.0441 & 0.0467\\                            
  \hline
  \multirow{3}{*}{$\hat{\bbeta}_{or}$} & $\mbox{BIAS}$ & 0.0043 &-0.0030 & 0.0062 &-0.0043 & 0.0040 & 0 & 0 & 0 & 0 & 0\\
                          & $\mbox{SE.T}$ & 0.0622 & 0.0621 & 0.0622 & 0.0622 & 0.0662 & 0 & 0 & 0 & 0 & 0\\
                          & $\mbox{SE}$   & 0.0631 & 0.0629 & 0.0622 & 0.0595 & 0.0633 & 0 & 0 & 0 & 0 & 0\\                             
		\midrule
		\multicolumn{12}{c}{Setting 3 (CR: 0.1149)} \\
		\midrule

        \multirow{2}{*}{$\hat{\bbeta}$} & $\mbox{BIAS}$ & -0.1211 & 0.1604 &-0.1200 & 0.1568 &-0.1048 &-0.0038 &-0.0016 &0.0002 &-0.0009 & 0.0003\\
                          & $\mbox{SE}$       & 0.0655 & 0.0711 & 0.0671 & 0.0673 & 0.0712 & 0.0157 & 0.0159 &0.0152 & 0.0146 & 0.0145\\                       
        \hline
        \multirow{3}{*}{$\check{\bbeta}_1$} & $\mbox{BIAS}$ & -0.0779 & 0.1147 & -0.0768 & 0.1110 &-0.0696 &-0.0021 & 0.0001 & 0.0019 &-0.0002 & 0.0021\\
                          & $\mbox{SE.T}$ & 0.0668 & 0.0665 &  0.0670 & 0.0667 & 0.0714 & 0.0626 & 0.0641 & 0.0640 & 0.0640 & 0.0640\\
                          & $\mbox{SE}$   & 0.0650 & 0.0690 &  0.0671 & 0.0655 & 0.0715 & 0.0379 & 0.0370 & 0.0371 & 0.0364 & 0.0365\\                              
    \hline
     \multirow{3}{*}{$\check{\bbeta}_2$} & $\mbox{BIAS}$ & -0.0776 & 0.1152 &-0.0765 & 0.1115 & -0.0688 & -0.0030 &-0.0003 & 0.0017 &-0.0003 & 0.0020\\
                          & $\mbox{SE.T}$ &  0.0668 & 0.0665 & 0.0670 & 0.0667 &  0.0714 &  0.0626 & 0.0641 & 0.0640 & 0.0640 & 0.0640\\
                          & $\mbox{SE}$   &  0.0650 & 0.0689 & 0.0670 & 0.0655 &  0.0714 &  0.0378 & 0.0370 & 0.0370 & 0.0364 & 0.0365\\                              
  \hline
  \multirow{3}{*}{$\hat{\bbeta}_{or}$} & $\mbox{BIAS}$ & 0.0023 &-0.0007 & 0.0056 &-0.0050 & 0.0051 & 0 & 0 & 0 & 0 & 0\\
                          & $\mbox{SE.T}$ & 0.0655 & 0.0654 & 0.0657 & 0.0657 & 0.0685 & 0 & 0 & 0 & 0 & 0\\
                          & $\mbox{SE}$   & 0.0638 & 0.0684 & 0.0657 & 0.0651 & 0.0707 & 0 & 0 & 0 & 0 & 0\\                        
		\midrule
		\multicolumn{12}{c}{Setting 4 (CR: 0.3871)} \\
		\midrule

        \multirow{2}{*}{$\hat{\bbeta}$} & $\mbox{BIAS}$ & -0.1391 & 0.1846 &-0.1357 & 0.1822 &-0.1175 &-0.0042 &-0.0016 & 0.0002 &-0.0008 & 0.0002\\
                          & $\mbox{SE}$       &  0.0752 & 0.0848 & 0.0809 & 0.0810 & 0.0845 & 0.0200 & 0.0176 & 0.0172 & 0.0187 & 0.0170\\                           
        \hline
        \multirow{3}{*}{$\check{\bbeta}_1$} & $\mbox{BIAS}$ & -0.0886 & 0.1281 & -0.0851 & 0.1257 &-0.0769 &-0.0013 & 0.0004 & 0.0015 & 0.0003 & 0.0024\\
                          & $\mbox{SE.T}$ & 0.0855 & 0.0851 &  0.0857 & 0.0852 & 0.0912 & 0.0811 & 0.0829 & 0.0828 & 0.0830 & 0.0828\\
                          & $\mbox{SE}$   & 0.0746 & 0.0824 &  0.0810 & 0.0790 & 0.0847 & 0.0468 & 0.0445 & 0.0447 & 0.0448 & 0.0441\\                               
    \hline
     \multirow{3}{*}{$\check{\bbeta}_2$} & $\mbox{BIAS}$ & -0.0882 & 0.1286 &-0.0848 & 0.1262 &-0.0761 &-0.0023 &-0.0001 & 0.0012 & 0.0001 & 0.0023\\
                          & $\mbox{SE.T}$ & 0.0855 & 0.0851 & 0.0857 & 0.0852 & 0.0913 & 0.0811 & 0.0829 & 0.0828 & 0.0830 & 0.0828\\
                          & $\mbox{SE}$   & 0.0746 & 0.0824 & 0.0810 & 0.0790 & 0.0847 & 0.0467 & 0.0444 & 0.0447 & 0.0448 & 0.0440\\                               
  \hline
  \multirow{3}{*}{$\hat{\bbeta}_{or}$} & $\mbox{BIAS}$ & 0.0006 &-0.0015 & 0.0068 &-0.0051 & 0.0061 & 0 & 0 & 0 & 0 & 0\\
                          & $\mbox{SE.T}$ & 0.0777 & 0.0777 & 0.0781 & 0.0780 & 0.0814 & 0 & 0 & 0 & 0 & 0\\
                          & $\mbox{SE}$   & 0.0730 & 0.0826 & 0.0785 & 0.0785 & 0.0840 & 0 & 0 & 0 & 0 & 0\\                       
		\hline
		\bottomrule
	\end{tabular}\label{Table-beta-n500-p100-N1000}
    }
\end{table}

\end{document}